\tikzstyle{vertex}=[circle, draw, inner sep=0pt, minimum size=4pt, fill = black]
\colorlet{DarkRed}{red!50!black}
\colorlet{DarkGreen}{green!50!black}
\colorlet{DarkBlue}{blue!50!black}
\newcommand{\dist}{\mathbf{dist}}
\newcommand{\multiline}[1]{%
  \begin{tabularx}{\dimexpr\linewidth-\ALG@thistlm}[t]{@{}X@{}}
    #1
  \end{tabularx}
}
\def\BState{\State\hskip-\ALG@thistlm}
\newcommand{\base}{b}
\newcommand{\sumC}[3]{c_{#1, #3}}
\titlespacing{\section}{0pt}{3ex}{2ex}
\titlespacing{\subsection}{0pt}{2ex}{1ex}
\titlespacing{\subsubsection}{0pt}{0.5ex}{0ex}
\newtheorem{theorem}{Theorem}[section]
\newtheorem{fact}[theorem]{Fact}
\newtheorem{corollary}[theorem]{Corollary}
\newtheorem{definition}[theorem]{Definition}
\newtheorem{lemma}[theorem]{Lemma}
\newtheorem{claim}[theorem]{Claim}
\newtheorem{hypothesis}[theorem]{Hypothesis}
\newtheorem{conjecture}[theorem]{Conjecture}
\let\c@fconjecture\c@conjecture
\let\c@fconj\c@conj
\def \eps {\kappa}
\newcommand{\ignore}[1]{}
\title{New Techniques and Fine-Grained Hardness for Dynamic Near-Additive Spanners}
\author{Thiago Bergamaschi \\ MIT \and Monika Henzinger\\University of Vienna \and Maximilian Probst Gutenberg\\ETH Zurich \and Virginia Vassilevska Williams\\MIT\and Nicole Wein\\MIT}
\date{}
\begin{document}
\maketitle
\thispagestyle{empty}
\begin{abstract}
Maintaining and updating shortest paths information in a graph is a fundamental problem with many applications. As computations on dense graphs can be prohibitively expensive, and it is preferable to perform the computations on a sparse skeleton of the given graph that roughly preserves the shortest paths information. Spanners and emulators serve this purpose. Unfortunately, very little is known about dynamically maintaining sparse spanners and emulators as the graph is modified by a sequence of  edge insertions and deletions.  This paper develops fast dynamic algorithms for spanner and emulator maintenance and provides evidence from fine-grained complexity that these algorithms are tight. For unweighted undirected $m$-edge $n$-node graphs we obtain the following results.

Under the popular OMv conjecture, there can be no decremental or incremental algorithm that maintains an $n^{1+o(1)}$ edge (purely additive) $+n^{\delta}$-emulator for any $\delta<1/2$ with arbitrary polynomial preprocessing time and total update time $m^{1+o(1)}$. Also, under the Combinatorial $k$-Clique hypothesis, any fully dynamic combinatorial algorithm  that maintains an $n^{1+o(1)}$ edge $(1+\epsilon,n^{o(1)})$-spanner or emulator for small $\epsilon$ must either have preprocessing time $mn^{1-o(1)}$ or amortized update time $m^{1-o(1)}$. Both of our conditional lower bounds are tight.

As the above fully dynamic lower bound only applies to combinatorial algorithms, we also develop an algebraic spanner algorithm that improves over the $m^{1-o(1)}$ update time for dense graphs. For any constant $\epsilon\in (0,1]$, there is a fully dynamic algorithm with worst-case update time $O(n^{1.529})$ that whp maintains an $n^{1+o(1)}$ edge $(1+\epsilon,n^{o(1)})$-spanner.

Our new algebraic techniques allow us to also obtain a new fully dynamic algorithm for All-Pairs Shortest Paths (APSP) that can perform both edge updates and can report shortest paths in worst-case time $O(n^{1.9})$, which are correct whp. This is the first \emph{path-reporting} fully dynamic APSP algorithm  with a truly subquadratic query time that beats $O(n^{2.5})$ update time. It works against an oblivious adversary.

Finally, we give two applications of our new dynamic spanner algorithms: (1) a fully dynamic $(1+\epsilon)$-approximate APSP algorithm with update time $O(n^{1.529})$ that can report approximate shortest paths in $n^{1+o(1)}$ time per query; previous subquadratic update/query algorithms could only report the distance, but not obtain the paths; (2) a fully dynamic algorithm for near-$2$-approximate Steiner tree maintenance with both terminal and edge updates.

\end{abstract}
\clearpage
\pagenumbering{arabic}


\section{Introduction}

Computing shortest paths in a graph is a fundamental problem with many applications. 
However, as on dense graphs the running time can be prohibitively expensive, it is preferable to perform the computation on a sparser representation of the given graph that approximately preserves the shortest paths distances. Such representations are called \emph{spanners} and \emph{emulators}.
Given an undirected, unweighted graph $G=(V,E)$, a subgraph $H$ of $G$ is defined to be an \emph{$(\alpha, \beta)$-spanner} if for every pair of vertices $x,y \in V$, we have that 
\[
     \mathbf{dist}_G(x,y) \leq \mathbf{dist}_H(x,y) \leq \alpha \cdot \mathbf{dist}_G(x,y) + \beta.
\]
A graph $H=(V,E')$ is defined to be an \emph{$(\alpha, \beta)$-emulator} if it fulfills the above constraint, is possibly edge-weighted, but  not necessarily a subgraph of $G$. 
Thus every spanner is also an emulator but not vice versa. When evaluating the quality of a spanner or emulator $H$
three parameters are of interest: the \emph{multiplicative approximation} $\alpha$, the \emph{additive approximation} $\beta$ and the \emph{sparsity} of $H$ that is the number of edges in $H$. 

Spanners and emulators have a variety of applications, ranging from efficient routing to parallel and distributed algorithms to efficient distance oracles, i.e. data structures that answer shortest-path queries. Thus, there exists a large body of work on computing spanners (see below). 
As real-world graphs are often dynamic, it raises the question whether spanners and emulators can be maintained efficiently when the graph is modified by edge updates. 
Unfortunately, very little is known about this question. 
In this article, we are concerned with the design of efficient algorithms to {\em dynamically} maintain an $(1+\epsilon, \beta)$-spanner on an undirected unweighted graph that is undergoing edge insertions and deletions.

If the update sequence is restricted to consist exclusively of insertions, we say that the graph is \emph{incremental} and if it only consists of edge deletions we say that it is \emph{decremental}. 
If the graph is either incremental or decremental, we also say it is \emph{partially dynamic} and otherwise we say it is \emph{fully dynamic}.

Apart from giving new decremental and fully dynamic deterministic and randomized algorithms that maintain spanners and emulators we also provide evidence from fine-grained complexity that these algorithms are tight. We then use our new algorithms and techniques 
to give novel fully dynamic approximate and exact all-pairs shortest paths (APSP) algorithms {\em that can report the corresponding shortest path,} addressing an open question raised by~\cite{BrandN19}. We further provide applications for other problems such as the maintenance of an approximate Steiner tree of a graph.

\paragraph{Prior Work.}
In this section, we discuss work that is directly related to the results in our paper. We use $\tilde{O}$-notation to suppress logarithmic factors, let $n$ and $m$ be the maximum number of vertices and edges respectively in any version of the graph under consideration. Unless otherwise specified, all graphs are undirected and unweighted. To ease the discussion, we assume for the rest of the section that $\epsilon$ is a constant.

\paragraph{Spanners and Emulators.} Spanners and static algorithms to construct them have been studied in great detail for \emph{multiplicative} approximation \cite{awerbuch1985complexity, peleg1989graph,althofer1993sparse, cohen1993fast,awerbuch1998near,baswana2003simple, roditty2005deterministic} culminating in near-optimal algorithms to construct $(2k-1, 0)$-spanners of sparsity $\tilde{O}(n^{1+1/k})$. There has also been an extensive line of research on \emph{purely additive} spanners \cite{AingworthCM96,AingworthCIM99,dor2000all, bollobas2005sparse,baswana2005new,chechik2013new} where $(1,2), (1, 4)$ and $(1,6)$-spanners are known of sparsity $\tilde{O}(n^{3/2}), \tilde{O}(n^{7/5})$ and $\tilde{O}(n^{4/3})$. While algorithms for fast constructions have been studied (e.g.  \cite{woodruff2010additive,knudsen2014additive,knudsen2017additive}), no near-optimal algorithm for the construction of any of the above additive spanners is known. For example, the fastest algorithm for constructing a $O(1)$-additive spanner with $O(n^{4/3})$ edges is $O(n^2)$. Further, Abboud and Bodwin \cite{abboud20174} proved that any purely additive spanner of sparsity $\tilde{O}(n^{4/3 - \epsilon})$, for any constant $\epsilon > 0$, has at least polynomial in $n$ additive error. 
Constructions by Bodwin and Vassilevska Williams \cite{bodwin2015very, bodwin2016better} are known giving 
sparsity $\tilde{O}(n)$ and 
additive error $\tilde{O}(n^{3/7+\epsilon})$. Following \cite{abboud20174}, Huang and Pettie \cite{HuangP18} constructed a family of graphs such that any $\tilde{O}(n)$-sized spanner for an $n$-node graph in the family must have $\Omega(n^{1/13})$ additive stretch.

Mixed-error $(\alpha, \beta)$-spanners were studied in \cite{elkin20041, baswana2005new, DBLP:conf/soda/ThorupZ06, pettie2009low, elkin2018efficient, ben2020new}. Most of these results focus on the setting of near-additive spanners, that is $(\alpha, \beta)$-spanners where $\alpha = 1 +\epsilon$ for some arbitrarily small constant $\epsilon > 0$. The goal of this setting is to obtain extremely sparse spanners $\tilde{O}(n)$. The best results obtain $(1+\epsilon, n^{o(1)})$-spanners with $\tilde{O}(n)$ edges. Abboud et al. \cite{abboud2018hierarchy} further developed a fine-grained hierarchy to give lower bounds for trade-offs between $\epsilon$, additive error and the sparsity of emulators. These lower bounds also apply to the setting of $(\alpha, \beta)$-emulators.

Finally, we point out that a related notion to emulators are hopsets:  given a graph $G$, we say that a graph $H$ is a $(\alpha, \beta, h)$-hopset if for every two vertices $x,y \in V$, there is a path $\pi_{x,y}$ from $x$ to $y$ in the graph $G \cup H$ consisting of at most $h$ edges, such that $\dist_G(x,y) \leq w(\pi_{x,y}) \leq (1+\epsilon) \dist_G(x,y) + \beta$ where $w(\pi_{x,y})$ denotes the weight of the path $\pi_{x,y}$. There is a lot of recent work on hopsets, especially $(1+\epsilon, \beta, h)$-hopsets, for which there are efficient algorithms  \cite{DBLP:conf/spaa/ElkinN19, huang2019thorup, elkin2019hopsets} with $\beta = 0$ and $h = n^{o(1)}$. The hopset literature builds heavily on previous clustering techniques from near-additive spanners. An excellent survey that highlights this connection was recently given by Elkin and Neiman \cite{elkin2020near}. 

\paragraph{Spanners and Emulators in Dynamic Graphs.} Spanners have also been extensively studied in the dynamic graph setting, where near-optimal algorithms for multiplicative spanners in fully dynamic graphs exist \cite{ausiello2005small, elkin2011streaming, baswana2008fully, baswana2006dynamic, BodwinK16, BernsteinFH19,forster2019dynamic, bernstein2020fully}. For hopsets, the dynamic graph literature has been mainly concerned with maintaining $(1+\epsilon, n^{o(1)}, n^{o(1)})$-hopsets in partially dynamic graphs \cite{henzinger2014decremental, bernstein2016deterministic, gutenberg2020deterministic} where they were used to derive fast algorithms for the partially dynamic Single-Source Shortest Paths problem.
As was observed in~\cite{HenzingerKN14} (Lemma 4.2) combining~\cite{DBLP:journals/siamcomp/RodittyZ12}
with~\cite{DBLP:conf/soda/ThorupZ06} leads to a $(1 + \epsilon, 2(1 + 2/\epsilon)^{k-2})$-approximate decremental emulator with total time $O((1 + 2/\epsilon)^{k-2} m n^{1/k})$.
To our knowledge, additive and near-additive spanners have not been studied in the dynamic graph literature. Also, there are no known conditional lower bounds for dynamic algorithms for maintaining a spanner.

\paragraph{Fully Dynamic Shortest Paths with Worst-Case Update Time.} Closely related to maintaining a spanner/emulator is the problem of maintaining shortest paths. There are three problems of focal interest:
 
 (1) The $s$-$t$ Shortest Path ($st$-SP) problem asks for the shortest path between two fixed vertices $s,t \in V$.
 
 (2) The Single-Source Shortest Paths (SSSP) problem asks for the shortest path tree from a fixed vertex.

(3) The All-Pairs Shortest Paths (APSP) problem asks for the shortest path between every vertex pair.

For each of these three problems, there is the \emph{distance reporting} and the \emph{path reporting} variant, where the former requires to only return the length of the shortest path, while the latter needs to return the actual shortest path.
There is an enormous line of research on these three problems in various settings. Since our fully dynamic algorithms have worst-case guarantees on update time, we focus this discussion on prior work on fully dynamic algorithms with worst-case update time.

For the $st$-SP problem and the SSSP problem the lower bounds in \cite{abboud2014popular,henzinger2015unifying} suggest that the essentially best solution to these problems is to rerun Dijkstra's algorithm after every update (even when the updates are not required to be worst-case). However, these conditional lower bounds are based on the BMM conjecture and therefore hold only for ``combinatorial'' algorithms. Indeed, Sankowski \cite{sankowski2005subquadratic} has shown that a worst-case update time of $O(n^{1.932})$ and query time of $O(n^{1.288})$ to obtain the \emph{distance} between a pair of vertices is possible and therefore has given the first subquadratic algorithm for the distance-reporting version of the $st$-SP problem. Recently, this result was further improved to worst-case update time $\tilde{O}(n^{1.863})$ and query time $\tilde{O}(n^{0.45})$ in \cite{BrandN19} where distance reporting queries are only required to return a $(1+\epsilon)$-approximate distance estimate. Rebalancing their trade-off terms, the authors also obtain an algorithm that maintains $(1+\epsilon)$-approximate SSSP with worst-case update time $\tilde{O}(n^{1.823})$ and $(1+\epsilon)$-approximate APSP in worst-case update time $\tilde{O}(n^2)$. 

A \emph{major drawback} of both approaches is that they cannot answer path reporting queries. The algorithm with fastest worst-case update time that can return \emph{actual} (approximate) shortest-paths for $st$-SP and SSSP remains to rerun Dijkstra's algorithm and for the APSP problem to use a combinatorial data structure where the currently best worst-case update is $\tilde{O}(n^{2+2/3})$ for weighted graphs and $\tilde{O}(n^{2.5})$ time for unweighted graphs (see \cite{thorup2005worst, abraham2017fully, gutenberg2020fully}). 

For approximate distance oracles with amortized running time, Abraham et al.~\cite{AbrahamCT14} achieved an $2^{O(k)}$-approximation with $O(\sqrt{m}n^{1/k})$ amortized update time  for constant $k$, and
 Forster et al.~\cite{Forster2021} gave an $(O(\log n))^{3k-2}$-approximation with $O(k \log^2 n)$ query time and $m^{1/k + o(1)}(O(\log n))^{4i-3}$ update time for any $k \geq 2$, being the first to break the $\sqrt m$ update time barrier.
 
\paragraph{Partially Dynamic Shortest Paths.}

The classic ES-tree data structure \cite{shiloach1981line} initiated the field with a deterministic total time $O(mn)$ algorithm for partially dynamic exact SSSP. In the setting where a $(1+\epsilon)$-multiplicative approximation is allowed, Bernstein and Roditty \cite{bernstein2011improved} gave the first improvement over the ES-tree for an approximation algorithm with an algorithm for decremental $(1+\epsilon)$-approximate SSSP with total time $n^2 2^{O(\sqrt{\log (n)})}$. Subsequently, Henzinger et al. gave an algorithm \cite{henzinger2014decremental} with total update time $m^{1+o(1)}$. These algorithms are all randomized and against an oblivious adversary. 

Bernstein and Chechik gave the first deterministic partially dynamic $(1+\epsilon)$-approximate algorithm that improves upon the ES-tree data structure; it runs in total time $\tilde{O}(n^2)$ \cite{bernstein2016deterministic} and does not report paths, only distances. Chuzhoy and Khanna \cite{Chuzhoy:2019:NAD:3313276.3316320} gave an algorithm with total time $n^{2+o(1)}$ that works against an adaptive adversary and returns paths with $n^{1+o(1)}$ query time. Chuzhoy and Saranurak \cite{chuzhoy2020deterministic} recently further improved the running time of the path query to $|P|n^{o(1)}$ for an approximate shortest path $P$. Further, Bernstein and Chechik recently gave an algorithm with total update time $\tilde{O}(mn^{3/4})$ \cite{bernstein2017deterministic}, which was then improved to $O(mn^{0.5+o(1)})$ by Probst Gutenberg and Wulff-Nilsen \cite{gutenberg2020deterministic}. Neither of these data structures can answer path queries which was recently addressed in \cite{bernstein2020fully}.

For decremental APSP, Henzinger et al.~\cite{henzinger2014decremental} presented an approximation algorithm with stretch $((2+\epsilon)^k - 1)$ and total update time $m^{1+1/k+o(1)}$ for any positive integer $k$. They also gave an algorithm with stretch $(2+\epsilon)$ or $(1+\epsilon, 2)$ with total update time $\tilde{O}(n^{2.5})$ in \cite{henzinger2016dynamic} which was recently derandomized by Chuzhoy and Saranurak  \cite{chuzhoy2020deterministic}. Finally Henzinger et al.~\cite{henzinger2016dynamic}  presented a $(1+\epsilon)$-approximate deterministic algorithm with $\tilde{O}(mn/\epsilon)$ update time which derandomized the construction by Roditty and Zwick \cite{DBLP:journals/siamcomp/RodittyZ12} with matching running time. Later on, Chechik  \cite{chechik2018near} presented a $(2+\epsilon)k-1$-approximate algorithm with update time $mn^{1/k + o(1)}$ for any positive integer $k$ and constant $\epsilon$, whose total update time matches the preprocessing time of static distance oracles \cite{thorup2005approximate} with corresponding stretch.
Recently, Chen et al.~\cite{Chen20} gave an incremental $(2k-1)$-approximate algorithm with $O(m^{1/2}n^{1/k})$ worst-case time per operation.
We point out that there is an extensive line of work on the decremental APSP problem \cite{king1999fully, baswana2002improved, demetrescu2004new, RodittyZ11,thorup2005worst, bernstein2011improved,  DBLP:journals/siamcomp/RodittyZ12, abraham2013dynamic, henzinger2014decremental, henzinger2016dynamic,HenzingerKN17,abraham2017fully, chechik2018near, evald2020decremental} that is beyond the scope of this overview.

From the lower bounds side, Roditty and Zwick \cite{RodittyZ11} showed that any incremental or decremental algorithm for SSSP in weighted graphs with preprocessing time $p(n)$, query time $q(n)$ and update time $u(n)$ must satisfy $p(n)+n\cdot u(n)+n^2 \cdot q(n)\geq n^{3-o(1)}$ unless APSP has a truly subcubic time algorithm. Similarly, for unweighted graphs, they showed that any {\em combinatorial} incremental or decremental algorithm must satisfy that equation unless Boolean matrix multiplication (BMM) has a truly subcubic time combinatorial algorithm. Abboud and Vassilevska Williams \cite{abboud2014popular} extended these lower bounds to also hold for $st$-SP, where now the algorithms must satisfy $p(n)+n\cdot (u(n)+q(n))\geq n^{3-o(1)}$, for weighted graphs under the APSP conjecture, and for unweighted graphs under the combinatorial BMM conjecture.  

\paragraph{Hypotheses for Fine-Grained Complexity} Our conditional lower bounds rely on two popular hypotheses: the Online Boolean Matrix-Vector Multiplication (OMv) conjecture and the Combinatorial $k$-Clique hypothesis. In the OMv problem we are given an $n\times n$ matrix $M$ that can be preprocessed. Then, an online sequence of vectors $v^1,\dots ,v^n$ is presented and the goal is to compute each $Mv^i$ before seeing the next vector $v^{i+1}$. The OMv conjecture was first defined in~\cite{henzinger2015unifying}, and has been used many times since.

\begin{conjecture}[OMv]\label{conj:omv}
For any constant $\epsilon > 0$, there is no $O(n^{3-\epsilon})$-time
algorithm that solves OMv with error probability at most $1/3$ in the word-RAM model with $O(\log n)$ bit words.
\end{conjecture}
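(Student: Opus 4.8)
The ``statement'' here is a hardness hypothesis rather than a theorem, so the honest plan is not to prove it --- no proof of Conjecture~\ref{conj:omv} is known, and none is expected, since a proof would entail an unconditional near-cubic time lower bound for an explicit problem in the word-RAM model --- but rather to assemble the evidence on which its acceptance rests. I would organize that evidence in three steps: algorithmic evidence, robustness of the formulation, and the web of consequences it supports.

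First, the algorithmic evidence. The naive algorithm computes each product $Mv^i$ in $O(n^2)$ time, hence $O(n^3)$ over the whole instance. Were all the vectors given in advance, the task would be the single Boolean matrix product $M\cdot[v^1\mid\cdots\mid v^n]$, which fast matrix multiplication solves in $O(n^\omega)$ time; but the \emph{online} requirement --- each $Mv^i$ must be output before $v^{i+1}$ is revealed --- seems to neutralize fast matrix multiplication entirely, as that technique is inherently offline and has no known incremental form. Consistent with this, the best known OMv algorithm (Larsen and Williams) is combinatorial and shaves only a subpolynomial factor, roughly $2^{\Omega(\sqrt{\log n})}$, off $n^3$ --- exactly the state of the art for combinatorial Boolean matrix multiplication. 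The point I would press is that every known route to truly subcubic matrix multiplication is offline, and no one has found a way to exploit it online.

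Second, robustness and consequences. The hypothesis has an equivalent bit-output form, the online vector-matrix-vector problem OuMv (one is given pairs $(u^i,v^i)$ and must output $(u^i)^\top M v^i$ before seeing the next pair), which is the form used in most reductions; see~\cite{henzinger2015unifying}. Building on it, a large and growing family of \emph{tight} conditional lower bounds has been derived --- for dynamic reachability, shortest paths, maximum matching, subgraph connectivity, and many other problems --- each matching the best known upper bound, and none of these consequences has ever been contradicted. Refuting Conjecture~\ref{conj:omv} would not merely break one reduction; it would simultaneously overturn this entire web and yield surprisingly fast algorithms for all of these dynamic problems at once.

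The main obstacle, and the reason this plan is a marshalling of evidence rather than a derivation, is exactly that proving any super-linear --- let alone near-cubic --- unconditional time lower bound for an explicit problem in the word-RAM or cell-probe model is a notorious open problem far beyond current techniques. Conjecture~\ref{conj:omv} therefore cannot be proved; it can only be supported, by the repeated failure of algorithmic attacks and by the consistency of the large body of consequences it organizes. It is in this role --- as a fine-grained hardness axiom --- that the rest of the paper invokes it, deriving from it the conditional lower bounds for dynamic near-additive spanner and emulator maintenance.
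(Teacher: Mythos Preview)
Your assessment is correct: Conjecture~\ref{conj:omv} is stated in the paper purely as a hypothesis, with no proof attempted, and your recognition that it \emph{cannot} be proved with current techniques (and is instead supported by algorithmic evidence and a web of tight consequences) is exactly the right stance. The paper itself is even more terse --- it simply cites~\cite{henzinger2015unifying} and states the conjecture --- so your discussion of the supporting evidence goes beyond what the paper provides, but in a direction fully consistent with how the conjecture is used.
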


The Combinatorial $k$-Clique hypothesis is defined as follows and has been used a number of times (e.g.~\cite{lincoln2018tight,abboud2018if,bringmann2019fine}).

\begin{hypothesis}[Combinatorial $k$-Clique]
For any constant $\epsilon>0$, for an $n$-node graph there is no $O(n^{k-\epsilon})$ time combinatorial algorithm for $k$-clique detection with error probability at most $1/3$ in the word-RAM model with $O(\log n)$ bit words.
\end{hypothesis}

For the special case of $k$-Clique detection where $k=3$ (i.e. triangle detection), we also consider non-combinatorial algorithms. Triangle detection can easily be solved using matrix multiplication but it is a big open question whether triangle detection admits a $O(n^{\omega-\epsilon})$ time algorithm, where $\omega<2.373$ is the matrix multiplication exponent (e.g.~\cite{williams2010subcubic},~\cite{woeginger2008open} Open Problem 4.3(c), and~\cite{spinrad2003efficient} Open Problem 8.1). It is generally believed that such an algorithm does not exist, and our reductions from $k$-Clique also imply hardness under this hypothesis.


\paragraph{Our results.}
We present novel algorithms and conditional lower bounds for $(\alpha,\beta)$-spanners and emulators as well as faster fully dynamic APSP algorithms. We prove the following for undirected unweighted graphs.

\medskip
\noindent 1. {\bf \emph{Conditional lower bounds for partially dynamic spanners/emulators.}} Under the OMv conjecture, there can be no decremental or incremental algorithm that maintains a  $(1,n^{o(1)})$-emulator (and thus spanner) with $O(m^{1-\epsilon})$ edges for any constant $\epsilon>0$ with arbitrary polynomial preprocessing time and total update time $O(mn^{1-\epsilon})$. The same result also holds for \emph{all} sparsities $m$ for combinatorial algorithms under the Combinatorial $k$-Clique hypothesis.

For completeness, we also present algorithms that are tight with our conditional lower bounds. Note that mixed additive/multiplicative error is necessary for these algorithms since there can be no $(O(1), 0)$-spanner or emulator with $n^{1+o(1)}$ edges (e.g.~\cite{margulis1982explicit}).
Our algorithms rely heavily on prior work. Using techniques similar to~\cite{chechik2018near}, for any constant $\epsilon\in (0,1]$, we maintain whp against an oblivious adversary a partially dynamic $n^{1+o(1)}$ edge $(1+\epsilon,n^{o(1)})$-spanner in
total update time $m^{1+o(1)}$ time. 
Using techniques from~\cite{gutenberg2020deterministic}, we also give a deterministic partially dynamic algorithm that maintains a $(1+\epsilon, n^{\alpha + o(1)})$-emulator of a graph in total time $O(m n^{1-\alpha + o(1)})$ for any $\alpha > 0$. Using a result in \cite{bernstein2020fully}, we can further turn the above algorithm into a randomized algorithm that maintains a $(1+\epsilon, n^{\alpha + o(1)})$-\emph{spanner} in expected total time $O(m n^{1-\alpha + o(1)})$ for any $\alpha > 0$ and that works against an adaptive adversary.

\medskip
\noindent 2. {\bf\emph{Conditional lower bounds for combinatorial fully dynamic spanners.}} Under the Combinatorial $k$-Clique hypothesis, for a graph of any sparsity $m$, for any constant $\epsilon>0$, there can be no fully dynamic combinatorial algorithm  that maintains an $O(m^{1-\epsilon})$-edge $(1+\alpha,n^{o(1)})$-emulator for small $\alpha$ with preprocessing time $mn^{1-\epsilon}$ and amortized update time $m^{1-\epsilon}$. This conditional lower bound also extends to incremental and decremental algorithms but only for worst-case update times. 

For completeness, we also present an algorithm that is tight with our conditional lower bound. This algorithm follows from rerunning a known static algorithm after every update. For any constant $\epsilon\in (0,1]$, we give  a deterministic fully dynamic algorithm with preprocessing time $m^{1+o(1)}$ and worst-case update time $m^{1+o(1)}$ time that maintains an $n^{1+o(1)}$ edge $(1+\epsilon,n^{o(1)})$-spanner.

\medskip
\noindent 3. {\bf\emph{Algebraic fully dynamic spanner algorithms.}} The above fully dynamic lower bound only applies to combinatorial algorithms, and we show that this is inherent; we develop an algebraic spanner algorithm that beats our combinatorial lower bound. For any constant $\epsilon\in (0,1]$, there is a fully dynamic algorithm 
with preprocessing time $\tilde{O}(n^2)$ in an  initially empty graph and $O(n^{2.373})$ in an initially non-empty graph and
worst-case update time $O(n^{1.529})$ 
that whp maintains an $n^{1+o(1)}$-edge $(1+\epsilon,n^{o(1)})$-spanner and works against an oblivious adversary.~\footnote{Both the bound on the time per update as well as the correctness hold with high probability. If we rebuild the data structure from scratch every polynomially many updates, we can instead achieve an {\em expected  amortized} time bound of $O(n^{1.529})$ per update.}

The construction from our above lower bound from the Combinatorial $k$-Clique hypothesis with $k=3$ (i.e. triangle detection) also gives a conditional lower bound for non-combinatorial algorithms. Unless there is a breakthrough in non-combinatorial algorithms for triangle detection algorithms, there can be no fully dynamic algorithm that maintains an $O(n^{\omega-1-\epsilon})$-edge $(1+\alpha,n^{o(1)})$-emulator for constant $\alpha<2/3$ with preprocessing time $O(n^{\omega-\epsilon})$ and amortized update time $O(n^{\omega-1-\epsilon})$, where $\omega<2.373$ is the matrix multiplication exponent. Thus $O(n^{1.372})$ update time and $O(n^{2.372})$ preprocessing time is not possible with current techniques.

We also give a conditional lower bound from the OMv conjecture that precludes algorithms for emulators with more edges and higher preprocessing time than the above lower bound from triangle detection, but at the cost of a lower update time. Under the OMv conjecture, for any constant $\epsilon>0$, there can be no fully dynamic algorithm that maintains an $O(m^{1-\epsilon})$-edge $(1+\alpha,n^{o(1)})$-emulator for constant $\alpha<2/3$ with arbitrary polynomial preprocessing time  and amortized update time $O(n^{1-\epsilon})$.

Both of these conditional lower bound also extend to incremental and decremental algorithms but only for worst-case update times.


\medskip
    \noindent 4. {\bf \emph{Fully dynamic exact path-reporting APSP.}} To achieve the above results we develop the first  fully dynamic APSP data structure that supports distance queries, path reporting queries, and edge updates in {\em subquadratic} time per operation. It uses algebraic techniques, is randomized, and works against an oblivious adversary. Specifically we show the following result, where $\omega(a,b,c)$ is the exponent for multiplying an $n^a\times n^b$ matrix by an $n^b\times n^c$, and $\eps_*$ is the solution to $\omega(1,1,\eps) = 1 + 2 \eps$. With the current bounds for rectangular matrix multiplication, $\eps_* \approx 0.529.$

\begin{theorem}\label{thm:intro}
Let $\eps$ be such that $0<\eps\leq \eps_{*}$, and let $D$ be a distance parameter between $1$ and $n$. There is a randomized fully dynamic data structure that can maintain an unweighted directed graph $G=(V,E)$ supporting the following operations with preprocessing time $\tilde{O}(n^2)$ in an empty initial graph and $\tilde{O}(Dn^\omega)$ in an non-empty initial graph:
(a) {\bf edge updates}  in worst-case time $\tilde{O}(Dn^{\omega(1,1, \eps)-\eps})$ time; 
(b) {\bf [distance reporting]:} on query $i,j\in V$, return $\dist(i,j)$ if $\dist(i,j)\leq D$, or answer that $\dist(i,j)>D$ otherwise, in worst-case $\tilde{O}(Dn^{\eps})$ time, where the answer is correct whp;
(c) {\bf [path reporting]:} on query $i,j\in V$, if $\dist(i,j)\leq D$, return a shortest path from $i$ to $j$, in $\tilde{O}(D^2n^\kappa)$ time, where the answer is correct whp.

\end{theorem}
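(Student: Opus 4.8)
The plan is to reduce all three operations to a single algebraic object: the inverse, over a truncated polynomial ring, of $I - x\tilde A$ where $\tilde A$ is a randomly weighted adjacency matrix, maintained by a dynamic matrix‑inverse data structure in the spirit of Sankowski, and then to reconstruct paths by a divide‑and‑conquer that only issues distance queries.

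Fix a prime $p=n^{\Theta(1)}$, draw independent uniform weights $r_{(u,v)}\in\mathbb{F}_p$ for all $O(n^2)$ potential edges, and let $\tilde A$ be the current weighted adjacency matrix ($\tilde A_{uv}=r_{(u,v)}$ if $(u,v)\in E$, else $0$). Work over $R=\mathbb{F}_p[x]/(x^{D+1})$, in which $x$ is nilpotent, and set $M=I-x\tilde A$. Since $M\equiv I\pmod x$, $M$ is invertible over $R$ with $M^{-1}=\sum_{k=0}^{D}x^k\tilde A^k$ — no randomness is used for invertibility. The coefficient of $x^k$ in $(M^{-1})_{ij}$ is $(\tilde A^k)_{ij}$, a polynomial in the $r$'s whose monomials (all with coefficient $+1$) are exactly the length‑$k$ walks from $i$ to $j$; so it is not identically zero iff such a walk exists, and then, being of degree $\le k\le n$, it evaluates to a nonzero field element with probability $\ge 1-n/p$ by Schwartz–Zippel. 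Hence $\dist(i,j)$ is the smallest exponent with nonzero coefficient in $(M^{-1})_{ij}$ when $\dist(i,j)\le D$, and $(M^{-1})_{ij}=0$ exactly when $\dist(i,j)>D$. As the adversary is oblivious, the update/query sequence is fixed before the $r$'s are sampled; choosing $p$ a large enough polynomial and union‑bounding over the (polynomially many) operations, the $O(n^2)$ vertex pairs and the $O(D)$ exponents makes all distance answers correct whp, which is part (b).

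An edge update changes one entry of $\tilde A$, i.e.\ replaces $M$ by $M+(cx)e_ae_b^{\top}$, a rank‑one perturbation whose right factor is a multiple of $x$; consequently all Sherman–Morrison–Woodbury denominators are of the form $I+(\text{a multiple of }x)$, hence units of $R$, so the dynamic inverse needs no randomness either. I would run Sankowski's dynamic matrix‑inverse structure over the commutative ring $R$, where a ring operation costs $\tO(D)$ (FFT polynomial arithmetic) and multiplying an $n^a\times n^b$ by an $n^b\times n^c$ matrix costs $\tO(D\,n^{\omega(a,b,c)})$. Batching $n^{\eps}$ pending rank‑one updates and periodically folding them into the explicitly stored inverse via a block Woodbury identity (evaluated with rectangular matrix multiplication) gives amortized update time $\tO(D\,n^{\omega(1,1,\eps)-\eps})$ and single‑entry query time $\tO(D\,n^{\eps})$; in the stated range $\eps\le\eps_*$ the cost of maintaining the $O(n^{\eps})$‑dimensional correction matrices stays within the update budget. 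Standard background rebuilding de‑amortizes the update bound to worst case whp (otherwise one gets expected amortized, as in the footnote). Preprocessing: an empty initial graph has $M=M^{-1}=I$, set up in $\tO(n^2)$; a non‑empty one is handled by computing $M^{-1}=\sum_{k\le D}x^k\tilde A^k$ directly (for instance $D$ matrix products over $\mathbb{F}_p$, or Newton iteration over $R$), in $\tO(Dn^{\omega})$. This gives part (a) and the preprocessing bounds.

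For (c), given $i,j$ with $d:=\dist(i,j)\le D$ I would reconstruct a shortest path vertex by vertex. The core subroutine, given $u$ with $\dist(u,j)=r$, finds an out‑neighbour $w$ of $u$ with $\dist(w,j)=r-1$: since every out‑neighbour of $u$ already has $\dist(\cdot,j)\ge r-1$, for any $S\subseteq N^{+}(u)$ the event ``some $w\in S$ works'' equals ``$\min_{w\in S}\dist(w,j)\le r-1$'', which has no false positives and is closed under union, so a binary search over $N^{+}(u)$ pins down a good $w$ with $O(\log n)$ \emph{aggregated} distance tests. Each test wants the smallest exponent with nonzero coefficient in $\big(\sum_{w\in S}r'_w e_w^{\top}\big)M^{-1}e_j$ for fresh random $r'_w$ — nonzero iff some $(\tilde A^{r-1})_{wj}\ne 0$, whp — which one realizes as an ordinary entry query on the inverse of $M$ augmented by a single scratch ``super‑vertex'' row (again a rank‑one, $x$‑multiple perturbation). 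Charging $\tO(1)$ such $\tO(Dn^{\eps})$‑time queries to each of the $\le D$ path vertices gives total time $\tO(D^2 n^{\eps})$. \emph{The main obstacle} is precisely this last realization: ensuring that an aggregated/perturbed entry query costs only $\tO(Dn^{\eps})$ — not the full update cost, nor a full‑column cost $\tO(Dn)$ that would ruin the bound for dense neighbourhoods. Making this work forces one to exploit the internal layout of the lazy inverse (the explicit $M^{-1}$ between resets together with its $O(n^{\eps})$ small correction matrices, which must also absorb the query‑time super‑vertex) and to fold all of the additional random checks made during path queries into the same oblivious‑adversary union bound.
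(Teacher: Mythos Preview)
Your treatment of parts (a) and (b) is essentially the paper's: Sankowski's polynomial-matrix encoding $M=I-x\tilde A$ over $\mathbb F_p[x]/(x^{D+1})$ together with his lazy inverse $M^{-1}=T(I+N)$, giving $\tilde O(Dn^{\omega(1,1,\eps)-\eps})$ updates and $\tilde O(Dn^\eps)$ entry (hence distance) queries. No objection there.

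The gap is in (c). Your successor subroutine binary-searches over $S\subseteq N^+(u)$ and, for each candidate $S$, wants the minimum-degree coefficient of $\big(\sum_{w\in S} r'_w e_w^\top\big)M^{-1}e_j$. You propose realizing this as an entry query after a rank-one ``super-vertex'' row perturbation. But that perturbation is a genuine update in Sankowski's scheme: to process $M' = M + e_z\,b^\top$ you must first form $b^\top M^{-1}=b^\top T(I+N)$, and when $b$ has $|S|$ nonzeros the term $b^\top T$ alone costs $\Theta(D\,|S|\,n)$ field operations; even if you bypass the update and directly compute $\sum_{w\in S}r'_w\big(T_{wj}+\sum_{k\in\mathrm{nz}(N)}T_{wk}N_{kj}\big)$, you pay $\Theta(D\,|S|\,n^\eps)$. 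Summed over a binary search and then over the $\le D$ successors, this is $\tilde O(D\,n^\eps\sum_{u\text{ on path}}\deg(u))$, which can be $\tilde O(D^2 n^{1+\eps})$, not $\tilde O(D^2 n^\eps)$. The difficulty is intrinsic: because $S$ is chosen adaptively at query time, nothing about it has been precomputed, and you cannot avoid touching $|S|$ rows of the dense matrix $T$. You correctly flag this as ``the main obstacle,'' but the sketch does not overcome it.

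The paper's fix is to eliminate the adaptivity. Instead of testing query-dependent subsets of neighbours, it maintains, alongside $M^{-1}=T(I+N)$, a small collection of products $E\cdot M^{-1}$ for matrices $E$ that are \emph{fixed at preprocessing time}: for each bit position $l$ one takes $A^{(l)}$ (the adjacency matrix with all columns whose index has $l$th bit $0$ zeroed out), and to handle multiple successors one further layers Seidel's random-column sparsification, yielding $O(\log^3 n)$ such matrices $E$ in total. The crucial algebraic point is that each product can be maintained in the same time budget as the inverse itself by storing $V:=ET$ explicitly and keeping the invariant $EM^{-1}=V(I+N)$: an entry update to $A$ is handled exactly as before (row update to $N$, periodic reset $V\leftarrow V(I+N)$), while an entry update to $E$ is a single row update $V\leftarrow V+e_i(v\,e_j^\top T)$ in $\tilde O(Dn)$ time. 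An entry of $EM^{-1}$ is then $V_{ij}+(e_i^\top V)(Ne_j)$, computable in $\tilde O(Dn^\eps)$. A successor query becomes $O(\log^3 n)$ such entry queries (read off the witness's bits, with sparsification ensuring a unique witness in some copy whp), hence $\tilde O(Dn^\eps)$ per successor and $\tilde O(D^2 n^\eps)$ per path. In short, the missing idea is: do the ``aggregation over neighbours'' once and for all by maintaining $A^{(l)}T$ rather than recomputing row-combinations at query time.
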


We believe that this result is of independent interest.

Based on it we build a fully dynamic exact APSP data structure that 
with preprocessing time $\tilde O(n^2)$ on an initially empty graph
achieves worst-case time  $O(n^{1.9})$ per edge update, $O(n^{1.529})$ per distance query and $O(n^{1.9})$ per path reporting query.
This is a significant improvement over the $O(n^{2.5})$ worst-case update time of~\cite{abraham2017fully,gutenberg2020fully} and closer to the $O(n^2)$ time bound which is achieved by the data structure of~\cite{demetrescu2004new} which can only support distance reporting queries, but no path reporting queries.

The algorithms in Theorem~\ref{thm:intro} are all Monte Carlo-- they are correct with high probability and always run in the desired running time. If they could be made into Las Vegas algorithms (ones that are always correct but have expected running time), our applications of Theorem~\ref{thm:intro}, such as our algebraic spanners, would also be Las Vegas, which is a more desirable guarantee.
However, there are significant hurdles to overcome in order to make Theorem~\ref{thm:intro} Las Vegas. Like Sankowski's original data structure \cite{sankowski-thesis}, Theorem~\ref{thm:intro} heavily relies on the use of polynomial identity testing (PIT), namely on the fact that PIT is in co-RP and hence has a fast Monte Carlo algorithm. To obtain a Las Vegas algorithm using a similar approach, one would need a ZPP algorithm for PIT. 
However, obtaining such an algorithm seems extremely difficult, and in fact Impagliazzo and Kabanets
\cite{KabanetsI03,KabanetsI04}
 showed that such an algorithm would imply strong circuit lower bounds. Thus although the rest of our techniques can be made Las Vegas, making Theorem~\ref{thm:intro} Las Vegas as well is far from possible with current techniques.

\medskip
\noindent 5. {\bf \emph{Applications.}} We present two applications of our above results: fully dynamic approximate path-reporting APSP, and fully dynamic Steiner tree. Using the above theorem and the above algebraic spanner, we give the first subquadratic 
fully dynamic $(1+\epsilon)$-approximate APSP algorithm. It needs $\tilde O(n^2)$ preprocessing time on an empty graph and  achieves worst-case time $n^{1+\eps^*+o(1)}=O(n^{1.529})$ for updates, $n^{1+o(1)}$ 
for approximate distance reporting and approximate shortest path reporting,
 whp against an oblivious adversary.
Note that all previous subquadratic update/query algorithms could only report distances, not paths.

Our second application of our above results is a fully dynamic algorithm for $(2+\epsilon)$-approximate Steiner tree, which can be used, for example, for routing in dynamic networks. Specifically we give the first subquadratic algorithm that maintains a $(2+\epsilon)$-approximate Steiner tree for a set $S$ of terminals  with both terminal and edge updates. Specifically, it has preprocessing time $\tilde{O}(n^2)$ on an empty initial graph and $n^{\omega+o(1)}$ on a non-empty initial graph, and worst-case time $n^{1+\eps^*+o(1)}
  + |S|^2 \cdot n^{1 + o(1)}$ per edge update, $|S| n^{1+o(1)}$ per node addition to $S$, and
  $|S|n^{o(1)}$ per node removal from $S$, giving subquadratic update time when $|S|\leq n^{1/2-o(1)}$ whp against an oblivious adversary. 
  By increasing the processing time to $O(n^{2.621})$ using the data-structure of \cite{BrandN19}, the time for edge updates can be made $O(n^{1.843}+ |S|^2 \cdot n^{0.45} + |S| \cdot n^{1 + o(1)})$, allowing for more leverage over the size of the terminal set $S$.
 The only prior work in general graphs maintains a $(6+\epsilon)$-approximate Steiner Tree under changes to $S$ only (no edge updates) and has preprocessing time $\tilde O(m \sqrt n )$ and update time $\tilde O(\sqrt{n})$~\cite{LackiOPSZ15}. 

\paragraph{Organization} In Section~\ref{sec:over} we give a technical overview of a selection of our results. Section~\ref{sec:pre} is the preliminaries. In Section~\ref{sec:lb}, we present our conditional lower bounds.  In Section~\ref{sec:algebraic}, we present our data structure for dynamic APSP with path reporting. In Section~\ref{subsec:AlgebraicSpanner}, we present our algebraic spanner algorithm, which uses the data structure from Section~\ref{sec:algebraic}. In Section~\ref{sec:app} we present two additional applications of the data structure from Section~\ref{sec:algebraic}: our dynamic algorithm for approximate APSP with path reporting and our dynamic Steiner tree algorithm. Finally, in Section~\ref{sec:span}, we present our combinatorial dynamic spanner and emulator algorithms.

\section{Technical overview}\label{sec:over}

\paragraph{Conditional lower bounds.} We first outline our OMv-based constructions. Instead of reducing from the OMv problem, we reduce from the related OuMv problem, which is defined as follows. We are given an $n\times n$ matrix $M$ that can be preprocessed. Then, an online sequence of vector pairs $(u^1,v^1),\dots,(u^n,v^n)$ is presented and the goal is to compute each $(u^i)^\intercal Mv^i$ before seeing the next pair. A reduction from OMv to OuMv is known~\cite{henzinger2015unifying}.

For both our fully dynamic and incremental/decremental lower bounds from OMv we begin with the same basic gadget.  Given the matrix $M$ from the OuMv instance, we construct a bipartite graph $A,B$ where $A=\{a_1,\dots a_n\}$, $B=\{b_1,\dots b_n\}$, and the edge $(a_i,b_j)$ is present if and only if $M_{i,j}=1$. 

The fully dynamic construction is shown in Figure~\ref{fig:omv_fully_intro}. We begin by taking a number $c$ of disjoint copies $G_1,\dots,G_c$ of the basic gadget and an additional set of $c+1$ isolated vertices $w_0,\dots w_c$. Each basic gadget will introduce error to the approximation, so larger $c$ means that we are showing a lower bound for algorithms with higher approximation factors but faster running times. 

After constructing this initial graph, we start $n$ dynamic phases. In phase $i$, we are given the vectors $u^i$ and $v^i$ of the OuMv instance. For each $1\leq j\leq c$ and each $k$ with $u^i_k=1$, insert an edge between $w_{j-1}$ and $a_k\in G_j$. Similarly, for each $1\leq j\leq c$ and each $k$ with $v^i_k=1$, insert an edge between $w_{j}$ and $b_k\in G_j$. We remove these edges after the phase is over.

\begin{figure}[ht]
  \centering
  \includegraphics[width=.8\linewidth]{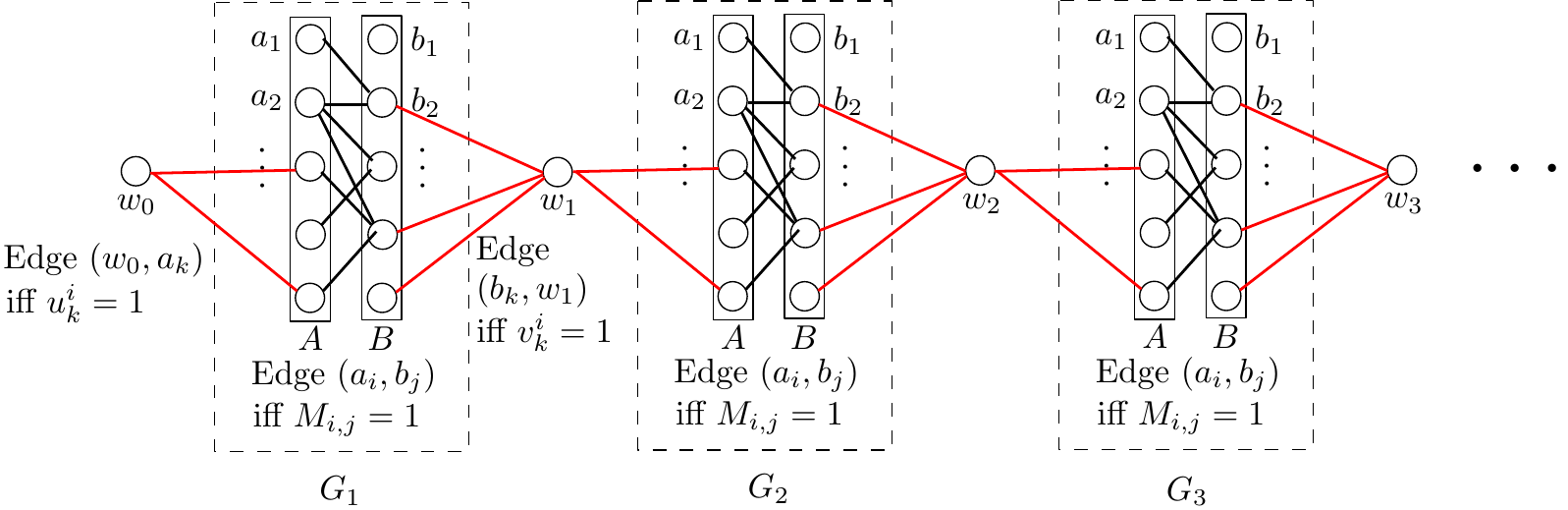}
  \caption{The construction for fully dynamic algorithms. The red edges are dynamically added in phase $i$.}
  \label{fig:omv_fully_intro}
\end{figure}

 At the end of each phase, we run Breadth-First Search (BFS) on the dynamic emulator to estimate the distance between $w_0$ and $w_c$, which we claim provides the answer to this phase of the OuMv instance. In particular, note that for any $i$ the distance between $w_i$ and $w_{i+1}$  is 3 if and only if $(u^i)^\intercal Mv^i=1$, and otherwise this distance is at least 5.
 Also, since the emulator has $O(n^{2-\epsilon})$ edges, the resulting algorithm would solve OuMv in $O(n^{3-\epsilon})$ time.
 
The incremental construction is similar, however we cannot remove  edges at the end of each phase. To get around this, we replace each $w_i$ with a path and insert edges incident to a different vertex in the path at each phase. The resulting construction is shown in Figure~\ref{fig:omv_inc_intro}.

\begin{figure}[ht]
  \centering
  \includegraphics[width=\linewidth]{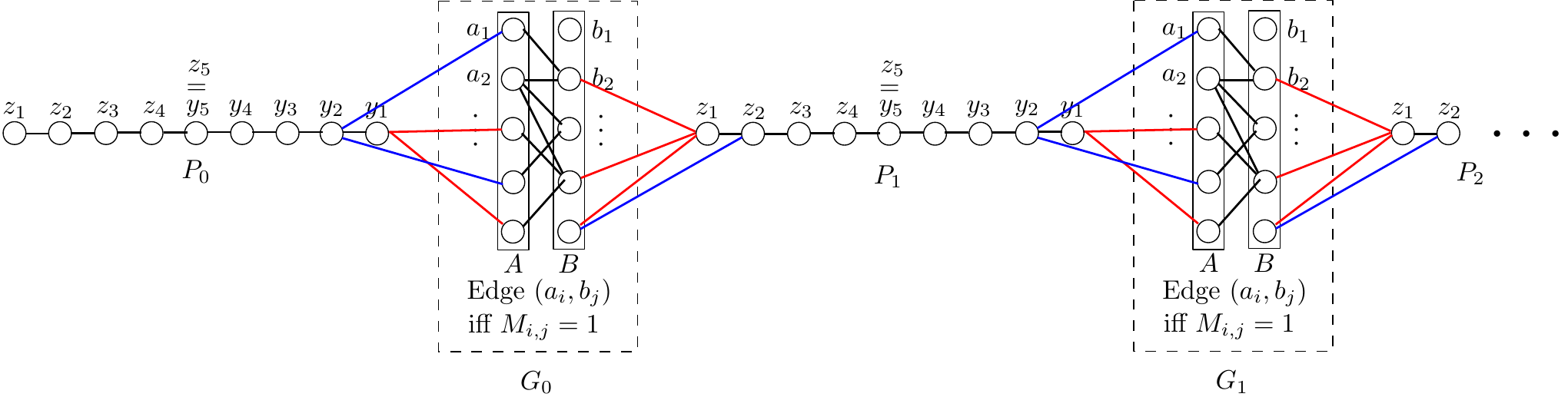}
  \caption{The construction for incremental algorithms. The red edges are dynamically added in phase 1 and the blue edges are dynamically added in phase 2.}
  \label{fig:omv_inc_intro}
\end{figure}

The decremental construction is roughly the reverse of the incremental construction.

For the $k$-clique-based constructions, we instead reduce from the $k$-cycle problem (a reduction from $k$-clique to $k$-cycle is known~\cite{lincoln2018tight}). The $k$-cycle constructions for both the fully and partially dynamic settings follow a similar structure to the OuMv constructions but use a different basic gadget. The basic gadget is built by using color coding and taking a layered version of the graph where each color is a layer and only edges between vertices of adjacent colors are present.

\paragraph{Algebraic Fully Dynamic Spanner Algorithm.} Let us next present an algorithm to maintain a $(1+\epsilon,n^{o(1)})$-spanner on a fully dynamic graph with worst-case update time $O(n^{1.529})$. 

Let $k = \sqrt{\log n}$. We sample sets $V = A_0 \supseteq A_1 , \dots \supseteq A_k \supseteq A_{k+1} = \emptyset$ where $A_i$ for $i \in [1,k + 1]$ is obtained by sampling each vertex in $V$ with probability $n^{-i/k} \log n$ (and to make the sets nesting add it to all $A_j$ where $j \leq i$, we assume that $A_{k+1}$ is empty which can be achieved by resampling a constant number of times in expectation). Given these sets, we say that each $a \in A_{\ell} \setminus A_{\ell+1}$  is \emph{active} if for no $j > \ell$ there exists a vertex $a' \in A_j \setminus A_{j+1}$ with  $\dist_{\tilde{G}}(a,a') \lesssim \left(\frac{1}{\epsilon}\right)^{j}$. Using this definition of activeness, we can show by a simple hitting set argument that each active vertex $a \in A_{\ell} \setminus A_{\ell+1}$ has in its ball to radius $\sim \left(\frac{1}{\epsilon}\right)^{\ell+1}$ at most $\tilde{O}(n^{(\ell+1)/k})$ vertices w.h.p..

Given this set-up, a natural way to construct a spanner $H$, would be to find for each level $\ell \in [0,k]$, the active vertices in $A_{\ell} \setminus A_{\ell+1}$ and to include their shortest path trees truncated at radius $\sim \left(\frac{1}{\epsilon}\right)^{\ell+1}$. For the number of edges of $H$, it is not hard to see that each for each $\ell \in [0,k)$, there are at most $\tilde{O}(n^{1-\ell/k})$ vertices that are active in $A_{\ell} \setminus A_{\ell+1}$. Each of these vertices contributes a single edge for each vertex in its truncated ball (except for itself), and as discussed above we have that each ball is of size at most $\tilde{O}(n^{(\ell+1)/k})$. Thus, we have that $H$ has at most $\tilde{O}(n^{1+1/k}) = n^{1+ o(1)}$ edges.

For the stretch factor, observe that for any vertices $s,t \in V$, with shortest path $\pi_{s,t}$, we have that for $s \in A_{\ell} \setminus A_{\ell+1}$ for some level $\ell$, if $s$ is active, we can simply travel along the $\pi_{s,t}$ to some vertex $s'$ that is closer to $t$ (since the truncated shortest path tree of $s$ is included in $H$) and then expose the shortest path $\pi_{s', t}$ inductively. Or, we have that there is a vertex $a' \in A_{j} \setminus A_{j+1}$ at distance $\lesssim \left(\frac{1}{\epsilon}\right)^{j}$ to vertex $s$ (with $j > \ell$. Choosing $a'$ to be the vertex that is at this distance to $a$ with the largest possible $j$, we will be able to argue that $a'$ is active. Thus, we can travel from $s$ to $a'$ to a vertex $s'$ on $\pi_{s,t}$ at distance roughly $\left(\frac{1}{\epsilon}\right)^{j+1}$ along the shortest path tree at $a'$ truncated at depth $ \sim \left(\frac{1}{\epsilon}\right)^{j+1}$ that was included in $H$. It is not hard to see that the error induced for visiting $a'$ can be subsumed in a multiplicative $(1+O(\epsilon))$-approximation. However, this only works if $s$ and $t$ are at distance $ \gtrsim \left(\frac{1}{\epsilon}\right)^{j+1}$, otherwise it induces an additive error of $n^{o(1)}$. This explains why we obtain a $(1+\epsilon, n^{o(1)})$-approximation.

Unfortunately, while this set-up of $H$ is sensible, consider the example of the complete graph. Then, there would be a vertex $a \in A_{k} \setminus A_{k+1}$ (which is active since $A_{k+1}$ is empty) where visiting the truncated ball at $a$ would take time $\tilde{O}(n^2)$, which is by far too expensive for our algorithm. 

Too overcome this issue, instead of inserting truncated shortest path trees to $H$, we only insert for any active vertex $a \in A_{\ell} \setminus A_{\ell+1}$, the shortest paths to other vertices in $A_{\ell} \setminus A_{\ell+1}$ in the truncated shortest path tree of $a$. We then fix a threshold $\gamma \approx \lfloor 0.529 \cdot k\rfloor$, and can use the algebraic data structure from \Cref{thm:intro} to maintain the distances of vertices in  $A_{\gamma}$ (and thereby $A_{\gamma + 1}, A_{\gamma +2}, \dots, A_k, A_{k+1}$) without explicitly maintaining the balls of the active vertices. For active vertices in some set $A_{\ell} \setminus A_{\ell+1}$ for $\ell < \gamma$, we can compute the balls explicitly after every update. This is because each such ball only contains $\tilde{O}(n^{(\ell+1)/k})$ vertices, and therefore the induced graph can contain at most $\tilde{O}(n^{2(\ell+1)/k})$ edges, which implies that we overall, spend at most time $\tilde{O}(n^{1-\ell/k}) \cdot n^{2(\ell+1)/k} = \tilde{O}(n^{1 + (\ell+2)/k})$ time for computing all such balls. We refer the reader to section \Cref{subsec:AlgebraicSpanner} for a proper analysis of the running time.

Finally, we point out that so far $H$ only contains shortest paths between active vertices in the same set $A_{\ell} \setminus A_{\ell+1}$ (if they are reasonably close). However, to have a path between vertices on different levels, we also add a $O(\log n)$ spanner $\tilde{G}$ of $G$ to $H$. Such a spanner is simple to maintain, for example \cite{forster2019dynamic} shows how to maintain such a spanner with $\tilde{O}(n)$ edges and $\tilde{O}(1)$ amortized update time. 

The idea of the approximation proof then becomes the following for some path $\pi_{s,t}$: Let $i$ be the largest index such that an active vertex $a \in A_i$ is at distance at most $\sim \epsilon^{-i}$ to $s$. Let $w$ be the farthest vertex from $s$ on $\pi_{s,t}$ such that (1) the distance from $s$ to $w$ is at most $\sim \epsilon^{-(i+1)}$, and (2) $w$ has distance at most $\sim \epsilon^{-i}$ to an active vertex $a' \in A_i$. Such a vertex $w$ exists since we could have $w = s$ and $a' = a$. It is then apparent that the distance between $a$ and $a'$ is $\lesssim \epsilon^{-(i+1)}$ and since $a'$ is active, we can ensure that the shortest path from $a$ to $a'$ is in $H$. Further, we can use the paths in the spanner $\tilde{G}$ (which belongs to $H$) to get from $s$ to $a$ and from $a'$ to $w$; since these two distances are small, it suffices to have an $O(\log n)$ multiplicative error for them. Now, observe that this induces additive error along the path segment from $s$ to $w$ of $\tilde{O}(\epsilon^{-i})$ (by the triangle inequality). We either have that $a$ and $a'$ are roughly at distance $\sim \epsilon^{-(i+1)}$ (which suffices to subsume the additive error in the multiplicative error), or we have that for the next path segment of length $\sim \epsilon^{-(i+1)}$, no vertex is close to any active vertex in $A_i$. Thus, when we repeat the whole argument for the  next path segment, we get that vertices on lower levels are active, which means that they induce less additive error. This allows us to subsume the additive error from higher levels into  multiplicative error for a series of segments of lower levels.

We refer the reader to \Cref{subsec:AlgebraicSpanner} for the full details of the algorithm.

\paragraph{Fully dynamic APSP with path reporting.} Our data-structure of theorem \ref{thm:intro} is an augmentation of Sankowski's \cite{sankowski-thesis} data structure to support fast successor queries. Essentially, Sankowski showed how to reduce the problem of maintaining the short distances in a dynamic unweighted graph to the dynamic matrix inverse problem, by representing the path lengths as degrees of the adjoint of a polynomial matrix. He then showed how to efficiently maintain the inverse of a matrix subject to entry updates, allowing for fast distance queries. We extend his techniques to maintain products of matrices and the inverse, and show how to use these products to extract successor information similarly to Seidel's path reporting algorithm for static APSP \cite{seidelapsp}. Let us begin here by sketching Sankowski's data-structure \cite{sankowski-thesis}, formally reviewed in Section \ref{subsec::sankreview}, and then present the high level of our augmentation. \\

\noindent\textit{Short Distances to Dynamic Matrix Inverse.} \cite{sankowski-thesis} showed how to encode path lengths of an unweighed graph in the adjoint of a matrix, that is, given a adjacency matrix $A_{ij}$ with a random integer entry if $(i, j)\in E$, then the lowest degree non-zero term of the polynomial adj$(\mathbb{I}-u A)_{ij}$ over the variable $u$ is the distance $d_{ij}$ whp (Lemma \ref{lemma::sankadjoint}). In this manner maintaining adj$(\mathbb{I}-uA)_{ij}$ mod $u^{D+1}$, for some distance parameter $D$, allows us to query a distance $i\rightarrow j$ correctly whp if $d_{ij}\leq D$. Note that adj$M$ = det$M\times M^{-1}$, s.t. it suffices just to maintain det $M$ and $M^{-1}$ mod $u^{D+1}$. \\

\noindent \textit{Dynamic Matrix Inverse.} We detail the algebraic tools developed by Sankowski \cite{sankowski-thesis} to maintain the inverse of a matrix $M$ dynamically and over a ring in Section \ref{subsec::sankreview}. The main idea is to maintain explicitly (i.e. all $n^2$ entries) two matrices $T, N$, s.t. we maintain the invariant
\begin{equation}
    M^{-1} = T(\mathbb{I}+N)
\end{equation}
where, initially, $T=M^{-1}$ and $N=0$, and as later shown each single entry update to $M$ corresponds to a single row update to $N$ (and no modifications to $T$!). After $m$ updates, $N$ has at most $m$ non-zero rows, and we can exploit this \textit{sparsity} of $N$ to quickly compute its row-updates in $O(mn)$, and every $m=n^\eps$ updates we reset $T\leftarrow T+TN$, $N\leftarrow 0$, in $O(n^{\omega(1, 1, \eps)-\eps})$ time on average. In this manner, we guarantee that $N$ is always sparse, and updates take amortized time 
\begin{equation}
    O(n^{\omega(1, 1, \eps)-\eps}+n^{1+\eps})
\end{equation}
for some parameter $\eps\in (0, 1)$ which we can later optimize over. Entry queries $(i, j)$ are now straightforward, as it suffices to compute the dot product 
\begin{equation}
    M^{-1}_{ij} = e_i^TT(\mathbb{I}+N)e_j = T_{ij} + (e_i^TT)\cdot (Ne_j)
\end{equation}
which can be done in $O(n^\eps)$ time since a given column of $N$ has at most $O(n^\eps)$ non-zero entries. In Corollary \ref{cor::polyminverse} \cite{sankowski-thesis} showed that we can maintain these matrices over a ring mod $u^{D+1}$ by introducing a multiplicative factor of $\tilde{O}(D)$ to the runtimes described above, s.t. now if $M=\mathbb{I}-uA$ we can query any distance $d\leq D$ in the graph in time $\tilde{O}(Dn^\eps)$.\\

\noindent\textit{Successor Queries to Product Maintenance.} There are two main ingredients to our augmentation the data-structure of \cite{sankowski-thesis}. The first is to reduce the successor query of a pair $(i, j)$ of an unknown number of distinct successors to that of a single successor, using a known sparsification trick used first in Seidel's algorithm for static, undirected and unweighted APSP \cite{seidelapsp}. This only introduces a $\log^2 n$ multiplicative factor to the runtime and we defer the formal argument to Lemmas \ref{lemma::singlesuccessor} and \ref{lemma::multiplesuccessor}. The second ingredient is to show how to find a single successor by finding a witness of the product  $(A\cdot $adj$(\mathbb{I}-uA))_{ij}$. The key new insight is that if the distance $1<d_{ij}\leq D$, then adj$(\mathbb{I}-uA)_{ij}$ has minimum degree $d_{ij}$, and thereby the product $(A\cdot $adj$(\mathbb{I}-uA))_{ij}$ must have minimum degree $d_{ij}-1$. This is since there must exist a \textit{unique} witness $s$ (the single successor!) s.t. $A_{is}$ is non-zero, corresponding to an edge, and adj$(\mathbb{I}-uA)_{sj}$ has minimum degree $d_{sj} = d_{ij}-1$, corresponding to the length of the shortest path from $s$ to $j$.

We can find this single witness by computing its bitwise description, that is, defining $O(\log n)$ versions of the adjacency matrix $A$, $A^{(l)}$ for $l\in [O(\log n)]$, where we null the $p$th column of $A^{(l)}$ if the $l$th bit of $p$ is $0$. As there is only a single witness $s$, the minimum degree of the product $(A^{(l)}\cdot $adj$(\mathbb{I}-uA))_{ij}$ is $d_{sj}=d_{ij}-1$ only if the column of $s$ is selected, that is, the $l$th bit $s_l=1$. In this manner, if we query the $O(\log n)$ products  $(A^{(l)}\cdot $adj$(\mathbb{I}-uA))_{ij}$, the 1-bits $s_l=1$ are exactly the products $l$ s.t. the minimum degree is correct. This allows us to extract the successor description in a polylog number of queries to products $(E\cdot $adj$(\mathbb{I}-uA))_{ij}$ for given matrices $E$. \\

\noindent \textit{Product Maintenance.} The last detail in our successor query augmentation is to show how to maintain products $(E\cdot $adj$(\mathbb{I}-uA))$, where we can modify entries of $E$ and $A$, and query entries $i,j$ of the result. Note again that it suffices to maintain $(E\cdot (\mathbb{I}-uA)^{-1})$, as opposed to the adjoint, just by multiplying by the determinant. We do so by following the inverse maintenance algorithm and explicitly maintaining the matrices $T, N$ and $V\equiv ET$, s.t. we maintain the invariant
\begin{equation}
    EM^{-1} = ET(\mathbb{I}+N) = V(\mathbb{I}+N)
\end{equation}
We address updates to $E$ and to $A$ completely differently. Updates to $A$ follow the original lazy construction, where we simply perform a row-update to $N$, and every $n^\eps$ updates we "reset" $V\leftarrow V(\mathbb{I}+N), T\leftarrow T(\mathbb{I}+N), N\leftarrow 0$. We note that correctness follows by associativity, s.t. although matrices $V$ and $T$ are dense we can still exploit the sparsity of $N$ to compute their updates independently in time $O(Dn^{1+\eps}+Dn^{\omega(1, 1, \eps)-\eps})$ on average. Entry-Updates to $E$, $E\leftarrow E+e_{ij}$, are much simpler. We once again use associativity to compute the row update $V\leftarrow V+e_{ij}T$ in $\tilde{O}(Dn)$ time. Finally, to query an entry of the product $(EM^{-1})_{ij}$, we follow analogously to \cite{sankowski-thesis} and compute the dot product

\begin{equation}
    (EM^{-1})_{ij} = e_i^TV(\mathbb{I}+N)e_j = V_{ij} + (e_i^TV)\cdot (Ne_j)
\end{equation}

\noindent and since $V$ and $N$ are maintained explicitly, this takes time $\tilde{O}(Dn^\eps)$, the exact same as the distance queries. Overall, this allows for $\tilde{O}(Dn^\eps)$ time successor queries, and by iterating, short path queries of length $\leq D$ in time $\tilde{O}(D^2n^\eps)$.

\section{Preliminaries}\label{sec:pre}

We let $G=(V,E)$ denote an undirected unweighted dynamic input graph, where $n = |V|$ and $m = |E|$. For any graph $H$, and two vertices $a,b \in V(H)$, we denote by $\mathbf{dist}_H(a,b)$ the distance between the two vertices in $G$ and let  $\pi_{a,b, H}$ denote a corresponding shortest path between $a$ and $b$. If the graph $H$, especially when we use the input graph $G$, is clear from the context, we simply use $\pi_{a,b}$. We define ${B}_H(s, r)$ in the graph $H$, to be the ball rooted at $s$ with radius $r$, i.e. the set of vertices ${B}_H(s, r) = \{ w \in V(H) | \dist_H(s,w) \leq r\}$. Throughout the article, we often use the data structure stated below that is sometimes referred to as the Even-Shiloach (ES) tree.

\begin{lemma} [c.f.  \cite{shiloach1981line}]
\label{lma:maintainBalls}
For any vertex $s \in V$, radius $r$, there is a deterministic data structure on a partially dynamic graph $G$ that reports for every $w \in {B}(s,r)$, the distance $\dist(s,w)$. In fact, the data structure maintains explicitly the shortest path tree in $G[{B}(s,r)]$. The total update time of the data structure is $O(m r)$ time where $m$ is the maximum number of edges ever in $G[{B}(s,r)]$.
\end{lemma}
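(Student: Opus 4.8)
The object to build is the classical Even--Shiloach tree, so the plan is to maintain, for every vertex $w$, a \emph{level} $\ell(w)\in\{0,1,\dots,r\}\cup\{\infty\}$ and a parent pointer $\mathrm{par}(w)$, under the invariant that $\ell(w)=\dist_G(s,w)$ whenever $\dist_G(s,w)\le r$ and $\ell(w)=\infty$ otherwise, and that the parent pointers restricted to $B(s,r)$ form a shortest path tree rooted at $s$ (so reading off $\ell$ and following parents answers both required queries). Initialization is a BFS from $s$ truncated at depth $r$, costing $O(m)$ time, where we may take $m$ to be the number of edges of $G[B(s,r)]$ since no other edge is ever scanned.

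I would first describe handling one update in the decremental case (the incremental case, discussed below, is strictly easier). When an edge $(x,y)$ with $\ell(x)\le\ell(y)$ is deleted, nothing changes unless it was a tree edge, i.e.\ $\mathrm{par}(y)=x$; in that case $y$ is placed in a worklist of ``dirty'' vertices. The main loop repeatedly removes the dirty vertex $v$ of \emph{smallest} level and rescans (see below) its adjacency list for a neighbor $z$ with $\ell(z)=\ell(v)-1$; if one is found we set $\mathrm{par}(v)\leftarrow z$ and $v$ becomes clean, and otherwise we raise $\ell(v)$ by exactly one (setting it to $\infty$ and evicting $v$ from the ball if it would exceed $r$), re-insert $v$ into the worklist, and insert every tree-child of $v$ into the worklist. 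Processing dirty vertices in increasing order of level is what makes this correct: a vertex declared clean has a neighbor one level below whose level is already final, so by induction on the level its own level is final; and levels are only ever raised, always to a value that is still $\le\dist_G(s,v)$, so the invariant is preserved. The incremental case needs no search for a replacement parent---levels only decrease---and an insertion of $(x,y)$ simply triggers a Dijkstra/BFS-style relaxation that lowers $\ell(y)$ to $\ell(x)+1$ when this improves it and propagates to neighbors.

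The one genuinely nontrivial point, and the main obstacle, is the $O(mr)$ total running time: a naive ``rescan all of $v$'s adjacency list every time $v$ is touched'' analysis costs $\Theta(r\deg(v)^2)$ per vertex and does not suffice. The standard remedy is the pointer trick: each vertex $v$ keeps a pointer $p_v$ into its adjacency list with the property that every neighbor before $p_v$ has level $\ge\ell(v)$; when $v$ is processed we only advance $p_v$ forward looking for a neighbor at level $\ell(v)-1$, and we reset $p_v$ to the start of the list precisely when $\ell(v)$ is incremented (at which moment neighbors we had skipped may become valid parents). The running-time proof then splits into: (i) $\ell(v)$ takes at most $r+2$ values over the whole sequence, so $p_v$ is reset $O(r)$ times, and between resets it moves forward at most $\deg(v)$ steps, giving $O(r\deg(v))$ pointer work; (ii) $v$ joins the worklist at most $O(r\deg(v))$ times overall---at most $\deg(v)$ times because its current parent edge was deleted (a different edge each time, by monotonicity of decremental updates), plus at most once per level increase among its neighbors, which is $\sum_{z\sim v}(r+2)=O(r\deg(v))$---and each such processing contributes only $O(1)$ beyond the pointer advancement and the $O(\deg(v))$ charged to a level increase of $v$ itself. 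Summing over all vertices yields $O(rm)$; implementing the worklist as $r+1$ level-indexed buckets (rather than a heap) keeps this free of logarithmic factors. Since every vertex and edge touched lies in $G[B(s,r)]$ at the time it is touched, and in the decremental (resp.\ incremental) case that subgraph only shrinks (resp.\ grows), $m$ may be taken to be the maximum number of edges ever present in $G[B(s,r)]$, as claimed.
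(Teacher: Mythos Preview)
The paper does not prove this lemma at all: it is stated in the Preliminaries with the citation ``c.f.\ \cite{shiloach1981line}'' and used as a black box throughout, so there is no ``paper's own proof'' to compare against. Your write-up is a faithful reconstruction of the classical Even--Shiloach argument, including the pointer-reset trick and the level-bucketed worklist, and the charging in parts (i) and (ii) is correct; this is exactly the kind of proof one would expect the cited reference to contain.
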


Let $\omega$ be the infimum over all reals such that $n\times n$ matrices can be multiplied in $O(n^{\omega+\epsilon})$ time for all $\epsilon>0$. It is known that $\omega\in [2,2.373)$ \cite{vstoc12,legallmult}. More generally, let $\omega(a,b,c)$ be the 
infimum over all reals such that an $n^a\times n^b$ matrix can be multiplied by an $n^b\times n^c$ matrix in $O(n^{\omega(a,b,c)+\epsilon})$ time for all $\epsilon>0$.
A notable result is that for $b\leq 0.313$, $\omega(1,b,1)=2$ \cite{GallU18}.

\section{Conditional lower bounds}\label{sec:lb}

We present conditional lower bounds for amortized algorithms in the fully dynamic, incremental, and decremental settings. Our constructions for amortized algorithms in the fully dynamic setting also imply lower bounds for the incremental and decremental settings, but only for worst-case update times. We present separate constructions for the \emph{amortized} incremental and decremental settings.

We first describe why our fully dynamic conditional lower bounds also apply to the incremental and decremental
settings for worst-case update times. This is due to the nature of our reductions:
all of our reductions produce an initial graph on which we perform update stages that only insert or only delete (we can
choose which) a batch of edges, ask a query and undo the changes just made, returning to the initial
graph. An incremental (resp. decremental) algorithm can be used for this type of dynamic graph by performing the insertions (resp. deletions) and then rolling back the data structure to the initial graph and repeating.

\subsection{Conditional lower bounds from the OMv conjecture}

\subsubsection{Statement of results}
We prove conditional lower bounds from the OMv conjecture for dynamic emulator maintenance in the fully dynamic, incremental, and decremental settings. We prove the following theorem for the fully dynamic setting, which also extends to the incremental and decremental settings but only for worst-case update times.

\begin{theorem}\label{thm:oumv-fully}
Under the OMv conjecture, for an $n$-vertex fully dynamic graph with at most $m$ edges at all times, for every constant $\epsilon>0$, there is no algorithm for maintaining a $(1+\alpha,\beta)$-emulator for any $\alpha\in[0,2/3)$ and integer $\beta\geq 0$, with $O(m^{1-\epsilon}(\frac{2-3\alpha}{\beta}))$ edges, polynomial preprocessing time, and  amortized update time $O(n^{1-\epsilon}(\frac{2-3\alpha}{\beta})^2)$ such that over a polynomial number of edge updates the error probability is at most $1/3$ in the word-RAM model with $O(\log n)$ bit words.

The same result holds for incremental and decremental algorithms, but for worst-case update time.
\end{theorem}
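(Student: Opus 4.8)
The plan is to reduce from the OuMv problem (which is OMv-hard by~\cite{henzinger2015unifying}) using the fully dynamic gadget sketched in the technical overview, and to show that a dynamic $(1+\alpha,\beta)$-emulator with the stated edge bound and update time would solve OuMv too fast. First I would fix the number of copies $c$ of the basic bipartite gadget: each copy contributes to the stretch, so to distinguish a distance of $3$ (when the OuMv bit is $1$) from a distance of at least $5$ (when it is $0$) through a $(1+\alpha,\beta)$-emulator over a chain of $c$ such copies, I need the emulator's estimate to separate $3c$ from $5c$, i.e.\ $(1+\alpha)\cdot 3c + \beta < 5c$, which forces $c > \beta/(2-3\alpha)$. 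So I would set $c = \Theta(\beta/(2-3\alpha))$. The graph has $O(cn)$ vertices and $O(cn^2)$ edges; to keep the vertex count at $n$ I rescale the OuMv dimension to $n' = \Theta(n/c)$, so the whole construction has $n$ vertices and $m = \Theta(c (n')^2) = \Theta(n^2/c)$ edges at all times. This is where the factors $(2-3\alpha)/\beta$ and $((2-3\alpha)/\beta)^2$ in the statement come from: the allowed edge count $O(m^{1-\epsilon}(2-3\alpha)/\beta)$ and the allowed update time $O(n^{1-\epsilon}((2-3\alpha)/\beta)^2)$ become, after substituting $m,n$ in terms of $n'$ and $c$, bounds of the form $O((n')^{2-\epsilon'})$ and $O((n')^{1-\epsilon'})$ respectively.

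Next I would verify the distance gadget carefully. In a single copy $G_j$ with the phase-$i$ red edges attached, there is a path $w_{j-1}\to a_k\to b_\ell\to w_j$ of length $3$ iff there exist $k,\ell$ with $u^i_k=1$, $M_{k\ell}=1$, $v^i_\ell=1$, i.e.\ iff $(u^i)^\intercal M v^i = 1$; otherwise every $w_{j-1}$–$w_j$ path must leave and re-enter through distinct $a$/$b$ vertices and has length at least $5$. Concatenating over $j=1,\dots,c$ (the $w_j$'s are shared endpoints of consecutive copies) gives $\dist_G(w_0,w_c) = 3c$ if the bit is $1$ and $\dist_G(w_0,w_c)\geq 5c$ if it is $0$ (one should double-check that no shortcut through a different copy helps, which holds because the copies are vertex-disjoint apart from the shared $w_j$'s and the $w_j$'s are otherwise isolated). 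Then in the emulator $H$ we run BFS (or Dijkstra, if $H$ is weighted, but since the original graph is unweighted and the emulator bounds are additive/multiplicative we can afford BFS on an appropriately bucketed version — here I would just cite that the emulator distance is recoverable in time linear in $|E(H)|$ after the phase) to read off $\dist_H(w_0,w_c)$; by the choice of $c$ this estimate is $<5c$ exactly when the OuMv bit is $1$, so we recover the answer to phase $i$.

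The running-time accounting is then routine: preprocessing is polynomial by hypothesis; each of the $n'$ phases does $O(n')$ edge insertions and the same number of deletions to roll back, costing $O(n') \cdot O((n')^{1-\epsilon'}) = O((n')^{2-\epsilon'})$ update time, plus one BFS over $O((n')^{2-\epsilon'})$ emulator edges; over all $n'$ phases this is $O((n')^{3-\epsilon'})$, contradicting OMv via the OMv-to-OuMv reduction. For the worst-case incremental/decremental extension I would invoke the general rollback argument already stated in the excerpt: each phase only inserts (or, symmetrically, only deletes) a batch of edges, queries, and undoes them, so a worst-case partially dynamic algorithm handles it by running the batch forward and then rolling the data structure back, with no amortization needed. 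The main obstacle I anticipate is not any single step but getting the constants in the chain-length inequality and the rescaling to line up so that the polynomial savings $\epsilon$ survives intact and genuinely yields the parametrized bounds $O(m^{1-\epsilon}(2-3\alpha)/\beta)$ and $O(n^{1-\epsilon}((2-3\alpha)/\beta)^2)$ claimed — in particular making sure the error probability stays below $1/3$ over polynomially many phases, which needs a union bound and hence a mild boosting of the per-query success probability of the (randomized) emulator algorithm.
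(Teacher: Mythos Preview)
Your proposal is correct and follows essentially the same reduction as the paper: $c=\Theta(\beta/(2-3\alpha))$ chained copies of the bipartite OuMv gadget, phase-$i$ edges encoding $u^i,v^i$, and a shortest-path query on the emulator to separate $\dist(w_0,w_c)=3c$ from $\dist(w_0,w_c)\ge 5c$. One arithmetic slip to fix: each phase inserts $O(cn')$ edges (up to $2n'$ per copy, across $c$ copies), not $O(n')$ --- but this is harmless because the per-update time $O(n^{1-\epsilon}((2-3\alpha)/\beta)^2)=O((n')^{1-\epsilon}/c^{1+\epsilon})$ carries an extra $1/c^{1+\epsilon}$ factor you dropped, and the two cancel to give $O((n')^{3-\epsilon})$ total as claimed. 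Also, your boosting concern is unnecessary: the hypothesis already bounds the error probability \emph{over the entire polynomial-length update sequence}, so a single run suffices.
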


In particular, for the natural setting where $\alpha$ is constant and $\beta=n^{o(1)}$ we have the following corollary:

\begin{corollary}
Under the OMv conjecture, for an $n$-vertex fully dynamic graph with at most $m$ edges at all times, for every constant $\epsilon>0$, there is no algorithm for maintaining a $(1+\alpha,n^{o(1)})$-emulator for any constant $\alpha\in[0,2/3)$ with $O(m^{1-\epsilon})$ edges, polynomial preprocessing time, and  amortized update time $O(n^{1-\epsilon})$ such that over a polynomial number of edge updates the error probability is at most $1/3$ in the word-RAM model with $O(\log n)$ bit words.

The same result holds for incremental and decremental algorithms, but for worst-case update time.
\end{corollary}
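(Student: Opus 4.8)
The plan is to reduce from the OuMv problem, which is OMv-hard~\cite{henzinger2015unifying}, by realizing the fully dynamic construction sketched in \Cref{sec:over} (see \Cref{fig:omv_fully_intro}) and turning a hypothetical emulator data structure into a too-fast OuMv solver. Given an OuMv matrix $M$ over $\{0,1\}$ of side length $s$, I would build the \emph{basic gadget}: the bipartite graph on $A=\{a_1,\dots,a_s\}$, $B=\{b_1,\dots,b_s\}$ with $a_ib_j\in E$ iff $M_{ij}=1$. Fix $c:=\lceil\beta/(2-3\alpha)\rceil+1$ --- this is the only place the hypothesis $\alpha<2/3$ is used --- take $c$ vertex-disjoint copies $G_1,\dots,G_c$ of the gadget together with isolated vertices $w_0,\dots,w_c$, and hand this static graph to the emulator data structure as its preprocessing input; the resulting dynamic graph always has $\nu=\Theta(cs)$ vertices and $m=O(cs^2)$ edges. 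In phase $t$, given the pair $(u^t,v^t)$, insert for every copy $G_j$ the ``red'' edges $\{w_{j-1}a_k:u^t_k=1\}$ and $\{w_jb_k:v^t_k=1\}$, query the maintained emulator $H$ for $\dist_H(w_0,w_c)$ (one shortest-path computation on $H$), and then delete exactly the edges just inserted, returning the graph to its static state.

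Correctness of a single phase is a short distance computation. The block $\{w_{j-1}\}\cup A\cup B\cup\{w_j\}$ is bipartite with $w_{j-1}$ and $w_j$ on opposite sides, so $\dist_G(w_{j-1},w_j)$ is odd; it equals $3$ exactly when some pair $(k,k')$ has $u^t_k=M_{kk'}=v^t_{k'}=1$, and it is $\ge 5$ (possibly $\infty$) otherwise. Since every copy sees the same pair and the only connections between consecutive $w$'s run through the gadgets, the $w_j$ are cut vertices and $\dist_G(w_0,w_c)=\sum_{j}\dist_G(w_{j-1},w_j)$ equals $3c$ when $(u^t)^\top Mv^t=1$ and is $\ge 5c$ otherwise. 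An $(1+\alpha,\beta)$-emulator obeys $\dist_G\le\dist_H\le(1+\alpha)\dist_G+\beta$, so in the ``yes'' case $\dist_H(w_0,w_c)\le 3(1+\alpha)c+\beta<5c$ by the choice of $c$, while in the ``no'' case $\dist_H(w_0,w_c)\ge 5c$. Hence the test ``$\dist_H(w_0,w_c)<5c$?'' decides phase $t$, and since the emulator algorithm errs with probability at most $1/3$ over the entire polynomial-length update sequence, all phase answers are simultaneously correct with probability at least $2/3$.

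For the running time, the emulator has $O(m^{1-\epsilon}\tfrac{2-3\alpha}{\beta})=O(m^{1-\epsilon}/c)=\tilde O(c^{-\epsilon}s^{2-2\epsilon})$ edges and amortized update time $O(\nu^{1-\epsilon}(\tfrac{2-3\alpha}{\beta})^2)=O(\nu^{1-\epsilon}/c^2)$; one phase does $O(cs)$ insertions and $O(cs)$ deletions at amortized cost $O(cs\cdot\nu^{1-\epsilon}/c^2)=O(c^{-\epsilon}s^{2-\epsilon})$, plus one shortest-path computation in time $\tilde O(\nu+c^{-\epsilon}s^{2-2\epsilon})$, i.e.\ $\tilde O(s^{2-\epsilon})$ per phase once $c$ is polynomially smaller than $s$. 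The one delicate point is that the algorithm is allowed \emph{arbitrary polynomial} preprocessing $\nu^{d}$; I would absorb this with the block-decomposition device standard in OMv-based reductions: rather than one $s\times s$ OuMv matrix, run $(n/s)^2$ independent copies of the data structure, one on each $s\times s$ sub-block of a genuine $n\times n$ OMv (equivalently OuMv) instance, so that preprocessing costs only $(n/s)^2\cdot(cs)^{d}=\tilde O(n^2 s^{d-2})$ in total while each of the $n$ online vectors is served by a query into every block (using the fully dynamic deletions, hence no rollback is needed), for an online cost of $n\cdot(n/s)^2\cdot\tilde O(s^{2-\epsilon})=\tilde O(n^{3}s^{-\epsilon})$; choosing $s=n^{1/(d-2+\epsilon)}$ balances the two terms at $O(n^{3-\Omega(1)})$, contradicting the OMv conjecture. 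The incremental and decremental statements (with \emph{worst-case} update time) follow from the observation at the start of \Cref{sec:lb}: each phase only inserts (resp.\ only deletes) a batch that must then be undone, and with a worst-case update bound the number of memory cells touched during a phase is at most that bound times the batch size, so the data structure can be snapshotted and rolled back at no asymptotic cost. Finally, the stated Corollary is the special case $\alpha=\text{const}\in[0,2/3)$ and $\beta=n^{o(1)}$, where $c=n^{o(1)}$ and the factors $\tfrac{2-3\alpha}{\beta}$ collapse into $n^{\pm o(1)}$ (replacing $\epsilon$ by $\epsilon/2$ in the theorem absorbs them).

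I expect the main obstacle to be not the gadget --- the $3c$ versus $5c$ gap and the inequality $3(1+\alpha)c+\beta<5c$ are routine once $c$ is chosen --- but making the reduction genuinely beat the OMv barrier in the presence of an unbounded polynomial preprocessing budget. The fix, decoupling the matrix side length $s$ from the number of online rounds via the block decomposition so that the per-update polynomial savings $s^{-\epsilon}$ of the hypothetical algorithm defeat the subpolynomial slack inherent in OMv-based lower bounds, is the step I would check most carefully (including that the amortized bound, not a worst-case one, suffices for the fully dynamic case because we never roll back there). A secondary, easy-but-necessary verification is that $m=\Theta(cs^2)$ really is the maximum edge count (take $M$ with $\Theta(s^2)$ ones, or just use $m=O(cs^2)$ as an upper bound throughout) and that the claimed emulator sparsity $\tilde O(m^{1-\epsilon}/c)$ keeps every shortest-path computation to $o(s^2)$ time.
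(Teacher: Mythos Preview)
Your gadget, choice of $c=\lceil\beta/(2-3\alpha)\rceil+1$, and the $3c$-versus-$5c$ correctness argument are exactly what the paper does, so that part is fine.

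Where you diverge is in handling the polynomial preprocessing budget. You flag this as ``the one delicate point'' and introduce a block decomposition of an $n\times n$ OuMv instance into $(n/s)^2$ sub-instances, each handled by its own emulator data structure. This is unnecessary. The OuMv hardness you are reducing from (Theorem~\ref{thm:oumv}) already states that no algorithm with \emph{polynomial preprocessing time} and $O(n^{3-\epsilon})$ online time can solve OuMv. Since your dynamic graph has $n'=\Theta(cn)$ vertices with $c=n^{o(1)}$, the emulator's polynomial preprocessing is $\mathrm{poly}(n')=\mathrm{poly}(n)$, which is absorbed directly into the OuMv preprocessing allowance. The paper simply notes this and moves on; no sub-blocking is needed, and the whole analysis becomes the two-line calculation $O(cn^2)$ updates $\times$ $O((cn)^{1-\epsilon}/c^2)$ amortized cost $=O(n^{3-\epsilon})$, plus $n$ BFS calls on an $O((cn^2)^{1-\epsilon}/c)$-edge emulator.

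If you did insist on the block decomposition, it would actually introduce a gap you do not address: with $(n/s)^2$ independent data structures, each allowed to err with probability up to $1/3$, a union bound over all blocks and all $n$ phases fails badly. You would need to amplify each block's success probability (e.g.\ by running $O(\log n)$ independent copies and taking majority votes on each BFS outcome), which you do not mention. So the detour you thought was the crux is both avoidable and, as written, incomplete.
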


For the incremental and decremental settings, we prove the following theorem.

\begin{theorem}\label{thm:oumv-partial}
Under the OMv conjecture, for an $n$-vertex incremental or decremental graph with $m$ edge insertions or deletions, for every constant $\epsilon>0$, there is no algorithm for maintaining a $\beta$-additive emulator for any integer $\beta\geq 0$, with $O(m^{1-\epsilon}/\beta)$ edges, polynomial preprocessing time, and  total update time $O(mn^{1-\epsilon}/\beta^2)$ with error probability at most $1/3$ in the word-RAM model with $O(\log n)$ bit words.
\end{theorem}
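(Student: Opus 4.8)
The plan is to reduce from the online vector--matrix--vector problem (OuMv): given an $N\times N$ Boolean matrix $M$ to be preprocessed, answer an online sequence of queries $(u^i)^\intercal M v^i$, $i=1,\dots,N$; by the reduction of~\cite{henzinger2015unifying} the OMv conjecture implies there is no algorithm with polynomial preprocessing of $M$ that answers all $N$ queries online in total time $O(N^{3-\delta})$ with error probability $\le 1/3$. Suppose, for contradiction, that some algorithm $\mathcal A$ maintains a $\beta$-additive emulator with $O(m^{1-\epsilon}/\beta)$ edges, polynomial preprocessing, and total update time $O(mn^{1-\epsilon}/\beta^2)$. Following the blueprint of Section~\ref{sec:over}, I would build the \emph{basic gadget} --- the bipartite graph on $A=\{a_1,\dots,a_N\}$, $B=\{b_1,\dots,b_N\}$ with an edge between $a_i$ and $b_j$ iff $M_{i,j}=1$ --- and, since a $\beta$-additive emulator cannot separate $\dist=3$ from $\dist=5$ once $\beta\ge 2$, amplify the gap by \emph{chaining}: take $c:=\lceil(\beta+1)/2\rceil=\Theta(\beta)$ vertex-disjoint copies $G_1,\dots,G_c$ of the gadget. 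For the incremental bound I would further introduce $c+1$ \emph{connector paths} $P_0,\dots,P_c$, where $P_\ell$ has $N$ designated vertices $P_\ell^1,\dots,P_\ell^N$ joined consecutively by padding segments of length $5c$; the copies together with the connector paths form the static initial graph (built in polynomial time and handed to $\mathcal A$'s preprocessing). The algorithm then runs $N$ phases: in phase $i$, given $(u^i,v^i)$, insert, for every copy $j\in[c]$, an edge between $P_{j-1}^i$ and $a_k^{(j)}$ for each $k$ with $u^i_k=1$ and an edge between $P_j^i$ and $b_k^{(j)}$ for each $k$ with $v^i_k=1$ (never deleting), and finally run a shortest-path search from $P_0^i$ in the maintained emulator $H$ to read off $\dist_H(P_0^i,P_c^i)$.

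I would prove correctness as follows. Every path from $P_0^i$ to $P_c^i$ must pass through $G_1,\dots,G_c$ in order, and crossing $G_j$ using phase-$i$ connectors costs exactly $3$ if some $k\in\mathrm{supp}(u^i)$ and $k'\in\mathrm{supp}(v^i)$ satisfy $M_{k,k'}=1$, and at least $5$ otherwise (in which case the distance between any such $a_k$ and $b_{k'}$ inside $G_j$ is $\ge 3$). Since all copies share the same $(u^i,v^i)$, this gives $\dist_G(P_0^i,P_c^i)=3c$ when $(u^i)^\intercal M v^i=1$ and $\dist_G(P_0^i,P_c^i)\ge 5c$ otherwise --- provided no path profits from a connector inserted in an earlier phase $i'<i$; but reaching such a connector forces a full padding walk of length $\ge 5c$ along some $P_\ell$, so any such detour already costs $\ge 5c$ and never creates a spurious ``yes''. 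Because $\dist_G\le\dist_H\le\dist_G+\beta$ and $2c\ge\beta+1$, we get $\dist_H(P_0^i,P_c^i)\le 3c+\beta<5c$ in the yes case and $\dist_H(P_0^i,P_c^i)\ge 5c$ in the no case, so comparing the search output with $5c$ correctly answers phase $i$; the only randomness is internal to $\mathcal A$, so the overall error stays $\le 1/3$.

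Then I would carry out the bookkeeping. The graph has $n=\Theta(N\beta^2)$ vertices (dominated by the $c+1$ padded paths, each of length $\Theta(N\beta)$) and $m=\Theta(N^2\beta)$ edge insertions ($\Theta(cN)=\Theta(N\beta)$ connectors per phase, over $N$ phases). Hence the emulator carries $O(m^{1-\epsilon}/\beta)=O(N^{2-2\epsilon}\beta^{-\epsilon})$ edges, so each of the $N$ shortest-path searches costs $\tilde O(N^{2-2\epsilon}+n)=\tilde O(N^{2-2\epsilon}+N\beta^2)$, while $\mathcal A$'s total update time is $O(mn^{1-\epsilon}/\beta^2)=O(N^{3-\epsilon}\beta^{1-2\epsilon})$. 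Summing, the online computation runs in $\tilde O(N^{3-\epsilon}\beta^{1-2\epsilon}+N^{3-2\epsilon}+N^2\beta^2)$ time, which is $O(N^{3-\delta})$ for some $\delta>0$ depending only on $\epsilon$ whenever $\beta$ is at most a sufficiently small polynomial in $N$ --- in particular throughout the regime $\beta=n^{o(1)}$ of primary interest --- contradicting OuMv and hence OMv. The decremental bound follows from the symmetric construction: start with \emph{all} phases' connectors present, process the phases in reverse order, and between the phase-$i$ query and the phase-$(i{-}1)$ query delete precisely phase $i$'s connectors; the same gap and padding arguments apply verbatim with deletions in place of insertions, and the vertex and update counts are unchanged.

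The step I expect to be the main obstacle is making the padding/interference analysis airtight: one must verify that no combination of connectors from several earlier (respectively, in the decremental case, other) phases, together with the gadget edges --- which realize distance $3$ between some $a_k$ and $b_{k'}$ even in the no case --- and free moves along the paths $P_\ell$, can assemble a walk from $P_0^i$ to $P_c^i$ of length below $5c$ in the no case. This is exactly why the padding length must be $\Theta(\beta)$ rather than constant, which in turn is what forces $n=\Theta(N\beta^2)$; a secondary point is then to delimit the admissible range of $\beta$ so that both the $N^2\beta^2$ term coming from the $N$ searches and the $\beta^{1-2\epsilon}$ factor in $\mathcal A$'s update bound are absorbed into $N^{3-\delta}$.
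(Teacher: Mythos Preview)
Your high-level strategy is right, and you correctly flag the padding as the crux—but it is also where the argument breaks. Padding of length $5c=\Theta(\beta)$ between consecutive designated vertices pushes the vertex count to $n=\Theta(N\beta^2)$, so the hypothesized total update time $O(mn^{1-\epsilon}/\beta^2)$ becomes $O(N^{3-\epsilon}\beta^{1-2\epsilon})$. For $\epsilon<1/2$ this grows with $\beta$ and is not $O(N^{3-\delta})$ once $\beta$ exceeds a small power of $N$; your reduction therefore proves the statement only for $\beta$ bounded by a tiny polynomial in $n$, whereas the theorem is stated for \emph{all} integers $\beta\ge 0$. The qualifier you add (``in particular throughout the regime $\beta=n^{o(1)}$'') is thus a genuine weakening, not a matter of presentation.

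The paper removes the padding entirely by changing what is queried. Each connector path $P_\ell$ has only $2N-1$ vertices, labelled $z_1,\dots,z_N=y_N,\dots,y_1$; phase-$i$ connectors attach at $y_i$ (toward $A\subset G_{\ell+1}$) and $z_i$ (toward $B\subset G_\ell$). The query is always between the \emph{fixed endpoints} $z_1\in P_0$ and $y_1\in P_{\beta+1}$, not between the phase-$i$ designated vertices. Any such path must traverse every $P_\ell$, and the $P_\ell$-segment from entry $z_{i'}$ to exit $y_{i''}$ costs $2N-i'-i''$, minimized only at $i'=i''=i$. In a ``no'' phase, achieving a $3$-step crossing of some $G_j$ forces at least one of the two incident connectors to come from an earlier phase, adding at least $1$ to the adjacent path-segment; since the crossing saves $2$ (relative to $\ge 5$), the net gain per crossing is at most $1$, hence at most $\beta+1$ over all $\beta+1$ crossings. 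So the ``no'' distance stays $\ge 4(\beta+1)+(\beta+2)(2N-2i)+2(i-1)$ while the ``yes'' distance is $3(\beta+1)+(\beta+2)(2N-2i)+2(i-1)$, a gap of $\beta+1>\beta$. This interference accounting replaces your padding at no vertex cost, so $n'=\Theta(N\beta)$ and the update time is $O(N^{3-\epsilon}\beta^{-\epsilon})=O(N^{3-\epsilon})$ uniformly in $\beta$.

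A smaller issue: your decremental sketch (``start with all phases' connectors present, process the phases in reverse order'') is incompatible with the online nature of OuMv, since phase-$i$ connectors depend on $(u^i,v^i)$, which are unavailable until phase $i$. The paper instead starts with every $y_j$ wired to \emph{all} of $A$ and every $z_j$ to all of $B$ (an input-independent initial graph); in phase $i$ it deletes the non-matching connectors at index $N-i+1$, queries, and then deletes all remaining connectors at that index.
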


In particular, for the natural setting where $\beta=n^{o(1)}$ we have the following corollary:

\begin{corollary}
Under the OMv conjecture, for an $n$-vertex incremental or decremental graph with $m$ edge insertions or deletions, for every constant $\epsilon>0$, there is no  algorithm for maintaining a $n^{o(1)}$-additive emulator with $O(m^{1-\epsilon})$ edges, polynomial preprocessing time, and total update time $O(mn^{1-\epsilon})$ with error probability at most $1/3$ in the word-RAM model with $O(\log n)$ bit words.
\end{corollary}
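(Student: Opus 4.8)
The corollary follows from Theorem~\ref{thm:oumv-partial} by setting $\beta = n^{o(1)}$ and absorbing the $n^{o(1)}$ factors into the exponents (e.g.\ $m^{1-\epsilon}/n^{o(1)}\supseteq O(m^{1-\epsilon/2})$ since $m\geq n$ here), so it suffices to prove the theorem. The plan is a reduction from OuMv: by~\cite{henzinger2015unifying}, under the OMv conjecture no algorithm, even after polynomial‑time preprocessing of the $N\times N$ matrix $M$, can answer the $N$ online rounds of OuMv in total time $O(N^{3-\delta})$. Given $M$, build the \emph{basic gadget}, a bipartite graph on $A=\{a_1,\dots,a_N\}$ and $B=\{b_1,\dots,b_N\}$ with $a_k\sim b_\ell$ iff $M_{k\ell}=1$; set $c=\lceil\beta/2\rceil+1=\Theta(\beta)$, take disjoint copies $G_1,\dots,G_c$, and add $c+1$ \emph{connector paths} $P_0,\dots,P_c$ along which $N$ ``slot'' vertices $w_j^1,\dots,w_j^N$ are placed $\Theta(c)$ apart. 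In the incremental construction (Figure~\ref{fig:omv_inc_intro}), phase $i$ reads $(u^i,v^i)$ and, for every $j$, inserts the edges $(w_{j-1}^i,a_k)$ for $u^i_k=1$ and $(w_j^i,b_\ell)$ for $v^i_\ell=1$ inside $G_j$; nothing is ever deleted, and after the insertions of phase $i$ we run one single‑source shortest‑path computation on the maintained $\beta$‑additive emulator from $w_0^i$. The decremental construction is the mirror image: initially join every slot vertex fully to the relevant $A$‑ or $B$‑side, and in phase $i$ \emph{delete} at slot $i$ the edges corresponding to the zero coordinates of $u^i$ and $v^i$.

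The key claim is $\dist(w_0^i,w_c^i)=3c$ if $(u^i)^\intercal Mv^i=1$ and $\dist(w_0^i,w_c^i)\geq 5c$ otherwise. Any $w_0^i$--$w_c^i$ walk must cross all $c$ gadgets, and crossing a gadget from its left connector to its right connector costs at least $3$ (one edge in, an odd‑length internal piece, one edge out), so the distance is always $\geq 3c$; when the OuMv answer is $1$ we stay on slot $i$ and take the length‑$1$ internal edge $a_kb_\ell$ witnessing $(u^i)^\intercal Mv^i=1$ in every gadget, achieving exactly $3c$. When the answer is $0$, staying on slot $i$ forces an internal piece of length $\geq 3$ in each gadget, hence $\geq 5c$ in total, and any detour onto a different slot saves at most $2$ per gadget while costing $\Theta(c)$ to reach that slot and $\Theta(c)$ to return --- precisely the role of the $\Theta(c)$ spacing --- so the total is again $\geq 5c$. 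A $\beta$‑additive emulator therefore returns a value $\leq 3c+\beta$ in the first case and $\geq 5c$ in the second, and since $c>\beta/2$ these ranges are disjoint; reporting ``$1$'' iff the emulator distance is $<5c$ answers the OuMv round, and a union bound over the $N$ phases keeps the error below $1/3$. Verifying that the edges accumulated over earlier incremental phases (or not yet deleted in the decremental version) cannot create a shorter $w_0^i$--$w_c^i$ path is the delicate point, and is exactly what the spacing buys.

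For the running‑time contradiction I would take the OuMv dimension to be $N=\Theta(n/\beta^2)$, so the graph has $\Theta(\beta^2N)=\Theta(n)$ vertices and $m=\Theta(\beta N^2)$ edge operations, i.e.\ $N^2=\Theta(m/\beta)$. A dynamic $\beta$‑additive emulator algorithm as hypothesized then solves OuMv using: polynomial preprocessing (allowed, since $M$ may be preprocessed for free); total update time $O(mn^{1-\epsilon}/\beta^2)=O(\beta^{1-2\epsilon}N^{3-\epsilon})$; and $N$ shortest‑path computations on an emulator with $O(m^{1-\epsilon}/\beta)=O(\beta^{-\epsilon}N^{2-2\epsilon})$ edges, i.e.\ $O(\beta^{-\epsilon}N^{3-2\epsilon}+\beta^2N^2)$ in total. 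For $\beta$ up to roughly $N^{\Theta(\epsilon)}$ --- in particular for $\beta=n^{o(1)}$, which is all the corollary needs --- all three terms are $o(N^{3})$, contradicting the OuMv lower bound. The main obstacle is this bookkeeping: the two error‑dependent savings in the hypothesis, the $m^{1-\epsilon}/\beta$ bound on the emulator size and the $mn^{1-\epsilon}/\beta^2$ bound on the update time, must each independently push the simulated OuMv strictly below cubic, which is what forces both the $\Theta(\beta)$ copies and the $\Theta(\beta)$ spacing; and one must check carefully that in the purely incremental and purely decremental builds no ``stale'' edge ever shortcuts the monitored distance.
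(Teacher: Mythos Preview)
Your reduction has a genuine correctness gap: the spacing on the connector paths does \emph{not} rule out short paths that use stale edges. Consider the incremental setting after phase $i$, when $(u^{i})^\intercal M v^{i}=0$ but there is some earlier phase $i'<i$ with both $(u^{i})^\intercal M v^{i'}\neq 0$ and $(u^{i'})^\intercal M v^{i}\neq 0$ (for instance $M=I$, $u^i=e_1$, $v^i=e_2$, $u^{i'}=e_2$, $v^{i'}=e_1$). Then the ``zig-zag'' path
\[
w_0^{i}\to a_{k_1}\to b_{\ell_1}\to w_1^{i'}\to a_{k_2}\to b_{\ell_2}\to w_2^{i}\to\cdots
\]
alternates between slots $i$ and $i'$ \emph{without ever taking a single step along a connector path}, and has length exactly $3c$. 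The $\Theta(c)$ spacing is irrelevant here because the path enters each $P_j$ at one vertex and leaves from that \emph{same} vertex into the next gadget. Your informal claim that ``any detour onto a different slot \dots\ cost[s] $\Theta(c)$ to reach that slot and $\Theta(c)$ to return'' tacitly assumes the path returns to slot $i$ on the \emph{same} connector path it left from, but nothing in your construction forces that.

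The paper's construction avoids this by giving each connector path $P_j$ two sides, $z_1,\dots,z_n=y_n,\dots,y_1$, where the $z$-side connects to the gadget on the left and the $y$-side to the gadget on the right. Every traversal of $P_j$ from its entry side to its exit side then has cost $2n-i_{\mathrm{in}}-i_{\mathrm{out}}$, which is \emph{minimized} exactly when both entry and exit use the current phase's index $i$; using any older slot only makes the path longer. Thus the bulk of the $w_0$--$w_c$ distance comes from these mandatory path traversals, and the $3$-versus-$5$ gadget crossing is what separates the two cases once the path contribution is accounted for. Your construction has no analogue of this mechanism. The running-time bookkeeping you sketch is essentially sound once the construction is repaired along these lines; note also that the paper needs no extra $\Theta(\beta)$ spacing, so its vertex count is $\Theta(\beta N)$ rather than your $\Theta(\beta^2 N)$, which is why its analysis goes through for all $\beta$ and not just $\beta=n^{o(1)}$.
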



\subsubsection{Preliminaries}

Our reductions are from the Online Vector-Matrix-Vector Multiplication problem (OuMv). A reduction from OMv to OuMv is known:

\begin{theorem}[OuMv: Theorem 2.7 from~\cite{henzinger2015unifying}]\label{thm:oumv} The OMv conjecture implies that for any constant $\epsilon > 0$, there is no algorithm
for OuMv with with polynomial preprocessing time and computation time $O(n^{3-\epsilon})$ with error probability at most $1/3$ in the word-RAM model with $O(\log n)$ bit words.
\end{theorem}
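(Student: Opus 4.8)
Since the statement is cited verbatim as Theorem~2.7 of~\cite{henzinger2015unifying}, the ``proof'' is really that reference; but if one wanted to reprove it, the plan is to establish the contrapositive --- a fast OuMv algorithm yields a fast OMv algorithm, refuting the OMv conjecture. So suppose for contradiction that there is an OuMv algorithm with polynomial preprocessing time and total computation time $O(n^{3-\epsilon})$ over the $n$ rounds on an $n\times n$ matrix. First I would use it, as a black box, to solve OMv on an $n\times n$ matrix $M$ with $n$ query vectors $v^1,\dots,v^n$, aiming for total time $O(n^{3-\epsilon'})$ for some constant $\epsilon'>0$.

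The conceptual core is the identity $(Mv^t)_i = e_i^{\intercal} M v^t$: every individual coordinate of an OMv answer is precisely the answer to one OuMv query on the same matrix $M$. So one runs the OuMv data structure on $M$, and upon receiving $v^t$ issues the queries that reconstruct the coordinates of $Mv^t$, always respecting the online order --- every query referring to $v^t$ is issued at round $t$ --- and reinitializing the OuMv data structure from the saved preprocessed state whenever its round budget is used up, so that the $\mathrm{poly}(n)$ preprocessing is charged only once. The bookkeeping is the real content. The completely naive version --- reconstruct each $Mv^t$ from its $n$ coordinate queries, spending one full $n$-round OuMv run per vector --- costs $n\cdot O(n^{3-\epsilon})=O(n^{4-\epsilon})$, which is no faster than the trivial $O(n^3)$ algorithm for OMv and gives no contradiction. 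To get an actual speedup one exploits the robustness of the OMv conjecture --- it is equivalent, by a batching argument, to its variants with other polynomial numbers of query vectors and with rebalanced, possibly rectangular, matrix dimensions --- so that the OuMv runs can be made substantially cheaper than $O(n^{3-\epsilon})$ apiece while the number of runs stays small enough to keep the total at $O(n^{3-\epsilon'})$. This balancing is exactly what~\cite{henzinger2015unifying} carries out, and I would follow their reduction.

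The main obstacle, then, is not the construction but the running-time accounting: choosing the matrix-versus-rounds split so that the sum over all OuMv runs, plus the single preprocessing call, is $O(n^{3-\epsilon'})$ for a \emph{constant} $\epsilon'>0$, uniformly in $\epsilon\in(0,1]$. The error-probability requirement is then automatic: the OMv conjecture already permits error probability $1/3$, and any constant error probability of the OuMv algorithm can be driven down to $1/3$ by standard majority amplification at a constant-factor cost in time.
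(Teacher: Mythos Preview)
Your proposal is correct in its essential observation: the paper does not prove this theorem at all, it merely cites it as Theorem~2.7 of~\cite{henzinger2015unifying} and uses it as a black box, so there is no ``paper's own proof'' to compare against. Your high-level sketch of the contrapositive reduction --- recover coordinates of $Mv^t$ via OuMv queries $e_i^{\intercal}Mv^t$, then rebalance matrix dimensions and round counts to beat $O(n^3)$ --- is the right shape and matches what~\cite{henzinger2015unifying} actually does, though the details of the blocking/rebalancing are nontrivial and would have to be taken from there.
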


Given an instance of OuMv, we introduce a basic gadget that we will use in all of our constructions.

\paragraph{The basic gadget} Let $M$ be the $n\times n$ input matrix for the OuMv instance. We construct a gadget as shown in Figure~\ref{fig:omv_basic}. The gadget consists of a bipartite graph $A,B$ where $A=\{a_1,\dots a_n\}$, $B=\{b_1,\dots b_n\}$, and the edge $(a_i,b_j)$ is present if and only if $M_{i,j}=1$. 

\begin{figure}[ht]
  \centering
  \includegraphics[width=.15\linewidth]{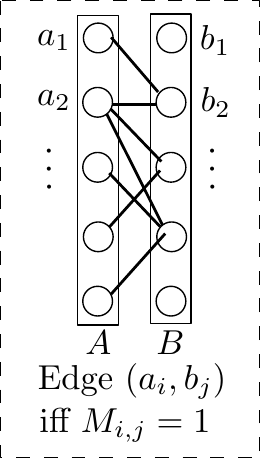}
  \caption{The basic gadget for reductions from OMv.}
  \label{fig:omv_basic}
\end{figure}

\subsubsection{Reduction for fully dynamic algorithms}

In this section we prove Theorem~\ref{thm:oumv-fully}.



\paragraph{Construction} 
Let $c=\lceil\frac{\beta}{2-3\alpha}\rceil+1$ and take $c$ disjoint copies $G_1,\dots,G_c$ of the basic gadget (from Figure~\ref{fig:omv_basic}). Let $w_0,\dots w_c$ be an additional set of $c+1$ isolated vertices. 

Now, we start $n$ dynamic phases. In phase $i$, we are given the vectors $u^i$ and $v^i$ of the OuMv instance. For each $1\leq j\leq c$ and each $k$ with $u^i_k=1$, insert an edge between $w_{j-1}$ and $a_k\in G_j$. Similarly, for each $1\leq j\leq c$ and each $k$ with $v^i_k=1$, insert an edge between $w_{j}$ and $b_k\in G_j$.  See Figure~\ref{fig:omv_fully}. 


\begin{figure}[ht]
  \centering
  \includegraphics[width=.8\linewidth]{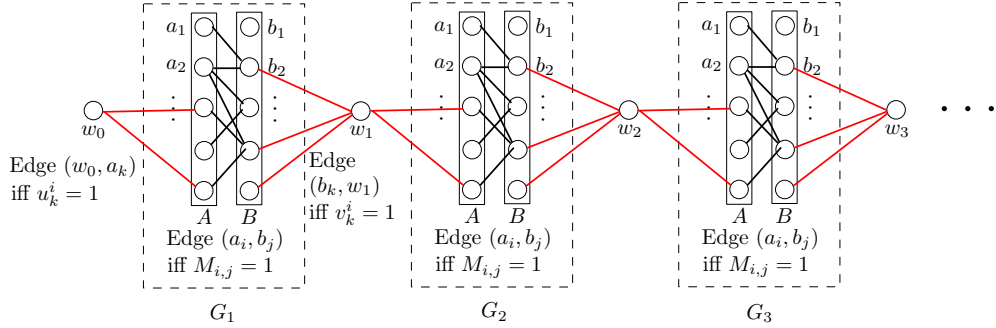}
  \caption{The construction for fully dynamic algorithms. The red edges are dynamically added in phase $i$.}
  \label{fig:omv_fully}
\end{figure}

Throughout all of the edge updates, we maintain our dynamic emulator. At the end of each phase, we run a single source shortest paths computation via  
Breadth-First Search (BFS) on the emulator to estimate the distance between $w_0$ and $w_c$. 

If the estimated distance between $w_0$ and $w_c$ is less than $5c$, we return 1 for this phase of the OuMv instance, and otherwise we return 0.

Following the end of each phase, we remove all of the edges added during that phase.

\paragraph{Correctness}

First we will show that if $u^iMv^i=1$ then our algorithm returns 1 for the $i^{th}$ phase. If $u^iMv^i=1$ then there exists $j,k$ such that $u^i_j=M_{j,k}=v^i_k=1$. Thus, the basic gadget contains the edge $(a_j,b_k)$. Also, for all $1\leq \ell\leq c$, in the $i^{th}$ phase we add an edge between $w_{\ell-1}$ and $a_j\in G_\ell$ and an edge between $w_\ell$ and $b_k\in G_\ell$. Thus, there is a path of length 3 from $w_{\ell-1}$ to $w_\ell$ through $a_j$ and $b_k$. Therefore, $\dist(w_0,w_c)\leq 3c$. 
Thus, the estimate of $\dist(w_0,w_c)$ returned by our $(\alpha+1,\beta)$-emulator is at most $3c(\alpha+1)+\beta$, which is less than $5c$ since $c>\frac{\beta}{2-3\alpha}$. Thus, our algorithm returns 1 for the $i^{th}$ phase.

Now we will show that if our algorithm returns 1 in the $i^{th}$ phase then $u^iMv^i=1$. If our algorithm returns 1, then the estimate of  $\dist(w_0,w_c)$ returned by our $(1+\alpha,\beta)$-emulator is less than $5c$, so the true distance $\dist(w_0,w_c)$ is also less than $5c$. 

First, we observe that the layered structure of the graph ensures that 
every path between $w_0$ and $w_c$ must contain every $w_\ell$ in chronological order. That is, every shortest path between $w_0$ and $w_c$ must contain as a subpath a shortest path from $w_{\ell-1}$ to $w_\ell$ for all $1\leq \ell\leq c$. Then since the graph is $c$ identical copies of a gadget, we have that $\dist(w_0,w_c)=c\cdot \dist(w_0,w_1)$.

Since $\dist(w_0,w_c)<5c$, we know that $\dist(w_0,w_1)<5$. Furthermore, since the graph is bipartite and $w_0$ and $w_1$ are on opposite sides of the bipartition, $\dist(w_0,w_1)$ must be odd so $\dist(w_0,w_1)\leq 3$. Observe that the only possible paths of length 3 between $w_{\ell-1}$ and $w_\ell$ contain a vertex $a_j\in G_\ell$ followed by a vertex $b_k\in G_\ell$. If such a path exists in the $i^{th}$ phase, then the basic construction ensures that $M_{j,k}=1$ and the dynamic phase ensures that $u^i_j=1$ and $v^i_k=1$. Thus, $u^iMv^i=1$.

\paragraph{Running time}

Let $n'$ be the number of vertices in the dynamic graph and let $m'$ be the maximum number of edges ever in the dynamic graph. We first calculate $n'$ and $m'$. Each basic gadget contains $2n$ vertices and at most $n^2$ edges. Thus, $c$ copies of the basic gadget contain $2cn$ vertices and $cn^2$ edges. There are also an additional $c+1$ vertices $w_\ell$. During each of the $n$ phases we add at most $2nc$ edges. Thus, the total number of vertices is $n'=O(nc)$ and the total number of edge updates over the entire sequence is $O(cn^2)$, so $m'=O(cn^2)$.

Suppose that the emulator has $O(m'^{1-\epsilon}(2-3\alpha)/\beta)$ edges for $\epsilon>0$ and has polynomial preprocessing time and amortized update time $O(n'^{1-\epsilon}((2-3\alpha)/\beta)^2)$.

Then, our dynamic emulator algorithm has amortized update time $O(n'^{1-\epsilon}(\frac{2-3\alpha}{\beta})^2)=O((nc)^{1-\epsilon}(\frac{2-3\alpha}{\beta})^2)$.  Since there are $O(cn^2)$ edge updates, the total update time of the dynamic emulator algorithm is $O(n^{3-\epsilon})$ since $c=\lceil\frac{\beta}{2-3\alpha}\rceil+1$.

Additionally, $n$ times during the algorithm, we run a single call of BFS
on the emulator. The number of edges in the emulator is  $O(m'^{1-\epsilon}(\frac{2-3\alpha}{\beta}))=O((cn^2)^{1-\epsilon}(\frac{2-3\alpha}{\beta}))=O(n^{2-\epsilon})$ since $c=\lceil\frac{\beta}{2-3\alpha}\rceil+1$. Thus, running 
BFS takes total time $O(n^{3-\epsilon})$.

Putting everything together, our dynamic emulator algorithm implies an algorithm for OuMv with polynomial preprocessing time and computation time $O(n^{3-\epsilon})$ for $\epsilon>0$, contadicting the OMv conjecture.

\subsubsection{Reduction for incremental and decremental algorithms.}
In this section we prove Theorem~\ref{thm:oumv-partial}.


\paragraph{Construction} 

The construction will be similar to the fully dynamic construction, but with different interactions between consecutive copies of the basic gadget. We first describe the incremental construction.

Starting with an empty graph, we perform edge insertions to construct the following graph. Take $\beta+1$ disjoint copies $G_1,\dots,G_{\beta+1}$ of the basic gadget (from Figure~\ref{fig:omv_basic}). Then, add $\beta+2$ paths $P_0,\dots,P_{\beta+1}$ each on $2n-1$ new vertices. Call the vertices of each path $z_1,z_2,\ldots,z_n=y_n,y_{n-1},\ldots,y_1$. In other words, the middle node of each path has two names, $z_n$ and $y_n$. 

Now, we start $n$ phases. In phase $i$, we are given the vectors $u^i$ and $v^i$ of the OuMv instance. For each $1\leq j\leq \beta+1$ and each $k$ with $u^i_k=1$, insert an edge between $y_i\in P_{j-1}$ and $a_k\in G_j$. Similarly, for each $1\leq j\leq \beta+1$ and each $k$ with $v^i_k=1$, insert an edge between $z_i\in P_j$ and $b_k\in G_j$. See Figure~\ref{fig:omv_inc}.

\begin{figure}[ht]
  \centering
  \includegraphics[width=\linewidth]{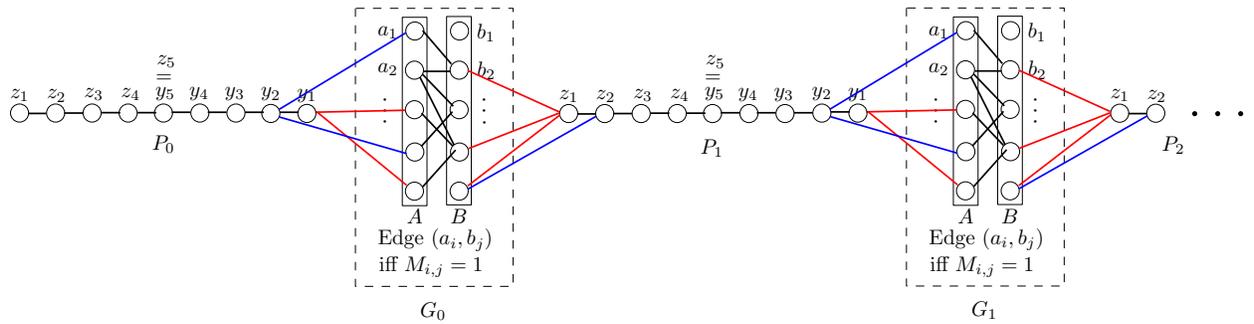}
  \caption{The construction for incremental algorithms. The red edges are dynamically added in phase 1 and the blue edges are dynamically added in phase 2.}
  \label{fig:omv_inc}
\end{figure}

Throughout all of the edge updates, we maintain our incremental emulator. At the end of each phase, we run a single call of BFS
on the emulator to estimate the distance between $z_1\in P_0$ and $y_1\in P_{\beta+1}$. 

If the estimated distance between $z_1\in P_0$ and $y_1\in P_{\beta+1}$ at the end of phase $i$ is less than\\ $4(\beta+1)+(\beta+2)(2n-2i)+2(i-1)$, we return 1 for phase $i$ of the OuMv instance, and otherwise we return 0.

Now, we describe the decremental construction, which is similar to the incremental construction. The initial graph consists of $G_1,\dots,G_{\beta+1}$ and  $P_0,\dots,P_{\beta+1}$ from the incremental construction, as well as an edge for all $1\leq \ell\leq \beta+1$ from $z_i\in P_\ell$ to every vertex in $B\subseteq G_\ell$, and an edge for all $1\leq \ell\leq \beta+1$ from $y_i\in P_{\ell-1}$ to every vertex in $A\subseteq G_\ell$. 

Now, we start $n$ dynamic phases. In phase $i$, we are given the vectors $u^i$ and $v^i$ of the OuMv instance. For each $1\leq j\leq \beta+1$, and each $k$ with $u^i_k=0$, delete the edge between $y_{n-i+1}\in P_{j-1}$ and $a_k\in G_j$. Similarly, for each $1\leq j\leq \beta+1$ and each $k$ with $v^i_k=0$, delete the edge between $z_{n-i+1}\in P_j$ and $b_k\in G_j$. 

Throughout all of the edge updates, we maintain our decremental emulator. At the end of each phase, we run a single call to BFS on the emulator to estimate the distance between $z_1\in P_0$ and $y_1\in P_{\beta+1}$. 

Following the end of each phase $i$, for each $1\leq j\leq \beta+1$, we delete all edges between $y_{n-i+1}\in P_{j-1}$ and $A\subseteq G_j$ and all edges between $z_{n-i+1}\in P_j$ and $B\subseteq G_j$. 

If the estimated distance between $z_1\in P_0$ and $y_1\in P_{\beta+1}$ at the end of phase $i$ is less than\\ $4(\beta+1)+(\beta+2)(2n-2(n-i+1))+2((n-i+1)-1)$, we return 1 for phase $i$ of the OuMv instance, and otherwise we return 0. Note that this threshold is exactly the threshold from the incremental algorithm but with $i$ replaced with $n-i+1$.

\paragraph{Correctness}
The following argument is written for the incremental setting but the same argument applies for the decremental setting.

First we will show that if $u^iMv^i=1$ then our algorithm returns 1 for the $i^{th}$ phase. If $u^iMv^i=1$ then there exists $j,k$ such that $u^i_j=M_{j,k}=v^i_k=1$. Thus, the basic gadget contains the edge $(a_j,b_k)$. Also, for all $1\leq \ell\leq \beta+1$, in the $i^{th}$ phase we add an edge between $y_i\in P_{\ell-1}$ and $a_j\in G_\ell$ and an edge between  $z_i\in P_\ell$ and $b_k\in G_\ell$.

Thus, for all $1\leq \ell \leq \beta+1$, there is a path of length 3 from $y_i\in P_{\ell-1}$ to $z_i\in P_\ell$ through $a_j\in G_\ell$ and $b_k\in G_\ell$. Also, for each $0\leq \ell \leq \beta+2$, there is a path along $P_\ell$ from $z_i\in P_\ell$ to $y_i\in P_\ell$ of length $(2n-2)-2(i-1)=2n-2i$. Finally, there is a path of length $i-1$ from $z_1\in P_0$ to $z_i\in P_0$ and a path of length $i-1$ from $y_i\in P_{\beta+2}$ to $y_1\in P_{\beta+2}$. Concatentating all of these paths, we have that $\dist(z_1\in P_0,y_1\in P_{\beta+1})\leq 3(\beta+1)+(\beta+2)(2n-2i)+2(i-1)$. 
Thus, the estimate of $\dist(z_1\in P_0,y_1\in P_{\beta+1})$ returned by our $\beta$-additive emulator is at most $4(\beta+1)-1+(\beta+2)(2n-2i)+2(i-1)$. Thus, our algorithm returns 1 for the $i^{th}$ phase.

Now we will show that if our algorithm returns 1 in the $i^{th}$ phase then $u^iMv^i=1$. If our algorithm returns 1, then the estimate of  $\dist(z_1\in P_0,y_1\in P_{\beta+1})$ returned by our $\beta$-additive emulator is less than $4(\beta+1)+(\beta+2)(2n-2i)+2(i-1)$, so the true distance $\dist(z_1\in P_0,y_1\in P_{\beta+1})$ is also less than $4(\beta+1)+(\beta+2)(2n-2i)+2(i-1)$. 

First, we observe that the layered structure of the graph ensures that 
every path between $z_1\in P_0$ and $y_1\in P_{\beta+1}$ contains each $z_i$ and $y_i$ in order from $P_0$ to $P_{\beta+1}$. That is, every shortest path between $z_1\in P_0$ and $y_1\in P_{\beta+1}$ is composed of precisely following subpaths:
\begin{itemize}
\item A shortest path from $z_1\in P_0$ to $z_i\in P_0$. The only simple path connecting these vertices is of length $i-1$.
\item A shortest path from $y_i\in P_{\beta+1}$ to $y_1\in P_{\beta+1}$. The only simple path connecting these vertices is of length $i-1$.
\item A shortest path from $z_i\in P_\ell$ to $y_i\in P_\ell$ for all $1\leq \ell\leq \beta+2$. The only simple path connecting these vertices is of length $(2n-2)-2(i-1)=2n-2i$.
\item A shortest path from $y_i\in P_{\ell-1}$ to $z_i\in P_\ell$ for all $1\leq \ell\leq \beta+1$. Since the graph is a series of identical copies of a gadget, we know that $\dist(y_i\in P_{\ell-1},z_i\in P_\ell)$ is the same for all $\ell$. Furthermore, we know the length of each of the previous three types of subpaths and we know that $\dist(z_1\in P_0,y_1\in P_{\beta+1})<4(\beta+1)+(\beta+2)(2n-2i)+2(i-1)$, so we conclude that each $\dist(y_i\in P_{\ell-1},z_i\in P_\ell)<4$.
\end{itemize}

Due to the layering of the graph, for all $1\leq \ell\leq \beta+1$ the shortest path between $y_i\in P_{\ell-1}$ and $z_i\in P_\ell$ must contain vertices $a_j\in G_\ell$ and $b_k\in G_\ell$ for some $j,k$. Since $\dist(y_i\in P_{\ell-1},z_i\in P_\ell)<4$, there are no other vertices on this shortest path. The basic construction ensures that since the edge $(a_j,b_k)$ exists, we have $M_{j,k}=1$, and the dynamic phase ensures that since the edge $(y_i\in P_{\ell-1},a_j\in G_\ell)$ exists, we have $u^i_j=1$ and since the edge $(b_k\in G_\ell,z_i\in P_\ell)$ exists, we have $v^i_k=1$. Thus, $u^iMv^i=1$.

\paragraph{Running time} Let $n'$ be the number of vertices in the dynamic graph and let $m'$ be number of edge insertions or deletions. We first calculate $n'$ and $m'$. Each basic gadget contains $2n$ vertices and at most $n^2$ edges. Thus, $\beta+1$ copies of the basic gadget contain $2(\beta+1)n$ vertices and $(\beta+1)n^2$ edges. Additionally, we have $(\beta+2)$ paths on $(2n-1)$ vertices each, for a total of  $(2n-1)(\beta+2)$ additional vertices. During each of the $n$ phases there are at most $2n(\beta+1)$ edge updates. Thus, the total number of vertices is $n'=O(\beta n)$ and the total number of edges updates is $m'=O(\beta n^2)$.

Now assume that our incremental or decremental emulator has $O(m'^{1-\epsilon}/\beta)$ edges, polynomial preprocessing time and total update time $O(m'n'^{1-\epsilon}/\beta^2)$.

Then the total update time of the emulator is $O(m'n'^{1-\epsilon}/\beta^2) =O(\beta n^2 (\beta n)^{1-\epsilon}/\beta^2)=O(n^{3-\epsilon})$. 

Additionally, $n$ times during the algorithm, we run BFS 
on the emulator. The number of edges in the emulator is $O(m'^{1-\epsilon}/\beta)=O((\beta n^2)^{1-\epsilon}/\beta)=O(n^{2-\epsilon})$. Thus, running the BFS calls takes total time $O(n^{3-\epsilon})$.

Putting everything together, our incremental or decremental emulator algorithm implies an algorithm for OuMv with polynomial preprocessing time and computation time $O(n^{3-\epsilon})$ for $\epsilon>0$, contradicting the OMv conjecture.

\subsection{Conditional lower bounds from the {\em k}-Clique hypothesis}

\subsubsection{Statement of results}
We prove conditional lower bounds from the $k$-Clique hypothesis for dynamic emulator maintenance in the fully dynamic, incremental, and decremental settings. We prove the following theorem for the fully dynamic setting, which also extends to the incremental and decremental settings but only for worst-case update times.
\begin{theorem}\label{thm:kcycle-fully}
Under the Combinatorial $k$-Clique hypothesis, for every constant $\epsilon>0$ and every constant integer $\ell\geq 1$, for an $n$-vertex fully dynamic graph with at most $m=\Theta(n^{1+1/\ell})$ edges at all times, there is no combinatorial algorithm for maintaining a $(1+\alpha,\beta)$-emulator for any $\alpha\in[0,\frac{2}{2\ell+1})$ and integer $\beta\geq 0$ with $O(m^{1-\epsilon}(\frac{2-(2\ell+1)\alpha}{\beta}))$ edges, preprocessing time $O(mn^{1-\epsilon}(\frac{2-(2\ell+1)\alpha}{\beta})^2)$, and amortized update time $O(m^{1-\epsilon}(\frac{2-(2\ell+1)\alpha}{\beta})^2)$ with error probability at most $1/3$ in the word-RAM model with $O(\log n)$ bit words.

The same result holds for incremental and decremental algorithms, but for worst-case update time.
\end{theorem}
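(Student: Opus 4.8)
The plan is to transplant the template of Theorem~\ref{thm:oumv-fully} to the $k$-Clique setting, replacing the matrix $M$ of the OuMv instance by a (sparse) graph coming from a clique instance and replacing the bipartite basic gadget by a \emph{layered} gadget. Since the Combinatorial $k$-Clique hypothesis holds for every constant $k$, I would reduce from $k$-Clique for a suitably large constant $k=k(\epsilon,\ell)$; by the known reduction from $k$-Clique to $K$-cycle detection (\cite{lincoln2018tight}) it then suffices to rule out fast combinatorial detection of a $K$-cycle, which I would instantiate with the odd length $K=2\ell+1$. Standard color coding (an $O(1)$ number of rounds, with a $2^{O(K)}\log N=O(\log N)$ overhead) reduces this to detecting a \emph{colorful} cycle $v_1v_2\cdots v_{2\ell+1}v_1$ with $v_i$ of color $i$. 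The basic gadget is the graph on color classes $2,3,\dots,2\ell+1$ keeping only edges between consecutive classes; to reach the target density $m=\Theta(n^{1+1/\ell})$ rather than the $\Theta(n^2)$ density of a fully dense layered gadget, I would apply the standard sparsification of the clique-to-cycle reduction (splitting each color class / clique coordinate into blocks of size $n^{1/\ell}$ and folding the brute force over a block into the gadget), choosing the $N$-vertex clique instance so that the resulting dynamic graph has $n$ vertices and $\Theta(n^{1+1/\ell})$ edges.

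With the gadget in hand, the construction is exactly Figure~\ref{fig:omv_fully}: take $c=\lceil\frac{\beta}{2-(2\ell+1)\alpha}\rceil+1$ disjoint copies $G_1,\dots,G_c$ of the gadget in series together with isolated query vertices $w_0,\dots,w_c$, and run one phase per color-$1$ vertex $v_1$. In the phase for $v_1$, for every copy $G_j$ insert the edge $(w_{j-1},u\in G_j)$ for each color-$2$ neighbor $u$ of $v_1$ and the edge $(w_j,u'\in G_j)$ for each color-$(2\ell+1)$ neighbor $u'$ of $v_1$; then run BFS on the maintained $(1+\alpha,\beta)$-emulator, report that $v_1$ lies on a colorful $(2\ell+1)$-cycle iff the returned estimate of $\dist(w_0,w_c)$ is below $(2\ell+3)c$, and finally delete the batch of phase edges, returning to the static graph. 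The fact that each phase ends by restoring the static graph is exactly what yields the incremental and decremental corollaries with worst-case update time, via the roll-back argument at the start of Section~\ref{sec:lb}.

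Correctness would follow the pattern of the proof of Theorem~\ref{thm:oumv-fully}. If $v_1$ extends to a colorful $(2\ell+1)$-cycle, then in each copy there is a $w_{j-1}$-to-$w_j$ path of length exactly $2\ell+1$ (edge into $v_2$ at color $2$, the $2\ell-1$ cycle edges through colors $2,\dots,2\ell+1$, edge out of $v_{2\ell+1}$ at color $2\ell+1$), so $\dist(w_0,w_c)\le(2\ell+1)c$ and the emulator reports at most $(1+\alpha)(2\ell+1)c+\beta<(2\ell+3)c$ by the choice of $c$. Conversely, the only connections between consecutive copies pass through the $w_j$, so every $w_0$-$w_c$ path traverses $w_1,\dots,w_{c-1}$ in order and $\dist(w_0,w_c)=c\cdot\dist(w_0,w_1)$; within a single copy together with its two query vertices the graph is bipartite with $w_{j-1},w_j$ on opposite sides, so $\dist(w_{j-1},w_j)$ is odd, and a length-$(2\ell+1)$ path forces (by the layering) a monotone color sequence $2,3,\dots,2\ell+1$, i.e.\ a colorful cycle through $v_1$. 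Hence if no colorful $(2\ell+1)$-cycle passes through $v_1$, then $\dist(w_{j-1},w_j)\ge 2\ell+3$ and $\dist(w_0,w_c)\ge(2\ell+3)c$, which the emulator cannot underreport below $(2\ell+3)c$. Thus the data structure decides colorful $(2\ell+1)$-cycle, hence $(2\ell+1)$-cycle, hence $k$-Clique.

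The main obstacle — and the only place the parameter $\ell$ genuinely does work — is the quantitative bookkeeping of the last step: I must pick $N$ and the sparse encoding so that (i) the dynamic graph has $n$ vertices and $m=\Theta(n^{1+1/\ell})$ edges, and (ii) with an $O(m^{1-\epsilon}\cdot\frac{2-(2\ell+1)\alpha}{\beta})$-edge emulator, $O(m^{1-\epsilon}\cdot(\frac{2-(2\ell+1)\alpha}{\beta})^2)$ amortized update time and $O(mn^{1-\epsilon}\cdot(\frac{2-(2\ell+1)\alpha}{\beta})^2)$ preprocessing, the three cost terms — preprocessing, (total number of edge updates over all phases)$\times$(amortized update time), and (number of phases)$\times$(emulator size, for the BFS calls) — each evaluate to $N^{k-o(1)}$, contradicting the hypothesis. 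The $c$-copies-in-series device, forced by the additive error $\beta$, is what makes the per-phase BFS term balance, and the factor $2\ell+1$ in the emulator budgets is exactly the length of one gadget traversal; checking that $m=\Theta(n^{1+1/\ell})$ is simultaneously compatible with all three balances is the crux. Everything else is a routine adaptation of the OMv argument above.
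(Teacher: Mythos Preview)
Your high-level plan (reduce to colorful $(2\ell+1)$-cycle, chain $c$ layered gadgets, BFS in the emulator) is the right one, and your correctness argument is sound. The gap is in the ``crux'' you flag at the end, and it is not just unchecked bookkeeping: with your construction the update-time term \emph{cannot} balance for small $\epsilon$.

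In your phase for $v_1$ you insert, for every copy $G_j$, one edge per color-$2$ neighbor of $v_1$ and one edge per color-$(2\ell{+}1)$ neighbor of $v_1$. Summed over the phases this is
\[
\Theta\Bigl(c\cdot\sum_{v_1\in V_1}\deg(v_1)\Bigr)=\Theta(cm)
\]
edge updates. With the assumed amortized update time $O\bigl((cm)^{1-\epsilon}/c^{2}\bigr)$ (using $c\approx\beta/(2-(2\ell{+}1)\alpha)$), the total update cost is $\Theta(m^{2-\epsilon}c^{-\epsilon})$. For $m=\Theta(n^{1+1/\ell})$ this is $n^{(1+1/\ell)(2-\epsilon)}$, which is $\ge mn$ whenever $\epsilon\le 1/(\ell+1)$. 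So for every $\ell\ge1$ there is a nonempty range of $\epsilon$ for which your reduction does not beat the $(2\ell{+}1)$-cycle lower bound, and the theorem as stated (``for every constant $\epsilon>0$'') does not follow. (Your preprocessing and BFS terms are fine; only the update term blows up.)

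The paper avoids this by a small but essential change to the basic gadget: it keeps color $1$ \emph{inside} the gadget, with $k{+}1=2\ell{+}2$ layers where both layer $1$ and layer $k{+}1$ are copies of $V_1$. In phase $i$ one then inserts just the $c{-}1$ edges $\bigl(x_i^{k+1}\in G'_j,\;x_i^{1}\in G'_{j+1}\bigr)$ linking consecutive gadgets at the specific color-$1$ vertex, and queries $\dist\bigl(x_i^{1}\in G'_1,\;x_i^{k+1}\in G'_c\bigr)$ against the threshold $(k{+}2)c-1$. This yields only $O(cn)$ edge updates over all phases, and the total update cost becomes $O(cn)\cdot O\bigl((cm)^{1-\epsilon}/c^{2}\bigr)=O(nm^{1-\epsilon})$, which is $o(mn)$ for every $\epsilon>0$. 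In short: do not externalize color $1$ to the $w$-vertices as in the OMv proof; fold it into the layered gadget so that each phase costs $O(c)$ edge updates rather than $O(c\cdot\deg(v_1))$.
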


In particular, for the natural setting where $\alpha$ is constant and $\beta=n^{o(1)}$ we have the following corollary:

\begin{corollary}
Under the Combinatorial $k$-Clique hypothesis, for every constant $\epsilon>0$ and every constant integer $\ell\geq 1$, for an $n$-vertex fully dynamic graph with at most $m=\Theta(n^{1+1/\ell})$ edges at all times, there is no combinatorial algorithm for maintaining a $(1+\alpha,n^{o(1)})$-emulator for any constant $\alpha\in[0,\frac{2}{2\ell+1})$ with $O(m^{1-\epsilon})$ edges, preprocessing time $O(mn^{1-\epsilon})$, and amortized update time $O(m^{1-\epsilon})$ with error probability at most $1/3$ in the word-RAM model with $O(\log n)$ bit words.

The same result holds for incremental and decremental algorithms, but for worst-case update time.
\end{corollary}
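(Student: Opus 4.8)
The plan is to mirror the OuMv-based construction behind Theorem~\ref{thm:oumv-fully}, but to replace OuMv by the $k$-cycle problem and the bipartite basic gadget by a layered, ``color-coded'' gadget. By the known reduction from $k$-Clique to $k$-cycle~\cite{lincoln2018tight} it suffices to show that a combinatorial fully dynamic algorithm maintaining a $(1+\alpha,\beta)$-emulator within the stated resource bounds would detect a short cycle in an $N$-vertex graph with $\Theta(N^{1+1/\ell})$ edges in total time $O(N^{k-\delta})$ for some $\delta>0$, contradicting the Combinatorial $k$-Clique hypothesis once $k$ is taken to be a large enough constant in terms of $\ell$. First I would color the cycle-detection instance $G$ with the appropriate number of colors so that a sought cycle becomes \emph{rainbow} (one vertex per color) with constant probability, then repeat $O(\log N)$ times or use an explicit perfect hash family so the reduction succeeds with high probability at only a polylogarithmic overhead; it then suffices to detect a rainbow cycle. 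The \textbf{basic gadget} is the layered graph obtained from (color-coded) $G$ by letting each color class be a layer and keeping only those edges of $G$ that run between consecutive layers. Because this layered graph can be 2-colored by the parity of its layer index, it is bipartite, so --- exactly as in the OuMv gadget where distances between the two ``ends'' are forced to be odd --- a rainbow cycle corresponds to a path of length $2\ell+1$ between the two boundary layers of a gadget copy, whereas in the absence of one the relevant distance is at least $2\ell+3$.

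Next I would chain copies and drive the construction with dynamic ``phases'' just as in Theorem~\ref{thm:oumv-fully} (and Figure~\ref{fig:omv_fully}). Take $c=\lceil\frac{\beta}{2-(2\ell+1)\alpha}\rceil+1$ disjoint copies $G_1,\dots,G_c$ of the gadget together with boundary vertices $w_0,\dots,w_c$; in each phase we enumerate a constant-size tuple of the rainbow cycle's vertices (the analogue of the online vector pairs $(u^i,v^i)$) and wire that selection into the $w_j$'s using only $O(1)$ edge updates per copy per phase, made possible by pre-building ``selector'' vertices in the initial graph that are statically joined to the relevant neighborhoods. After each phase we run one BFS on the emulator to estimate $\dist(w_0,w_c)$ and then delete the phase edges, returning to the initial graph. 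The layered structure guarantees $\dist(w_0,w_c)=(2\ell+1)c$ when the phase's tuple extends to a rainbow cycle and $\dist(w_0,w_c)\ge(2\ell+3)c$ otherwise; since $\alpha<\frac{2}{2\ell+1}$ and $c>\frac{\beta}{2-(2\ell+1)\alpha}$ we get $(1+\alpha)(2\ell+1)c+\beta<(2\ell+3)c$, so the $(1+\alpha,\beta)$-emulator's estimate separates the two cases, which is precisely where the bound on $\alpha$ and the choice of $c$ are used.

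For the running time I would set $n=\Theta(N)$ vertices and arrange that the whole dynamic graph has $\Theta(N^{1+1/\ell})=\Theta(n^{1+1/\ell})=\Theta(m)$ edges at all times, so the emulator has $O(m^{1-\epsilon}(\tfrac{2-(2\ell+1)\alpha}{\beta}))$ edges. Then the total cost is: $O(mn^{1-\epsilon}(\tfrac{2-(2\ell+1)\alpha}{\beta})^2)$ preprocessing; plus, over the $\mathrm{poly}(N)$ phases, $O(1)$ updates each at amortized cost $O(m^{1-\epsilon}(\tfrac{2-(2\ell+1)\alpha}{\beta})^2)$; plus one $O(m^{1-\epsilon}(\tfrac{2-(2\ell+1)\alpha}{\beta}))$-time BFS per phase. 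With $\ell$ and the constant factors $\tfrac{2-(2\ell+1)\alpha}{\beta}$ fixed and $k$ chosen large enough in terms of $\ell$ inside the $k$-Clique$\to k$-cycle reduction, all three terms are $O(N^{k-\delta})$, giving the contradiction. The incremental and decremental ``worst-case'' versions follow immediately from the roll-back observation at the start of \Cref{sec:lb}, since each phase performs a batch of insertions followed by the matching deletions; for the \emph{amortized} partially dynamic statements one replaces each boundary vertex $w_i$ by a path and attaches the phase-$i$ edges to its $i$-th vertex (decremental: start from the fully built graph and delete), exactly as in Theorem~\ref{thm:oumv-partial}, the distance thresholds merely shifting by the traversed path-prefix lengths.

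The hard part will be the simultaneous balancing of parameters in the gadget design: one must split the rainbow cycle into a static ``interior'' realized inside the layered gadget and $O(1)$ dynamically-inserted ``connector'' edges per phase in such a way that (a) the entire dynamic graph genuinely stays at $\Theta(n^{1+1/\ell})$ edges throughout every phase --- this is what forces the exponent $1/\ell$ and dictates how sparse the \cite{lincoln2018tight} cycle instance must be made --- (b) the $2\ell+1$ versus $2\ell+3$ distance gap survives both the chaining by $c$ copies and the $(1+\alpha)$-multiplicative plus $\beta$-additive distortion of the emulator, which pins down $c$ and the constraint $\alpha<\frac{2}{2\ell+1}$, and (c) the number of phases, each with sub-$m$ update and BFS cost, together with the $O(mn^{1-\epsilon})$ preprocessing, stays below $N^k$, which fixes how large $k$ must be relative to $\ell$. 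Once a single gadget meeting all three demands is in hand, correctness and the arithmetic are routine and follow the OuMv template essentially verbatim.
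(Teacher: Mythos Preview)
Your high-level plan---color-code the cycle instance, layer it into a gadget, chain $c$ copies, run a phase per candidate ``start'' vertex, and BFS on the emulator---is exactly the paper's approach, and the choice $c=\lceil\beta/(2-(2\ell+1)\alpha)\rceil+1$ together with the $(2\ell+1)$ vs.\ $(2\ell+3)$ gap is right. Two points, however, deserve correction.

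\emph{The parameters $k$ and $\ell$ are not independent.} The Lincoln--Vassilevska Williams--Williams reduction (Theorem~\ref{thm:kcycle}) gives hardness for $(2\ell+1)$-cycle detection in graphs with $m=\Theta(n^{1+1/\ell})$ edges, conditional on the $(2\ell+1)$-Clique hypothesis: the cycle length is $k=2\ell+1$ and the target running time to beat is $O(mn^{1-\epsilon})$, not $O(N^{k-\delta})$. There is no freedom to ``choose $k$ large enough in terms of $\ell$''; once $\ell$ is fixed, so is $k$. Your running-time accounting should therefore show that preprocessing, total update cost, and total BFS cost are each $O(mn^{1-\epsilon})$ on the cycle instance. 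This is what the paper does: with $n'=\Theta(cn)$, $m'=\Theta(cm)$, at most $n$ phases (one per color-$1$ vertex), and $c-1$ edge updates per phase, all three terms land at $O(mn^{1-\epsilon})$ once you substitute $c=\Theta(\beta/(2-(2\ell+1)\alpha))$.

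\emph{The boundary vertices $w_j$ and ``selector'' gadgets are unnecessary.} The paper's construction is simpler than the OuMv template you are mirroring: there are no extra $w_j$ vertices. Instead, in phase $i$ one inserts a single edge between $x^{k+1}_i\in G'_j$ and $x^1_i\in G'_{j+1}$ for each consecutive pair of gadget copies (so $c-1$ inserts per phase), and the BFS is from $x^1_i\in G'_1$ to $x^{k+1}_i\in G'_c$. This already gives exactly $O(1)$ updates per copy per phase and the clean threshold $\dist< (k+2)c-1$, with no need for pre-built selector neighborhoods. Your gadget description should also make the wrap-around explicit: the layered graph has $k+1$ layers (layer $k+1$ is a second copy of color~$1$), so that a colorful $k$-cycle through $x_i$ becomes a path of length exactly $k$ from $x^1_i$ to $x^{k+1}_i$.
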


When $k=3$ (i.e. triangle detection), our construction also implies a conditional lower bound under the hypothesis that triangle detection cannot be done in time $O(n^{\omega-\epsilon})$ for any constant $\epsilon$ even for non-combinatorial algorithms:

\begin{theorem}\label{thm:triangle}
Under the hypothesis that there is no algorithm for triangle detection in $O(n^{\omega-\delta})$ for any constant $\delta$, for every constant $\epsilon>0$ for an $n$-vertex fully dynamic graph with at most $m$ edges at all times, there is no algorithm for maintaining a $(1+\alpha,\beta)$-emulator for any $\alpha\in[0,2/3)$ and integer $\beta\geq 0$ with $O(n^{\omega-1-\epsilon}(\frac{2-3\alpha}{\beta})^{\omega-1})$ edges, preprocessing time $O(n^{\omega-\epsilon}(\frac{2-3\alpha}{\beta})^\omega)$, and amortized update time $O(n^{\omega-1-\epsilon}(\frac{2-3\alpha}{\beta})^\omega)$ with error probability at most $1/3$ in the word-RAM model with $O(\log n)$ bit words.

The same result holds for incremental and decremental algorithms, but for worst-case update time.
\end{theorem}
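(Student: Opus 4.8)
The plan is to observe that \emph{the same reduction} used to prove Theorem~\ref{thm:kcycle-fully} in the case $k=3$ already yields this statement once its running time is re-analyzed: for $k=3$ a $3$-cycle is a $3$-clique, so the $k$-clique-to-$k$-cycle step is the identity, and the reduction takes directly an $n$-vertex triangle instance $G$ (which we may assume dense, as this is the regime of the $n^{\omega}$ hypothesis), sets the amplification parameter $c=\lceil\beta/(2-3\alpha)\rceil+1$ (so $c=n^{o(1)}$ whenever $\beta=n^{o(1)}$, and $c=O(1)$ for constant $\beta$), and produces a fully dynamic graph on $n^{1+o(1)}$ vertices: $c$ chained copies of the layered, color-coded ``triangle gadget'' of Theorem~\ref{thm:kcycle-fully} together with the connector vertices, where during preprocessing each neighborhood $N_G(v)$ is already wired to a constant number of ``port'' vertices inside every copy. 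On this graph one runs the hypothetical emulator algorithm and then, for each random $3$-coloring (there are $O(\log n)$ of them) and each $v\in V(G)$, a phase that (i) inserts the $O(c)$ edges linking $v$'s ports into the connector chain, (ii) runs one BFS on the maintained $(1+\alpha,\beta)$-emulator, (iii) reports that $v$ lies in a colorful triangle iff the returned distance estimate is below the threshold, and (iv) deletes the $O(c)$ edges it inserted.

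I would then note that correctness is word-for-word that of Theorem~\ref{thm:kcycle-fully}: in phase $v$ the relevant distance is the ``good'' value (coming from $c$ copies, each contributing a colorful-triangle witness) when $v$ is in a colorful triangle, and is at least $\tfrac{2\ell+3}{2\ell+1}$ times larger otherwise (here $\ell=1$, so the factor is $5/3$), so the emulator estimate separates the two cases exactly when $\alpha<\tfrac{2}{2\ell+1}=2/3$ and $\beta<c(2-3\alpha)$, which is our parameter range; the standard color-coding argument over the $O(\log n)$ colorings makes the overall answer correct whp, and, because each phase is of the form ``insert a batch of edges / query / undo'', the extension to incremental and decremental algorithms with worst-case update time is immediate from the discussion at the start of Section~\ref{sec:lb}.

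The one genuinely new step is the time accounting, and this is where the hypothesis of Theorem~\ref{thm:triangle} replaces the combinatorial one. Reading the assumed bounds with ``$n$'' equal to the $n^{1+o(1)}$-vertex reduction graph: building the static part costs $n^{2+o(1)}$; the emulator's preprocessing costs $\tilde{O}\!\left(n^{\omega-\epsilon}\left(\tfrac{2-3\alpha}{\beta}\right)^{\omega}\right)$; the reduction issues only $n^{1+o(1)}$ edge updates in total (there are $O(n\log n)$ phases, each touching $O(c)$ pre-wired edges), so the amortized bound $\tilde{O}\!\left(n^{\omega-1-\epsilon}\left(\tfrac{2-3\alpha}{\beta}\right)^{\omega}\right)$ sums to $\tilde{O}\!\left(n^{\omega-\epsilon}\left(\tfrac{2-3\alpha}{\beta}\right)^{\omega}\right)$; and the $O(n\log n)$ BFS calls, each on a graph with $O\!\left(n^{\omega-1-\epsilon}\left(\tfrac{2-3\alpha}{\beta}\right)^{\omega-1}\right)$ edges, also cost $\tilde{O}\!\left(n^{\omega-\epsilon}\left(\tfrac{2-3\alpha}{\beta}\right)^{\omega}\right)$ in total. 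Since $\left(\tfrac{2-3\alpha}{\beta}\right)^{O(1)}$ and $\log n$ are $n^{o(1)}$ in the natural setting (and constants otherwise), the whole reduction runs in time $O(n^{\omega-\epsilon'})$ for some fixed $\epsilon'>0$, contradicting the assumed $n^{\omega-\delta}$ hardness of triangle detection.

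The main obstacle is precisely this bookkeeping, and within it the delicate point is keeping the \emph{total} number of edge updates down to $n^{1+o(1)}$: re-wiring $N_G(v)$ afresh in each phase would cost $\Theta(n^2)$ updates, which together with the $\Omega(n^{\omega-1-\epsilon})$ amortized update time would exceed $n^{\omega}$ and destroy the contradiction. This forces the construction to pre-wire every neighborhood during preprocessing and only ``switch it on'' with $O(c)$ updates per phase, and it is the reason one needs the specific layered, color-coded gadget of Theorem~\ref{thm:kcycle-fully} (so that these $O(c)$ updates still pin the active vertex across all $c$ copies and no spurious long ``colorful walk'' masquerades as a triangle witness). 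One also has to double-check that the $\left(\tfrac{2-3\alpha}{\beta}\right)$ and $\log n$ overheads are absorbed so that $\epsilon'>0$ is genuinely independent of $n$.
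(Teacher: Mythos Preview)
Your proposal is essentially correct and follows the paper's approach: it is exactly the construction of Theorem~\ref{thm:kcycle-fully} specialized to $k=3$, with the running-time accounting redone against the $n^{\omega}$ triangle hypothesis, and indeed the crucial point you identify (that all the graph's edges sit in the static gadgets so each phase needs only the $c-1$ connector edges, giving $O(cn)$ total updates) is precisely what makes the paper's analysis go through.

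One bookkeeping remark: your final step handles only the regimes $\beta=n^{o(1)}$ and $\beta=O(1)$, asserting the reduction graph has $n^{1+o(1)}$ vertices and then absorbing the $\bigl(\tfrac{2-3\alpha}{\beta}\bigr)^{O(1)}$ factors as $n^{o(1)}$. For the theorem as stated (arbitrary integer $\beta$), this is not enough: if, say, $\beta=n^{0.4}$ then $c=\Theta(n^{0.4})$ and the reduction graph has $\Theta(n^{1.4})$ vertices, so $(n')^{\omega-\epsilon}$ alone already exceeds $n^{\omega}$. The paper's fix is to note the exact cancellation: since $c=\Theta\!\bigl(\tfrac{\beta}{2-3\alpha}\bigr)$, one has $(cn)^{\omega-\epsilon}\bigl(\tfrac{2-3\alpha}{\beta}\bigr)^{\omega}=n^{\omega-\epsilon}/c^{\epsilon}\le n^{\omega-\epsilon}$ (and similarly for the update and BFS terms). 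That algebraic cancellation, not an $n^{o(1)}$ absorption, is what makes the bound hold uniformly in $\beta$; once you write it this way your argument matches the paper's verbatim.
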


In particular, for the natural setting where $\alpha$ is constant and $\beta=n^{o(1)}$ we have the following corollary:

\begin{corollary}
Under the hypothesis that there is no algorithm for triangle detection in $O(n^{\omega-\delta})$ for any constant $\delta$, for every constant $\epsilon>0$ for an $n$-vertex fully dynamic graph with at most $m$ edges at all times, there is no algorithm for maintaining a $(1+\alpha,n^{o(1)})$-emulator for any constant $\alpha\in[0,2/3)$  with $O(n^{\omega-1-\epsilon})$ edges, preprocessing time $O(n^{\omega-\epsilon})$, and amortized update time $O(n^{\omega-1-\epsilon})$ with error probability at most $1/3$ in the word-RAM model with $O(\log n)$ bit words.

The same result holds for incremental and decremental algorithms, but for worst-case update time.
\end{corollary}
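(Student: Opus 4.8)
The plan is to reuse, almost verbatim, the reduction behind Theorem~\ref{thm:kcycle-fully} in the special case $k=3$, $\ell=1$, changing only which hardness hypothesis is invoked and redoing the running-time accounting. The point of $k=3$ is that $3$-Clique, $3$-Cycle and triangle coincide, so the color-coded layered ``basic gadget'' of that proof is built directly from the triangle instance, with no $k$-Clique$\to k$-Cycle detour. Concretely, given an $N$-vertex graph $G$ in which we must detect a triangle, first color-code $V(G)$ into three classes $A,B,C$ (repeated $O(\log N)$ times, or once via a perfect hash family), so that it suffices to detect a rainbow triangle with one vertex in each class; this costs only an $O(\log N)$ factor and is needed only because the hypothesis tolerates error $\tfrac13$. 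Then build the dynamic graph exactly as in Theorem~\ref{thm:kcycle-fully}: take the layered version of the tripartite graph on $A\cup B\cup C$ (with one layer vertex-split, so that rainbow triangles become short source--sink paths) as a \emph{static} core, take $c=\lceil\beta/(2-3\alpha)\rceil+1$ disjoint copies of it, and chain the copies through connector vertices $w_0,\dots,w_c$ as in Figure~\ref{fig:omv_fully}. The update sequence consists of one phase per vertex $a$ of the first layer: the phase inserts, in each copy, the $O(1)$ edges that hook $a$ into that copy's two connectors, recovers $\dist(w_0,w_c)$ by one BFS on the maintained emulator, then deletes those $O(1)$ edges, returning to the core.

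Correctness is inherited unchanged: within a single copy, $a$ lies on a rainbow triangle iff the connector-to-connector distance through that copy equals $3$, and is otherwise at least $5$; concatenating the $c$ copies makes $\dist(w_0,w_c)$ equal $3c$ when $a$ lies on a rainbow triangle and at least $5c$ otherwise. Since $c>\beta/(2-3\alpha)$ we have $3c(1+\alpha)+\beta<5c$ whenever $\alpha<2/3$, so the BFS estimate off the $(1+\alpha,\beta)$-emulator separates the two cases; the algorithm declares a triangle iff some phase's estimate is below $5c$, which is correct with probability $\ge\tfrac23$ after boosting the color coding and union-bounding over the $\mathrm{poly}(N)$ updates.

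The running time is where the new content lies. Substitute $n=\Theta(cN)$ vertices, $m=\Theta(cN^2)$ edges, $\Theta(N)$ phases, and $O(c)$ updates per phase into the claimed data-structure bounds. The emulator has $O\!\big(n^{\omega-1-\epsilon}(\tfrac{2-3\alpha}{\beta})^{\omega-1}\big)=O(N^{\omega-1-\epsilon})$ edges, so the $\Theta(N)$ BFS runs cost $O(N^{\omega-\epsilon})$ total; the preprocessing is $O\!\big(n^{\omega-\epsilon}(\tfrac{2-3\alpha}{\beta})^{\omega}\big)=O(N^{\omega-\epsilon})$; and the $\Theta(cN)$ edge updates at amortized cost $O\!\big(n^{\omega-1-\epsilon}(\tfrac{2-3\alpha}{\beta})^{\omega}\big)$ sum to $O(N^{\omega-\epsilon})$. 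Hence triangle detection on $N$ vertices would run in $N^{\omega-\Omega(1)}$ time. Crucially, nothing in the reduction is ``combinatorial'' --- BFS and the bookkeeping are elementary and the only nontrivial work lives inside the assumed emulator --- so if that emulator is allowed to be algebraic, so is the resulting triangle algorithm, contradicting the hypothesis that triangle has no $O(n^{\omega-\delta})$ algorithm. The incremental and decremental versions (worst-case update time) follow from the observation at the start of Section~\ref{sec:lb}: each phase inserts a batch and then undoes it (and the mirrored construction only deletes), so a worst-case incremental (resp.\ decremental) algorithm simulates the phases by performing only the insertions (resp.\ deletions) and then rolling back to the core.

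The main obstacle I anticipate is precisely this calibration: one must check that the exponents $\omega-1$ on the emulator's edge bound and $\omega$ on its preprocessing and update bounds, together with the powers $(\tfrac{2-3\alpha}{\beta})^{\omega-1}$ and $(\tfrac{2-3\alpha}{\beta})^{\omega}$, are exactly those for which the amplification factor $c=\Theta(\beta/(2-3\alpha))$ cancels out of all three budget lines and still leaves an $N^{\omega-\Omega(1)}$ bound --- weaker exponents would fail to reach a contradiction and stronger ones would overshoot --- and, secondarily, that the reduction and the color-coding boosting introduce no matrix-multiplication-type cost of their own, so that the hardness genuinely transfers to non-combinatorial triangle detection. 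Everything else is a transcription of the proof of Theorem~\ref{thm:kcycle-fully} with $k=3$, $\ell=1$, and the combinatorial restriction removed.
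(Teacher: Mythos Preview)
Your plan is correct and is essentially what the paper does: Theorem~\ref{thm:triangle} is proved alongside Theorem~\ref{thm:kcycle-fully} by taking $k=3$ in the color-coded layered construction and re-accounting the running time against the $n^{\omega-\delta}$ triangle hypothesis (the ``Non-combinatorial triangle'' paragraph); the Corollary then follows by setting $\beta=n^{o(1)}$, which makes each $(\tfrac{2-3\alpha}{\beta})$ factor $n^{-o(1)}$ and hence absorbable into $\epsilon$. Your three running-time lines reproduce that paragraph exactly.

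One caution on the construction details: you describe chaining the copies ``through connector vertices $w_0,\dots,w_c$ as in Figure~\ref{fig:omv_fully}'' and then claim per-copy connector-to-connector distance $3$ versus $\ge 5$ and total $3c$ versus $\ge 5c$. The paper's $k$-cycle construction (Figure~\ref{fig:clique_fully}) does \emph{not} add separate $w$ vertices; it links the last layer of copy $j$ directly to the first layer of copy $j{+}1$ by a single inter-gadget edge per phase, and the BFS endpoints are the phase-vertex's own copies in the outermost layers. If you literally place external $w$-connectors outside a four-layer triangle gadget, the per-copy distance becomes $5$ versus $\ge 7$, and the resulting threshold only supports $\alpha<2/5$, not $\alpha<2/3$. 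So use the paper's direct chaining (equivalently, identify $w_j$ with the shared boundary vertex) to recover the stated $\alpha$ range; with that fix everything else goes through.
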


For the incremental and decremental settings, we prove the following theorem.

\begin{theorem}\label{thm:kcycle-partial}
Under the Combinatorial $k$-Clique hypothesis, for every constant $\epsilon>0$ and every constant integer $\ell\geq 1$, for an $n$-vertex incremental or decremental graph with $m=\Theta(n^{1+1/\ell})$ edge insertions or deletions, there is no combinatorial algorithm for maintaining a $\beta$-additive emulator for any integer $\beta\geq 0$ with $O(m^{1-\epsilon}/\beta)$ edges and total time $O(mn^{1-\epsilon}/\beta^2)$ with error probability at most $1/3$ in the word-RAM model with $O(\log n)$ bit words.
\end{theorem}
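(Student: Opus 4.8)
The plan is to mirror the incremental and decremental OuMv reductions above, replacing the bipartite ``matrix'' gadget by a layered gadget built from a \emph{sparse} cycle-detection instance, and replacing the $n$ OuMv query phases by one phase per vertex of a fixed color class. First I would use the known reduction from $k$-Clique to cycle detection \cite{lincoln2018tight}, for a suitable $k=k(\ell)$, to turn an $N$-vertex $k$-Clique instance into the problem of detecting a cycle of a fixed length $c=c(\ell)$ in a graph $G'$ on $\nu$ vertices with $\mu=\Theta(\nu^{1+1/\ell})$ edges, arranged so that an $O(\mu\,\nu^{1-\epsilon})$-time algorithm for this cycle problem yields an $O(N^{k-\epsilon'})$-time combinatorial algorithm for $k$-Clique. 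Next I would color-code: take $O(1)$ random colorings $V(G')\to\{1,\dots,c\}$, and in each keep only edges joining consecutive color classes, so that any $c$-cycle of $G'$ becomes ``colorful'' (traversing colors $1,2,\dots,c$ cyclically) in at least one coloring with constant probability; as $c$ is constant, amplifying to failure probability below $1/3$ costs only a constant factor. The \emph{basic gadget} is one such layered graph with the ``closing'' layer pair (edges between classes $c$ and $1$) \emph{removed}, so that a path of length $c-1$ through the remaining $c$ layers together with one re-inserted closing edge is exactly a colorful $c$-cycle; the closing edges are inserted dynamically to test a single color-$1$ vertex at a time.

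For the dynamic construction I would copy the incremental OuMv reduction: take $\beta+1$ disjoint copies $G_1,\dots,G_{\beta+1}$ of the basic gadget and $\beta+2$ paths $P_0,\dots,P_{\beta+1}$ on $O(\nu)$ vertices each, named $z_1,\dots,z_\nu=y_\nu,\dots,y_1$. Enumerating the color-$1$ vertices $x_1,x_2,\dots$ of $G'$, in phase $i$ and for each copy $G_j$ I insert the edge $(y_i\in P_{j-1},\,x_i\in G_j)$ and, for every edge $(x_i,w)$ of $G'$ with $w$ in color class $c$, the edge $(w\in G_j,\,z_i\in P_j)$; I then run one single-source shortest-path computation on the maintained emulator to estimate $\dist(z_1\in P_0,\,y_1\in P_{\beta+1})$ and return $1$ for phase $i$ iff this estimate is below $(c+2)(\beta+1)+(\beta+2)(2\nu-2i)+2(i-1)$. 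The decremental construction is the mirror image of the decremental OuMv one: start with all candidate closing edges present and in phase $i$ delete those that correspond to non-edges of $G'$, with index $\nu-i+1$ in place of $i$. The overall algorithm declares that the original $k$-Clique instance has a $k$-clique iff some phase, in some coloring, returns $1$.

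For correctness, the ``yes'' direction is direct: if $G'$ has a colorful $c$-cycle whose color-$1$ vertex is $x_i$, then in phase $i$ the subpath $y_i\to x_i\to\cdots\to w\to z_i$ crosses each copy in exactly $c+1$ steps, so concatenation with the fixed path pieces gives $\dist(z_1\in P_0,y_1\in P_{\beta+1})\le (c+1)(\beta+1)+(\beta+2)(2\nu-2i)+2(i-1)$, which (as the emulator is $\beta$-additive and $\beta<2(\beta+1)$) makes its estimate fall below the threshold. For the ``no'' direction, the layered structure forces every $z_1$--$y_1$ path to split into the same fixed path pieces plus $\beta+1$ gadget crossings, each of the form $1+d+1$ with $d$ a distance from color class $1$ to color class $c$; since such distances are all $\equiv c-1 \pmod 2$, a crossing smaller than $c+3$ forces $d=c-1$, i.e.\ one vertex per layer, i.e.\ a colorful $c$-cycle through the selected color-$1$ vertex. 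The path-padding (incremental) and index-reversal (decremental) tricks, exactly as in the OuMv proofs, guarantee that edges surviving from earlier phases produce only paths strictly above the current threshold, so no spurious $1$ is returned.

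For the running time, the dynamic graph has $n=\Theta(\beta\nu)$ vertices and $m=\Theta(\beta\mu)=\Theta(\beta\nu^{1+1/\ell})$ edge updates (which is $\Theta(n^{1+1/\ell})$ after a mild padding, or automatically when $\beta=n^{o(1)}$), so an emulator with $O(m^{1-\epsilon}/\beta)$ edges and total update time $O(m\,n^{1-\epsilon}/\beta^2)$ would run in $O(\mu\,\nu^{1-\epsilon})$ total time, while the $\Theta(\nu)$ shortest-path computations cost $O(\nu\,(m^{1-\epsilon}/\beta+n))=O(\nu\,\mu^{1-\epsilon}+\beta\nu^2)$ in total; substituting $\mu=\Theta(\nu^{1+1/\ell})$ and the clique-to-cycle time correspondence from \cite{lincoln2018tight} shows the whole reduction runs in $N^{k-\Omega(1)}$ time, contradicting the Combinatorial $k$-Clique hypothesis. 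The main obstacle I expect is making the cycle reduction and the gadget fit together: one must pick the cycle length $c$ (equivalently the density exponent $1/\ell$) so that the cycle instance has $\Theta(\nu^{1+1/\ell})$ edges with the correct clique-to-cycle time transfer, and at the same time arrange the removal and dynamic re-insertion of the closing layer pair so that testing one color-$1$ vertex produces a clean gap of exactly $2$ per copy between the ``yes'' and ``no'' cases. Everything downstream --- the $(\beta+1)$-fold copying that keeps the emulator's additive error below the gap $2(\beta+1)$, and the path-padding (resp.\ index reversal) needed for the incremental (resp.\ decremental) version --- is a routine transcription of the OuMv argument already given above.
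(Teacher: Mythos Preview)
Your proposal is correct and follows essentially the same approach as the paper: reduce from $k$-cycle (with $k=2\ell+1$) via color-coding, take $\beta+1$ copies of a layered gadget linked by the $P_j$ paths, run one phase per color-$1$ vertex, and read off the answer from a BFS distance on the emulator. The only real difference is in how the ``closing'' color-$1$/color-$k$ edges are handled: the paper duplicates the color-$1$ class as an extra $(k{+}1)$st layer inside the static gadget and inserts exactly two dynamic edges per copy per phase ($y_i\to x_i^1$ and $x_i^{k+1}\to z_i$), whereas you omit that layer and instead attach $z_i$ to every color-$c$ neighbour of $x_i$, inserting $O(\deg(x_i))$ dynamic edges per copy. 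Both give $m'=\Theta(\beta\mu)$ total updates and the same running-time arithmetic; the paper's variant is a bit cleaner (constant work per phase per copy, and the ``no'' analysis never needs to reason about which neighbour $w$ was used), while yours saves one layer. One small caution: your stated ``gap of exactly $2$ per copy'' is only the parity gap in the in-gadget distance $d$; once you fold in the path-padding $(i-p)+(i-q)$ the per-copy gap between yes and no crossings is actually $1$, but this is exactly what your threshold $(c+2)(\beta+1)$ needs, so the argument still goes through.
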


In particular, for the natural setting where $\beta=n^{o(1)}$ we have the following corollary:

\begin{corollary}
Under the Combinatorial $k$-Clique hypothesis, for every constant $\epsilon>0$ and every constant integer $\ell\geq 1$, for an $n$-vertex incremental or decremental graph with $m=\Theta(n^{1+1/\ell})$ edge insertions or deletions, there is no combinatorial algorithm for maintaining a $n^{o(1)}$-additive emulator with $O(m^{1-\epsilon})$ edges and total time $O(mn^{1-\epsilon})$ with error probability at most $1/3$ in the word-RAM model with $O(\log n)$ bit words.
\end{corollary}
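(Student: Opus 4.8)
The corollary is the case $\beta=n^{o(1)}$ of Theorem~\ref{thm:kcycle-partial}: an $O(m^{1-\epsilon})$-edge emulator with total time $O(mn^{1-\epsilon})$ in particular meets the bounds of Theorem~\ref{thm:kcycle-partial} with $\beta=n^{o(1)}$ and a slightly smaller constant, since $m\geq n$ lets one absorb the $n^{o(1)}$ factors into the exponent. So the task is really to prove Theorem~\ref{thm:kcycle-partial}, and the plan is to keep the incremental/decremental construction of Theorem~\ref{thm:oumv-partial} at the ``outer'' level (the $\beta+1$ chained copies, the path structures $P_0,\dots,P_{\beta+1}$, the per-phase insertions/deletions, the BFS-against-a-threshold test, the correctness bookkeeping) and only replace the bipartite matrix gadget by a \emph{layered} gadget encoding a $k$-cycle instance, starting from the Combinatorial $k$-Clique hypothesis through the known $k$-Clique-to-$k$-Cycle reduction~\cite{lincoln2018tight}.

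Concretely: given $\ell$, I would set $k=2\ell+1$ and invoke~\cite{lincoln2018tight} (with its parameters tuned, and padding if needed) to turn a Combinatorial $(2\ell+1)$-Clique instance into a graph $H$ with $|E(H)|=\Theta(|V(H)|^{1+1/\ell})$ in which detecting a $(2\ell+1)$-cycle in time $T$ yields a $(2\ell+1)$-Clique algorithm in time $\tilde O(T)$; hence under the hypothesis there is no combinatorial algorithm detecting a $(2\ell+1)$-cycle in $H$ in time $O(|V(H)|^{2+1/\ell-\epsilon})$ (for $\ell=1$ this is exactly the combinatorial $m^{3/2-o(1)}$ lower bound for triangle detection, mirroring the OuMv reduction, which is the $\ell=1$ dense case). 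For the gadget, if~\cite{lincoln2018tight} does not already produce a $(2\ell+1)$-partite instance I would color-code $H$ with $O(1)$ colorings (constantly many suffice since $2\ell+1$ is constant, dropping the failure probability below $1/3$), and for a fixed coloring $\chi\colon V(H)\to\{1,\dots,2\ell+1\}$ form the strictly layered graph with layers $L_1,\dots,L_{2\ell+1},L_1'$, where $L_i$ is a copy of $\chi^{-1}(i)$, the extra layer $L_1'$ is a second copy of $\chi^{-1}(1)$, and consecutive layers are joined according to adjacency in $H$. The point is that a color-$1$ vertex $v$ admits a path of length exactly $2\ell+1$ from $v\in L_1$ to its twin $v'\in L_1'$ iff $v$ lies on a colorful $(2\ell+1)$-cycle of $H$, and otherwise that distance is $\geq 2\ell+3$ --- precisely the ``$3$ versus $\geq 5$'' dichotomy of the OuMv gadget with $3$ replaced by $2\ell+1$ (which is also why the additive threshold in the fully dynamic analogue is $\tfrac{2}{2\ell+1}$).

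Then I would plug this in: take $\beta+1$ chained copies $G_1,\dots,G_{\beta+1}$ and the paths $P_0,\dots,P_{\beta+1}$, and run one phase per color-$1$ vertex, where in phase $i$, for each $j$, the path vertex $y_i\in P_{j-1}$ is hooked to the $i$-th color-$1$ vertex $v_i$ in $L_1$ of $G_j$ and $z_i\in P_j$ to its twin $v_i'$ in $L_1'$ of $G_j$ (the decremental variant pre-inserts these together with the bypass edges and deletes, reversing the incremental one); the key design decision is that we enter and leave each gadget copy at a \emph{matched} pair $v_i,v_i'$, so that a short chain path forces an actual cycle, not merely a path. After each phase a single BFS on the maintained $\beta$-additive emulator between the ends of the chain, compared against the threshold of Theorem~\ref{thm:oumv-partial} with $3$ replaced by $2\ell+1$, returns ``short'' iff $v_i$ is on a colorful $(2\ell+1)$-cycle (the $\beta+1$ copies swamp the additive error $\beta$), so over all phases and all $O(1)$ colorings we decide whether $H$ has a $(2\ell+1)$-cycle; correctness is the proof of Theorem~\ref{thm:oumv-partial} with its bipartite-parity step replaced by ``a short path through a gadget copy must cross its $2\ell+1$ layers in order, hence traces a colorful cycle of $H$''. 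For the running time the dynamic graph has $n'=\Theta(\beta|V(H)|)$ vertices and $m'=\Theta(\beta|V(H)|^{1+1/\ell})$ edge updates, i.e. $(n')^{1+1/\ell}$ up to the usual $n^{o(1)}$ slack (static gadget edges dominate; per-phase edges total $O(\beta|V(H)|)$), the emulator has $O(m'^{1-\epsilon}/\beta)$ edges so each of the $O(|V(H)|)$ BFS calls costs $O(|V(H)|^{2-\epsilon})$, and the assumed total update time $O(m'n'^{1-\epsilon}/\beta^2)$ is $O(|V(H)|^{2+1/\ell-\epsilon})$; summing yields a combinatorial $(2\ell+1)$-cycle algorithm in $O(|V(H)|^{2+1/\ell-\epsilon'})$ time, contradicting the bound from the first step.

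I expect the crux to be the gadget and its analysis, specifically the fact that a $(2\ell+1)$-cycle has to \emph{close up}: a single distance query through a gadget with a super-source and a super-sink would also be satisfied by colorful $(2\ell+1)$-\emph{paths} between \emph{distinct} endpoints and so would not certify a cycle, which is exactly why one is forced into the per-vertex ``enter at $v_i$, leave at the twin of $v_i$'' phase structure; reconciling that with keeping $m'=\Theta((n')^{1+1/\ell})$ and with threading the $\beta$ (and $O(1)$ color-coding) factors through the threshold calibration is where the care lies, whereas extracting the precise form of~\cite{lincoln2018tight} and the padding needed to land in the $m=\Theta(n^{1+1/\ell})$ regime is routine.
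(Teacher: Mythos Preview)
Your proposal is correct and follows essentially the same approach as the paper: reduce from $(2\ell+1)$-Cycle via~\cite{lincoln2018tight}, build the color-coded layered gadget with an extra copy of the first color class, chain $\beta+1$ copies through the path structures $P_0,\dots,P_{\beta+1}$, run one phase per color-$1$ vertex inserting only the two matched entry/exit edges per copy, and compare a single BFS on the emulator against a threshold. Your identification of the crux (entering and leaving each gadget at a \emph{matched} pair so the short path certifies a cycle rather than a mere colorful path) is exactly the point; the only slip is that the ``$3$'' of the OuMv path $y_i\to a_k\to b_k\to z_i$ should be replaced by $k+2$ (not $2\ell+1$), since the through-gadget path is $y_i\to x^1_i\to\cdots\to x^{k+1}_i\to z_i$, which shifts the threshold to $(k+3)(\beta+1)+\cdots$ but does not affect the argument.
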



\subsubsection{Preliminaries}
Our reductions are from the $k$-Cycle problem. A reduction from $k$-clique to $k$-cycle in graphs of all sparsities is known:

\begin{theorem}[Combinatorial $k$-Cycle~\cite{lincoln2018tight}]\label{thm:kcycle}
Under the Combinatorial $k$-Clique hypothesis, for every constant $\epsilon>0$ and every integer $\ell\geq 1$, there is no combinatorial algorithm for detecting a $k=2\ell+1$-cycle in a directed graph with $m=\Theta(n^{1+1/\ell})$ edges in time $O(mn^{1-\epsilon})$ with error probability at most $1/3$ in the word-RAM model with $O(\log n)$ bit words.
\end{theorem}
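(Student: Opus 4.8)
\textbf{Proof proposal for Theorem~\ref{thm:kcycle} (Combinatorial $k$-Cycle from Combinatorial $k$-Clique).}
The plan is to give a reduction that, from an $n$-node instance of $k$-Clique detection (with $k=2\ell+1$), produces a directed graph on roughly $n$ nodes and $m=\Theta(n^{1+1/\ell})$ edges in which a directed $k$-cycle exists if and only if the original graph has a $k$-clique, and such that the reduction itself is combinatorial and runs in time $O(m n^{1-\epsilon})\le O(n^{k-\epsilon})$ for the relevant sparsity, so a faster combinatorial $k$-cycle algorithm would break the Combinatorial $k$-Clique hypothesis. The first step is the standard trick of ``splitting'' the clique search: since $k=2\ell+1$ is odd, write $k = \ell + (\ell+1)$, and think of a $k$-clique as a set of $k$ vertices together with all $\binom{k}{2}$ edges among them. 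We group the $k$ clique-slots into $\ell$ ``super-slots'' in a cyclic fashion, where consecutive super-slots overlap appropriately, so that a directed cycle through $\ell$ layers — one per super-slot — visiting one ``super-vertex'' (a small tuple of original vertices) per layer, with consecutive super-vertices forced to be consistent on their shared coordinates, exactly traces out a $k$-clique. The node set of super-slot layer $t$ consists of all tuples of about $k/\ell = \Theta(1)$ vertices of $G$ that form a clique among themselves and also respect the edge constraints internal to that super-slot; there are at most $n^{O(1)}$ of these, and by choosing the grouping so each layer holds roughly $1+1/\ell$ ``worth'' of a vertex we can get the layer sizes and the number of connecting edges to total $\Theta(n^{1+1/\ell})$.

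Concretely I would carry out the construction as follows. Build an $\ell$-partite directed graph with parts $L_0,\dots,L_{\ell-1}$ (indices mod $\ell$). The main idea is to have each part be responsible for ``introducing'' either one or two new clique vertices while remembering a constant-size window of previously introduced vertices, so that when we return to $L_0$ all $k$ vertices and all edges among them have been verified; this is the well-known ``clique-to-cycle via consistency-checking layers'' idea from Lincoln--Vassilevska Williams--Williams~\cite{lincoln2018tight}. Add a directed edge from a node in $L_t$ to a node in $L_{t+1 \bmod \ell}$ precisely when the two tuples agree on the overlapping coordinates and all clique-edges that become ``checkable'' at this transition are present in $G$. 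Then a closed walk $v_0\to v_1\to\cdots\to v_{\ell-1}\to v_0$ with $v_t\in L_t$ forces, by transitivity of the consistency constraints around the cycle, a single globally consistent assignment of $k$ distinct vertices of $G$ with all $\binom{k}{2}$ edges present; conversely any $k$-clique yields such a closed walk. A walk of length exactly $\ell$ in an $\ell$-partite graph is automatically a simple $\ell$-cycle provided the $v_t$ are distinct, and by making the parts disjoint and the tuples carry their layer index the distinctness is free; for the $k$-cycle ($k=2\ell+1$, not $\ell$) statement I would instead subdivide each connecting edge into a path of the right total length so that the unique cycles in the gadget have length exactly $2\ell+1$, keeping the edge count $\Theta(n^{1+1/\ell})$ since each subdivision adds only $O(1)$ vertices and edges per original connecting edge.

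For correctness I would prove the two directions: (i) a $k$-clique $\{x_1,\dots,x_k\}$ in $G$ gives tuples $v_0,\dots,v_{\ell-1}$ (one per layer, each tuple being the appropriate constant-size subset of $\{x_1,\dots,x_k\}$) that satisfy every consistency and every edge constraint, hence form a $k$-cycle after subdivision; (ii) any $k$-cycle in the constructed graph, by the layered structure, visits exactly one ``real'' node per layer and traverses subdivision paths in between, and the consistency constraints around the cycle reconstruct a set of $k$ distinct vertices inducing a clique in $G$. For the running time / sparsity accounting I would verify that the number of nodes is $O(n^{c})$ for a constant $c$ depending on $\ell$, the number of edges is $\Theta(n^{1+1/\ell})$ after tuning the grouping (the layer with the largest tuple determines the exponent; the grouping is chosen to balance this at $1+1/\ell$), and that building the graph is combinatorial and takes time near-linear in its size, so an $O(mn^{1-\epsilon})$ combinatorial $k$-cycle algorithm would yield an $O(n^{(1+1/\ell)+(1-\epsilon)\cdot ?})$-type bound that, after the exponent bookkeeping, is $O(n^{k-\epsilon'})$ for some $\epsilon'>0$, contradicting Combinatorial $k$-Clique.

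The main obstacle I expect is the \emph{exponent bookkeeping}: getting the layer sizes and connecting-edge counts to come out to exactly $\Theta(n^{1+1/\ell})$ while simultaneously ensuring that an $O(mn^{1-\epsilon})$-time cycle algorithm composed with the reduction genuinely beats $n^{k-\epsilon'}$ for clique. This requires choosing how many clique vertices each layer ``owns'' and how big the remembered window is so that the dominant layer contributes exactly $n^{1+1/\ell}$ edges and no layer is so large that the total blows past $n^{1+1/\ell}$; the cyclic/odd structure of $k=2\ell+1$ is exactly what makes a balanced such split possible (this is why the statement is phrased for odd $k=2\ell+1$), but verifying the balance and that the composed time bound is subquadratic-in-$n^{k}$, i.e.\ genuinely $O(n^{k-\epsilon'})$, is the delicate part. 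A secondary (more routine) obstacle is handling the length-$(2\ell+1)$ requirement cleanly via subdivision without creating spurious short cycles or destroying the sparsity, which I would resolve by subdividing uniformly and noting any cycle must use all $\ell$ layers and hence has the prescribed length.
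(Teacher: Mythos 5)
First, note that the paper does not prove this statement at all: Theorem~\ref{thm:kcycle} is quoted verbatim from Lincoln--Vassilevska Williams--Williams~\cite{lincoln2018tight} and used as a black box, so the only meaningful comparison is between your sketch and the known reduction. Your high-level plan (a cyclic, layered consistency-checking gadget whose nodes are constant-size tuples of clique vertices) is the right genre, but the specific architecture you chose has two gaps that break it. The known construction uses $k=2\ell+1$ layers, \emph{not} $\ell$ layers, and each layer consists of all $\ell$-tuples of vertices of the clique instance that induce a clique: think of the $k$ clique slots $x_0,\dots,x_{2\ell}$ arranged on a cycle and let layer $i$ be indexed by the width-$\ell$ window $\{x_i,\dots,x_{i+\ell-1}\}$, with a directed edge from a tuple in layer $i$ to one in layer $i+1$ exactly when they agree on the $\ell-1$ shared coordinates and the union (of size $\ell+1$) induces a clique. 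This is the step your sketch misses: with $2\ell+1$ windows of width $\ell$, the union of two consecutive windows covers every pair of slots at cyclic distance at most $\ell$, and since the maximum cyclic distance among $2\ell+1$ slots is exactly $\ell$, \emph{every} one of the $\binom{k}{2}$ edges gets checked. Your version, with only $\ell$ layers holding tuples of size about $k/\ell=\Theta(1)$, cannot check pairs of clique vertices whose slots are far apart in the cyclic order (already for $k=9$, i.e.\ $\ell=4$, most pairs are never verified), so a closed walk in your gadget does not certify a clique. The layer/tuple sizes also drive the sparsity: with $N$ the clique-instance size one gets $n=\Theta(N^{\ell})$ nodes and $m=\Theta(N^{\ell+1})$ edges, hence $m=\Theta(n^{1+1/\ell})$ and $mn^{1-\epsilon}=N^{2\ell+1-\ell\epsilon}=N^{k-\ell\epsilon}$, which is exactly the bookkeeping you flagged as delicate; the exponent $1+1/\ell$ comes from the ratio (tuple size $+1$)/(tuple size), not from spreading ``$1+1/\ell$ worth of a vertex'' over $\ell$ layers.

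The second gap is the subdivision step. In the correct construction no subdivision is needed: the graph is $(2\ell+1)$-partite with edges only from layer $i$ to layer $i+1 \bmod (2\ell+1)$, so every directed cycle has length a multiple of $2\ell+1$ and a $(2\ell+1)$-cycle winds around exactly once. Your plan to build $\ell$-cycles and then subdivide each connecting edge to stretch them to length $2\ell+1$ is not merely cosmetic --- it destroys the density guarantee. Subdividing every edge of a graph with $m'=\Theta(n'^{1+1/\ell})$ edges produces a graph with $\Theta(m')$ nodes and $\Theta(m')$ edges, i.e.\ a graph of \emph{linear} density; running an $O(mn^{1-\epsilon})$-time $k$-cycle algorithm on that graph costs $N^{(\ell+1)(2-\epsilon)}$, which exceeds $N^{2\ell+1}$ for every $\ell\ge 1$, so the composed algorithm no longer refutes $k$-clique. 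To repair your proposal you would have to abandon both the $\ell$-layer grouping and the subdivision, and adopt the sliding-window structure with $2\ell+1$ layers of $\ell$-tuples described above.
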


Given an instance of $k$-Cycle, we introduce a basic gadget that we will use in all of our constructions.

\paragraph{The basic gadget} Let $G=(V,E)$ with $n=|V|$ and $m=|E|$ be the graph on which we wish to find a directed $k$-cycle. We use color coding: we color the vertices with colors in $1,2,\dots,k$ uniformly at random. For all $i$, let $V_i$ be the set of vertices of color $i$. We construct a gadget as shown in Figure~\ref{fig:clique_basic}.

The gadget consists of $k+1$ layers of vertices. For all $1\leq i\leq k$, layer $i$ contains a copy $x^i_j$ of every $x_j\in V_j$. Layer $k+1$ contains a copy $x^{k+1}_j$ of every $x_j\in V_1$. For two consecutive layers $i,i+1$, we include the undirected edge $(x^i_a,x^{i+1}_b)$ if and only if the directed edge $(x_a,x_b)$ is in $E$. 

\begin{figure}[ht]
  \centering
  \includegraphics[width=.25\linewidth]{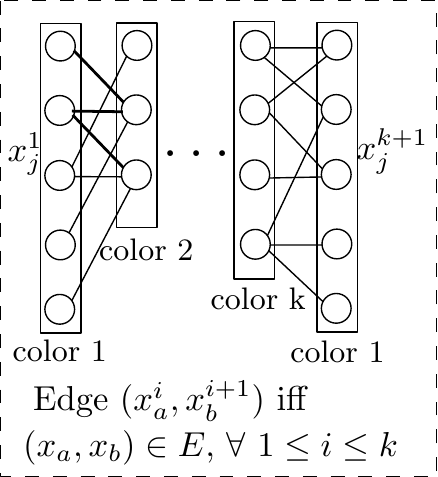}
  \caption{The basic gadget for reductions from $k$-clique.}
  \label{fig:clique_basic}
\end{figure}

Our constructions will use several copies of the basic gadget. Each copy uses the same coloring of $G$ so each copy is identical.

\subsubsection{Reduction for fully dynamic algorithms}
In this section we prove Theorems~\ref{thm:kcycle-fully} and~\ref{thm:triangle}.

\paragraph{Construction} 

We say that a $k$ cycle in $G$ is \emph{colorful} if according to the coloring from the basic gadget the $k$-cycle has exactly one vertex of each color and the vertices are in color order $1,2,\dots k$ around the cycle (i.e. the vertex of color 1 is a adjacent to the vertex of color $k$). We will present an algorithm that detects a colorful $k$-cycle in $G$ if one exists. Any given $k$-cycle is colorful with probability $1/k^{k-1}$. We repeat the entire algorithm, including construction of the basic gadget, $\Theta(k^{k-1})$ times so that if $G$ contains a $k$-cycle, then with probability at least $2/3$, for at least one of the repetitions $G$ contains a colorful $k$-cycle. 

We will construct a dynamic graph $G'$. Let $c=\lceil\frac{\beta}{2-k\alpha}\rceil+1$ and take $c$ disjoint copies $G'_1,\dots,G'_c$ of the basic gadget. This completes the preprocessing phase.

Now, we start the dynamic phases. There is one dynamic phase for each vertex in $G$ of color 1. In each phase $i$, we insert an undirected edge between every pair of consecutive gadgets $G'_j$ and $G'_{j+1}$ from $x^{k+1}_i\in G'_j$ $x^1_i\in G'_{j+1}$. See Figure~\ref{fig:clique_fully}.

\begin{figure}[ht]
  \centering
  \includegraphics[width=.9\linewidth]{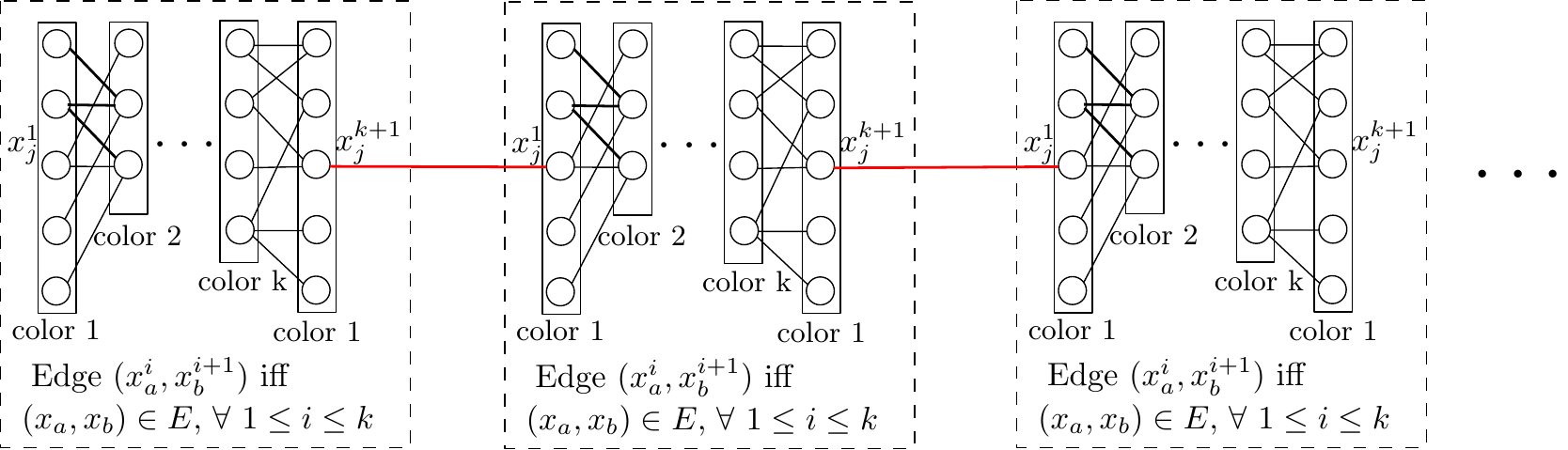}
  \caption{The construction for fully dynamic algorithms. The red edges are dynamically added in phase $j$.}
  \label{fig:clique_fully}
\end{figure}

Throughout all of the edge updates, we maintain our dynamic emulator. At the end of each phase $i$, we run a single call to BFS
on the emulator to estimate the distance between $x^1_i\in G'_1$ and $x^{k+1}_i\in G'_c$. If the estimated distance is less than $(k+2)c-1$, we return that we have detected a $k$-cycle. 

Following the end of each phase, we remove all of the edges added during that phase.

If after all phases of all $\Theta(k^{k-1})$ repetitions of the algorithm, we have not detected a $k$-cycle, we return that the graph has no $k$-cycles.

\paragraph{Correctness} First we will show that if the graph $G$ contains a $k$-cycle then our algorithm detects one. Suppose we are in a repetition of the algorithm where this $k$-cycle is colorful. Without loss of generality, let $x_1,\dots, x_k$ be the vertices in a $k$-cycle in $G$. such that each $x_i$ has color $i$.

We claim that our algorithm detects this $k$-cycle at the end of the first phase. The basic construction ensures that for all $1\leq i \leq k-1$, in each gadget $G'_j$, the edge $(x^i_i,x^{i+1}_{i+1})$ exists and the edge $(x^k_k,x^{k+1}_1)$ exists. Thus, there is a path of length $k$ from $x^1_1$ to $x^{k+1}_1$ in each basic gadget.

Furthermore, in the first dynamic phase we insert an edge between every two consecutive gadgets $G'_j$ and $G'_{j+1}$ from $x^{k+1}_1\in G'_j$ $x^1_1\in G'_{j+1}$. Concatenating these paths, we have a path of length $kc-1$ from $x^1_1\in G'_1$ and $x^{k+1}_1\in G'_c$. Thus, the estimate of $\dist(x^1_1\in G'_1,x^{k+1}_1\in G'_c)$ returned by our $(1+\alpha,\beta)$-emulator is at most $(kc-1)(1+\alpha)+\beta<(k+2)c-1$ by choice of $c$. Therefore, our algorithm returns that we have detected a $k$-cycle.

Now we will show that if our algorithm returns that we have detected a $k$-cycle, then $G$ contains a $k$-cycle. Let $i$ be the phase that our algorithm detects a $k$-cycle. Consider $G'$ at the end of phase $i$. Because our algorithm detected a $k$-cycle, we know that the distance between $x^1_i\in G'_1$ and $x^{k+1}_i\in G'_c$ estimated by our emulator is less than $(k+2)c-1$. Therefore, the true distance between $x^1_i\in G'_1$ and $x^{k+1}_i\in G'_c$ is also less than $(k+2)c-1$. 

We note that since there is only one edge between every pair of adjacent gadgets $G'_i$, any shortest path between $x^1_i\in G'_1$ and $x^{k+1}_i\in G'_c$ contains for every edge whose endpoints are in different copies of the basic gadget. That is, this path contains for every $1\leq j\leq c-1$ the edge $(x^{k+1}_i\in G'_j,x^1_i\in G'_{j+1})$.  
Therefore, for every $1\leq j\leq c-1$ this path contains as a subpath a shortest path between $x^1_i\in G'_j$ and $x^{k+1}_i\in G'_j$. Since $G'$ contains $c$ identical copies of the basic gadget, $\dist(x^1_i\in G'_j,x^{k+1}_i\in G'_j)$ is the same for all $j$. 

Since $\dist(x^1_i\in G'_1,x^{k+1}_i\in G'_c)<(k+2)c-1$ the edges between gadgets contribute $c-1$ to this quantity, we know that for all $1\leq j\leq c$, $\dist(x^1_i\in G'_j,x^{k+1}_i\in G'_j)<k+1$. Fix $j$. Since each basic gadget contains $k+1$ layers, we know that $\dist(x^1_i\in G'_j,x^{k+1}_i\in G'_j)\geq k$. Therefore, there is a path from $x^1_i\in G'_j$ to $x^{k+1}_i\in G'_j$ that contains exactly one vertex from each layer. The construction of the basic gadget ensures that there is an edge $(x^i_a\in G'_j,x^{i+1}_b\in G'_j)$ if and only if the directed edge $(x_a,x_b)$ is in $E$. Thus, this path from $x^1_i\in G'_j$ to $x^{k+1}_i\in G'_j$ corresponds to a directed walk of length $k$ in $G$. In particular the first and last vertex on this walk are both $x_i$ so this is a closed walk. Furthermore, every internal layer of $G'_j$ corresponds to a different color, so every vertex of the closed walk in $G$ has a different color. Thus, every vertex of the closed walk is distinct so the closed walk is indeed a directed $k$-cycle. 

\paragraph{Running time} Let $n'$ be the number of vertices in the dynamic graph and let $m'$ be the maximum number of edges ever in the dynamic graph. We first calculate $n'$ and $m'$. Each basic gadget contains $O(n)$ vertices and $O(m)$ edges. Thus, $c$ copies of the basic gadget contain $n'=O(cn)$ vertices and $O(cm)$ edges. During each of the at most $n$ phases we add $c-1$ edges so the number of edge updates after preprocessing is $O(cm)$. Thus, $m'=O(cm)=O(cn^{1+1/\ell})=O(c(n'/c)^{1+1/\ell})=O(n'^{1+1/\ell})$. We repeat the entire algorithm $O(k^{k-1})=O(1)$ times. Thus, the total number of edge updates over the entire sequence is $O(cn)$.

We will now split our running time analysis into two -- one for refuting the combinatorial $k$-Clique conjecture, and one for refuting the hypothesis that triangles cannot be solved faster than matrix multiplication.

\subparagraph{Combinatorial $k$-Clique} 
Let us assume that the dynamic emulator has, for some $\epsilon>0$, preprocessing time $O(m'n'^{1-\epsilon} \left(\frac{2-(2\ell+1)\alpha}{\beta}\right)^2)$, and amortized update time $O(m'^{1-\epsilon} \left(\frac{2-(2\ell+1)\alpha}{\beta}\right)^2)$.

Our dynamic emulator algorithm has preprocessing time\\ $O(m'n'^{1-\epsilon}(\frac{2-(2\ell+1)\alpha}{\beta})^2)=O((cm)(cn)^{1-\epsilon}/c^2)=O(mn^{1-\epsilon})$ and amortized update time\\ $O(m'^{1-\epsilon}(\frac{2-(2\ell+1)\alpha}{\beta})^2)=O((cm)^{1-\epsilon}/c^2)=O(m^{1-\epsilon}/c)$. Since there are $O(cn)$ edge updates, the total running time due to the dynamic emulator algorithm is $O(mn^{1-\epsilon})$.

Additionally, at most $n$ times during each repetition the algorithm, we run BFS on the emulator. The number of edges in the emulator is $O(m'^{1-\epsilon}(\frac{2-(2\ell+1)\alpha}{\beta}))=O((cm)^{1-\epsilon}/c)=O(m^{1-\epsilon})$. Thus, the BFS calls take total time $O(nm^{1-\epsilon})=O(mn^{1-\epsilon})$.

Putting everything together, our combinatorial dynamic emulator algorithm implies a combinatorial algorithm for directed $k$-cycle detection in time $O(mn^{1-\epsilon})$, thus refuting the combinatorial $k$-Clique hypothesis.

\subparagraph{Non-combinatorial triangle} 
Here we assume that the dynamic emulator has, for some $\epsilon>0$, preprocessing time $O(n'^{\omega-\epsilon}(\frac{2-3\alpha}{\beta})^\omega)$ and amortized update time $O(n'^{\omega-1-\epsilon}(\frac{2-3\alpha}{\beta})^\omega)$.

Thus, our dynamic emulator algorithm has preprocessing time\\ $O(n'^{\omega-\epsilon}(\frac{2-3\alpha}{\beta})^\omega)=O((cn)^{\omega-\epsilon}/c^\omega)=O(n^{\omega-\epsilon})$ and amortized update time\\ $O(n'^{\omega-1-\epsilon}(\frac{2-3\alpha}{\beta})^\omega)=O((cn)^{\omega-1-\epsilon}/c^\omega)=O(n^{\omega-1-\epsilon}/c^{1+\epsilon})$. Since there are $O(cn)$ edge updates, the total running time of the dynamic emulator algorithm is $O(n^{\omega-\epsilon})$.

Additionally, at most $n$ times during each repetition the algorithm, we run 
BFS on the emulator. The number of edges in the emulator is $O(n'^{\omega-1-\epsilon}(\frac{2-3\alpha}{\beta})^{\omega-1})=O((cn)^{\omega-1-\epsilon}/c^{\omega-1})=O(n^{\omega-1-\epsilon})$. Thus, the BFS calls take total time $O(n^{\omega-\epsilon})$.

Putting everything together, our dynamic emulator algorithm implies an algorithm for triangle detection in time $O(n^{\omega-\epsilon})$, refuting the hypothesis that triangle detection needs $n^{\omega-o(1)}$ time.

\subsubsection{Reduction for incremental and decremental algorithms.}

In this section we prove Theorem~\ref{thm:kcycle-partial}.

\paragraph{Construction} 

The construction is similar to the fully dynamic construction, but with different interactions between consecutive gadgets. We first describe the incremental construction.

As in the fully dynamic construction, we say that a $k$ cycle in $G$ is \emph{colorful} if according to the coloring from the basic gadget the $k$-cycle has exactly one vertex of each color and the vertices are in color order $1,2,\dots k$ around the cycle (i.e. the vertex of color 1 is a adjacent to the vertex of color $k$). We will present an algorithm that detects a colorful $k$-cycle in $G$ if one exists. Any given $k$-cycle is colorful with probability $1/k^{k-1}$. We repeat the entire algorithm, including construction of the basic gadget, $\Theta(k^{k-1})$ times so that if $G$ contains a $k$-cycle, then with probability at least $2/3$, for at least one of the repetitions $G$ contains a colorful $k$-cycle. 

We will construct a dynamic graph $G'$. Starting with an empty graph, we perform edge insertions to construct the following graph. Take $\beta+1$ disjoint copies $G'_1,\dots,G'_{\beta+1}$ of the basic gadget. Then, we add $\beta+2$ paths $P_0,\dots,P_{\beta+1}$ each on $2n-1$ new vertices. Call the vertices of each path $z_1,z_2,\ldots,z_n=y_n,y_{n-1},\ldots,y_1$. In other words, the middle node of each path has two names, $z_n$ and $y_n$. 

Now, we start the phases. There is one phase for each vertx in $G$ of color 1. In phase $i$, for each $1\leq j\leq \beta+1$ we insert an edge between $y_i\in P_{j-1}$ and $x^1_i\in G'_j$. Similarly, for each $1\leq j\leq \beta+1$ we insert an edge between $z_i\in P_j$ and $x^{k+1}_i\in G'_j$. See Figure~\ref{fig:clique_inc}.

\begin{figure}[ht]
  \centering
  \includegraphics[width=\linewidth]{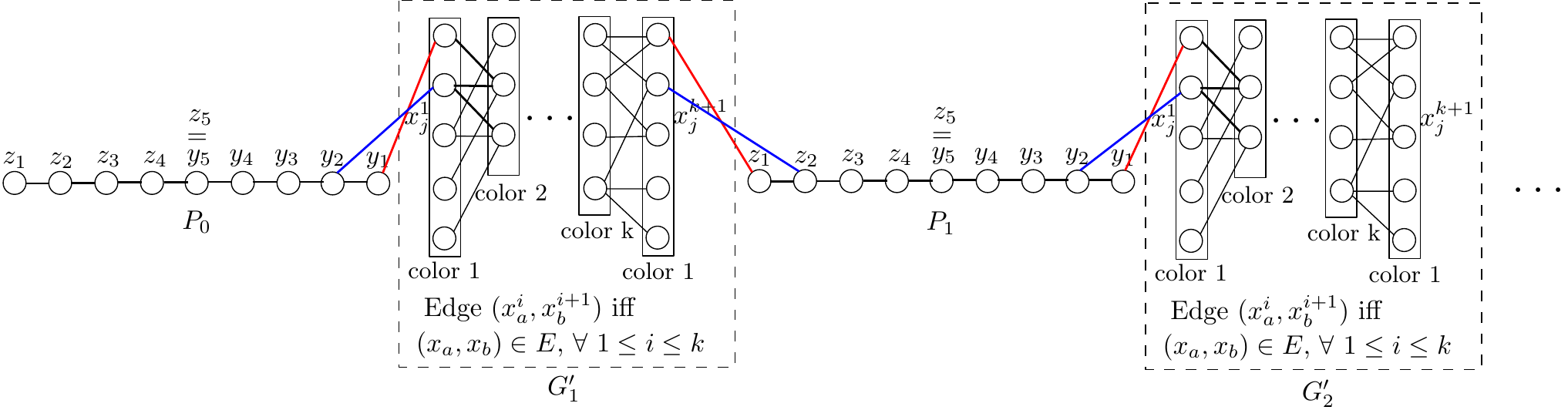}
  \caption{The construction for incremental algorithms. The red edges are dynamically added in phase 1 and the blue edges are dynamically added in phase 2.}
  \label{fig:clique_inc}
\end{figure}

Throughout all of the edge updates, we maintain our incremental emulator. At the end of each phase, we run BFS on the emulator to estimate the distance between $z_1\in P_0$ and $y_1\in P_{\beta+1}$. 

If the estimated distance between $z_1\in P_0$ and $y_1\in P_{\beta+1}$ at the end of phase $i$ is less than\\ $(k+3)(\beta+1)+(\beta+2)(2n-2i)+2(i-1)$, we return that we have detected a $k$-cycle.

If after all phases of all $\Theta(k^{k-1})$ repetitions of the algorithm, we have not detected a $k$-cycle, we return that the graph has no $k$-cycles.

Now, we describe the decremental construction. The edge updates are exactly the reverse of the incremental construction. That is, the initial graph in the decremental construction is the final graph in the incremental construction. Then, in phase $i$ of the decremental construction, for each $1\leq j\leq \beta+1$ we delete the edge between $y_{n-i+1} \in P_{j-1}$ and $x^1_{n-i+1}\in G'_j$. and delete the edge between $z_{n-i+1}\in P_j$ and $x_{{k+1}_{n-i+1}}\in G'_j$. 

Throughout all of the edge updates, we maintain our decremental emulator. At the end of each phase, we run BFS on the emulator to estimate the distance between $z_1\in P_0$ and $y_1\in P_{\beta+1}$. 

If the estimated distance between $z_1\in P_0$ and $y_1\in P_{\beta+1}$ at the end of phase $i$ is less than\\ $(k+3)(\beta+1)+(\beta+2)(2n-2(n-i+1))+2((n-i+1)-1)$, we return that we have detected a $k$-cycle. Note that this threshold is exactly the threshold from the incremental algorithm but with $i$ replaced with $n-i+1$.

If after all phases of all $\Theta(k^{k-1})$ repetitions of the algorithm, we have not detected a $k$-cycle, we return that the graph has no $k$-cycles.

\paragraph{Correctness}
The following argument is written for the incremental setting but the same argument applies for the decremental setting.

First we will show that if the graph $G$ contains a $k$-cycle then our algorithm detects one. Suppose we are in a repetition of the algorithm where this $k$-cycle is colorful. Without loss of generality, let $x_1,\dots, x_k$ be the vertices in a $k$-cycle in $G$ where each $x_i$ is of color $i$.

We claim that our algorithm detects this $k$-cycle at the end of the first phase. The basic construction ensures that for all $1\leq i \leq k-1$, in each gadget $G'_j$, the edge $(x^i_i,x^{i+1}_{i+1})$ exists and the edge $(x^k_k,x^{k+1}_1)$ exists. Thus, there is a path of length $k$ from $x^1_1$ to $x^{k+1}_1$ in each basic gadget.

Also due to the dynamic edge updates, for each $0\leq j \leq \beta+1$, there is an edge between $y_i\in P_{j-1}$ and $x^1_1$ and an edge between $z_i\in P_j$ and $x^{k+1}_1$.
 Additionally, for each $0\leq j \leq \beta+2$, there is a path along $P_j$ from $z_i\in P_j$ to $y_i\in P_j$ of length $(2n-2)-2(i-1)=2n-2i$. Finally, there is a path of length $i-1$ from $z_1\in P_0$ to $z_i\in P_0$ and a path of length $i-1$ from $y_i\in P_{\beta+2}$ to $y_1\in P_{\beta+2}$. Concatenating all of these paths, we have that $\dist(z_1\in P_0,y_1\in P_{\beta+1})\leq (k+2)(\beta+1)+(\beta+2)(2n-2i)+2(i-1)$. 
Thus, the estimate of $\dist(z_1\in P_0,y_1\in P_{\beta+1})$ returned by our $\beta$-additive emulator is at most $(k+2)(\beta+1)+(\beta+2)(2n-2i)+2(i-1)+\beta<(k+3)(\beta+1)+(\beta+2)(2n-2i)+2(i-1)$. 

Now we will show that if our algorithm returns that we have detected a $k$-cycle, then $G$ contains a $k$-cycle. Let $i$ be the phase that our algorithm detects a $k$-cycle. Consider $G'$ at the end of phase $i$. Because our algorithm detected a $k$-cycle, we know that the distance between $z_1\in P_0$ and $y_1\in P_{\beta+1}$ estimated by our emulator is less than $(k+3)(\beta+1)+(\beta+2)(2n-2i)+2(i-1)$. Therefore, the true distance between $z_1\in P_0$ and $y_1\in P_{\beta+1}$ is also less than $(k+3)(\beta+1)+(\beta+2)(2n-2i)+2(i-1)$.

We observe that the layered structure of the graph ensures that 
every path between $z_1\in P_0$ and $y_1\in P_{\beta+1}$ contains each $z_i$ and $y_i$ in order from $P_0$ to $P_{\beta+1}$. That is, every shortest path between $z_1\in P_0$ and $y_1\in P_{\beta+1}$ is composed of precisely following subpaths:
\begin{itemize}
\item A shortest path from $z_1\in P_0$ to $z_i\in P_0$. The only simple path connecting these vertices is of length $i-1$.
\item A shortest path from $y_i\in P_{\beta+1}$ to $y_1\in P_{\beta+1}$. The only simple path connecting these vertices is of length $i-1$.
\item A shortest path from $z_i\in P_j$ to $y_i\in P_j$ for all $1\leq j\leq \beta+2$. The only simple path connecting these vertices is of length $(2n-2)-2(i-1)=2n-2i$.
\item A shortest path from $y_i\in P_{j-1}$ to $z_i\in P_j$ for all $1\leq j\leq \beta+1$. Since the graph is a series of identical copies of a gadget, we know that $\dist(y_i\in P_{j-1},z_i\in P_j)$ is the same for all $j$. Furthermore, we know the length of each of the previous three types of subpaths and we know that $\dist(z_1\in P_0,y_1\in P_{\beta+1})<(k+3)(\beta+1)+(\beta+2)(2n-2i)+2(i-1)$, so we conclude that each $\dist(y_i\in P_{j-1},z_i\in P_j)<k+3$.
\end{itemize}

Due to the layering of the graph, for all $1\leq p\leq \beta+1$ the shortest path between $y_i\in P_{p-1}$ and $z_i\in P_p$ must contain vertices $x^1_j\in G'_p$ and $x^{k+1}_{j'}\in G'_p$ for some $j,j'$. Fix $j,j'$. Since $\dist(y_i\in P_{p-1},z_i\in P_p)<k+3$, we have that $\dist(x^1_j\in G'_p,x^{k+1}_{j'}\in G'_p)<k+1$. Then since each basic gadget contains $k+1$ layers, we know that there is a path from $x^1_j\in G'_p$ to $x^{k+1}_{j'}\in G'_p$ that contains exactly one vertex from each layer. The construction of the basic gadget ensures that there is an edge $(x^i_a\in G'_j,x^{i+1}_b\in G'_j)$ if and only if the directed edge $(x_a,x_b)$ is in $E$. Thus, this path from $x^1_i\in G'_p$ to $x^{k+1}_i\in G'_p$ corresponds to a directed walk of length $k$ in $G$. In particular the first and last vertex on this walk are both $x_i$ so this is a closed walk. Furthermore, every internal layer of $G'_j$ corresponds to a different color, so every vertex of the closed walk in $G$ has a different color. Thus, every vertex of the closed walk is distinct so the closed walk is indeed a directed $k$-cycle.

\paragraph{Running time} Let $n'$ be the number of vertices in $G'$ and let $m'$ be number of edge insertions or deletions. We first calculate $n'$ and $m'$. Each basic gadget contains $O(n)$ vertices and $O(m)$ edges. Thus, $\beta+1$ copies of the basic gadget contain $O(\beta n)$ vertices and $O(\beta m)$ edges. Additionally, we have $(\beta+2)$ paths on $(2n-1)$ vertices each, for a total of  $(2n-1)(\beta+2)$ additional vertices. Thus, $n'=O(\beta n)$. During each of the at most $n$ phases there are at most $2(\beta+1)$ edge updates. Thus, $m'=O(\beta m)$. We repeat the entire algorithm $O(k^{k-1})=O(1)$ times. Thus, the total number of edge updates over all repetitions of the algorithm is $O(\beta m)$.

Let's assume that the incremental or decremental emulator algorithm has total time $O(m'n'^{1-\epsilon}/\beta^2)=O(\beta m (\beta n)^{1-\epsilon}/\beta^2)=O(mn^{1-\epsilon})$.

Additionally, at most $n$ times during the algorithm, we run BFS on the emulator. The number of edges in the emulator is $O(m^{1-\epsilon}/\beta)=O( (\beta m)^{1-\epsilon}/\beta)=O(m^{1-\epsilon})$. Thus, the BFS calls take total time $O(nm^{1-\epsilon})=O(mn^{1-\epsilon})$.

Putting everything together, our incremental or decremental emulator algorithm implies an algorithm for directed $k$-cycle detection in time $O(mn^{1-\epsilon})$, thus refuting the combinatorial $k$-Clique hypothesis.

\section{Algebraic All Pairs Shortest Paths with Path Reporting}\label{sec:algebraic}
The main result of this section is a randomized, fully dynamic algorithm that can maintain and query successors for all pairs shortest paths of up to $D$ edges. Our algorithm is an augmentation of the algebraic all pairs shortest distances algorithm of Sankowski \cite{sankowski-thesis}, who originally posed as an open problem to use his techniques and the construction of \ref{lemma::sankadjoint} to actually report paths. We state our new result formally in the following theorem

\begin{theorem} 
[Successor Queries and Short Paths] \label{thm:DpathReporting} For any parameters $\eps\in (0, 1), D<n$, and an unweighted graph $G=(V, E)$ subject to edge insertions and deletions, there is a dynamic, randomized data-structure $\mathcal{P}^D_\eps$ that supports the following operations: \begin{itemize}
    \item Ins/Delete$(e)$ \textit{Inserts/Deletes edge $e\in E$ in worst case time $O\big(Dn^{\omega(1, 1, \eps)-\eps}+Dn^{1+\eps})$}.
    \item Short Distance/Successor Query$(i, j)$ \textit{Returns the distance $d \leq D$ and a successor on any short, shortest $i\rightarrow j$ path in worst case time $O(Dn^\eps)$} and is correct whp.
    \item Short Path Queries$(i, j)$ \textit{Returns a shortest $i\rightarrow j$ path of length $d \leq D$ by repeatedly finding successors in worst case $O(dDn^\eps)$ time, and is correct whp.}
\end{itemize}
with pre-processing time $O(Dn^2)$ on empty graphs and $O(Dn^\omega)$ otherwise.
\end{theorem}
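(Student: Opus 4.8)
The plan is to build on Sankowski's dynamic matrix-inverse data structure and augment it with a product-maintenance routine that enables witness extraction, following the outline in Section~\ref{sec:over}. First I would recall the reduction from short distances to dynamic matrix inverse (deferred to Section~\ref{subsec::sankreview}): put i.i.d.\ uniformly random field entries on the edge slots of the adjacency matrix $A$, set $M = \mathbb{I} - uA$, and invoke Lemma~\ref{lemma::sankadjoint} to conclude that for $1\le d_{ij}\le D$ the polynomial $\mathrm{adj}(M)_{ij} = \det(M)\cdot M^{-1}_{ij}$ has minimum $u$-degree exactly $d_{ij}$ whp; hence maintaining $\det(M)$ and $M^{-1}$ modulo $u^{D+1}$ suffices for distance queries. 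I would then recall Sankowski's inverse-maintenance invariant $M^{-1} = T(\mathbb{I}+N)$ over the ring $\mathbb{F}[u]/(u^{D+1})$, where a single entry update to $A$ triggers one row update to $N$, $N$ has at most $n^{\eps}$ nonzero rows between resets, and every $n^{\eps}$ updates one resets $T\leftarrow T(\mathbb{I}+N)$, $N\leftarrow 0$ in $\tilde{O}(Dn^{\omega(1,1,\eps)})$ time; this yields the claimed amortized update bound $\tilde{O}(Dn^{\omega(1,1,\eps)-\eps} + Dn^{1+\eps})$ and $\tilde{O}(Dn^{\eps})$-time entry queries via $M^{-1}_{ij} = T_{ij} + (e_i^{\intercal}T)(Ne_j)$, using that each column of $N$ has $O(n^{\eps})$ nonzero entries (Corollary~\ref{cor::polyminverse}).

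Second, the new ingredient: I would maintain a product $E\cdot M^{-1} \bmod u^{D+1}$ for an auxiliary matrix $E$ whose entries can also be updated, by keeping $V\equiv ET$ explicitly alongside $T, N$, so $EM^{-1} = V(\mathbb{I}+N)$. An update to $A$ is processed as before (a row update to $N$), except that on a reset one additionally sets $V\leftarrow V(\mathbb{I}+N)$; associativity gives correctness, and since $N$ is sparse the extra work is subsumed in $\tilde{O}(Dn^{\omega(1,1,\eps)-\eps} + Dn^{1+\eps})$. An entry update $E\leftarrow E + e_{ij}$ is a rank-one update $V\leftarrow V + e_{ij}T$ in $\tilde{O}(Dn)$ time, and a query $(EM^{-1})_{ij} = V_{ij} + (e_i^{\intercal}V)(Ne_j)$ costs $\tilde{O}(Dn^{\eps})$. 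To extract a successor of a pair $(i,j)$ with $1 < d_{ij}\le D$, observe that $(A\cdot\mathrm{adj}(\mathbb{I}-uA))_{ij}$ has minimum degree $d_{ij}-1$, witnessed by a neighbor $s$ of $i$ with $d_{sj} = d_{ij}-1$. To isolate such an $s$ when many candidates exist, I would apply the Seidel-style subsampling trick: randomly zero out columns of $A$ at geometric rates so that with constant probability exactly one witness survives, reducing to single-witness identification (Lemmas~\ref{lemma::singlesuccessor} and~\ref{lemma::multiplesuccessor}) at the cost of a polylog factor. With a unique witness $s$, I read off the $O(\log n)$ bits of $s$ from masked matrices $A^{(\ell)}$ (column $p$ nulled iff the $\ell$-th bit of $p$ is $0$), each maintained as a separate instance of the product data structure with $E = A^{(\ell)}$, checking for which $\ell$ the minimum degree of $(A^{(\ell)}\cdot\mathrm{adj}(\mathbb{I}-uA))_{ij}$ still equals $d_{ij}-1$; each check is one entry query, so a successor is found in $\tilde{O}(Dn^{\eps})$ time, and iterating $d\le D$ times yields a shortest path of length $d$ in $\tilde{O}(dDn^{\eps})$ time. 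Preprocessing is the cost of one inverse mod $u^{D+1}$: $\tilde{O}(Dn^2)$ on an empty graph and $\tilde{O}(Dn^{\omega})$ otherwise, with the $O(\log^2 n)$ product instances contributing only polylog factors.

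Third, to turn the amortized bounds into the claimed \emph{worst-case} update bounds, I would deamortize the resets in the standard way: spread the $\tilde{O}(Dn^{\omega(1,1,\eps)})$ rebuild work uniformly over the next $n^{\eps}$ updates while running two staggered copies of the structure, so a fully up-to-date copy is always available to answer queries. I expect the main obstacle to be making the product-maintenance and witness-isolation machinery cohere over $\mathbb{F}[u]/(u^{D+1})$ with the right sparsity bookkeeping — in particular, ensuring the subsampling argument isolates a witness of the \emph{minimum-degree} coefficient (not merely of some nonzero coefficient) with good probability, and that all auxiliary matrices $V$ and $A^{(\ell)}$ remain consistent through resets; the remaining running-time accounting is careful but routine.
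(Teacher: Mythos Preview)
Your proposal is correct and follows essentially the same approach as the paper: Sankowski's inverse maintenance over $\mathbb{F}[u]/(u^{D+1})$, the product-maintenance invariant $EM^{-1}=V(\mathbb{I}+N)$ with $V=ET$, bitwise single-witness extraction via masked matrices $A^{(\ell)}$, and Seidel-style geometric subsampling to reduce many witnesses to one. Two minor remarks: the paper resolves your worry about isolating a \emph{minimum-degree} witness simply by verifying each of the $O(\log^2 n)$ candidate successors against the distance data structure and the adjacency matrix at the end, so no probabilistic guarantee on the coefficient is needed beyond nonvanishing; and your explicit deamortization via staggered copies is a welcome addition that the paper's proof actually omits despite the worst-case claim in the statement.
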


\noindent as also presented in \ref{thm:restateDpathReporting}. We believe this result is of independent interest. We can minimize the update time by the choice of $\eps = \eps^*$ that balances the exponents in the runtime, that is, $\eps^*$ is the solution to
\begin{equation}
    \omega(1, 1, \eps)-\eps = 1+\eps
\end{equation}
which numerically can be evaluated to $\eps^*\approx .529$. The corresponding update time is $O(Dn^{1+\eps^*}) = O(Dn^{1.529})$

An overview of this section is as follows. We begin by listing a small toolkit of sparse matrix facts, to be used in the following subsections. In subsection \ref{subsec::sankreview}, we detail \cite{sankowski-thesis}'s original all pairs shortest distances construction, his path encoding lemma \ref{lemma::sankadjoint}
 and the dynamic matrix inverse algorithm \ref{thm::matrixinverse}, all central to our augmentations and key to our proof of \ref{thm:DpathReporting}. In subsection \ref{subsec::successor}, we build on the matrix inverse algorithm to show how to maintain the product of the adjacency matrix and the inverse, and then use these products to prove the theorem above. Finally, in subsection \ref{subsec::exactapsp}, we then use this result to construct the first subquadratic time update and path reporting algorithm for fully dynamic, unweighted APSP. 

\subsection{Preliminaries}

\begin{fact}
[Rectangular Matrix Multiplication]\label{fact::rmm} We denote as $O(n^{\omega(a, b, c)})$ the cost of multiplying two rectangular matrices, the first of size $n^a\times n^b$, the second $n^b\times n^c$. The current exponent of square matrix multiplication $\omega(1, 1, 1)\equiv \omega\leq 2.3729$.
\end{fact}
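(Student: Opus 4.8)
This statement is essentially a definition together with a citation to known constructions, so the plan is not to develop new mathematics but to (i) fix the meaning of the symbol $\omega(a,b,c)$, (ii) record the elementary properties of this quantity that are used implicitly throughout Section~\ref{sec:algebraic}, and (iii) invoke the established fast matrix multiplication literature for the numerical bound $\omega\le 2.3729$. First I would recall the formal definition: for nonnegative reals $a,b,c$, let $\omega(a,b,c)$ be the infimum of all $\tau$ such that, over any fixed commutative ring, the product of an $\lceil n^a\rceil\times\lceil n^b\rceil$ matrix by an $\lceil n^b\rceil\times\lceil n^c\rceil$ matrix can be computed using $O(n^\tau)$ ring operations; in particular the square exponent is $\omega\equiv\omega(1,1,1)$, consistent with the definition of $\omega$ already fixed in the preliminaries.

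Next I would list, with one-line justifications, the structural facts the rest of the paper relies on: $\omega(a,b,c)$ is invariant under any permutation of $(a,b,c)$ (take transposes and reassociate the triple product), it is nondecreasing in each coordinate (pad with zero rows and columns), it satisfies the natural subadditivity and interpolation inequalities obtained by block-decomposing the matrices along any of the three dimensions, and it obeys the trivial lower bound $\omega(a,b,c)\ge\max(a+b,\,a+c,\,b+c)$ (in particular $\omega\ge 2$) coming from input and output sizes. Each of these is immediate from the definition, so this step is brief. For the numerical claim $\omega\le 2.3729$ I would cite the Coppersmith--Winograd laser-method line of work and its refinements; in this paper's bibliography the relevant references are \cite{vstoc12,legallmult}, which together give $\omega<2.3729$, and any subsequent improvement only strengthens every downstream bound. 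For the rectangular exponents actually needed elsewhere --- e.g.\ the balancing identity $\omega(1,1,\eps^*)-\eps^*=1+\eps^*$, whose solution is $\eps^*\approx 0.529$ --- I would point to the best known rectangular bounds of \cite{GallU18} (already cited in the preliminaries), noting that the value of $\eps^*$ is obtained simply by plugging these bounds into that equation and is insensitive to further progress.

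The only real subtlety --- and it is expository rather than mathematical --- is the coefficient ring: Section~\ref{sec:algebraic} multiplies matrices over a truncated polynomial ring mod $u^{D+1}$ (over $\mathbb{Z}$ or over a prime field), in which a single ring operation costs $\tilde O(D)$. I would resolve this by stating the fact for an arbitrary commutative ring --- legitimate because all fast matrix multiplication algorithms are bilinear and hence ring-oblivious --- and by carrying the $\tilde O(D)$ per-operation overhead explicitly in the running times that invoke this fact, exactly as Theorem~\ref{thm:DpathReporting} already does through its leading $D$ (and $D^2$) factors. With that convention fixed there is nothing further to prove; the main ``obstacle'' is purely one of stating the conventions cleanly so that later sections can cite them without ambiguity.
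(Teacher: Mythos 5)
Your proposal is correct and matches the paper's treatment: the paper offers no proof of this Fact, since it is purely a definition of the notation $\omega(a,b,c)$ together with a citation of the known bound $\omega\le 2.3729$ (the references \cite{vstoc12,legallmult} appear in the preliminaries). Your added remarks on ring-obliviousness and the structural properties of $\omega(a,b,c)$ are accurate but not required.
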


\begin{fact}
[Row Sparse Matrix Multiplication]\label{fact::rowsparse} The product of two matrices $A, B$, the first of $\leq n^\delta$ non-zero rows, and the second of $\leq n^\alpha$ non-zero rows, has at most $n^\delta+n^\alpha$ non-zero rows and takes time $O(n^{\omega(1, \alpha, \delta)})$ to compute.
\end{fact}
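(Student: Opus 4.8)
\medskip
\noindent\textbf{Proof plan.} I would split the statement into its two halves. For the sparsity of the product, the plan is to observe that the $i$-th row of $AB$ equals $e_i^\top(AB)=(e_i^\top A)B$, which vanishes whenever the $i$-th row of $A$ does; hence the set of non-zero rows of $AB$ is contained in that of $A$, so it has at most $n^\delta$ non-zero rows, and in particular at most $n^\delta+n^\alpha$. (Applying the same argument to $BA$ yields the bound $n^\alpha$, which is why recording the symmetric bound $n^\delta+n^\alpha$ is convenient.)

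For the running time, the plan is to reduce the computation of $AB$ to a single rectangular multiplication of the ``essential'' blocks. Let $R_A,R_B\subseteq[n]$ index the non-zero rows of $A$ and of $B$, so $|R_A|\le n^\delta$ and $|R_B|\le n^\alpha$; these index sets come for free since in all our applications such matrices are stored in row-sparse form. Since $B_{j,\cdot}=0$ for $j\notin R_B$ and $A_{i,\cdot}=0$ for $i\notin R_A$, I would (i) extract the $|R_A|\times|R_B|$ submatrix $A'$ of $A$ on rows $R_A$ and columns $R_B$ and the $|R_B|\times n$ submatrix $B'$ of $B$ on rows $R_B$; (ii) compute $C'=A'B'$ as an $(n^\delta\times n^\alpha)$ by $(n^\alpha\times n)$ product in time $O(n^{\omega(\delta,\alpha,1)})$; and (iii) write row $r$ of $C'$ into row $r$ of the output for each $r\in R_A$, leaving all other rows zero. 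Correctness follows from $(A'B')_{i,k}=\sum_{j\in R_B}A_{i,j}B_{j,k}=(AB)_{i,k}$ for $i\in R_A$ together with the fact that all remaining rows of $AB$ vanish. The one non-elementary ingredient is then invoked at the end: the exponent $\omega(a,b,c)$ is invariant under permutations of $(a,b,c)$, so $\omega(\delta,\alpha,1)=\omega(1,\alpha,\delta)$.

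What remains, and I regard it as bookkeeping rather than a real obstacle, is to confirm that steps (i) and (iii) are dominated by step (ii). Reading $A$ and $B$ in row-sparse form takes $O(n^{1+\delta}+n^{1+\alpha})$ time and writing the at most $n^\delta$ non-zero rows of the output takes $O(n^{1+\delta})$ time; since both inputs must be read and the output written, $\omega(1,\alpha,\delta)\ge 1+\max(\alpha,\delta)$, so this overhead is absorbed into $O(n^{\omega(1,\alpha,\delta)})$. Finally, $|R_A|$ and $|R_B|$ need not be exact integer powers of $n$, so one pads to $\lceil n^\delta\rceil$ and $\lceil n^\alpha\rceil$, which changes the exponent by only $o(1)$ — well within the $O(\cdot)$ slack already built into the definition of $\omega(\cdot,\cdot,\cdot)$.
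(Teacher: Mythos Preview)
Your argument is correct; the paper itself states this as a preliminary fact without proof, so there is no paper-side argument to compare against. Your reduction to a single $n^\delta\times n^\alpha$ by $n^\alpha\times n$ rectangular product via the index sets $R_A,R_B$, together with the permutation symmetry $\omega(\delta,\alpha,1)=\omega(1,\alpha,\delta)$ and the input/output lower bound $\omega(1,\alpha,\delta)\ge 1+\max(\alpha,\delta)$ to absorb the extraction and write-back costs, is exactly the standard justification and is sound as written.
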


\begin{fact}
[Row Sparse Matrix Inverse]\label{fact::rowsparseinverse} The inverse of a matrix $A$ that differs from identity in at most $n^\delta$ rows, differs from identity in at most $n^\delta$ rows. Moreover, it can be computed in $n^{\omega(1, \delta, \delta)}$ time.
\end{fact}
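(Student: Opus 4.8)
The plan is to reduce everything to the block-triangular structure that the hypothesis forces. Let $S\subseteq[n]$ be the set of (at most $n^\delta$) indices of rows in which $A$ disagrees with the identity, and let $T=[n]\setminus S$. Up to a simultaneous permutation of rows and columns — which preserves the ``differs from the identity in few rows'' property and commutes with inversion, being conjugation by a permutation matrix — I may assume $S=\{1,\dots,s\}$ with $s\le n^\delta$. Since every row indexed by $T$ is literally a row of the identity, the block $A_{T,S}$ vanishes and $A_{T,T}=I$, so
\[
A=\begin{pmatrix}A_{S,S}&A_{S,T}\\ 0&I\end{pmatrix}.
\]

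First I would note that, $A$ being block upper triangular with identity lower-right block, $\det A=\det A_{S,S}$, hence $A$ is invertible iff the $s\times s$ matrix $A_{S,S}$ is; assuming so, block back-substitution yields
\[
A^{-1}=\begin{pmatrix}A_{S,S}^{-1}&-A_{S,S}^{-1}A_{S,T}\\ 0&I\end{pmatrix}.
\]
The rows of $A^{-1}$ indexed by $T$ are therefore exactly the corresponding identity rows, so $A^{-1}$ differs from the identity in at most the $s\le n^\delta$ rows indexed by $S$, which is the structural claim.

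For the running time, the only nontrivial computations are (i) inverting the $n^\delta\times n^\delta$ matrix $A_{S,S}$ in $O(n^{\omega(\delta,\delta,\delta)})$ time, and (ii) forming the product $A_{S,S}^{-1}A_{S,T}$ of an $n^\delta\times n^\delta$ matrix with an $n^\delta\times|T|$ matrix, where $|T|\le n$, in $O(n^{\omega(\delta,\delta,1)})=O(n^{\omega(1,\delta,\delta)})$ time, the last equality coming from transposing the product. Since $\delta\le 1$ and the rectangular exponent is monotone in each argument, $\omega(\delta,\delta,\delta)\le\omega(1,\delta,\delta)$, so both steps — together with the $O(n)$ cost of identifying $S$ and applying the permutation — fit within $O(n^{\omega(1,\delta,\delta)})$. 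There is no real obstacle here; the only points that need care are justifying the ``up to permutation'' reduction cleanly (conjugation by a permutation matrix) and checking that the advertised exponent $\omega(1,\delta,\delta)$ genuinely dominates the square $n^\delta\times n^\delta$ inversion of $A_{S,S}$, which it does purely by monotonicity of $\omega$ in its first argument.
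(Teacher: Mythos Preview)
Your argument is correct; the paper states this as a preliminary fact without proof, and your block upper-triangular decomposition with the explicit inverse formula is exactly the standard justification one would give. The only minor quibble is the ``$O(n)$ cost of identifying $S$'': scanning an arbitrary $n\times n$ matrix for non-identity rows is $\Theta(n^2)$, which can exceed $n^{\omega(1,\delta,\delta)}$ for small $\delta$, but in the paper's setting $S$ is always known explicitly (it is the set of rows touched by recent updates), so this is immaterial in context.
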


\begin{fact}
[The Hitting Set Lemma]\label{fact::hittingset} Given a set $S$ of $n$ elements, a random sample $H\subset V$ of size $\geq c n\log n/k$ for a given constant $c>1$ hits every subset of size $k$ of $S$ with high probability. 
\end{fact}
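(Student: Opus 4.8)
The plan is the standard probabilistic argument: fix a concrete sampling model for \(H\), reduce the lemma to a single-set avoidance estimate, bound that estimate by an exponentially small quantity, and finish with a union bound over the (polynomially many) sets one actually needs to hit. For the sampling model I would take each element of \(S\) to be placed in \(H\) independently with probability \(p := (c\ln n)/k\); then \(\mathbb{E}|H| = cn\ln n/k\), and \(|H| \ge cn\ln n/(2k)\) with high probability by a Chernoff bound whenever that quantity is \(\omega(\log n)\) (the same conclusion holds, up to constants, if one instead lets \(H\) be a uniformly random subset of \(S\) of size exactly \(m := \lceil cn\ln n/k\rceil\)). The entire content of the lemma is the single-set estimate: a fixed subset \(T \subseteq S\) with \(|T| = k\) is missed by \(H\) only with probability \(n^{-c}\).

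For the estimate itself: in the independent model, \(\Pr[H \cap T = \emptyset] = (1-p)^{|T|} \le (1-p)^{k} \le e^{-pk} = e^{-c\ln n} = n^{-c}\), using \(1 - x \le e^{-x}\). In the exact-size model the same bound comes out as \(\Pr[H \cap T = \emptyset] = \binom{n-k}{m}/\binom{n}{m} = \prod_{i=0}^{m-1}\frac{n-k-i}{n-i} \le (1 - k/n)^{m} \le e^{-km/n} \le e^{-c\ln n} = n^{-c}\). Either way, any one \(k\)-subset of \(S\) is hit except with probability \(n^{-c}\); the same bound holds a fortiori for subsets of size \(\ge k\).

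To conclude, I would union-bound this estimate over the collection of subsets that must be hit. In every use of this fact in the paper that collection is a \emph{fixed} family of \(n^{O(1)}\) subsets of \(S\), each of size at least \(k\) — for instance, in the algebraic spanner construction it is the family of \(n\) balls \(B(v,r)\), one per vertex, each assumed large enough to play the role of \(T\), with the sampled level set \(A_{\ell+1}\) playing the role of \(H\). Hence choosing \(c\) a sufficiently large constant makes the total failure probability at most \(n^{O(1)} \cdot n^{-c} = n^{-\Omega(1)}\), so \(H\) hits every set in the family with high probability, as claimed (the stated requirement \(c>1\) is exactly the relevant threshold when the family has sub-linearly many sets). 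There is no genuine obstacle in this proof; the only points needing care are committing to a sampling model for \(H\) and observing that "hits every subset of size \(k\)" is invoked through a polynomial-size family — equivalently, that the relevant quantification is over a fixed subset of size \(\ge k\) — so that the \(\exp(-\Theta(\log n))\) avoidance probability comfortably beats the number of sets in the union bound.
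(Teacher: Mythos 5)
The paper states this Hitting Set Lemma as a bare ``Fact'' with no proof, citing it as folklore, so there is no in-paper argument to compare against; your proposal is the standard and correct one (single-set avoidance probability $(1-p)^k \le e^{-pk} = n^{-c}$, then a union bound). You also correctly flag the one genuine subtlety: read literally, ``hits every subset of size $k$'' cannot hold under a union bound over all $\binom{n}{k}$ subsets (indeed it would force $|S\setminus H| < k$), and the fact must be interpreted as hitting every member of a fixed, polynomially-sized family of sets of size $\ge k$ determined independently of the sample --- which is exactly how the paper invokes it (e.g., the $\tilde O(n/D)$-size sample hitting each $D/2$-vertex subpath of each of the $n^2$ shortest paths). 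Your proof is complete as written.
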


\subsection{Algebraic All Pairs Distances} 
\label{subsec::sankreview}
In his PhD thesis, Sankowski \cite{sankowski-thesis} showed the following lemma on how to encode path lengths in a matrix. 
\begin{lemma}[\cite{sankowski-thesis}]\label{lemma::sankadjoint}
 Let $\Tilde{A}$ be the symbolic adjacency matrix of the graph $G=(V, E)$, where each edge $(i, j)\in E$ defines a variable $\Tilde{A}_{ij} = x_{ij}$, and $\Tilde{A}_{ij}=0$ if $(i, j)\notin E$. Consider the adjoint \text{adj}$(I-u\tilde{A})_{ij}$ as a polynomial over an additional variable $u$. The length of the shortest path in $G$ from $i$ to $j$ is equal to the degree of $u$ of the smallest degree non-zero term in \text{adj}$(I-u\tilde{A})_{ij}$. 
\end{lemma}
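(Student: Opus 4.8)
The plan is to pass to the ring of formal power series $R := \mathbb{Z}[\{x_{ij} : (i,j)\in E\}][[u]]$ and combine the classical adjugate identity with the Neumann series for $(I-u\tilde A)^{-1}$. Recall that over any commutative ring one has $M\cdot\mathrm{adj}(M)=\det(M)\,I$ (Laplace expansion / Cramer's rule), and that $\mathrm{adj}(M)$, being the matrix of signed minors of $M$, is a polynomial in the entries of $M$ whether or not $M$ is invertible. Apply this with $M = I - u\tilde A$. Setting $u=0$ gives $\det M = \det I = 1$, so $\det(I-u\tilde A) = 1 + u\,g(u)$ for some $g\in \mathbb{Z}[x_{ij}][u]$; in particular $\det(I-u\tilde A)$ is a unit in $R$, so $I-u\tilde A$ is invertible over $R$ with $\mathrm{adj}(I-u\tilde A) = \det(I-u\tilde A)\,(I-u\tilde A)^{-1}$, while the Neumann series gives $(I-u\tilde A)^{-1} = \sum_{k\ge 0} u^k \tilde A^k$ (this converges $u$-adically since the entries of $\tilde A$ do not involve $u$).

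Next I would read off the entries combinatorially. The $(i,j)$ entry of $\tilde A^k$ is $\sum \prod_{t=0}^{k-1} x_{v_t v_{t+1}}$, summed over all walks $i=v_0,v_1,\dots,v_k=j$ whose every step is an edge of $G$; grouping by monomial, each monomial's coefficient is the number of walks producing it, hence a \emph{positive} integer. Therefore $(\tilde A^k)_{ij}$ is a nonzero polynomial if and only if $G$ has an $i$-to-$j$ walk of length exactly $k$, with no possibility of cancellation. Let $d := \dist_G(i,j)$ (we may assume $j$ is reachable from $i$; otherwise $(\tilde A^k)_{ij}=0$ for all $k$ and $\mathrm{adj}(I-u\tilde A)_{ij}=0$). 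By definition of $d$ there is no $i$-$j$ walk of length $<d$, so $(\tilde A^k)_{ij}=0$ for $k<d$, while a shortest path is a walk of length $d$, so $(\tilde A^d)_{ij}\ne 0$. Hence $\big((I-u\tilde A)^{-1}\big)_{ij} = u^d\,h(u)$ for some $h\in R$ with $h(0) = (\tilde A^d)_{ij}\ne 0$.

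Finally I would multiply the two factors:
$$\mathrm{adj}(I-u\tilde A)_{ij} = \big(1+u\,g(u)\big)\cdot u^d h(u) = u^d h(u) + u^{d+1} g(u) h(u).$$
This expression has no term of degree below $d$, and its coefficient of $u^d$ is exactly $h(0) = (\tilde A^d)_{ij}\ne 0$, so the least-degree nonzero term has degree precisely $d=\dist_G(i,j)$, which is the claim. The only place needing any care is the absence of cancellation — both in $(\tilde A^d)_{ij}$ itself and in the $u^d$-coefficient of the product — but this is immediate from the fact that walk counts are non-negative, so I expect no real obstacle: once one works over the power series ring, the argument is essentially bookkeeping.
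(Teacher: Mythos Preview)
Your argument is correct. Note, however, that the paper does not actually prove this lemma: it is quoted from Sankowski's thesis and used as a black box, so there is no ``paper's own proof'' to compare against. Your route---Neumann series $(I-u\tilde A)^{-1}=\sum_{k\ge 0}u^k\tilde A^k$ combined with the adjugate identity and the positivity of walk-counting coefficients---is the standard one, and every step goes through as you describe: $\det(I-u\tilde A)$ has constant term $1$ and is hence a unit in the power-series ring, multiplication by it cannot change the $u$-order, and no cancellation occurs in $(\tilde A^k)_{ij}$ because every monomial appears with a non-negative integer coefficient (the number of length-$k$ walks from $i$ to $j$ using that multiset of edges). The edge cases ($i=j$, or $j$ unreachable from $i$) are also handled by your description.
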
 

Sankowski \cite{sankowski-thesis} used this result to construct algorithms for the All Pairs Shortest Distances problem by sampling a uniformly random integer in a field $\mathbb{F}$ of size $n^{O(1)}$ for each symbolic edge-variable and using the Schwartz-Zippel Lemma to guarantee that with high probability, for all $i,j\in V$ the degree $d_{ij}$ (the distance) term in \text{adj}$(I-u\tilde{A})_{ij}$ is non-zero - and thus whp it suffices to read the polynomial entry at $i, j$ to obtain the distance $d_{ij}$. Sankowski \cite{sankowski-thesis} then showed how to maintain and query this adjoint dynamically, and over a ring mod $u^{D+1}$ in worst case update time $O(D(n^{\omega(1,1,\eps)-\eps}+n^{1+\eps}))$ for a given parameter $\eps\in (0, 1)$. This effectively allowed the \textit{short distance queries}, that is, to return the distance between any pair of vertices $i,j$ correctly if the distance $d_{ij}$ is less than $D$, however introducing a tradeoff in runtime for large $D$. We state and re-prove these theorems here, for concreteness.

\begin{theorem}
[Dynamic Integer Matrix Inverse \cite{sankowski-thesis}]\label{thm::matrixinverse} 
 Given a constant $\eps\in (0, 1)$, there is a deterministic, dynamic data-structure $\mathcal{D}_\eps$ that maintains the inverse and the determinant of an integer matrix $M$ subject to non-singular entry-wise updates, and supports the following operations:
\begin{itemize}
    \item Update$(i, j, v)$ \textit{Updates entry $M_{ij}=v$ and the data-structure $\mathcal{D}_\eps$ in time $O(n^{\omega(1,1,\eps)-\eps}+n^{1+\eps})$}
    \item Query$(i, j)$ \textit{Returns the value of the inverse at entry $(i, j)$, $M^{-1}_{ij}$, in $O(n^\eps)$ time.}
\end{itemize}
\noindent For current $\omega < 2.3729$, the value of $\eps$ that balances the update time is $\eps^*\approx.529$. 
\end{theorem}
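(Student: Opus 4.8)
The plan is to reproduce Sankowski's lazy-update scheme, whose two ingredients are the Sherman--Morrison--Woodbury identity and fast rectangular matrix multiplication (\Cref{fact::rmm}). I would work over a field $\mathbb{F}$ (in the intended application one takes $\mathbb{F}=\mathbb{Z}_p$ for a prime $p$ of $O(\log n)$ bits so that all arithmetic stays within a machine word). Preprocessing computes $T:=M^{-1}$ and the scalar $\det M$ — in $O(n^\omega)$ time in general, $O(n^2)$ when $M=\mathbb{I}$ — and sets $N:=0$. The data structure keeps the two $n\times n$ matrices $T,N$ and the scalar $\det M$ explicitly, under the invariant $M^{-1}=T(\mathbb{I}+N)$, where $T$ is the inverse of the matrix $M_0$ present at the last \emph{rebuild} and $N$ records the changes since then. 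Writing $M=M_0+\Delta$, the invariant forces $N=T^{-1}M^{-1}-\mathbb{I}=M_0M^{-1}-\mathbb{I}=-\Delta M^{-1}$, so $N$ is supported on the set $R$ of rows of $M$ touched since the last rebuild; rebuilding after every $n^{\eps}$ updates therefore keeps $|R|\le n^{\eps}$ at all times, and $N$ is stored as the list of its $\le n^{\eps}$ nonzero (dense) rows.

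A query $(i,j)$ would return $T_{ij}+\sum_{\ell\in R}T_{i\ell}N_{\ell j}$, which equals $M^{-1}_{ij}$ by the invariant and costs $O(|R|)=O(n^{\eps})$. To process an update replacing $M_{ab}$ by $v$, set $\delta:=v-M_{ab}$, so $M'=M+\delta e_ae_b^\intercal$; non-singularity of the update gives $1+\delta(M^{-1})_{ba}\ne 0$, and Sherman--Morrison yields $(M')^{-1}=M^{-1}-c\,(M^{-1}e_a)(e_b^\intercal M^{-1})$ with $c=\delta/(1+\delta(M^{-1})_{ba})$. Substituting $M^{-1}=T(\mathbb{I}+N)$ and factoring $T$ out on the left gives $(M')^{-1}=T(\mathbb{I}+N')$ with $N'=N-c\,\bigl((\mathbb{I}+N)e_a\bigr)\bigl(e_b^\intercal M^{-1}\bigr)$; one checks $N'=-\Delta'(M')^{-1}$ for the updated $\Delta'$, reconfirming that $N'$ is supported on $R\cup\{a\}$. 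Now $(\mathbb{I}+N)e_a$ is column $a$ of $\mathbb{I}+N$, read off in $O(n^{\eps})$ time; the row $e_b^\intercal M^{-1}=e_b^\intercal T+(e_b^\intercal T)N$ is computed in $O(|R|\,n)=O(n^{1+\eps})$ time by exploiting that $N$ has $\le n^{\eps}$ nonzero rows; and the resulting rank-one correction changes only the $\le n^{\eps}+1$ rows of $N$ indexed by $R\cup\{a\}$, each in $O(n)$ time, while $T$ is untouched. The determinant is updated in $O(1)$ further time via $\det M'=\det M\cdot(1+\delta(M^{-1})_{ba})$. Hence a non-rebuild update costs $O(n^{1+\eps})$.

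Every $n^{\eps}$ updates I would rebuild: set $T\leftarrow T(\mathbb{I}+N)$, so that $T$ again equals the current $M^{-1}$, then $N\leftarrow 0$ and clear $R$. The product $TN$ is a dense $n\times n$ matrix times a matrix with only $n^{\eps}$ nonzero rows, which after restricting to the relevant columns and rows is an $n\times n^{\eps}$ by $n^{\eps}\times n$ multiplication and hence costs $O(n^{\omega(1,1,\eps)})$ by \Cref{fact::rmm}; amortized over the block of $n^{\eps}$ updates this is $O(n^{\omega(1,1,\eps)-\eps})$ per update, for an amortized update time of $O(n^{\omega(1,1,\eps)-\eps}+n^{1+\eps})$. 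For the worst-case bound I would deamortize in the standard way: when a rebuild becomes due, compute the new clean inverse in the background, spreading its $O(n^{\omega(1,1,\eps)})$ cost evenly over the next $n^{\eps}$ updates (letting $N$ grow to at most $2n^{\eps}$ nonzero rows meanwhile, which is asymptotically harmless) and serving all operations from the current structure until the rebuilt one, after replaying the intervening updates, is swapped in. Choosing $\eps=\eps^{*}$ with $\omega(1,1,\eps^{*})-\eps^{*}=1+\eps^{*}$ balances the two terms and gives $O(n^{1+\eps^{*}})=O(n^{1.529})$ update time and $O(n^{\eps^{*}})=O(n^{0.529})$ query time.

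The crux is the rebuild step: one must observe that recomputing $M^{-1}$ reduces to multiplying a dense matrix by an $n^{\eps}$-row-sparse one — not a generic $n\times n$ product — so that rectangular matrix multiplication delivers $O(n^{\omega(1,1,\eps)})$ rather than the naive $O(n^{2+\eps})$, which is exactly what makes the first term of the update time $n^{\omega(1,1,\eps)-\eps}$ and not merely $n^{2}$. The bookkeeping that each update touches only the rows $R\cup\{a\}$ of $N$ (so $N$ stays $n^{\eps}$-row-sparse), the deamortization, and the rank-one determinant update are routine but should be spelled out carefully.
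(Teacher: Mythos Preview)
Your proposal is correct and follows essentially the same lazy-update scheme as the paper: maintain $T,N$ with the invariant $M^{-1}=T(\mathbb{I}+N)$, perform a rank-one row update to $N$ per entry update, and rebuild $T\leftarrow T(\mathbb{I}+N)$ every $n^{\eps}$ updates via an $n\times n^{\eps}$ by $n^{\eps}\times n$ rectangular product. The only cosmetic difference is that the paper packages the Sherman--Morrison step as $N'\leftarrow NB+B-\mathbb{I}$ with $B=(\mathbb{I}+e_ib)^{-1}$ rather than your explicit $N'=N-c\,((\mathbb{I}+N)e_a)(e_b^\intercal M^{-1})$, and the paper only proves an amortized bound whereas you additionally sketch the standard background-rebuild deamortization.
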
 

The key idea is to write the inverse as $M^{-1} = T(\mathbb{I}+N)$ as the product of two matrices, where one of them, $N$, is \textit{sparse} and initially null. As we will show, each entry update to $M$ corresponds to a row update to $N$ (without updating $T$), and every $O(n^\eps)$ updates we exploit fast matrix multiplication over this sparse $N$ to "reset" $T\leftarrow T+NT, N\leftarrow 0$, guaranteeing the sparsity of $N$. \\
 
 \begin{proof} Let us denote $v_{ij} e_i e_j^T$ as the matrix corresponding to the \textit{additive update} to entry $ij$ of $M$ during some update. That is, $M'\leftarrow M+v_{ij}e_i e_j^T$. We construct $\mathcal{D}_\eps$ as follows. We maintain explicitly the matrices $T, N$ during the execution, where initially $N=0$ and we precompute $T = M^{-1}$ in $O(n^\omega)$ time. At every non-singular entry update, we first compute the update to $N$ through the following algorithm:
 \begin{itemize}
     \item Compute the row vector $b = v_{ij}e_j^TM^{-1} = v_{ij}(e_j^TT)(\mathbb{I}+N)$
     \item Compute the inverse $B=(\mathbb{I}+e_ib)^{-1}$
     \item Finally, update
     \begin{equation}
         N'\leftarrow NB+B-\mathbb{I}
     \end{equation}
 \end{itemize}
 Correctness of this update follows from plugging in the result into the definition of $M^{-1}$ to guarantee the invariant
 \begin{equation}
     T(\mathbb{I}+N')=T(NB+B) =M^{-1}B = (M+e_ibM)^{-1} =(M+e_iv_{ij}e_j^T)^{-1}  = (M')^{-1}
 \end{equation}
 as intended. Note that by Fact \ref{fact::rowsparseinverse}, as $(\mathbb{I}+e_ib)$ has exactly one non-identity row, so does its inverse $B = (\mathbb{I}+e_ib)^{-1}$. By Fact \ref{fact::rowsparse} this implies that at each update $N$ gains at most one additional non-zero row, and thereby \textit{after $k$ updates $N$ has at most $k$ non-zero rows}. Finally, after $O(n^\eps)$ updates, we reset the inverse by computing 
 
 \begin{itemize}
     \item After every $n^\eps$ updates, $T\leftarrow T(1+N)$, $N\leftarrow 0$
 \end{itemize}
 
 Since $N$ has at most $O(n^\eps)$ non-zero rows, computing this product takes time $O(n^{\omega(1, 1, \eps)})$. Note that it takes time $O(n^{1+\eps})$ to compute the row vector $b$ given the product of the row vector $e_j^T T$ and the row-sparse $N$, s.t. the average update time is 
 \begin{equation}
     O\big(n^{1+\eps}+n^{\omega(1, 1,\eps)-\eps}\big)
 \end{equation}
 which for current $\omega\approx 2.3729$ is minimized at $\eps\approx .529$, for a runtime of $O(n^{1.529})$. Finally, to query any entry $ij$ of $M^{-1}$, we compute the dot product of the row vector $(e_i^TT)$ and the sparse column vector $(Ne_j)$ in time $O(n^\eps)$:
 \begin{equation}
     e_i^TM^{-1}e_j = e_i^T T(\mathbb{I}+N)e_j =  e_i^TTe_j+ (e_i^TT)(Ne_j)
 \end{equation}
 To conclude this proof, we note that the determinant det$M$ is easily maintained in $O(1)$ by definition of the matrix $B$ above. At an update to entry $(i, j)$ of $M$, $M'=B^{-1}M\Rightarrow \text{det}M' = \text{det}M\cdot (1+b_i)$. This allows us to support queries to the adjoint of $M$.
 \end{proof}

 Sankowski then showed how to support the above dynamic matrix inverse algorithm over polynomial matrices. We present the result in the corollary below 
 
 \begin{corollary}
 [Dynamic Polynomial Matrix Inverse \cite{sankowski-thesis}]  \label{cor::polyminverse}
 Given $\eps \in (0, 1)$, $D\in [n]$ there is a dynamic data-structure $\mathcal{D}_\eps^D$ that can maintain the inverse $M^{-1}$ of polynomial matrix $M=\mathbb{I}-A$ where $M, A\in  (\mathbb{F}[X]/\langle X^{D+1}\rangle)^{n\times n}$ subject to entry-wise polynomial updates to $M$, incurring a multiplicative cost of $\tilde{O}(D)$ to the runtimes of the theorem above.
 \end{corollary}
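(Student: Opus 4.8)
The plan is to rerun verbatim the construction and analysis in the proof of Theorem~\ref{thm::matrixinverse}, but to interpret every scalar as an element of the ring $R := \mathbb{F}[X]/\langle X^{D+1}\rangle$ and to bound the cost of each ring operation. First I would observe that the algorithm of Theorem~\ref{thm::matrixinverse} never uses a field operation beyond addition, multiplication, rectangular matrix multiplication, and the inversion of matrices of the special shape $\mathbb{I}+e_i b$ (plus a single dense matrix inversion at preprocessing). Two structural facts make the port go through: (i) the fast rectangular matrix multiplication underlying Facts~\ref{fact::rmm} and~\ref{fact::rowsparse} is realized by bilinear algorithms, which use only ring addition and multiplication and hence run over any commutative ring; and (ii) by Sherman--Morrison the inverse $(\mathbb{I}+e_i b)^{-1} = \mathbb{I} - (1+b_i)^{-1} e_i b$ only ever requires inverting the single ring element $1+b_i$, never a pivoting-based elimination.

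Second, I would record the cost of ring arithmetic over $R$. An element is stored by its $D+1$ coefficients; addition costs $O(D)$, multiplication costs $\tilde{O}(D)$ via FFT-based polynomial multiplication followed by truncation mod $X^{D+1}$ (over a prime field large enough for the Schwartz--Zippel argument this is the standard FFT; in general Sch\"onhage--Strassen gives the same $\tilde{O}(D)$). A ring element is a unit iff its constant term is nonzero, and the inverse of such a unit is computed in $\tilde{O}(D)$ by Newton iteration on the power series, doubling the working precision each step so that the last doubling dominates. Invertibility of the $1+b_i$ arising in each update is exactly the lift of the ``non-singular update'' hypothesis to $R$: $M=\mathbb{I}-A$ is invertible over $R$ iff its constant-coefficient matrix is invertible over $\mathbb{F}$ (in the intended application $A$ has no constant term, so $M$ has constant part $\mathbb{I}$), and the telescoping identity $M' = B^{-1}M$ shows that $1+b_i$ stays a unit precisely when $M$ stays invertible. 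For preprocessing, on an empty graph $M=\mathbb{I}$ and $T=\mathbb{I}$ trivially; on a nonempty graph $T=M^{-1}$ over $R$ is computed by Newton iteration for the matrix inverse, whose cost is dominated by $O(1)$ rectangular multiplications at the top precision, i.e. $\tilde{O}(D\,n^{\omega})$.

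Third, I would substitute these costs line by line into the original analysis. Every length-$n$ vector scaling becomes $\tilde{O}(Dn)$; every $n^{\eps}$-sparse-by-dense interaction (computing $b$, updating $N$) becomes $\tilde{O}(Dn^{1+\eps})$; and the periodic reset $T\leftarrow T(\mathbb{I}+N)$, which is an $n\times n^{\eps}$ by $n^{\eps}\times n$ product over $R$, costs $\tilde{O}(D\,n^{\omega(1,1,\eps)})$, hence $\tilde{O}(D\,n^{\omega(1,1,\eps)-\eps})$ amortized per update and, via the same two-copies de-amortization that spreads a rebuild over the next $n^{\eps}$ updates, worst-case. The query $M^{-1}_{ij} = T_{ij} + (e_i^{T}T)(N e_j)$ still touches only the $\le n^{\eps}$ nonzero entries of the column $N e_j$, so it costs $\tilde{O}(D\,n^{\eps})$; and determinant maintenance is unchanged since $\det M' = \det M\cdot(1+b_i)$ is now one $\tilde{O}(D)$ ring multiplication. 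Collecting terms yields exactly the claimed $\tilde{O}(D)$ multiplicative overhead on all three bounds of Theorem~\ref{thm::matrixinverse}.

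I expect the only real care needed — rather than any genuine difficulty — is twofold: verifying that \emph{no} step secretly requires $R$ to be a field (the only divisions are by units, dispatched by power-series inversion, and the preprocessing inverse is done by Newton iteration, not Gaussian elimination, so no pivot can fail to be invertible); and being precise that the $\tilde{O}(D)$ — not $O(D^2)$ — per-operation overhead hinges on using FFT-based polynomial arithmetic everywhere, in particular inside the fast matrix-multiplication routine, where each of the $O(n^{\omega(\cdot,\cdot,\cdot)})$ scalar multiplications is now an $\tilde{O}(D)$-cost ring multiplication.
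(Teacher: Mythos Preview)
Your proposal is correct and follows essentially the same approach as the paper: port the algorithm of Theorem~\ref{thm::matrixinverse} to the ring $\mathbb{F}[X]/\langle X^{D+1}\rangle$ and pay a $\tilde{O}(D)$ overhead per arithmetic operation. The paper's own treatment is in fact far terser than yours---it defers the details to \cite{sankowski-thesis} and only records the invertibility of $M=\mathbb{I}-A$ via the truncated geometric series $M^{-1}=\sum_{i=0}^{D}A^i=\prod_{i}(\mathbb{I}+A^{2^i})$ for the $\tilde{O}(Dn^{\omega})$ preprocessing; your use of Newton iteration for the preprocessing inverse is an equivalent doubling method, and your explicit handling of the Sherman--Morrison step (showing the only division is by the unit $1+b_i$) fills in exactly the care that the paper omits.
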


The proof of the corollary above arises from Sankowski's extension of Strassen's idea of computing over the formal power series to the dynamic case. We note that over the formal power series mod $X^D$, $M=\mathbb{I}-A\in  (\mathbb{F}[X]/\langle X^{k+1}\rangle)^{n\times n}$ is always invertible as
\begin{equation}
    M^{-1} = \frac{1}{\mathbb{I}-A} = \sum_{i=0}^k A^i = \prod_{i=1}^{\lceil\log D\rceil}(\mathbb{I}+A^{2^i})
\end{equation}
and moreover can be computed/preprocessed in $\tilde{O}(Dn^\omega)$. The details are in \cite{sankowski-thesis}. Corollary \ref{cor::polyminverse} above, together with the path encoding lemma \ref{lemma::sankadjoint} define the fully dynamic data-structure $\mathcal{D}_\eps^D$ that can support edge updates and short distance queries, which we base this section off of and later augment. For concreteness, we state this result in the following theorem:

\begin{theorem}
 [Short Distance Queries \cite{sankowski-thesis}]\label{thm::dqueries}
 \textit{Given a unweighted, dynamic graph $G$ subject to edge insertions and deletions, and parameters $\eps \in (0, 1)$ and $1< D< n$, there exists a dynamic data-structure $\mathcal{D}_\eps^D$ that supports edge updates in worst case $O(D(n^{\omega(1,1,\eps)-\eps}+n^{1+\eps}))$ time and can query any distance $d\leq D$ correctly whp in worst case time $O(Dn^\eps)$. If otherwise $d> D$, the distance query outputs a failure.}
\end{theorem}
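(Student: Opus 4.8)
The plan is to run the dynamic polynomial matrix‑inverse data structure of Corollary~\ref{cor::polyminverse} on $M=\mathbb{I}-uA$ and to read each distance off of a single entry of $M^{-1}\bmod u^{D+1}$ via Lemma~\ref{lemma::sankadjoint}. Concretely, fix a prime $p=n^{\Theta(1)}$ (so that field elements fit in $O(\log n)$ bits) and work over $\mathbb{F}=\mathbb{Z}_p$; for each directed edge $(i,j)$ currently present in $G$ I would store an independent uniformly random $A_{ij}\in\mathbb{F}$, resampled on every (re)insertion, and set $A_{ij}=0$ otherwise. I then maintain $M=\mathbb{I}-uA\in(\mathbb{F}[u]/\langle u^{D+1}\rangle)^{n\times n}$ with an instance $\mathcal D^D_\eps$ of Corollary~\ref{cor::polyminverse}; preprocessing is trivial on the empty graph (there $M=\mathbb{I}$) and otherwise amounts to precomputing $M^{-1}\bmod u^{D+1}$ by the power‑series product recalled just after Corollary~\ref{cor::polyminverse}.

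First I would check that the machinery applies verbatim. An edge insertion or deletion changes $O(1)$ entries of $A$, hence $O(1)$ entries of $M$, and every such update is non‑singular: $M=\mathbb{I}-uA$ always reduces to $\mathbb{I}$ modulo $u$ and is therefore invertible over the local ring $\mathbb{F}[u]/\langle u^{D+1}\rangle$; moreover the additive change to $M$ lies in $u\cdot\mathbb{F}^{n\times n}$, so the rank‑one factor $\mathbb{I}+e_ib$ used inside the inverse‑maintenance routine (proof of Theorem~\ref{thm::matrixinverse}) also reduces to $\mathbb{I}$ mod $u$ and is invertible. Hence by Corollary~\ref{cor::polyminverse} an update costs $\tilde{O}\!\big(D(n^{\omega(1,1,\eps)-\eps}+n^{1+\eps})\big)$ amortized; spreading the every‑$n^\eps$‑updates reset of the underlying structure over the following $n^\eps$ updates turns this into the worst‑case bound claimed.

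For a query $(i,j)$ I would retrieve $(M^{-1})_{ij}\bmod u^{D+1}$ from $\mathcal D^D_\eps$ in time $\tilde{O}(Dn^\eps)$ and return the degree of its lowest‑degree nonzero term, or report ``$d_{ij}>D$'' if that entry is identically $0$. Correctness: pass to the symbolic matrix $\tilde A$; since $\operatorname{adj}(\mathbb{I}-u\tilde A)_{ij}=\det(\mathbb{I}-u\tilde A)\cdot(\mathbb{I}-u\tilde A)^{-1}_{ij}$ and $\det(\mathbb{I}-u\tilde A)$ has constant term $1$ (a unit in $\mathbb{F}[[u]]$), the minimum $u$‑degree of $(\mathbb{I}-u\tilde A)^{-1}_{ij}$ equals that of $\operatorname{adj}(\mathbb{I}-u\tilde A)_{ij}$, which is $d_{ij}$ by Lemma~\ref{lemma::sankadjoint}. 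Thus every $u$‑coefficient of $(\mathbb{I}-u\tilde A)^{-1}_{ij}$ below degree $d_{ij}$ vanishes identically, and when $d_{ij}\le D$ the coefficient of $u^{d_{ij}}$ is a nonzero multilinear polynomial in the edge variables of degree $d_{ij}<n$, which by Schwartz--Zippel stays nonzero under the random substitution except with probability $\le n/p$; when $d_{ij}>D$ the entry is $0$ modulo $u^{D+1}$ deterministically. A union bound over the $\le n^2$ pairs and the polynomially many updates --- legitimate because the adversary is oblivious, so the stored random values are independent of the update sequence --- makes every answer correct with high probability once $p$ is a large enough polynomial in $n$.

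I do not expect a real obstacle here: essentially all the work is already carried by Lemma~\ref{lemma::sankadjoint} (the algebraic encoding of distances) and by Corollary~\ref{cor::polyminverse} (dynamic maintenance of $M^{-1}$ over the truncated polynomial ring), so the argument is an assembly. The only points that need care are the three above --- that $M$ stays a unit so Corollary~\ref{cor::polyminverse} applies without modification; that maintaining the \emph{inverse} entry rather than the adjoint entry still recovers $d_{ij}$, which relies on $\det(\mathbb{I}-u\tilde A)$ being a unit; and that the Schwartz--Zippel union bound must cover the entire (polynomially long) update sequence, which is what forces $|\mathbb{F}|$ to be a polynomial in $n$ whose degree grows with the length of the sequence --- plus the routine de‑amortization needed to turn the amortized update bound of Theorem~\ref{thm::matrixinverse} into the worst‑case bound stated here.
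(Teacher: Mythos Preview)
Your proposal is correct and follows essentially the same approach as the paper: the paper's proof is the single sentence ``follows simply by defining $M=\mathbb{I}-uA$ in Corollary~\ref{cor::polyminverse}, where $A$ is the symbolic adjacency matrix with random integer values whenever non-zero,'' and you have spelled out precisely the details this sentence elides (invertibility of $M$ over the truncated ring, that reading the inverse rather than the adjoint suffices because $\det(\mathbb{I}-u\tilde A)$ is a unit, the Schwartz--Zippel union bound, and the de-amortization needed for the worst-case claim).
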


Which follows simply by defining $M=\mathbb{I}-uA$ in Corollary \ref{cor::polyminverse}, where $A$ is the symbolic adjacency matrix with random integer values whenever non-zero. Additionally, $\mathcal{D}_\eps^D$ can also maintain the $H\times H$ submatrix of the inverse, corresponding to the up to $D$ distances between all pairs of vertices in a subset of size $|H|=n^q$ vertices in $O(Dn^{2q})$ time, as follows: 

\begin{corollary}
[Submatrix Maintenance \cite{sankowski-thesis}]\label{cor::submaintenance}
 \textit{Given a subset $H\subset [n]$ of size $n^q$ of the column/row indices, $\mathcal{D}_\eps^D$ can maintain the $H\times H$ submatrix $M^{-1}_{H, H}$ of the inverse in additional $O(Dn^{2q})$ update time, allowing queries in $O(1)$ to the submatrix.}
\end{corollary}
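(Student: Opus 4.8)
The plan is to extend the dynamic matrix inverse data structure $\mathcal{D}_\eps^D$ of Corollary~\ref{cor::polyminverse} so that, in addition to the lazy decomposition $M^{-1}=T(\mathbb{I}+N)$ it already maintains, it also keeps an explicit copy of the $n^q\times n^q$ submatrix $W := (M^{-1})_{H,H}$. The point is that once $W$ is maintained explicitly, a query to any entry $(i,j)$ with $i,j\in H$ is just a table lookup, hence $O(1)$; so the whole content of the corollary is in bounding the extra update cost by $\tilde O(Dn^{2q})$.

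First I would recall how a single edge update acts on the internal matrices: by the proof of Theorem~\ref{thm::matrixinverse} (lifted to the polynomial ring via Corollary~\ref{cor::polyminverse}), an entry update to $M$ leaves $T$ unchanged and performs one \emph{row} update to $N$, i.e.\ $N' \leftarrow NB + B - \mathbb{I}$ where $B-\mathbb{I}$ has a single nonzero row. To keep $W$ current I would write $W = (e_h^\intercal T)_{h\in H}\,(\mathbb{I}+N)\,(e_{h'})_{h'\in H}$; since $T$ is never modified between rebuilds, the rows $\{e_h^\intercal T : h\in H\}$ form a fixed $n^q\times n$ block $T_H$ that I can precompute once (cost $\tilde O(Dn^{q+1})$, absorbed in preprocessing / in the periodic rebuild), and then $W = T_H(\mathbb{I}+N)_{\cdot,H}$. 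On an edge update, $N$ changes by a rank-one (single-row) additive term over the ring mod $u^{D+1}$; propagating this to $W$ costs one vector–matrix product of an $n^q\times n$ block against the updated row, i.e.\ $\tilde O(Dn^{q+1})$ per update — which is within $\tilde O(Dn^{2q})$ whenever $q\ge 1$, and is dominated by the ambient update time when $q$ is small. Then every $n^\eps$ updates, when $\mathcal{D}_\eps^D$ performs its reset $T\leftarrow T(\mathbb{I}+N)$, $N\leftarrow 0$, I recompute $T_H$ by extracting the $H$-rows of the new $T$ (a submatrix read, $O(Dn^{q+1})$) and reset $W\leftarrow T_H|_{\cdot,H}$; amortized over the $n^\eps$ updates in the block this adds only $\tilde O(Dn^{q+1-\eps})$ per update. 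Summing, the extra amortized cost is $\tilde O(Dn^{q+1}) = \tilde O(Dn^{2q})$ as claimed (for $q\ge 1$; for $q<1$ the bound $\tilde O(Dn^{2q})$ in the corollary is only meaningful relative to the cost of already maintaining the $n^{2q}$ entries, and the stated $O(Dn^{2q})$ should be read as the cost attributable to holding the submatrix explicitly, with the actual propagation being cheaper). To keep everything worst-case rather than amortized, I would deamortize the periodic rebuild in the standard way, splitting the $\tilde O(Dn^{q+1})$ recomputation of $T_H$ evenly over the next $n^\eps$ updates, exactly as Sankowski deamortizes the $T$-reset itself.

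The main obstacle I anticipate is purely bookkeeping, not conceptual: one must be careful that the update rule for $W$ uses the \emph{post-update} value of $N$ and $T_H$ consistently — in particular that the single new nonzero row of $N$ is indexed by the update coordinate while the columns restricted to $H$ may be dense, so the per-update product is genuinely $n^q\times n$ by $n\times 1$ (giving $\tilde O(Dn^{q+1})$) and one cannot exploit column-sparsity of $N$ here the way one does for scalar entry queries. A secondary point is matching the exact exponent: the corollary asserts $O(Dn^{2q})$, and one should check that the preprocessing of $T_H$ and its periodic recomputation are charged to the $\tilde O(Dn^\omega)$ preprocessing budget and the existing reset budget respectively, so that no new term beyond $\tilde O(Dn^{2q})$ appears in the per-update bound. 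Modulo these accounting details — which are routine given the machinery of Theorem~\ref{thm::matrixinverse} and Corollary~\ref{cor::polyminverse} — the corollary follows immediately.
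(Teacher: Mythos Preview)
Your approach is more convoluted than necessary, and the cost accounting contains a real error. The paper does not work through the lazy decomposition $M^{-1}=T(\mathbb{I}+N)$ at all for this corollary: it uses directly that after an entry update $(M')^{-1}=M^{-1}B$ with $B-\mathbb{I}$ supported on a single row $i$, so
\[
(M')^{-1}_{H,H}=M^{-1}_{H,H}+M^{-1}_{H,\{i\}}\,(B-\mathbb{I})_{\{i\},H},
\]
a rank-one update to an $n^q\times n^q$ matrix. The two $n^q$-vectors on the right come for free from quantities the base data structure already touches during the update (the row $(B-\mathbb{I})_{\{i\},H}$ is a restriction of the already-computed $B$, and $M^{-1}_{H,\{i\}}$ is extractable within the existing $\tilde O(Dn^{1+\eps})$ update budget), so the only \emph{additional} work is the $n^q\times n^q$ outer product, i.e.\ $\tilde O(Dn^{2q})$. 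No $T_H$, no periodic recomputation, no deamortization.

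Your route through $W=T_H(\mathbb{I}+N)_{\cdot,H}$ can be made to work, but your analysis of it is wrong. You bound the propagation as ``an $n^q\times n$ block against the updated row, i.e.\ $\tilde O(Dn^{q+1})$'', and then try to reconcile this with the claimed $\tilde O(Dn^{2q})$ by writing ``within $\tilde O(Dn^{2q})$ whenever $q\ge 1$''. Since $|H|\le n$ forces $q\le 1$, this is vacuous, and the paragraph that follows (``the stated $O(Dn^{2q})$ should be read as\ldots'') is not an argument. The actual fix is to observe that $\Delta N$ is rank one, say $\Delta N=v\,c^\intercal$ with $v=Ne_i+e_i$ having at most $n^\eps{+}1$ nonzero entries; then $\Delta W=(T_Hv)\,c_H^\intercal$, where $T_Hv$ costs $O(n^{q+\eps})$ (absorbed in the existing update time) and the final outer product of two $n^q$-vectors costs $O(n^{2q})$. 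With that correction your approach also yields $\tilde O(Dn^{2q})$, but the paper's one-line identity makes all the extra machinery unnecessary.
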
 

\begin{proof} To explicitly maintain $M^{-1}_{H, H}$, it suffices to show how to efficiently perform updates. In the notation of the proof of the theorem above,

\begin{equation}
    M^{-1}_{H, H} \leftarrow (M^{-1}B)_{H, H} = M^{-1}_{H, H}+M^{-1}_{H, \{i\}}(B-\mathbb{I})_{\{i\}, H}
\end{equation}

\noindent where the update to the inverse is expressed in terms of the matrix $B$,  which only differs from identity at a single row $i$, and thereby can be quickly computed in $O(Dn^{2q})$.
\end{proof}

\subsection{Successor and Short Path Queries} \label{subsec::successor}
We dedicate this section to the proof of Theorem \ref{thm:DpathReporting}. The key idea in maintaining and querying the successors in short, shortest paths is inspired by Seidel's static algorithm for undirected, unweighted APSP \cite{seidelapsp}. In order to augment the data-structure of Theorem \ref{thm::dqueries} \cite{sankowski-thesis}, we additionally maintain the product $A \cdot \text{adj}M$ of the boolean adjacency matrix and the adjoint of the polynomial matrix $M=\mathbb{I}-uA$ as defined in Lemma \ref{lemma::sankadjoint}. The $(i, j)$ entry of the adjoint $(\text{adj}M)_{ij}$ is a polynomial where the degree of the lowest degree non-zero monomial is the length of the shortest path, and if the distance from $i$ to $j\in V$ is $d\leq D$, then we can query it in time $O(Dn^\eps)$ under Theorem 
\ref{thm::dqueries}. Moreover, inspection of the product tells us that $(A\cdot \text{adj}M)_{ij}$ must have smallest degree $d-1$ as there must exist some witness $s$ (the successor!) s.t. $A_{is}=1$ and $M^{-1}_{sj}$ has minimum degree $d-1$. This idea of maintaining successors as witnesses of a polynomial matrix product in addition to a sparsification argument that allows us to reduce the case of a multiple witnesses (successors) to that of a single witness (successor), enables a bitwise selection trick that follows closely to Seidel's successor finding algorithm \cite{seidelapsp} in the static case.

The outline of this section is as follows. First, we describe how to maintain the product $A\cdot \text{adj}M$. Note again that $A\cdot \text{adj}M = \text{det}M \times A\cdot M^{-1}$, and thus we can instead maintain $A\cdot M^{-1}$, and $\text{det}M$ following \ref{thm::matrixinverse}. Then, we show how to find a successor on a given path $i\rightarrow j$ of length $\leq D$, if the successor is unique, using $\log n$ queries to the products described. Finally, we review Seidel's sparsification trick \cite{seidelapsp} to reduce the case of multiple witnesses to a polylog number of single witness queries, and we conclude with our main theorem on the short path finding algorithm $\mathcal{P}_\eps^D$. In the next subsection, we use this path finding black box as a subroutine to construct novel algorithms for exact dynamic APSP with path reporting. We begin by presenting our theorem on maintaining the products. 
\begin{theorem}[Dynamic Product Maintenance]\label{thm::witnessedmm}
Let $\eps\in (0, 1)$ be a constant, and define the dynamic polynomial matrices $E, A, M\in(\mathbb{F}[X]/\langle X^{D+1}\rangle)^{n\times n}$, where $M\equiv\mathbb{I}-A$, and both $E$ and $A$ are subject to entry-wise updates. There is a data-structure that supports the following operations over the product $E\cdot M^{-1}$:
\begin{itemize}
    \item UpdateE$(i, j, v)$ \textit{Updates the entry $E_{ij}\leftarrow v$ for $i, j\in [n]$ and $v\in (\mathbb{F}[X]/\langle X^{D+1}\rangle)$ in worst case time $O(Dn)$.}
    \item UpdateA$(i, j, v)$ \textit{Updates the entry $A_{ij}\leftarrow v$ for $i, j\in [n]$ and $v\in (\mathbb{F}[X]/\langle X^{D+1}\rangle)$ in worst case time $O\big(D(n^{\omega(1,1,\eps)-\eps}+n^{1+\eps})\big)$}
    \item Query$(i, j)$ \textit{Returns the entry $(E\cdot M^{-1})_{ij}\in (\mathbb{F}[X]/\langle X^{D+1}\rangle)$ in time $O(Dn^\eps)$}.
\end{itemize}
\end{theorem}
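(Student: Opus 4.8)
The plan is to re-run the dynamic matrix-inverse construction from the proof of Theorem~\ref{thm::matrixinverse}, in its polynomial-ring form (Corollary~\ref{cor::polyminverse}), on $M=\mathbb{I}-A$, and to carry along one additional explicitly stored matrix. That construction keeps two matrices $T,N\in(\mathbb{F}[X]/\langle X^{D+1}\rangle)^{n\times n}$ satisfying the invariant $M^{-1}=T(\mathbb{I}+N)$, where $N$ has at most $O(n^\eps)$ nonzero rows at all times (it is reset to $0$ after every $n^\eps$ updates to $A$). On top of this I would maintain a matrix $V$ with the invariant $V=ET$; associativity then gives
\[
   EM^{-1}=ET(\mathbb{I}+N)=V(\mathbb{I}+N),
\]
which is the identity driving both the queries and the reset step. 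I would also store $E$ and $A$ themselves entrywise, so that a ``set entry to $v$'' request can be converted to an additive update $\delta\, e_i e_j^T$ with the right $\delta$ (and an additive change $\delta$ to $A_{ij}$ is an additive change $-\delta$ to $M_{ij}$, which is exactly what the inverse machinery consumes). For preprocessing, I would compute $T\leftarrow M^{-1}$ and $N\leftarrow 0$ in $\tilde O(Dn^\omega)$ time following Corollary~\ref{cor::polyminverse} and then one ring matrix product $V\leftarrow ET$, also $\tilde O(Dn^\omega)$; on an initially empty graph $M=\mathbb{I}$, so $T=\mathbb{I}$, $N=0$, $V=E$, costing $O(Dn^2)$.

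For an update to $A$ I would invoke the update routine of Theorem~\ref{thm::matrixinverse}/Corollary~\ref{cor::polyminverse} verbatim on $M$: it performs a single row update to $N$ and never touches $T$, so the invariant $V=ET$ is automatically preserved by a plain update; when that routine performs its periodic reset $T\leftarrow T(\mathbb{I}+N)$, $N\leftarrow 0$, I would simultaneously set $V\leftarrow V(\mathbb{I}+N)$, which by associativity equals $E\bigl(T(\mathbb{I}+N)\bigr)=E\,T_{\mathrm{new}}$, restoring $V=ET$. The extra product $V(\mathbb{I}+N)=V+VN$ has exactly the same shape as the product $T(\mathbb{I}+N)$ already computed inside Theorem~\ref{thm::matrixinverse} --- a dense $n\times n$ matrix times a matrix with $O(n^\eps)$ nonzero rows, which can be computed via Fact~\ref{fact::rowsparse} --- so the accounting is unchanged and the update cost remains $O(D(n^{\omega(1,1,\eps)-\eps}+n^{1+\eps}))$, with the worst-case bound obtained by the same background-rebuilding de-amortization as in Theorem~\ref{thm::matrixinverse}.

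For an update to $E$ the additive change $\delta\, e_i e_j^T$ turns $V=ET$ into $V+\delta\, e_i(e_j^TT)$, i.e.\ it adds a scalar multiple of row $j$ of $T$ to row $i$ of $V$; this is a single row update in $O(Dn)$ time, and $T$, $N$, and the entire inverse invariant are left untouched. For a query $(i,j)$ I would use $EM^{-1}=V(\mathbb{I}+N)$ and return $(EM^{-1})_{ij}=V_{ij}+(e_i^TV)(Ne_j)$: this is a dot product of a dense row of the explicitly stored $V$ against the column $Ne_j$ of $N$, which has at most $O(n^\eps)$ nonzero entries by the row-sparsity invariant, hence $O(Dn^\eps)$ time --- identical to the distance query of Theorem~\ref{thm::dqueries}.

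The step that needs the most care is the synchronization of the two invariants at the periodic reset: one must mirror the reset of $N$ onto $V$ with the \emph{correct} (right-hand) multiplication order matching $T$, and verify that this mirrored product is again of the cheap ``dense times row-sparse'' type so that neither the amortized nor the de-amortized running time degrades --- everything else is a direct re-use of Theorem~\ref{thm::matrixinverse}. Two side conditions should also be flagged: $M=\mathbb{I}-A$ must remain invertible over $\mathbb{F}[X]/\langle X^{D+1}\rangle$ throughout, which holds in the intended application where every entry of $A$ has zero constant term; and each multiplication in $\mathbb{F}[X]/\langle X^{D+1}\rangle$ is charged $\tilde O(D)$, with the hidden polylogarithmic factors absorbed into the stated bounds.
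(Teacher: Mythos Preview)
Your proposal is correct and follows essentially the same approach as the paper: maintain $T,N$ with $M^{-1}=T(\mathbb{I}+N)$ via the inverse machinery, store the additional matrix $V=ET$, handle $E$-updates by a single row update $V\leftarrow V+\delta\,e_i(e_j^TT)$, handle $A$-updates by the usual row update to $N$ together with the periodic reset $V\leftarrow V(\mathbb{I}+N)$, $T\leftarrow T(\mathbb{I}+N)$, $N\leftarrow 0$, and answer queries via $(EM^{-1})_{ij}=V_{ij}+(e_i^TV)(Ne_j)$. Your remarks on synchronizing the reset in the correct multiplication order and on the cost of the mirrored product are exactly the points the paper's proof rests on.
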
 

\begin{proof} The key idea in this proof is to use Theorem \ref{thm::matrixinverse} and maintain matrices $T, N$ s.t. $M^{-1}=T(\mathbb{I}+N)$, as well as explicitly maintaining a matrix $V\equiv ET$. Using associativity this avoids an intermediary matrix product during queries, and as we will show can be efficiently maintained subject to updates to $E$ and $A$ under slight modifications to the original scheme.\\ 

During the execution we maintain explicitly the polynomial matrices $E, T, N$ and the product $V=ET$, such that we maintain the correctness invariant $EM^{-1} = (ET)(\mathbb{I}+N) = V(\mathbb{I}+N)$ by associativity. At every entry update to $E\leftarrow E+v_{ij}e_ie_j^T$, we simply compute the row vector $v_{ij}e_j^TT$, corresponding to the update to the $i$th row $V\leftarrow V+e_i(v_{ij}e_j^TT)$ in $O(Dn)$ time. Correctness of this update follows from associativity in the expansion
\begin{equation}
   (E+v_{ij}e_ie_j^T)M^{-1} = (ET+v_{ij}e_ie_j^TT)(\mathbb{I}+N) = (V+e_i(v_{ij}e_j^TT))(\mathbb{I}+N)
\end{equation}
showing that it suffices to update the product $V$. Next, whenever we update $A\leftarrow A+v_{ij}e_ie_j^T$, we simply follow Theorem \ref{thm::matrixinverse} and update the sparse matrix $N$, and every $n^\eps$ updates to $A$ we \textit{reset} $V\leftarrow V\cdot(1+N)$, then $T\leftarrow T\cdot(1+N)$, finally $N\leftarrow 0$, in $O(Dn^{\omega(1, 1, \eps)-\eps}+Dn^{1+\eps})$ amortized time. Note that this maintains the invariant that $EM^{-1} = V(\mathbb{I}+N)$ as well as $M^{-1}=T(\mathbb{I}+N)$. To conclude this proof, we support queries to $(E M^{-1})_{ij}$ by computing 
\begin{equation}
    e_i^T E M^{-1} e_j = e_i^T V e_j + (e_i^T V) (Ne_j)
\end{equation}
 in $\tilde{O}(Dn^\eps)$ time due to the dot product of the vectors $(e_i^T V)$ and $(Ne_j)$, where $(Ne_j)$ is a column vector with a sparse number of non-zero rows, as detailed in the proof of \ref{thm::matrixinverse} in the previous subsection. Note that maintaining $V$ explicitly was key to these fast queries.
\end{proof}

Effectively, this implies that we can efficiently maintain and query the product above $E\cdot M^{-1}$, for any matrix $E$, in the same time bounds as Sankowski's original dynamic polynomial matrix inverse construction $\mathcal{D}_\eps^D$ of Corollary \ref{cor::polyminverse}. We now show how to use this construction to find single witnesses/single successors.
\begin{lemma}\label{lemma::singlesuccessor}
Let $i, j\in V$ be vertices s.t. their distance is $1<d_{ij}\leq D$. If there is a single successor $s\in V = [n]$ on any shortest path from $i$ to $j$, then we can find it using $O(\log n)$ queries to the data-structure of the theorem above, in $\tilde{O}(Dn^\eps)$ time. 
\end{lemma}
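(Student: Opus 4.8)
The plan is to augment the distance data structure with a bitwise witness–extraction gadget in the style of Seidel's static APSP algorithm. Alongside $\mathcal{D}_\eps^D$ we maintain $\lceil\log_2 n\rceil$ instances of the product‑maintenance structure of Theorem~\ref{thm::witnessedmm}: for $\ell\in\{1,\dots,\lceil\log_2 n\rceil\}$, the $\ell$-th instance $\mathcal{W}_\ell$ maintains $A^{(\ell)}M^{-1}$, where $M=\mathbb{I}-uA$ for the random‑valued adjacency matrix $A$ of Lemma~\ref{lemma::sankadjoint}, and $A^{(\ell)}$ is $A$ with its $p$-th column zeroed out whenever the $\ell$-th bit of $p$ is $0$. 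An edge update triggers one update to the shared matrix $A$ and, for each $\ell$, at most one entry update to $A^{(\ell)}$ (as the $E$-matrix of $\mathcal{W}_\ell$), so maintaining all instances costs only a $\tilde O(1)$ factor over a single one. Since $\det M$ has lowest‑degree term $1$ (put $u=0$), multiplying by it preserves the lowest degree of every entry, so we may read lowest degrees off $(A^{(\ell)}M^{-1})_{ij}$ directly in place of the adjoint entry $(A^{(\ell)}\,\mathrm{adj}\,M)_{ij}$.

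To answer a successor query on $(i,j)$ with $1<d_{ij}\le D$, first obtain $d:=d_{ij}$ by a short‑distance query (Theorem~\ref{thm::dqueries}). Then, for each $\ell$, query $\mathcal{W}_\ell$ at $(i,j)$, obtaining the polynomial $(A^{(\ell)}M^{-1})_{ij}\bmod u^{D+1}$, and set the $\ell$-th bit of the answer to $1$ if and only if the coefficient of $u^{\,d-1}$ in this polynomial is nonzero; finally return the vertex $s$ whose binary expansion is read off from these bits.

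For correctness I would argue the following. (i) For every neighbor $p$ of $i$, the triangle inequality gives $d_{pj}\ge d-1$, and by Lemma~\ref{lemma::sankadjoint} together with Schwartz--Zippel (exactly as in Sankowski's argument) the entry $(\mathrm{adj}\,M)_{pj}$ — equivalently $(M^{-1})_{pj}$ — has lowest degree exactly $d_{pj}$ whp, while all coefficients of degree $<d_{pj}$ vanish deterministically since they are identically zero symbolically. (ii) By the single‑successor hypothesis $s$ is the unique neighbor of $i$ with $d_{sj}=d-1$, so every other neighbor $p$ satisfies $d_{pj}\ge d$. Expanding $(A^{(\ell)}M^{-1})_{ij}=\sum_p A^{(\ell)}_{ip}(M^{-1})_{pj}$, the only summand that can contribute to degree $d-1$ is $p=s$, and it does so precisely when the $\ell$-th bit of $s$ is $1$; in that case the coefficient of $u^{\,d-1}$ equals $A_{is}$ times the (whp nonzero) lowest‑degree coefficient of $(\mathrm{adj}\,M)_{sj}$ and is therefore nonzero, $A_{is}$ being a nonzero field element, whereas when the bit is $0$ that coefficient is deterministically $0$. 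Hence each bit of $s$ is recovered correctly, and since $d-1\le D-1<D+1$ the relevant coefficient lies inside the window modulo $u^{D+1}$. A union bound over the $O(\log n)$ queried entries — absorbed into Sankowski's global whp guarantee, enlarging the field size by a polynomial factor if needed — makes the whole query correct whp. For the running time we spend one short‑distance query plus $\lceil\log_2 n\rceil$ entry queries to the $\mathcal{W}_\ell$'s, each $\tilde O(Dn^\eps)$ by Theorems~\ref{thm::dqueries} and~\ref{thm::witnessedmm}, for a total of $\tilde O(Dn^\eps)$.

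The main obstacle is step (ii): establishing that the lowest‑degree coefficient of the masked product entry is nonzero (whp) exactly when the corresponding bit of the unique witness is set. This relies on the observation that uniqueness of the successor forces every other neighbor $p$ of $i$ to have $d_{pj}\ge d$, which rules out both spurious degree‑$(d-1)$ contributions and cancellations at that degree, and on combining the deterministic vanishing of sub‑$d_{pj}$ coefficients with the Schwartz--Zippel guarantee at degree $d_{pj}$. The remaining ingredients — tracking the $\log n$ instances under updates, the determinant remark, and the degree‑window check — are routine.
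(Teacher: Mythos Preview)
Your proposal is correct and follows essentially the same approach as the paper's proof: both maintain $O(\log n)$ instances of the product data structure of Theorem~\ref{thm::witnessedmm} with $E=A^{(\ell)}$ obtained by zeroing out columns of $A$ whose index has $\ell$-th bit $0$, and both read off the bits of the unique successor by checking whether the degree-$(d_{ij}-1)$ term of the masked product at $(i,j)$ is nonzero. Your write-up is in fact slightly more careful on two points the paper glosses over---the observation that $\det M$ has constant term $1$ so one may work with $M^{-1}$ in place of $\mathrm{adj}\,M$, and the explicit use of the triangle inequality to rule out sub-$(d-1)$ contributions---but the argument is the same.
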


\begin{proof} For each $l\in[O(\log n)]$ we define a subset of the column indices $S_l\subseteq [n]$ where each index $s\in [n]$ is in $S_l$ if the $l$th bit of $s$'s bitwise description is $s_l=1$. Define $O(\log n)$ copies of the adjacency matrix $A$, where $A^{(l)}$ for $l\in[O(\log n)]$ is defined by only picking the subset of columns $S_l$, and all other columns of $A^{(l)}$ are completely null. We initialize and  maintain $O(\log n)$ copies of the data-structure of Theorem \ref{thm::witnessedmm}, where $E^{(l)}=A^{(l)}$. As described, for a given pair of vertices $i, j\in V$ if $1<d_{ij}\leq D$ then there must exist some successor $s\in [n]$ of $i$ s.t. $A_{is}=1$,  adj$M_{sj}$ has minimum degree $d_{sj}=d_{ij}-1$, and thereby the product $(A\cdot\text{adj}M)_{ij}$ has minimum degree $d_{sj} = d_{ij}-1$. However, as by assumption there is a single witness $s$, all other $s'\in [n]\setminus \{s\}$ have minimum degree of $\text{adj}M_{s'j}> d_{sj}$, and thus any copy of the data-structure $(A^{(l)}\cdot \text{adj}M)_{ij}$ where $s$ is not one of the selected columns will have minimum degree $>d_{ij}-1$. This is simply since $A_{is}^{(l)} = 0$ if the $l$th bit $s_l=0$. It follows that the minimum degree of the $l$th product at entry $(i, j)$ is $d_{ij}-1$ if the $l$th bit of $s$ is 1, and so if we query the $O(\log n)$ data-structures $(A^{(l)}\cdot \text{adj}M)_{ij}$ for each $l$, and define a sequence of bits $b_l = 1$ if  $(A^{(l)}\cdot \text{adj}M)_{ij}$ has minimum degree $d_{ij}-1$, then $(b_1, b_2\cdots  b_{O(\log n)})\equiv  s$ is exactly the bitwise description of the single witness $s$.\end{proof}

A straightforward sparsification trick allows us to reduce the case of multiple witnesses to that of a single witness.

\begin{lemma}\label{lemma::multiplesuccessor}
If there is an arbitrary number of distinct successors of $i\in V$ on any shortest path from $i$ to a given $j\in V$ of length $\leq D$, then we can find an arbitrary one of them in in $\tilde{O}(Dn^\eps)$ time.
\end{lemma}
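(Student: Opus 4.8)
The plan is to reduce the multiple‑successor case to the single‑successor case of Lemma~\ref{lemma::singlesuccessor} via the standard random‑subsampling (``isolation'') trick, the same device Seidel uses for static witness finding~\cite{seidelapsp}. Fix the query pair $i,j$ with $1<d_{ij}\le D$ (the trivial cases $d_{ij}\le 1$ are handled directly), and let $k\ge 1$ be the unknown number of distinct successors $s$, i.e.\ vertices with $A_{is}\neq 0$ and $\dist(s,j)=d_{ij}-1$. At preprocessing time I would draw, for every scale $t\in\{0,1,\dots,\lceil\log n\rceil\}$ and every repetition $r\in\{1,\dots,\Theta(\log n)\}$, an independent random column set $R_t^{(r)}\subseteq[n]$ that includes each column independently with probability $2^{-t}$. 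Combining these with the bit masks $S_l$ of Lemma~\ref{lemma::singlesuccessor}, for each triple $(l,t,r)$ let $A^{(l,t,r)}$ be the adjacency matrix restricted to the columns in $S_l\cap R_t^{(r)}$ (all other columns zeroed), and maintain one copy of the product data structure of Theorem~\ref{thm::witnessedmm} with $E=A^{(l,t,r)}$. There are only $O(\log^3 n)$ such copies, they all share the same $M=\mathbb{I}-u A$ and hence the same inverse‑maintenance matrices $T,N$, so an edge update touches each copy with at most one row update to its explicitly stored product $V^{(l,t,r)}=A^{(l,t,r)}T$ plus the shared update to $N$; the amortized update time therefore blows up only by a $\polylog$ factor and stays $\tilde{O}(Dn^{\omega(1,1,\eps)-\eps}+Dn^{1+\eps})$.

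On a successor query $(i,j)$ I would first obtain $d=d_{ij}$ from the short‑distance data structure of Theorem~\ref{thm::dqueries} (this is where $d\le D$ is used). Then, for each pair $(t,r)$, I assemble a candidate $s(t,r)$ bit by bit exactly as in Lemma~\ref{lemma::singlesuccessor}: the $l$‑th bit is $1$ iff $(A^{(l,t,r)}\cdot\mathrm{adj}\,M)_{ij}$ has minimum $u$‑degree $d-1$. Finally I \emph{verify} each candidate $s(t,r)$ by checking that $A_{i,s(t,r)}\neq 0$ and $\dist(s(t,r),j)=d-1$ (one more short‑distance query), and output the first candidate that passes, halting immediately. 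Since any candidate passing verification is by definition a genuine successor of $i$, there are no false positives, so correctness reduces entirely to showing that \emph{some} $(t,r)$ yields a valid candidate with high probability.

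For the probabilistic core: at the scale $t^\star=\lceil\log k\rceil$ each of the $k$ successors lies in $R_{t^\star}^{(r)}$ with probability $\Theta(1/k)$, so $R_{t^\star}^{(r)}$ contains \emph{exactly one} successor with constant probability (of order $1/e$); when this happens, $S_l\cap R_{t^\star}^{(r)}$ contains that unique successor $s$ iff the $l$‑th bit of $s$ is $1$, so by the argument of Lemma~\ref{lemma::singlesuccessor} the assembled string $s(t^\star,r)$ equals $s$ and passes verification. Across the $\Theta(\log n)$ independent repetitions at scale $t^\star$ this succeeds with high probability (and for $k=1$ the scale $t^\star=0$ works deterministically since $R_0^{(r)}=[n]$). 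Each query issues $O(\log^3 n)$ entry queries to Theorem~\ref{thm::witnessedmm} data structures and $O(\log^2 n)$ short‑distance queries, each costing $\tilde{O}(Dn^\eps)$, for a total of $\tilde{O}(Dn^\eps)$, as claimed.

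The step I expect to need the most care is the maintenance bookkeeping: one must genuinely exploit that all $\polylog$ masked copies share $M$, $T$ and $N$, so the only per‑copy cost on an $A$‑update is refreshing the stored products $V^{(l,t,r)}$, and that an edge update incident to endpoint $b$ changes only the single row of $A^{(l,t,r)}$ (hence of $V^{(l,t,r)}$) for those copies whose mask contains column $b$ — keeping the extra factor down to $\polylog$ rather than, say, a polynomial blow‑up. The remaining ingredients — the constant isolation probability, the harmlessness of false positives thanks to verification, and the $\polylog$ accounting of the query time — are routine once this sharing is set up correctly.
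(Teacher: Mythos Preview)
Your proposal is correct and follows essentially the same approach as the paper: both reduce to the single-successor case via random column subsampling at $O(\log n)$ geometric scales with $O(\log n)$ independent repetitions each, combine this with the $O(\log n)$ bit masks of Lemma~\ref{lemma::singlesuccessor} to get $O(\log^3 n)$ copies of the product data structure, and verify every candidate before output. The only cosmetic differences are that the paper samples a fixed-size set of $2^w$ columns rather than including each column independently with probability $2^{-t}$, and that you spell out explicitly (and correctly) how the $\polylog$ copies can share the matrices $T$ and $N$, a point the paper leaves implicit.
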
 

\begin{proof} The key construction is to maintain $O(\log^2 n)$ versions of the single-witness data-structure of Lemma \ref{lemma::singlesuccessor}. We first address the case of a known number of witnesses $c\geq 1$ for a given product $ij$ via a sparsification trick, and then show how to efficiently "guess" the number of witnesses for each product in only an additional factor of $\log n$.

If the number of witnesses $c$ of the product $(AM^{-1})_{ij}$ is known to be bounded in a range by $n/2^{w+1}\leq c\leq n/2^{w}$ for some $w = O(\log n)$, then sampling $2^w$ columns $C\subset [n]$ uniformly at random gives us a constant probability that we sample only a single witness $s\in C$. This is since
\begin{equation}
    \mathbb{P}[|W\cap C| = 1] = 2^w\frac{c}{n} (1-\frac{c}{n})^{2^w-1}\geq \frac{1}{2}(1-\frac{1}{2^{w}})^{2^w-1}\geq 1/{2e}
\end{equation}

\noindent where $W$ is the set of witnesses of the product $ij$. It follows that if we maintain $O(\log n)$ versions of the data-structure of lemma \ref{lemma::singlesuccessor}, where in each copy we sample $2^w$ columns uniformly at random and 0-out the remaining columns, then with high probability for every pair $i, j$ with $n/2^{w+1}\leq c\leq n/2^{w}$ successors there exists a copy of the data-structure with a single witness of $i$ on a shortest path to $j$.

Finally, to address the fact that the number of witnesses is unknown and can vary for each pair, we maintain $O(\log n)$ editions of the sparsification above for every $w\in \{1\cdots O(\log n)\}$, that is, for every possible bound over the number of witnesses. In this manner, whenever we query the successor for a given pair $i, j$, we query each of the $O(\log^3 n)$ data-structures of theorem \ref{thm::witnessedmm} for the entry $i, j$, and obtain a set $P$ of $O(\log^2 n)$ \textit{potential} successors ($\log n$ column samplings per witness-exponent guess). Then, we query the original distance matrix adj$M_{sj}$ for each $s\in P$, and output any one of them with distance $d_{sj}=d_{ij}-1$ (a successor) and with an edge $A_{is}=1$. By construction, we find a successor whp as we check every possible range of the number of successors (parametrized by $w$), and we always verify the output solution. This takes overall $\tilde{O}(Dn^\eps)$ to find a successor on any shortest path $i,j$ of length $\leq D$.\end{proof}

Now that we have shown how to support successor queries, we formalize our statement on short path finding.
\begin{theorem} 
[Successor Queries and Short Paths]\label{thm:restateDpathReporting} For any parameters $\eps\in (0, 1), D<n$, and an unweighted graph $G=(V, E)$ subject to edge insertions and deletions, there is a dynamic, randomized data-structure $\mathcal{P}^D_\eps$ that supports the following operations: \begin{itemize}
    \item Ins/Delete$(e)$ \textit{Inserts/Deletes edge $e\in E$ in worst case time $O\big(Dn^{\omega(1, 1, \eps)-\eps}+Dn^{1+\eps})$}.
    \item Short Distance/Successor Query$(i, j)$ \textit{Returns the distance $d \leq D$ and a successor on any short, shortest $i\rightarrow j$ path in worst case time $O(Dn^\eps)$} and is correct whp.
    \item Short Path Queries$(i, j)$ \textit{Returns a shortest $i\rightarrow j$ path of length $d \leq D$ by repeatedly finding successors in worst case $O(dDn^\eps)$ time, and is correct whp.}
\end{itemize}
with pre-processing time $O(Dn^2)$ on empty graphs and $O(Dn^\omega)$ otherwise.
\end{theorem}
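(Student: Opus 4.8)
The plan is to obtain $\mathcal{P}^D_\eps$ by running Sankowski's short-distance data structure of \Cref{thm::dqueries} side by side with the collection of product-maintenance data structures of \Cref{thm::witnessedmm}, wired exactly as in the proofs of \Cref{lemma::singlesuccessor,lemma::multiplesuccessor}. Concretely, at preprocessing I fix once and for all uniformly random values in a field $\mathbb{F}$ of size $n^{\Theta(1)}$ for the edge variables of the symbolic adjacency matrix of \Cref{lemma::sankadjoint}, set $M = \mathbb{I} - uA$ with $A$ the resulting integer adjacency matrix, and initialize (i) the inverse-and-determinant structure of \Cref{cor::polyminverse} on $M$ (this is \Cref{thm::dqueries}), and (ii) for every witness-range exponent $w\in\{1,\dots,\lceil\log n\rceil\}$, each of $O(\log n)$ independent column samplings $C_{w,r}$, and each bit position $l\in\{1,\dots,\lceil\log n\rceil\}$, a copy of the \Cref{thm::witnessedmm} structure maintaining $E^{(w,r,l)}\cdot M^{-1}$, where $E^{(w,r,l)}$ is $A$ with every column killed except those that survive the sampling $C_{w,r}$ and whose index has $l$-th bit equal to $1$. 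This is $O(\log^3 n)$ copies in total. On an empty initial graph each structure is initialized by writing out explicit matrices in $O(Dn^2)$ time, and on a non-empty graph the initial inverse is computed in $\tO(Dn^\omega)$ via the iterated-squaring identity of \Cref{cor::polyminverse}; since there are only $\polylog(n)$ structures, total preprocessing is $\tO(Dn^2)$ and $\tO(Dn^\omega)$ respectively, as claimed.

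For an edge update $e=(i,j)$ I translate it into a single entry update of $A$ (setting $A_{ij}$ to $0$ on a deletion, or to a fresh uniform field element on an insertion), equivalently an entry update of $M=\mathbb{I}-uA$, which stays a unit in $(\mathbb{F}[u]/\langle u^{D+1}\rangle)^{n\times n}$ since its $u$-free part is $\mathbb{I}$. I forward this as an \textsc{UpdateA} call to the structure of \Cref{thm::dqueries} and to each of the $O(\log^3 n)$ product structures; note that changing column $j$ of $A$ changes column $j$ of every masked $E^{(w,r,l)}$, so it is again a single \textsc{UpdateA} per structure. By \Cref{thm::witnessedmm} each such call costs $O(Dn^{\omega(1,1,\eps)-\eps}+Dn^{1+\eps})$, hence the update time is $\tO(Dn^{\omega(1,1,\eps)-\eps}+Dn^{1+\eps})$, with the $\polylog(n)$ factor absorbed; choosing $\eps=\eps^*$ makes the two exponents coincide, giving $\tO(Dn^{1+\eps^*})$.

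For a query $(i,j)$ I first read $\dist(i,j)$ from the \Cref{thm::dqueries} structure as the minimum degree of $\mathrm{adj}(M)_{ij}$, which is correct whp by \Cref{lemma::sankadjoint} together with Schwartz--Zippel over the field elements fixed at preprocessing (a union bound over the polynomially many graphs in the update sequence gives correctness throughout); if this exceeds $D$ I report failure. Otherwise let $d=\dist(i,j)$: if $d=0$ nothing is reported, if $d=1$ the successor is $j$ (and $A_{ij}\neq 0$ confirms it), and if $2\le d\le D$ I run exactly the procedure of \Cref{lemma::multiplesuccessor}: query entry $(i,j)$ of all $O(\log^3 n)$ product structures, form the $O(\log^2 n)$ candidates obtained by decoding, for each $(w,r)$, the bits $l$ for which $(E^{(w,r,l)}M^{-1})_{ij}$ has minimum degree $d-1$, and verify each candidate $s$ by checking $A_{is}\neq 0$ and $\dist(s,j)=d-1$ via an extra distance query, outputting any verified one. \Cref{lemma::multiplesuccessor} guarantees that whp some candidate is a genuine successor, the verification step ensures the output is always legitimate, and the whole query costs $\tO(Dn^\eps)$.

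Finally, a short path query $(i,j)$ with $\dist(i,j)=d\le D$ is answered by setting $v_0=i$ and repeatedly letting $v_{t+1}$ be a successor of $v_t$ toward $j$, stopping at $v_d=j$: each step uses an edge of $G$ and satisfies $\dist(v_{t+1},j)=\dist(v_t,j)-1$, so after $d$ steps we reach $j$ and $v_0,\dots,v_d$ is an $i$-$j$ walk of length $d=\dist(i,j)$, i.e.\ a shortest path. This makes $d\le D\le n$ successor queries, each correct whp and costing $\tO(Dn^\eps)$, so a union bound gives overall correctness whp and total time $\tO(dDn^\eps)$. Since the genuinely algebraic work is already contained in \Cref{thm::witnessedmm} and \Cref{lemma::singlesuccessor,lemma::multiplesuccessor}, I expect the only real subtleties here to be (a) verifying that greedily chaining \emph{verified} successors terminates at $j$ with a shortest path rather than looping (which holds because the distance to $j$ strictly decreases), and (b) bookkeeping the $\polylog(n)$ structure counts and failure probabilities so they do not inflate the stated bounds --- both routine given the balancing choice $\eps=\eps^*$ and standard probability amplification.
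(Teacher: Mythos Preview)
Your proposal is correct and follows exactly the paper's approach: maintain $\mathcal{D}_\eps^D$ alongside the $O(\log^3 n)$ product structures of \Cref{lemma::multiplesuccessor}, update all of them on each edge change, and answer path queries by iterating successor queries. One small slip: an edge update to $A$ requires both an \textsc{UpdateA} call (to change $M$) \emph{and} an \textsc{UpdateE} call (to change the masked copy $E^{(w,r,l)}$ whenever column $j$ survives that mask), as the paper explicitly notes; this is harmless for the stated runtime since \textsc{UpdateE} costs only $O(Dn)$.
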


\begin{proof} We maintain the polynomial matrix inverse data-structure $\mathcal{D}_\eps^D$ of Theorem \ref{thm::dqueries}, in addition to the data-structure of Lemma \ref{lemma::multiplesuccessor}. At each edge update, we update each of the polylog copies of the data-structure of Theorem \ref{thm::witnessedmm}, including at most a single entry update to $E$ and to $A$ per copy, per update. To support Short Path Queries of a given pair of vertices $i, j$ of distance $d\leq D$, we simply sequentially query the successor $s$ of $i$, and recurse on $i\leftarrow s$ while maintaining the path. This takes time $\tilde{O}(d\times Dn^\eps)$.\end{proof}

In the next sections, we apply $\mathcal{P}_\eps^D$ as a black box to the problems of Exact and Approximate Dynamic APSP. 

\subsection{Exact Fully Dynamic Unweighted APSP} \label{subsec::exactapsp}
 In this section we explain how to use our short path finding black box $\mathcal{P}_\eps^D$ of Theorem \ref{thm:DpathReporting} to construct the first subquadratic time update and path query algorithm for Exact Dynamic APSP. $\mathcal{P}_\eps^D$ allows us to query short length $d\leq D$ paths in $O(dDn^\eps)$ time, such that it only remains to show how to construct long, up to linear length paths. The construction follows that of the all pairs shortest distances algorithm of \cite{sankowski-thesis}. We can use the \textit{path decomposition technique} to sample a hitting set $H\subset V$ of size $\tilde{O}(n/D)$ that whp hits every length $D/2$ path in $G$ by Fact \ref{fact::hittingset}. In this manner, any $i$ to $j$ shortest path can be decomposed into consecutive subpaths $l_1\cdots l_k\cdots l_{O(n/D)}$ of $D/2$ vertices, s.t. whp there is a vertex $u_k\in H$ in the hitting set in each one of those subpaths $u_k\in l_k\forall k$. It follows that there exists a subpath decomposition $i\rightarrow u_1, u_1\rightarrow u_2\cdots \rightarrow j$ of the shortest path from $i$ to $j$ where whp the distance between each adjacent pair $(u_k,u_{k+1})$ is "short", that is, less than $D$, and can be queried efficiently. 
 
 This effectively reduces the question to finding the hitting set vertices in a path from arbitrary $i$ to $j.$ At every edge update, we use Corollary \ref{cor::submaintenance} to explicitly maintain the $H\times H$ submatrix of the inverse in $O(n^2/D)$ time, s.t. we can access the $H\times H$ matrix of distances $D^D_{H,H}$ between any pair of vertices in $H$ if they are $\leq D$ apart in $O(1)$ time. We can use these current distances to construct the induced, weighted graph $G'=(H, E', D^D_{H, H})$, and compute the APSP via Floyd-Warshall's algorithm on the induced $G'$ in $O(|H|^3) = O(n^3/D^3)$ time. The key idea in constructing long paths is that the hitting set lemma guarantees that on any long shortest path, each pair of hitting set vertices on said path are at most $D$ nodes away whp and thus between vertices in $H$ the shortest distance in $G'$ is also the shortest distance in $G$. Floyd-Warshall's algorithm therefore not only with high probability gives us the correct distances $D_{H,H}$ between any $p, q\in H$, as it also maintains the paths $p, u_1\cdots, u_k,\cdots q$ of hitting set vertices in $G'$ between any $p, q\in H$. 
 
 At each path query between arbitrary $i, j\in V$, we first check if the distance is short $d_{ij}\leq D$ using $\mathcal{P}_\eps^D$ of theorem \ref{thm:DpathReporting}, and if not, we compute the distance between $i$ and $j$ by performing the following minimization over the last and first hitting set vertices $p, q$ in the path $i\rightarrow p\cdots q\rightarrow j$:
 \begin{equation}
     D_{ij} = \min\{D^D_{ij}, \min_{p, q\in H} \big(D^D_{ip}+D_{H, H}(p, q)+D^D_{qj}\big)\}
 \end{equation}
 
\noindent where the terms in the inner minimization correspond to the distances $i\rightarrow p \in G$, $p\rightsquigarrow q \in G'$ and $q\rightarrow j \in G$ respectively. We note that it takes time $O(|H|^2) = \tilde{O}(n^2/D^2)$ to perform the minimization, which requires both a row and a column query to $\mathcal{D}^D_\eps$, in particular to the distances $(i, H)$ and $(H, j)$, s.t. the total runtime of the distance query is $O(|H|^2 + |H|Dn^{\eps}) = \tilde{O}(n^2/D^2+n^{1+\eps})$.

 Note additionally that finding the first and last hitting set vertices $p, q$ in a $i, j$ path also gives us \textit{all} the hitting set vertices $p, u_1, u_2\cdots q\in H$ on the path, due to the successor information in $G'$. It follows now that whp each pair of adjacent $u_k, u_{k+1}$ are less than $2\times D/2=D$ nodes apart, and thus we can perform short path queries between all the adjacent pairs $u_k, u_{k+1}\in H$ using $\mathcal{P}_\eps^D$ in $\tilde{O}(Dn^\eps\sum_k d_{u_k, u_{k+1}}) = \tilde{O}(Dn^{1+\eps})$ time; this is equivalent to the runtime of $O(n)$ successor queries. Overall, the long path query takes time 
\begin{equation}
    \tilde{O}\bigg(Dn^{1+\eps}+n^2/D^2\bigg)
\end{equation}

Putting all the ingredients together gives us a worst case update time of 
\begin{equation}
     O\big(Dn^{\omega(1, 1, \eps)-\eps}+Dn^{1+\eps}+n^3/D^3+n^2/D\big)
\end{equation}
due to the update time of theorem $\ref{thm:DpathReporting}$, the Floyd-Warshall step, and corollary \ref{cor::submaintenance} respectively. Finally, the initialization time is dominated by that of initializing Theorem $\ref{thm:DpathReporting}$. We conclude this description with a formal statement on the main result of this subsection:

\begin{theorem}
[Subquadratic, Unweighted APSP] \textit{For any $\eps\in (0, 1), D\in [n]$, and an unweighted graph $G=(V, E)$ subject to edge insertions and deletions, there is a dynamic, randomized data-structure that supports the following operations:}
\begin{itemize}
    \item Ins/Del$(e)$ \textit{Inserts/Deletes edge $e\in E$ in worst case time}
    \begin{equation}
        O\big(Dn^{\omega(1, 1, \eps)-\eps}+Dn^{1+\eps}+n^3/D^3+n^2/D\big)
    \end{equation}
    \item Distance Query$(i, j)$ \textit{Returns the distance from $i$ to $j$ whp, in worst case time}
    \begin{equation}
        O(n^2/D^2+n^{1+\eps})
    \end{equation}
    \item Path Query$(i, j)$ \textit{Returns a shortest path from $i$ to $j$ whp, in worst case time}
    \begin{equation}
        O(n^2/D^2+Dn^{1+\eps})
    \end{equation}
\end{itemize}
\end{theorem}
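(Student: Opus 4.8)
The plan is to use the short-path data structure $\mathcal{P}^D_\eps$ of Theorem~\ref{thm:DpathReporting} essentially as a black box and wrap it in a hitting-set / path-decomposition layer, following Sankowski's distance-only APSP scheme but now also tracking successor pointers. First I would sample, once at preprocessing (which suffices since we work against an oblivious adversary), a set $H\subseteq V$ with $|H|=\tilde O(n/D)$ uniformly at random. By the Hitting Set Lemma (Fact~\ref{fact::hittingset}), $H$ intersects every simple path on $D/2$ vertices whp, so on any $i$--$j$ shortest path of length exceeding $D$ the $H$-vertices lying on it are consecutive within distance at most $D$; this is the property that licenses ``short'' queries between them.

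Under each edge update I would maintain three things. (i) The full data structure $\mathcal{P}^D_\eps$ (equivalently $\mathcal{D}^D_\eps$ of Theorem~\ref{thm::dqueries} augmented with the product-maintenance machinery), at cost $O(Dn^{\omega(1,1,\eps)-\eps}+Dn^{1+\eps})$. (ii) The $H\times H$ submatrix $D^D_{H,H}$ of pairwise short distances among hitting-set vertices via Corollary~\ref{cor::submaintenance}, at cost $O(D\,|H|^2)=O(n^2/D)$. (iii) Immediately after the update, the APSP of the induced weighted graph $G'=(H,D^D_{H,H})$, recomputed by Floyd--Warshall in $O(|H|^3)=O(n^3/D^3)$ time, which yields both the distances $D_{H,H}(\cdot,\cdot)$ and successor pointers along $G'$-shortest paths. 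Summing the three contributions gives the claimed worst-case update bound $O(Dn^{\omega(1,1,\eps)-\eps}+Dn^{1+\eps}+n^3/D^3+n^2/D)$, and the initialization is dominated by that of $\mathcal{P}^D_\eps$.

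For a distance query $(i,j)$ I would first ask $\mathcal{P}^D_\eps$ whether $\dist(i,j)\le D$ and, if so, return that value; otherwise I compute $\min_{p,q\in H}\big(D^D_{ip}+D_{H,H}(p,q)+D^D_{qj}\big)$, where the two endpoint terms come from one row and one column query to $\mathcal{D}^D_\eps$ and the middle term is read off $G'$ in $O(1)$. Correctness whp holds because the first and last $H$-vertices on the true shortest path certify this value, and between $H$-vertices $G'$-distances equal $G$-distances on that path (again by the spacing guarantee). The cost is $O(|H|^2+|H|Dn^\eps)=\tilde O(n^2/D^2+n^{1+\eps})$. For a path query I additionally recover, from the Floyd--Warshall successor pointers, the whole ordered sequence $p=u_0,u_1,\dots,u_k=q$ of $H$-vertices on the path; whp each consecutive $u_t,u_{t+1}$ is at distance $<D$, so I invoke the Short Path Query of $\mathcal{P}^D_\eps$ on every consecutive pair, plus on the two end segments $i\rightsquigarrow p$ and $q\rightsquigarrow j$, and concatenate. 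The total length of all these segments is $O(n)$, so this adds $\tilde O(Dn^{1+\eps})$, giving path-query time $\tilde O(n^2/D^2+Dn^{1+\eps})$.

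The \textbf{main obstacle} is the whp correctness of the long-path handling. One must argue (a) that the once-chosen $H$ actually hits every relevant $D/2$-subpath of the optimal path, so that no consecutive pair of selected $H$-vertices exceeds distance $D$ and the short-path queries do not silently fail; and (b) that the \emph{internal} randomness of $\mathcal{P}^D_\eps$---Schwartz--Zippel for the polynomial distance encoding of Lemma~\ref{lemma::sankadjoint}, and the column-sampling witnesses in Lemmas~\ref{lemma::singlesuccessor}--\ref{lemma::multiplesuccessor}---is simultaneously correct for all $\tilde O(n/D)$ pairs touched during a single reconstruction, which is dispatched by a union bound after boosting field size / repetition count by a $\polylog$ factor. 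A minor bookkeeping point is to note that distance-only queries skip step (iii)'s stitching, so their running time is $\tilde O(n^2/D^2+n^{1+\eps})$ rather than the path-query bound.
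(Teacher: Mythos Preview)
Your proposal is correct and follows essentially the same approach as the paper: wrap $\mathcal{P}^D_\eps$ with a random hitting set $H$ of size $\tilde O(n/D)$, maintain the $H\times H$ short-distance submatrix via Corollary~\ref{cor::submaintenance}, rebuild Floyd--Warshall on $G'=(H,D^D_{H,H})$ at each update, and answer long queries by the same minimization over first/last $H$-vertices $p,q$ followed by stitching the $G'$-successor sequence with short-path queries. Your cost accounting and the identified correctness obstacle (hitting-set spacing plus union-bounding the internal randomness of $\mathcal{P}^D_\eps$) match the paper's treatment.
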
 

The values of $D$ and $\eps$ above establish a trade-off between update and path query runtimes, and we allot the remainder of this section for a discussion on the sub and super quadratic algorithms we can construct with this trade-off. Three cases worth mentioning are that of minimal update time, minimal path query time, and minimal path query subject to subquadratic update time.

\textit{Minimal Update Time} If we optimize first over $\eps$ in the update time, picking the optimum $\eps = \eps^*\approx .529$, and then optimizing over $D$ s.t. $D=n^\mu$ and $\mu = \frac{2-\eps^*}{4}\approx .368$, then we obtain the first fully dynamic exact APSP algorithm with subquadratic time update and path query, in $O(n^{1.897})$ time for both operations. 

\textit{Minimal Path Query Time} we can establish super quadratic update time but efficient path query algorithms by first optimizing over $D$ in the path query expression. The path query runtime is optimized if we pick $D=n^{\frac{1-\eps}{3}}$, s.t. the runtime is $O(n^{\frac{4+2\eps}{3}})$. The update time for any $0<\eps\leq .313$ is now $O(n^2(n^\eps+n^{\frac{1-4\eps}{3}}))$, which itself establishes a trade-off over $\eps$, and allows for $O(n^{2.332})$ update time and $O(n^{1.334})$ path query.

\textit{Minimal Path Query subject to Subquadratic Updates} We can also establish \textit{slightly} subquadratic update time algorithms, but with efficient path query times. It suffices to first pick the minimum $D$ that allows for a subquadratic update time, $D=n^{1/3+\delta/3}\forall \delta>0$ s.t. updates are performed in $O(n^{1/3+\delta/3+\omega(1, 1, \eps)-\eps}+n^{2-\delta})$, and then finding the minimum $\eps$ s.t. the updates are still subquadratic. By numerical optimization we find $\eps'\approx .334$, s.t. the runtime of path query is $O(n^{1.667})$, while the update time is $O(n^{1.999})$.

\section{Algebraic fully dynamic spanner algorithm}
\label{subsec:AlgebraicSpanner}

The goal of this section is to prove the following theorem.

\begin{theorem}
\label{thm:AlgebraicSpanner}
For any constant $0<\epsilon \le 1$, given an undirected, unweighted fully dynamic graph, 
there is an algorithm to maintain a $(1+\epsilon, n^{o(1)})$-spanner of size $n^{1+o(1)}$ with preprocessing time $\tilde{O}(n^{\omega})$ (or $\tilde{O}(n^2)$ if the input graph is empty) and worst update time $n^{1+\kappa^*+o(1)} = O(n^{1.529})$ with high probability against an oblivious adversary.
\end{theorem}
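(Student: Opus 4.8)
The plan is to realize the hierarchical scheme sketched in the technical overview. First I would set $k=\sqrt{\log n}$ and sample nested sets $V=A_0\supseteq A_1\supseteq\cdots\supseteq A_k\supseteq A_{k+1}=\emptyset$, each vertex joining $A_i$ independently with probability $\min\{1,n^{-i/k}\log n\}$ (resampling $O(1)$ times in expectation so that $A_{k+1}=\emptyset$), and in parallel maintain a subgraph $O(\log n)$-spanner $\tilde G$ of $G$ with $\tilde O(n)$ edges and $n^{o(1)}$ update time via a known dynamic spanner routine (\cite{forster2019dynamic}). I would call $a\in A_\ell\setminus A_{\ell+1}$ \emph{active} when no vertex of a strictly higher level lies within distance $\Theta(\epsilon^{-(\ell+1)})$ of $a$ in $\tilde G$ -- an $n^{o(1)}$ radius since $\epsilon$ is constant and $\ell\le k$. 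The maintained spanner $H$ would be the union of $\tilde G$ with, for every $\ell$ and every active $a\in A_\ell\setminus A_{\ell+1}$, the subtree of $a$'s truncated shortest-path tree in $\tilde G$ spanning the active vertices of $A_\ell\setminus A_{\ell+1}$ within distance $\Theta(\epsilon^{-(\ell+1)})$ of $a$.

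For the size bound I would invoke the Hitting Set Lemma (\Cref{fact::hittingset}) with the sample $A_{\ell+1}$: w.h.p.\ a radius-$\Theta(\epsilon^{-(\ell+1)})$ ball with more than $\tilde O(n^{(\ell+1)/k})$ vertices contains a vertex of $A_{\ell+1}$ and hence cannot be centered at an active vertex, so each active ball (and its contributed subtree) has $\tilde O(n^{(\ell+1)/k})$ edges; since $|A_\ell\setminus A_{\ell+1}|=\tilde O(n^{1-\ell/k})$ w.h.p., level $\ell$ contributes $\tilde O(n^{1+1/k})$ edges, and summing over $\ell\le k$ plus $|\tilde G|=\tilde O(n)$ gives $|H|=\tilde O(n^{1+1/k})=n^{1+o(1)}$.

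Next I would handle maintenance, fixing the threshold $\gamma=\lfloor\kappa^*k\rfloor$ and the parameter $D=\Theta(\epsilon^{-(k+1)})=n^{o(1)}$. For each level $\ell<\gamma$ and each $a\in A_\ell\setminus A_{\ell+1}$ I would, after every edge update, run a BFS in $\tilde G$ from $a$ truncated both at depth $\Theta(\epsilon^{-(\ell+1)})$ and at $\tilde O(n^{(\ell+1)/k})$ discovered vertices; hitting the vertex cutoff certifies inactivity w.h.p.\ by the argument above, and otherwise the ball is small, so deciding activeness and extracting the subtree costs $\tilde O(n^{2(\ell+1)/k})$ per root, hence $\tilde O(n^{1-\ell/k}\cdot n^{2(\ell+1)/k})=\tilde O(n^{1+(\ell+2)/k})$ per level and $n^{1+\kappa^*+o(1)}$ in total over $\ell<\gamma$ (the largest term, at $\ell=\gamma-1$, is $n^{1+(\gamma+1)/k}\le n^{1+\kappa^*+o(1)}$). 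For levels $\ell\ge\gamma$ I would instead run the data structure $\mathcal P^D_{\kappa^*}$ of \Cref{thm:DpathReporting} on $\tilde G$ (which changes $n^{o(1)}$ times per update to $G$): since $\kappa^*$ balances $\omega(1,1,\kappa^*)-\kappa^*=1+\kappa^*$ and $D=n^{o(1)}$, each update costs $n^{1+\kappa^*+o(1)}$; \Cref{cor::submaintenance} with index set $A_\gamma$ (of size $n^{1-\gamma/k}$, where $2(1-\gamma/k)=2(1-\kappa^*)<1$) maintains all pairwise distances $\le D$ inside $A_\gamma\supseteq A_{\gamma+1}\supseteq\cdots$ in $n^{o(1)}$ extra time per update, which decides activeness on those levels; and a short-path query of $\mathcal P^D_{\kappa^*}$ for each of the at most $|A_\gamma|^2=n^{2(1-\kappa^*)}$ relevant pairs extracts the corresponding edges of $H$ in $n^{2-\kappa^*+o(1)}=o(n^{1+\kappa^*})$ total time. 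Preprocessing is dominated by initializing $\mathcal P^D_{\kappa^*}$: $O(Dn^\omega)=\tilde O(n^\omega)$ in general, $O(Dn^2)=\tilde O(n^2)$ on an empty graph. Summing, the worst-case update time is $n^{1+\kappa^*+o(1)}=O(n^{1.529})$.

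The hardest part will be the stretch bound $\dist_H(s,t)\le(1+\epsilon)\dist_G(s,t)+n^{o(1)}$. I would walk along a $G$-shortest path $\pi_{s,t}$ and repeatedly "climb": at the current point $x$, take the largest level $i$ admitting an active $a\in A_i$ within roughly $\epsilon^{-i}$ of $x$ (this $i$ exists, since $x$'s level works when $x$ is active, inactivity forces a strictly higher-level neighbour, and the climb must terminate because every vertex of $A_k$ is vacuously active; the accumulated climb distance is geometric and stays $O(\epsilon^{-i})$). Then I would let $w$ be the farthest vertex of $\pi_{s,t}$ that is within roughly $\epsilon^{-(i+1)}$ of $x$ along $\pi$ and within roughly $\epsilon^{-i}$ of some active $a'\in A_i$ ($w=x,a'=a$ is legal), observe that $a,a'$ lie within roughly $\epsilon^{-(i+1)}$ of each other so a shortest $a$-$a'$ path is in $H$ by construction, and route $x\rightsquigarrow a$ and $a'\rightsquigarrow w$ through $\tilde G\subseteq H$; this yields an $H$-route from $x$ to $w$ of overhead at most $n^{o(1)}\cdot\epsilon^{-i}$ over $\dist_G(x,w)$. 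If $\dist_G(x,w)$ is a constant fraction of $\epsilon^{-(i+1)}$, this overhead is an $O(\epsilon)$-fraction of $\dist_G(x,w)$ and is charged multiplicatively; otherwise maximality of $w$ forces the next length-$\Theta(\epsilon^{-(i+1)})$ stretch of $\pi_{s,t}$ to stay far from every active level-$i$ vertex, so the argument restarts at a strictly smaller level with geometrically smaller overhead. The technical obstacle -- where the real work lies -- is to organize this into a clean induction on the current level and on the number of remaining segments, and to choose the radii (absorbing the $\text{polylog}$ slack that enters because $\tilde G$ is only an $O(\log n)$-spanner) so that the additive overheads of all levels, together with that slack, collapse into a single $(1+\epsilon)$ multiplicative factor plus one $n^{o(1)}$ additive term. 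Finally, since the adversary is oblivious the samples $A_i$ and the random field elements inside $\mathcal P^D_{\kappa^*}$ remain independent of the update sequence, so all the w.h.p.\ guarantees persist throughout the execution.
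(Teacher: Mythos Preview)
There is a genuine gap: every path you add to $H$ beyond $\tilde G$ is a shortest path \emph{in $\tilde G$} --- for low levels via ``a BFS in $\tilde G$'', for high levels via ``$\mathcal P^D_{\kappa^*}$ \dots\ on $\tilde G$''.  But $\tilde G$ is already a subgraph of $H$, so these additions contribute no new edges and your $H$ is literally $\tilde G$, an $O(\log n)$-spanner only.  Your stretch argument hides this: when you write that the $H$-route from $x$ to $w$ has overhead ``at most $n^{o(1)}\cdot\epsilon^{-i}$ over $\dist_G(x,w)$'', you are implicitly using $\dist_H(a,a')=\dist_G(a,a')$ for the middle leg.  With a $\tilde G$-path that leg has length up to $O(\log n)\cdot\dist_G(a,a')$, and this $O(\log n)$ factor multiplies $\dist_G(x,w)$ itself, not just the side-trips, so it cannot be pushed into the additive term.

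The missing idea is that the $a$--$a'$ paths must be \emph{$G$-shortest paths}.  Run $\mathcal P^D_{\kappa^*}$ on $G$ (one edge change per update, same cost), and for $\ell<\gamma$ do the BFS in $G$; activeness, by contrast, can stay defined and detected in $\tilde G$ for all levels, since $\tilde G$ is sparse.  The low-level $G$-BFS then needs one extra observation to keep your $\tilde O(n^{2(\ell+1)/k})$ bound: because $\tilde G$ is an $O(\log n)$-spanner, $B_G(a,r)\subseteq B_{\tilde G}(a,r\log n)$, so the hitting-set bound on $\tilde G$-balls of active vertices transfers to $G$-balls of a $\log n$-times smaller radius, bounding both the number of explored vertices and their $G$-degrees by $n^{(\ell+1)/k}$.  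This is exactly the paper's route; it also motivates setting the base of the radii to $b=\Theta((\log n)/\epsilon')$ with $\epsilon'=\Theta(\epsilon/k)$ rather than $1/\epsilon$, so that the $O(\log n)$ detours through $\tilde G$ on the side-trips $x\to a$ and $a'\to w$ really do become an $O(\epsilon)$-fraction of the next level's width --- the ``polylog slack'' you flag is not just a tidiness issue but must be engineered into the radii from the start to get $(1+\epsilon)$ rather than $(1+\epsilon\cdot\mathrm{polylog}\,n)$.
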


\paragraph{Internal Data Structures.} In order to obtain this result, we use two internal data structures. We first use a data structure $\mathcal{A}$ that maintains the distance matrix and corresponding shortest paths in a graph $G$, restricted to a set of sources $S$. The algorithm is randomized and uses fast matrix multiplication internally. It is a corollary of Theorem~\ref{thm:DpathReporting}, and follows simply by querying the data structure of Theorem~\ref{thm:DpathReporting} for every pairwise distance in $S$.

\begin{restatable}{corollary}{corSankowskiPathReporting}
\label{cor:SankowskiPathReporting}
Let $\eps$ be such that $0<\eps\leq \eps_{*}< 0.529$ and let $D$ be a distance parameter in $[1,n^\eps]$.
Suppose we are given a set of vertices $S$. Then the data structure from Theorem~\ref{thm:DpathReporting} can maintain for any arbitrarily small $\delta>0$, for all pairs of nodes $s,t\in S$, a shortest path between $s$ and $t$ as long as $\dist(s,t)\leq D$ (and can check if $\dist(s,t)>D$), with high probability against an oblivious adversary, with initialization time $\tilde{O}(Dn^\omega)$ (for a nonempty initial graph) or $\tilde{O}(n^2)$ (for an empty initial graph) and worst-case update time $\tilde{O}(Dn^{\omega(1,1,\eps)-\eps}+|S|^2 D^2n^{\eps}).$

If the depth threshold $D=n^{o(1)}$, the size of the subset $|S| = O(\sqrt{n})$ and we pick the parameter $\eps = \eps^*\approx .529$ to minimize the update time as in \ref{thm:DpathReporting}, the update time becomes $n^{1+\eps^*+o(1)} = O(n^{1.529})$.
\end{restatable}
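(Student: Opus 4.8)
The plan is to instantiate the data structure $\mathcal{P}^D_\eps$ of Theorem~\ref{thm:DpathReporting} on the graph $G$ with the given parameters $\eps,D$, maintain it under the edge updates to $G$, and after each update simply re-derive all pairwise shortest paths among the vertices of $S$ by querying it. First I would run the Ins/Delete operations of $\mathcal{P}^D_\eps$ to keep it synchronized with $G$; this costs $\tilde{O}(Dn^{\omega(1,1,\eps)-\eps}+Dn^{1+\eps})$ worst-case time per edge update, and its initialization is the one from Theorem~\ref{thm:DpathReporting} (which on an empty initial graph is $\tilde{O}(n^2)$, as the all-zero adjacency matrix has trivial inverse $\mathbb{I}$, and $\tilde{O}(Dn^\omega)$ otherwise). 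Then, after every update I would iterate over all $|S|^2$ pairs $(s,t)\in S\times S$ and invoke the Short Path Query of $\mathcal{P}^D_\eps$ on each: by Theorem~\ref{thm:DpathReporting} this either returns a shortest $s$--$t$ path of length $d=\dist(s,t)\le D$ in $\tilde{O}(dDn^\eps)=\tilde{O}(D^2n^\eps)$ time, or certifies $\dist(s,t)>D$ (and the short distance itself is returned by the same call). Caching these $|S|^2$ answers is exactly the object the corollary promises to maintain.

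For the running time, one update costs the update time of $\mathcal{P}^D_\eps$ plus the $|S|^2$ path queries, i.e.\ $\tilde{O}\big(Dn^{\omega(1,1,\eps)-\eps}+Dn^{1+\eps}+|S|^2D^2n^\eps\big)$. Since $\eps\le\eps_*$, the defining relation of $\eps_*$ gives $\omega(1,1,\eps)\ge 1+2\eps$, hence $\omega(1,1,\eps)-\eps\ge 1+\eps$, so the $Dn^{1+\eps}$ term is absorbed into $Dn^{\omega(1,1,\eps)-\eps}$ and we get the claimed bound $\tilde{O}(Dn^{\omega(1,1,\eps)-\eps}+|S|^2D^2n^\eps)$ (the extra $n^\delta$ slack, and the $\tilde{O}$, come from the $\epsilon$-slack built into the definition of the rectangular-multiplication exponent, which is why the statement quantifies over $\delta$). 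For the special case I would plug in $D=n^{o(1)}$, $|S|=O(\sqrt n)$, and $\eps=\eps^*$, for which $\omega(1,1,\eps^*)-\eps^*=1+\eps^*$: the first term becomes $n^{1+\eps^*+o(1)}$ and the second becomes $n\cdot n^{o(1)}\cdot n^{\eps^*}=n^{1+\eps^*+o(1)}$, giving worst-case update time $n^{1+\eps^*+o(1)}=O(n^{1.529})$.

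For correctness and the probabilistic guarantee, each individual Short Path / Short Distance Query of $\mathcal{P}^D_\eps$ is correct with high probability, so a union bound over the polynomially many pair-queries performed over a polynomial-length update sequence shows all of them are simultaneously correct with high probability. The randomness used by Theorem~\ref{thm:DpathReporting} (random field elements fixed once at initialization, in the spirit of Schwartz--Zippel / polynomial identity testing) is not leaked by the query answers, so the construction works against an \emph{oblivious} adversary, and this property is inherited here without change.

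I do not expect a genuine obstacle: the argument is pure bookkeeping on top of Theorem~\ref{thm:DpathReporting}. The only points requiring (minor) care are making the per-query failure probability a small enough inverse polynomial that the union bound over all queries in the update sequence still gives failure probability $n^{-\Omega(1)}$, and the routine arithmetic that collapses the three-term running time into the two terms in the statement and into $O(n^{1.529})$ in the special case.
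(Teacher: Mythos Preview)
Your proposal is correct and matches the paper's approach exactly: the paper states that the corollary ``follows simply by querying the data structure of Theorem~\ref{thm:DpathReporting} for every pairwise distance in $S$,'' which is precisely what you do. Your added justification that the $Dn^{1+\eps}$ term is absorbed (via $\omega(1,1,\eps)\ge 1+2\eps$ for $\eps\le\eps_*$) and your verification of the $n^{1+\eps^*+o(1)}$ special case fill in details the paper leaves implicit.
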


Further, we use an algorithm $\mathcal{B}$ to maintain a spanner of $G$ with high multiplicative error efficiently.

\begin{theorem}[see \cite{forster2019dynamic}, Theorem 1.4]
Given an unweighted, undirected fully dynamic graph, there exists an algorithm $\mathcal{B}$ that maintains a spanner with multiplicative stretch $\log n$ and expected size $O(n \log n)$ that has expected update time $O(\log^3 n)$ against an oblivious adversary.
\end{theorem}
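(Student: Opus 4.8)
The plan is to prove this via the dynamic low‑diameter decomposition (LDD) paradigm of \cite{forster2019dynamic}. Recall that a $\beta$‑LDD of a graph $\Gamma$ (for $\beta\in(0,1)$) is a random partition of $V(\Gamma)$ into clusters such that (i) every cluster $C$ has strong diameter $O(\beta^{-1}\log n)$ inside $\Gamma[C]$ and (ii) each edge has both endpoints in a common cluster with probability at least $1-\beta$; such a decomposition, together with a BFS tree of $\Gamma[C]$ rooted at the center of each cluster $C$, is computable statically in near‑linear time by random ball growing / exponential shifts. First I would record the static spanner this yields: fix $\beta=\tfrac12$ and $L=\Theta(\log n)$, draw $L$ \emph{independent} $\beta$‑LDDs $\mathcal C_1,\dots,\mathcal C_L$ of $G$, and let $H$ be the union of all cluster BFS trees over all $L$ decompositions, together with the ``leftover'' edges whose endpoints are separated in every $\mathcal C_i$. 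Since the clusters of each $\mathcal C_i$ partition $V$, it contributes at most $n-1$ tree edges, and an edge is separated in all $L$ decompositions with probability at most $2^{-L}\le 1/n$ by independence, so the expected number of leftover edges is $O(n)$; hence $\mathbb E[|H|]=O(nL)=O(n\log n)$. For the stretch: any $(u,v)\in E$ is either a leftover edge (and so in $H$) or lies in a common cluster $C$ of some $\mathcal C_i$, in which case its depth‑$O(\log n)$ BFS tree gives a $u$–$v$ path in $H$ of length $O(\log n)$; concatenating along a shortest $G$‑path yields $\dist_H(s,t)\le O(\log n)\cdot\dist_G(s,t)$ for all $s,t$, i.e.\ multiplicative stretch $O(\log n)$ (which is $\log n$ after fixing constants).

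Then I would make this dynamic by maintaining each of the $L$ decompositions under a \emph{dynamic LDD} data structure that, on an edge insertion or deletion in $G$, repairs its clustering and all affected cluster BFS trees in polylogarithmic amortized time and with polylogarithmic expected \emph{recourse} (number of vertices changing clusters and edges changing separation status). The cluster BFS trees are maintained as bounded‑depth Even--Shiloach trees of depth $O(\log n)$ (Lemma~\ref{lma:maintainBalls}), rebuilt whenever the dynamic LDD reports that a cluster changed; a counter per edge records in how many of the $L$ decompositions it is currently separated, so the leftover set is updated in $O(1)$ per separation‑status flip. The crucial structural point is that there is \emph{no cascade between the $L$ copies}: we take independent LDDs of $G$ itself, not of residual or contracted graphs, so a single edge update is just one update fed to each of the $L$ structures. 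Hence the expected amortized update time is $L$ times the per‑copy cost, i.e.\ $O(\log n)\cdot\operatorname{polylog}(n)=O(\log^3 n)$, and the oblivious‑adversary assumption lets us treat each copy's internal random shifts as independent of the update sequence it sees, so the per‑copy expectation bounds are legitimate.

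The main obstacle is the construction of the dynamic LDD itself with simultaneously polylogarithmic update time and polylogarithmic expected recourse — this is the technical heart. Here one assigns each vertex $v$ a random exponential ``start time'' $\delta_v$, clusters $v$ with the center minimizing its $\delta$‑shifted distance, and argues that an edge update perturbs the relevant distances only in a bounded region around the update, so that thanks to the random shifts only an expected polylogarithmic number of vertices see their argmin flip; turning this into an amortized recourse bound over an arbitrary update sequence requires a careful charging/potential argument. A secondary obstacle is amortizing the Even--Shiloach rebuild cost: a cluster with $m_C$ edges costs $O(m_C\log n)$ over the life of its ES‑tree, and one charges this to the recourse events that create clusters while rebuilding the whole data structure from scratch every $\Theta(n)$ updates, so that no ES‑tree survives more than one epoch and the expected bounds stay clean. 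Once this dynamic‑LDD black box is in place, correctness and the stated size and time bounds follow from the static analysis above applied at every time step.
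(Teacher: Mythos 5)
First, a point of orientation: the paper does not prove this statement at all. It is imported verbatim as Theorem~1.4 of \cite{forster2019dynamic} and used purely as a black box (the data structure $\mathcal{B}$ in Section~\ref{subsec:AlgebraicSpanner}), so there is no in-paper proof to compare against; what you have written is a reconstruction of the cited result, and it has to be judged as such.

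As a reconstruction, your static skeleton (the union of $L=\Theta(\log n)$ independent $\beta$-LDDs with $\beta=1/2$, their cluster BFS trees, and the edges separated in all $L$ copies) is standard and correct. The dynamic half, however, has a genuine gap: the entire difficulty is delegated to a ``fully dynamic $\beta$-LDD'' with simultaneously polylogarithmic update time, polylogarithmic expected recourse, \emph{and} edge-separation probability $O(\beta)$, and you neither prove this nor can you import it. Your sketch of why it should exist (``only an expected polylogarithmic number of vertices see their argmin flip\ldots requires a careful charging/potential argument'') is exactly the technical content of the cited paper, not a routine step. Worse, the parameters do not close even if one grants a dynamic LDD: the constructions that are known --- including the one in \cite{forster2019dynamic} --- are decremental and only guarantee that each edge is an inter-cluster edge with probability $O(\beta\log^{2}n)$ rather than $\beta$, because the random shifts must survive an entire update sequence. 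To make the per-copy separation probability a constant you would then need $\beta=\Theta(1/\log^{2}n)$, which inflates the cluster diameter, and hence the stretch of $H$, to $O(\log^{3}n)$ instead of $\log n$. Finally, the claim that ``a single edge update is just one update fed to each of the $L$ structures'' hides the fully-dynamic issue: an insertion can merge clusters and shorten shifted distances, which the bounded-depth Even--Shiloach trees you invoke (Lemma~\ref{lma:maintainBalls}, stated only for partially dynamic graphs) do not support, so a separate lazy-insertion or periodic-rebuild argument is needed and is missing.
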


\paragraph{The Algorithm.} Equipped with these two powerful data structures, let us state the algorithm that gives \Cref{thm:AlgebraicSpanner}. Let $k = \sqrt{\log n}$, let 
$\epsilon' = \frac{\epsilon}{20(k+1)}$, and let $\base = \left(\frac{\log n}{\epsilon'}\right)$. We sample sets $V = A_0 \supseteq A_1 , \dots \supseteq A_k \supseteq A_{k+1} = \emptyset$ where $A_i$ for $i \in [1,k]$ is obtained by sampling each vertex in $V$ with probability $n^{-i/k} \log n$ (and to make the sets nesting add it to all $A_j$ where $j \leq i$). 
We maintain two data structures dynamically during the sequence of edge insertions and deletions:
\begin{enumerate}
    \item For $\gamma = \lfloor \eps \cdot k\rfloor$, we run $\mathcal{A}$ on graph $G$ with fixed source set $A_{\gamma}$ and depth threshold $\frac{1}{8\log n}\base^{k+1}$. 
    \item Further, we run $\mathcal{B}$ on $G$ and let $\tilde{G}$ be the $\log n$-approximate spanner.
\end{enumerate}

\smallskip\noindent
Let $\sumC{i}{i+1}{j}  = \sum_{y = i+1}^{j} \base^{y}$.
Initially, and after an edge update we construct the spanner $H$ from scratch as follows: We say that $a \in A_{\ell} \setminus A_{\ell+1}$  is \emph{active} if for no $j > \ell$ there exists a vertex $a' \in A_j \setminus A_{j+1}$ with  $\dist_{\tilde{G}}(a,a') \leq \sumC{\ell}{\ell+1}{j} /4$.
Note that we are using distances
in $\tilde{G}$ and {\em not} in $G$ for this definition. Note that all vertices in $A_k$ are active at all times. 
In order to determine which vertices are active, we run the following process. We begin by labeling all vertices as active. 
We then run for each $i$ from $k$ down to $0$, a BFS algorithm on the spanner $\tilde{G}$ to depth $\base^{i}$ from every vertex in $A_{i} \setminus A_{i+1}$ that is still labelled active. We then deactivate each vertex $v$ in $V \setminus A_i$ for which we found a vertex $a \in A_{i} \setminus A_{i+1}$ that is close enough to establish that $v$ cannot be active. 
We then construct our spanner $H$ which is initially empty by adding edges from two sources:

\begin{enumerate}
    \item We add all edges from $\tilde{G}$ to $H$.
    \item For any two vertices $a, a' \in A_{i} \setminus A_{i+1}$ that are active, we add the shortest path $\pi_{a,a',G}$ to $H$ if $\dist_{G}(a,a') \leq \frac{1}{8\log n} \base^{i+1}$, that is, if their distance in $G$ is small. To compute these paths $\pi_{a,a',G}$, we distinguish two cases. If $i \geq \gamma$, then we pose a path reporting query to  $\mathcal{A}$. For $i < \gamma$, we run from every such active vertex a BFS on $G$ to depth $\frac{1}{8\log n}\base^{i+1}$.
\end{enumerate}

As we will show, the approximation factor of the spanner always holds and the sparsity holds with high probability. If the spanner becomes too dense, we reinitialize the algorithm.

\paragraph{Spanner Approximation.} In the following, we prove that $H$ indeed forms a $(1+\epsilon, n^{o(1)})$ spanner. The basic idea behind the proof is the following: let $s$ and $t$ be two vertices and we want to analyze $\dist_H(s,t)$ in comparison to $\dist_G(s,t)$. There are basically two cases: If $s$ 
is ``close'' to an active vertex $a$ in $A_{i} \setminus A_{i+1}$, i.e. at distance at most $d_i$ for some $d_i$, and there is a vertex $v$ that is at distance $\sim 4 d_i \log n/ \epsilon$ from $s$ on the path from $s$ to $t$, such that $v$ is ``close'' to an active vertex $a'$ in $A_{i} \setminus A_{i+1}$, then we can use the spanner $\tilde{G}$ to get from $s$ to $a$ and from $a'$ to $t$ and have that the shortest path between $a$ and $a'$ is in $H$ by part (2) of its construction. It is not hard to see that this detour only implies a $(1+\epsilon)$-multiplicative error. Otherwise, we do not have a vertex within distance $\sim 4 d_i \log n/ \epsilon$ that is close to any active vertex in $A_{i} \setminus A_{i+1}$. We can then repeat the same argument for level $i-1$ along the path segment from $s$ of length $\sim 4 d_i \log n/ \epsilon$ and we are ensured that we eventually reach a level where vertices are active and where we get a good approximation. This allows us to subsume the additive error from higher levels into  multiplicative error for a series of segments of lower levels.


To formalize this concept, fix a value $i \in [1,k]$ and
let us say a vertex
$a$ is $\geq\!i$-\emph{far} if
(1) 
$a \in A_{\ell} \setminus A_{\ell+1}$ for some $\ell \in[0,i-1]$ 
and
(2)  for no $j \ge i$
there is a vertex $a' \in A_j \setminus A_{j+1}$ with $\dist_{\tilde{G}}(a,a') \leq \base^{j} - \frac{1}{2} \sumC{\ell}{\ell+1}{i} $. Observe that the distance requirement is formulated with regard to the multiplicative spanner $\tilde{G}$. 
Note that as $A_{k+1} = \emptyset$, every vertex is trivially $\geq\!(k+1)$-far.

\begin{lemma}\label{c:1}
For any $\ell \in [0,k-1]$ every vertex
$a \in A_{\ell} \setminus A_{\ell + 1}$ 
that is $\geq\!(\ell + 1)$-far is active. 
\end{lemma}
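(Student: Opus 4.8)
The plan is to argue by contraposition: I assume that $a \in A_{\ell}\setminus A_{\ell+1}$ is \emph{not} active and deduce that $a$ is not $\geq\!(\ell+1)$-far. By the definition of activeness, non-activeness of $a$ supplies a level $j>\ell$ together with a vertex $a'\in A_j\setminus A_{j+1}$ such that $\dist_{\tilde{G}}(a,a') \leq \sumC{\ell}{\ell+1}{j}/4$. I claim this very pair $(j,a')$ witnesses the failure of condition~(2) in the definition of $\geq\!(\ell+1)$-far. Indeed, instantiating that definition with $i=\ell+1$, its condition~(2) fails for $a$ precisely if there is some $j\geq \ell+1$ and $a'\in A_j\setminus A_{j+1}$ with $\dist_{\tilde{G}}(a,a') \leq \base^{j} - \tfrac12\,\sumC{\ell}{\ell+1}{\ell+1} = \base^{j} - \tfrac12\,\base^{\ell+1}$, using $\sumC{\ell}{\ell+1}{\ell+1}=\sum_{y=\ell+1}^{\ell+1}\base^{y}=\base^{\ell+1}$ and noting that $j>\ell$ means $j\geq\ell+1$, so the index is in the required range. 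Condition~(1) of $\geq\!(\ell+1)$-far is automatic, since the $A_i$ are nested and hence $a$ lies in exactly one level set, namely $A_{\ell}\setminus A_{\ell+1}$ with $\ell\leq(\ell+1)-1$; and the hypothesis $\ell\leq k-1$ guarantees $\ell+1\leq k$, so that ``$\geq\!(\ell+1)$-far'' is a legitimate instance of the definition.

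Thus the lemma reduces to the elementary inequality
\[
\tfrac14\,\sumC{\ell}{\ell+1}{j}\;=\;\tfrac14\sum_{y=\ell+1}^{j}\base^{y}\;\leq\;\base^{j}-\tfrac12\,\base^{\ell+1}\qquad\text{for every } j\geq\ell+1.
\]
I would prove this by a geometric-series estimate. Writing $\sum_{y=\ell+1}^{j}\base^{y}=\base^{j}\sum_{t=0}^{j-\ell-1}\base^{-t}\leq \base^{j}\cdot\frac{\base}{\base-1}\leq 2\base^{j}$, where the last step uses $\base\geq 2$ (which is immediate as $\base=\log n/\epsilon'\geq 20\log n$), the left-hand side is at most $\tfrac12\base^{j}$. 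On the other hand, since $j\geq\ell+1$ we have $\base^{\ell+1}\leq\base^{j}$, so the right-hand side is at least $\base^{j}-\tfrac12\base^{j}=\tfrac12\base^{j}$. Chaining these two bounds gives the inequality, and therefore $\dist_{\tilde{G}}(a,a')\leq \base^{j}-\tfrac12\base^{\ell+1}$ for $j\geq\ell+1$, which is exactly the contradiction with $a$ being $\geq\!(\ell+1)$-far.

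I do not expect any real obstacle: the conceptual point is simply that the radius threshold defining ``active'' (namely $\sumC{\ell}{\ell+1}{j}/4$) is a good deal smaller than the radius threshold $\base^{j}-\tfrac12\base^{\ell+1}$ in the definition of $\geq\!(\ell+1)$-far, so a vertex violating the former necessarily violates the latter. The only points requiring care are the bookkeeping items above — correctly substituting $i=\ell+1$ and collapsing $\sumC{\ell}{\ell+1}{\ell+1}$ to the single term $\base^{\ell+1}$, checking that the index $j$ furnished by non-activeness meets the range constraint of the $\geq\!(\ell+1)$-far condition, and confirming $\base\geq 2$ so that the geometric-sum bound is valid.
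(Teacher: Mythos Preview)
Your proof is correct and follows essentially the same approach as the paper's: both arguments reduce to the inequality $\tfrac14\,c_{\ell,j}\leq \base^{j}-\tfrac12\,\base^{\ell+1}$ for $j\geq\ell+1$, and both verify it via the geometric-series bound $c_{\ell,j}\leq 2\base^{j}$ (using $\base\geq 2$) together with $\base^{\ell+1}\leq\base^{j}$. The only cosmetic difference is that the paper argues directly (from $\geq\!(\ell+1)$-far to active) while you argue by contraposition.
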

\begin{proof}
Let $a$ be $\geq\!(\ell + 1)$-far. It follows that for no $j \geq \ell + 1$ there exists a vertex $a' \in A_j \setminus A_{j+1}$ with $\dist_{\tilde{G}}(a,a') \leq \base^{j} - \frac{1}{2} \sumC{\ell}{\ell+1}{\ell+1}$.
Note that
$\base^{j} - \frac{1}{2} \sumC{\ell}{\ell+1}{\ell+1} = \base^{j} - \frac{1}{2} \base^{\ell + 1} \ge
\frac{1}{2} \base^{j} \ge
\frac{1}{4} ( \base^{j+1}-1)/
(\base - 1) \ge
\frac{1}{4} \sumC{\ell}{\ell+1}{j}$, where
the second inequality holds since 
$\base-1 \ge \base/2$. Thus, for no $j > i$,
there exists an vertex $a' \in A_j \setminus A_{j+1}$ with $\dist_{\tilde{G}}(a,a') \leq \frac{1}{4} \sumC{\ell}{\ell+1}{j}$, so $a$ is active.
\end{proof}

Equipped with this notion, we can prove the following lemma. 
It immediately implies that $H$ is a $(1+\epsilon, n^{o(1)})$-spanner since every vertex is $\geq\!(k+1)$-far.

\begin{lemma}
For any shortest $s$-$t$ path $\pi_{s,t}$ in $G$ where every vertex $v \in \pi_{s,t} \setminus \{s,t\}$ is $\geq\!i$-far for some $0 < i \leq k+1$, we have
\[
    \dist_H(s,t) \leq (1+20i\epsilon')\dist_G(s,t) + \base^{i} \le
    (1+\epsilon) \dist_G(s,t) + n^{o(1)}.
\]
\end{lemma}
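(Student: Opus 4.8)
The plan is to prove the first inequality by induction on $i$; the second is immediate, since $20(k+1)\epsilon' = \epsilon$ forces $20i\epsilon' \le \epsilon$, while $\base^{i} \le \base^{k+1} = (\log n/\epsilon')^{\sqrt{\log n}+1} = 2^{O(\sqrt{\log n}\log\log n)} = n^{o(1)}$. For each fixed $i$ I will run an inner induction on the number of edges of $\pi_{s,t}$, using the statement for $i-1$ as a black box. Throughout, recall that $\tilde{G}\subseteq H$ has multiplicative stretch $\log n$, so one always has the crude bound $\dist_H(x,y)\le \log n\cdot\dist_G(x,y)$; in particular, whenever $\dist_G(s,t)\le \base^{i}/\log n$ we are already done, and this also settles the base case $i=1$ together with the observation that for $i=1$ every internal vertex of $\pi_{s,t}$ lies in $A_0\setminus A_1$ and, being $\ge 1$-far, is active by Lemma~\ref{c:1}; hence every edge of $\pi_{s,t}$ joining two internal vertices is a length-$1$ shortest path between active vertices of $A_0\setminus A_1$ and is in $H$ by part (2) of the construction, and patching on the $\tilde{G}$-paths from $s$ and $t$ to the first and last internal vertices costs only an additive $2\log n\le\base$.

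For the inductive step, fix $i\ge 2$, assume the claim for $i-1$ and for all shorter all-$\ge i$-far paths, and suppose $\dist_G(s,t) > \base^{i}/\log n$. Look at the first internal vertex $s'$ and split into two cases according to whether $s'$ is within $\tilde{G}$-distance (of the order dictated by the construction, $\sim\base^{i-1}$) of an \emph{active} vertex $a\in A_{i-1}\setminus A_i$. \textbf{Case A (close):} let $w$ be the farthest vertex from $s$ on $\pi_{s,t}$ with $\dist_G(s,w)\le\frac{1}{8\log n}\base^{i}-2\base^{i-1}$ that is itself within $\tilde{G}$-distance $\sim\base^{i-1}$ of some active $a'\in A_{i-1}\setminus A_i$ (such $w$ exists since $w=s'$, $a'=a$ qualifies); then $\dist_G(a,a')\le \dist_G(a,s')+\dist_G(s',w)+\dist_G(w,a')\le \dist_G(s,w)+2\base^{i-1}\le\frac{1}{8\log n}\base^{i}$, so $\pi_{a,a',G}\subseteq H$ by part (2) of the construction, and routing $s\to s'\to a\to a'\to w$ through $\tilde{G}$ and $\pi_{a,a',G}$ gives $\dist_H(s,w)\le\dist_G(s,w)+O(\base^{i-1})+O(\log n)$. \textbf{Case B (far):} peel the prefix of $\pi_{s,t}$ ending at the first vertex $w$ that is close (in $\tilde{G}$) to an active level-$(i-1)$ vertex, or all of $\pi_{s,t}$ if no such vertex exists; none of the internal vertices of this prefix is close to an active level-$(i-1)$ vertex, and combining this with $\ge i$-farness — they are far from level $\ge i$, and being close to a \emph{non}-active level-$(i-1)$ vertex would, through that vertex's higher-level witness, place them close to a level-$\ge i$ vertex too — shows every such vertex is in fact $\ge(i-1)$-far, so the statement for $i-1$ bounds $\dist_H$ over this prefix by $(1+20(i-1)\epsilon')\dist_G+\base^{i-1}$. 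In either case, recurse by the inner induction on the strictly shorter, still all-$\ge i$-far suffix $\pi_{w,t}$; a final leftover of length $<\base^{i}/\log n$ is routed through $\tilde{G}$ and yields the lone additive $\base^{i}$.

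It remains to sum the per-piece bounds. Because the pieces partition $\pi_{s,t}$, their multiplicative factors combine as $1+(\text{largest factor})$ rather than multiplying, and the largest factor is $1+20(i-1)\epsilon'$ from the Case-B pieces; the additive $\base^{i-1}$ from each Case-B piece and the additive $O(\base^{i-1})$ from each Case-A piece are, since a Case-B prefix together with its following Case-A detour covers length $\Omega(\base^{i}/\log n)$, at most $O(\log n)\cdot\frac{\base^{i-1}}{\base^{i}}\cdot\dist_G(s,t)=O(\epsilon')\dist_G(s,t)$ in total and fold into the multiplicative error; and the single $\tilde{G}$ leftover gives $\base^{i}$. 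With the explicit constants of the construction this yields $\dist_H(s,t)\le(1+20i\epsilon')\dist_G(s,t)+\base^{i}$, completing the induction.

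The step I expect to be the main obstacle is Case B: showing the ``dead-zone'' internal vertices are genuinely $\ge(i-1)$-far. This requires threading the three distinct radius conventions of the construction — the activeness radius $\sumC{\ell}{\ell+1}{j}/4$, the BFS depth $\base^{i}$ used to deactivate vertices, and the $\ge i$-far radius $\base^{j}-\tfrac12\sumC{\ell}{\ell+1}{i}$ — through the triangle inequality in $\tilde{G}$ and checking that the resulting geometric sums telescope in the right direction; Lemma~\ref{c:1} is exactly a small instance of this bookkeeping. A secondary but equally delicate point is the constant accounting: the factor $\tfrac{1}{20(k+1)}$ in $\epsilon'$ is precisely what makes the per-level increment from $20(i-1)\epsilon'$ to $20i\epsilon'$ wide enough to absorb both the Case-A detour error and the folded Case-B additive errors, so every hidden constant must be tracked explicitly rather than left as $O(\cdot)$.
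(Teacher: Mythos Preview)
Your approach is essentially the paper's: both induct on $i$ and partition $\pi_{s,t}$ into segments where the induction hypothesis for the level below applies, alternating with segments routed through nearby active vertices, and the bridge is the same technical claim (the paper states it directly as Claim~\ref{clm:CloseA}, ``not $\ge i$-far implies close to an active level-$i$ vertex'', while your Case-B obstacle is precisely its contrapositive). The only organizational differences are that the paper partitions first by the abstract $\ge i$-far property (the set $\mathcal C$) and then invokes the claim on $\mathcal C$-vertices, whereas you partition by the concrete ``close to active'' property and must argue the dead-zone vertices are $\ge(i-1)$-far; and the paper builds an explicit sequence $t_0,s_1,t_1,\dots,s_h$ with the key bound $\dist_G(s_g,s_{g+1})>\frac{1}{10\log n}\base^{i+1}$ to control the number of segments, in place of your recursive peeling and the claim that a Case-B/Case-A pair covers length $\Omega(\base^{i}/\log n)$.
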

\begin{proof}
As $\epsilon \le 1$ it follows that $\epsilon' \leq 1/20$.
Let us prove the claim by induction on $i$. 

\underline{Base case $i = 1$:} 
Let $v_1$ be the vertex right after $s$ and let $v_2$ be the vertex right before $t$ on $\pi_{s,t,G}$.
As every vertex $v$ on $\pi_{s,t,G} \setminus \{s,t\}$ is $\geq\!1$-far, it follows by the definition 
of $\geq\!1$-far that $v$ belongs to $A_0 \setminus A_1$.  
That is, all vertices on $\pi_{v_1, v_2, G}$ belong to $A_0 \setminus A_1.$
Furthermore, by Lemma~\ref{c:1} it follows that each such vertex $v$ is active. 
Thus the shortest path from $v$ to all vertices in $A_0 \setminus A_1$ at distance at most $\frac{1}{8\log n} \base \geq 1$ are included in $H$ and, in particular, all edges of $\pi_G(v_1, v_2)$ belong to $H$.
As $H$ contains $\tilde G$, $\dist_H(s,v_1) \leq \log n$ and $\dist_H(v_2,t) \leq \log n$.
Finally, $\dist_G(v_1, v_2) \leq \dist_G(s,t) + 2.$
Thus it follows that $\dist_{H}(s,t) \leq \dist_{G}(v_1,v_2) + 2\log n \leq  \dist_G(s,t) + 2\log n + 2 \leq\dist_G(s,t) + (\log n)/{\epsilon'},$ where the last inequality holds since  $\epsilon' \le 1/20$.

\underline{Inductive step $i \mapsto i+1$, for $i \geq 1$:}
As every vertex $v$ in $\pi_{s,t,G}\setminus \{s,t\}$ is $\geq\!(i+1)$-far,  every
such vertex must belong to $ A_{\ell} \setminus A_{\ell + 1}$ for some $\ell \le i$. 

Let the set $\mathcal{C}$ consist of the vertices on $\pi_{s,t}$ that are $\geq\!(i+1)$-far but not $\geq\!i$-far. 
In other words, a vertex $v \in \pi_{s,t}$ is in $\mathcal{C}$ if and only if there is a vertex $a \in A_{i} \setminus A_{i+1}$, with $\dist_{\tilde{G}}(v,a) \leq \base^{i} - \frac{1}{2} \sumC{\ell}{\ell+1}{i}$ (possibly $a=v$).
(Note that $\mathcal{C}$ might be empty.) We prove the following claim for vertices in $\mathcal{C}$.

\begin{claim}
\label{clm:CloseA}
For any vertex $v \in \mathcal{C}$, there exists a vertex $a \in A_i \setminus A_{i+1}$ at distance at most $\frac{1}{4}\base^{i}$ in $\tilde{G}$, such that $a$ is active.
\end{claim}
\begin{proof}
Let $\ell$ be such that $v\in A_\ell\setminus A_{\ell+1}$. If $\ell = i$, then $v$ is active by Lemma~\ref{c:1} and, thus, the claim follows. If $\ell < i$, then let $a \in A_i \setminus A_{i+1}$ be the 
 vertex closest to $v$ among all vertices in $A_i \setminus A_{i+1}$. 
As $v$ is $\geq\!(i+1)$-far, we know that 
\begin{enumerate}
    \item $\dist_{\tilde{G}}(v,a) \leq \base^{i} - \frac{1}{2} \sumC{\ell}{\ell+1}{i}$, and
    \item for no $j \ge i+1$ there exists a vertex $a' \in A_{j} \setminus A_{j+1}$ with $\dist_{\tilde{G}}(v,a') \le \base^{j} - \frac{1}{2} \sumC{\ell}{\ell+1}{i+1}$.
\end{enumerate}{}

Recall that $\base \ge 5$, which implies that $\base-1 \geq 2\base/3$. Suppose for contradiction that $a$ is not active. If $a$ is not active there must exist a vertex $a' \in A_{p} \setminus A_{p+1}$ with $p > i$ such that $\dist_{\tilde{G}}(a,a') \le \sumC{i}{i+1}{p}/4$. 
But this implies that
\begin{align*}
\dist_{\tilde{G}}(v,a') &\leq \dist_{\tilde{G}}(v,a) + \dist_{\tilde{G}}(a,a') 
\leq b^{i} - \frac{1}{2} \sumC{\ell}{\ell+1}{i} +  \frac{1}{4}  \sumC{i}{i+1}{p} \\
&= \base^{i} 
+ \frac{1}{2} \base^{i+1}
+ \frac{1}{4} \frac{\base^{p+1} - \base^{i+1}}{\base-1}
- \frac{1}{2}  \sumC{\ell}{\ell+1}{i+1} 
\le \base^{i} 
+ \frac{1}{2} \base^{i+1}
+ \frac{3}{8} \frac{\base^{p+1} - \base^{i+1}}{\base}
- \frac{1}{2}  \sumC{\ell}{\ell+1}{i+1} \\
&\le \frac{5}{8} \base^{i} 
+ \frac{7}{8} \base^{p} 
- \frac{1}{2} \sumC{\ell}{\ell+1}{i+1}
\leq \base^{p} 
- \frac{1}{2}  \sumC{\ell}{\ell+1}{i+1}.
\end{align*}
This gives a contradiction to the assumption that $v$ is $\ge(i+1)\!$-far.
\end{proof}

To prove the lemma we partition $\pi_{s,t}$ by constructing a sequence of vertices
\[
t_0 = s, s_1, t_1, s_2, t_2, \dots, s_{h-1}, t_{h-1}, s_{h} = t
\]
iteratively as follows: for each $s_g$ with $g \geq 1$, let $s_g$ be the first vertex on the path $\pi_{s,t}(t_{g-1}, t]$ that is in $\mathcal{C}$ or if there is no such vertex, we set $s_g = t$ and $h = g$ and end the sequence. If the sequence does not end with $s_g$, let
$t_g$ with $g \geq 1$ be the farthest vertex from  $s_g$ on $\pi[s_g, t] \cap B_G(s_g, \frac{1}{10\log n}\base^{i+1})$ that is in $\mathcal{C}$. Note that such a vertex always exists since
$s_g \in \pi[s_g, t] \cap B_G(s_g, \frac{1}{10\log n}\base^{i+1}) \cap \mathcal{C}$. (If $s_g$ is the only such vertex
then $t_g = s_g$.)

Clearly, the path $\pi_{s,t}$ is partitioned by the path segments $\pi_{s,t}[t_{g-1}, s_g]$ for $1 \leq g \leq h$ and the segments $\pi_{s,t}[s_g, t_g]$ for $1 \leq g < h$. Observe that for the former kind of segments, we have that the internal path vertices of a piece $\pi_{s,t}[t_{g-1}, s_g]$, i.e. the vertices in $\pi_{s,t}(t_{g-1}, s_g)$, are not in $\mathcal{C}$ by definition. Thus, we have that all such vertices are $\geq\!i$-far (since by assumption every vertex on $\pi_{s,t}$ is $\geq\! i+1$-far and every vertex in $\mathcal{C}$ is not $\geq\! i$-far but none of the vertices on $\pi_{s,t}(t_{g-1}, s_g)$ are in $\mathcal{C}$).
By the sub-path property of shortest paths it holds that $\pi_{s,t}(t_{g-1}, s_g)$
is a shortest path between $t_{g-1}$ and $s_g$.
Hence, we can invoke the induction hypothesis on $\pi_{s,t}(t_{g-1}, s_g)$, with leads to the following statement
\begin{equation}
\label{eq:IHpathSeg}
    \dist_H(t_{g-1}, s_g) \leq (1+20i\epsilon')\dist_G(t_{g-1}, s_g) + \base^{i}.
\end{equation}

For the path segments of the form $\pi_{s,t}[s_g, t_g]$,  \cref{clm:CloseA} shows that there exist vertices $a$ and $a'$ at distance at most  $\frac{1}{4}\base^{i}$ in $\tilde G$ from $s_g$ and $t_g$ respectively that are in $A_i \setminus A_{i+1}$ and are active. Then since 
\begin{align*}
\dist_G(a, a') &\leq \dist_{\tilde{G}}(a, s_g) +  \dist_{G}(s_g, t_g) + \dist_{\tilde{G}}(t_g, a')\\
&\leq \frac{1}{4}\base^{i} + \frac{1}{10\log n}\base^{i+1} + \frac{1}{4}\base^{i}\\
& \leq \frac{1}{8\log n} \cdot \base^{i+1},
\end{align*}
we have that the shortest path $\pi_{a,a'}$ from $a$ to $a'$ is in $H$, i.e., that
$\dist_H(a,a') = \dist_G(a,a')$. 
By applying the triangle inequality to $G$ and the fact that $\dist_G(x,y) \le \dist_{\tilde{G}}(x,y)$ 
(which follows from the fact that $\tilde G$ is a spanner of $G$) 
it follows that
\begin{align*}
 \dist_H(a, a') =\dist_G(a, a') &\leq \dist_{{G}}(a, s_g) +  \dist_{G}(s_g, t_g) + \dist_{{G}}(t_g, a')
\leq \frac{1}{2}\base^{i} + \dist_{G}(s_g, t_g) .
\end{align*}
Since $\tilde{G} \subseteq H$, it holds that $\dist_{H}(s_g, a) \leq  \dist_{\tilde G}(s_g, a)
\leq \frac{1}{4}\base^{i}.$ The same holds for $\dist_{H}(a', t_g).$
We, thus, have that
\begin{equation}
\label{eq:HoppathSeg}
\dist_H(s_g, t_g) \leq \dist_{H}(s_g, a) + \dist_H(a, a') + \dist_{H}(a', t_g) \leq  
\base^{i} + \dist_G(s_g, t_g).
\end{equation}

Finally, let us prove that we did not partition the path into many segments: we claim that 
\[
h \leq \left\lceil \frac{10 \log n \cdot \dist_{G}(s,t)}{\base^{i+1}} \right\rceil
\]
which can be seen from carefully studying the requirements to pick given $s_g$ the next $t_g$ and $s_{g+1}$, which stipulate that each $s_g$ and $s_{g+1}$ are at distance at least $\frac{1}{10\log n}\base^{i+1}$.

We now combine our insights and the fact that $\epsilon' \le 1/20$, and take the sum over the path segments. This gives 
\begin{align*}
\dist_H(s,t) &= \sum_{g=1}^{h} \dist_H(t_{g-1}, s_{g}) + \sum_{g=1}^{h-1} \dist_H(s_{g}, t_g) \\
&\leq \sum_{g=1}^{h} \left((1+20i\epsilon')\dist_G(t_{g-1}, s_g) + \base^{i}\right) 
+ \sum_{g=1}^{h-1} \left( \base^{i} + \dist_G(s_g, t_g) \right) \\
&\leq \sum_{g=1}^{h} (1+20i\epsilon')\dist_G(t_{g-1}, s_g) + \sum_{g=1}^{h-1} \dist_G(s_g, t_g) + h \left( \base^{i} + \base^{i}\right)\\
&\leq (1+20i\epsilon')\dist_G(s,t) + \left\lceil \frac{\dist_{G}(s,t)}{\base^{i+1}} \right\rceil \cdot 20 \log n \cdot \base^{i}\\
& \leq (1+20i\epsilon')\dist_G(s,t) + 20 \log n \cdot \epsilon' \cdot \frac{\dist_{G}(s,t)}{\log n} + 20 \log n \base^{i} \\
&\leq (1+20(i+1)\epsilon')\dist_G(s,t) + \base^{i+1}\\
&\leq  (1+\epsilon) \dist_G(s,t) + n^{o(1)}
\end{align*}
as required.
\end{proof}

\paragraph{Sparsity.} In order to prove sparsity, let us establish the following key claim. The claim is in fact slightly more powerful than necessary, however, this power will be exploited when bounding the running time.

\begin{claim}
\label{clm:fewActiveVertices}
For any vertex $v$, and for any $i \leq k$ there are at most $O(n^{1/k} \log n)$ active vertices in $A_i \setminus A_{i+1}$ in
\[
B_{\tilde{G}}\left(v, \frac{1}{8}\base^{i+1}\right)
\]
with high probability. Further, if a vertex $v \in A_i \setminus A_{i+1}$ is active, we have $|B_{\tilde{G}}\left(v, \frac{1}{8}\base^{i+1}\right)| \leq n^{(i+1)/k}$ with high probability.
\end{claim}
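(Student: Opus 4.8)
The plan is to derive both statements from the Hitting Set Lemma (Fact~\ref{fact::hittingset}) together with one structural observation about the $\tilde{G}$-ball around an \emph{active} vertex. The first thing I would establish is: if $a \in A_i \setminus A_{i+1}$ is active, then $B_{\tilde{G}}(a, \tfrac14\base^{i+1}) \cap A_{i+1} = \emptyset$. Indeed, suppose some $a' \in A_{i+1}$ had $\dist_{\tilde{G}}(a,a') \le \tfrac14\base^{i+1}$, and let $j \ge i+1$ be the unique index with $a' \in A_j \setminus A_{j+1}$. Since $\sumC{i}{i+1}{j} = \sum_{y=i+1}^{j}\base^{y} \ge \base^{i+1}$, we would get $\dist_{\tilde{G}}(a,a') \le \tfrac14\sumC{i}{i+1}{j}$, contradicting the definition of $a$ being active. (This uses the part of the definition of active where $j$ ranges over all levels above $i$.)

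Given this, the second (``further'') statement is almost immediate. If $v \in A_i \setminus A_{i+1}$ is active, then the observation above together with $\tfrac18\base^{i+1} < \tfrac14\base^{i+1}$ gives $B_{\tilde{G}}(v, \tfrac18\base^{i+1}) \cap A_{i+1} = \emptyset$. As $A_{i+1}$ samples each vertex independently with probability $n^{-(i+1)/k}\log n$, Fact~\ref{fact::hittingset} says that any fixed vertex set of size more than $n^{(i+1)/k}$ meets $A_{i+1}$ with high probability; union-bounding over the $n$ choices of center and the $k+1$ levels $i$, with high probability every $\tilde{G}$-ball of the form $B_{\tilde{G}}(v,\tfrac18\base^{i+1})$ that is disjoint from $A_{i+1}$ has at most $n^{(i+1)/k}$ vertices, which is exactly the claim.

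For the first statement I would fix $v$ and $i \le k$, write $B := B_{\tilde{G}}(v, \tfrac18\base^{i+1})$, and assume $B$ contains at least one active vertex $a \in A_i \setminus A_{i+1}$ (otherwise there is nothing to bound). The triangle inequality in $\tilde{G}$ gives $\dist_{\tilde{G}}(a,w) \le \dist_{\tilde{G}}(a,v) + \dist_{\tilde{G}}(v,w) \le \tfrac14\base^{i+1}$ for every $w \in B$, so $B \subseteq B_{\tilde{G}}(a, \tfrac14\base^{i+1})$ and hence $B \cap A_{i+1} = \emptyset$ by the first step. Every active vertex of $A_i \setminus A_{i+1}$ lying in $B$ lies in $B \cap A_i$, so it remains to bound $|B \cap A_i|$ on the event $B \cap A_{i+1} = \emptyset$. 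For this I would generate the nested sets in two stages: first put each vertex into $A_i$ independently with probability $q_i := n^{-i/k}\log n$, then obtain $A_{i+1}$ from $A_i$ by keeping each vertex of $A_i$ independently with probability $q_{i+1}/q_i = n^{-1/k}$ (this reproduces the correct marginals and is independent of $\tilde{G}$, which $\mathcal{B}$ maintains obliviously to the hierarchy). Conditioning on $\tilde{G}$ and on $A_i$: if $|B \cap A_i| > C n^{1/k}\log n$ for a large constant $C$, then the probability that none of these vertices is retained into $A_{i+1}$ is at most $(1-n^{-1/k})^{|B \cap A_i|} \le e^{-C\log n}$, which is inverse-polynomial in $n$; a union bound over the $n$ centers and the $k+1$ levels then shows that with high probability every ball $B = B_{\tilde{G}}(v,\tfrac18\base^{i+1})$ with $B \cap A_{i+1} = \emptyset$ satisfies $|B \cap A_i| = O(n^{1/k}\log n)$. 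Since we showed this emptiness holds whenever $B$ contains an active vertex of $A_i \setminus A_{i+1}$, the number of such active vertices in $B$ is $O(n^{1/k}\log n)$, as claimed.

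The step I expect to be the main (though minor) obstacle is controlling $|B \cap A_i|$ on the event $B \cap A_{i+1} = \emptyset$: since $A_{i+1} \subseteq A_i$ these events are not independent, so one must re-sample $A_{i+1}$ out of $A_i$ as above rather than treat them separately. A second point to get right is that the family of balls $B_{\tilde{G}}(v, \tfrac18\base^{i+1})$ is determined by $\mathcal{B}$'s randomness and the update sequence, hence independent of the sampling of the hierarchy $A_0 \supseteq \dots \supseteq A_{k+1}$, which is what makes the ``union bound over all balls'' legitimate against an oblivious adversary. Everything else is a routine Chernoff/union-bound computation; note that with $k = \sqrt{\log n}$ the threshold $n^{1/k}\log n$ is at least $\log n$, so the failure probabilities above are genuinely inverse-polynomial after the $O(nk)$-term union bound (and if needed one can scale the $\log n$ factor in the sampling rates by a constant to gain more room).
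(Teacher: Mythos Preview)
Your proof is correct and close in spirit to the paper's, but you streamline the argument in two places that are worth pointing out.

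First, your structural observation ``$a\in A_i\setminus A_{i+1}$ active $\Rightarrow B_{\tilde G}(a,\tfrac14\base^{i+1})\cap A_{i+1}=\emptyset$'' is read off directly from the definition of \emph{active}. The paper instead proves the (stronger) contrapositive: if some $a\in A_{i+1}$ hits $B_{\tilde G}(v,\tfrac18\base^{i+1})$, then \emph{every} vertex of $A_j$ in that ball is inactive for all $j\le i$. To get this the paper has to do a case split on whether the hitting vertex $a$ is itself active, and in the inactive case chase up the hierarchy to an active $a''\in A_\ell$ with $\ell>i+1$ and check a triangle-inequality computation with the $c_{\cdot,\cdot}$ sums. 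Your direction avoids this case analysis entirely, at the cost of a slightly weaker intermediate statement that is still sufficient for the claim.

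Second, for the first part the paper splits on whether $|B|<n^{(i+1)/k}$ (Chernoff on $|B\cap A_i|$) or $|B|\ge n^{(i+1)/k}$ (hitting set puts some $A_{i+1}$-vertex in $B$, then the structural step deactivates everything). You instead bound $\Pr[\,|B\cap A_i|>Cn^{1/k}\log n \text{ and } B\cap A_{i+1}=\emptyset\,]$ directly by the two-stage re-sampling, which eliminates the case split. This is a legitimate and slightly cleaner packaging of the same Chernoff content; it does rely on the nested sampling being realizable as ``sub-sample $A_{i+1}$ from $A_i$ with probability $n^{-1/k}$'', which is the standard coupling and consistent with the paper's construction. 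The second part of the claim is handled the same way in both proofs. Your remarks about independence of $\tilde G$ from the hierarchy sampling and about the polynomial-size union bound are the right justifications.
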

\begin{proof}
Let us observe that for any vertex $v \in V$, if
 $  \left|B_{\tilde{G}}(v, \frac{1}{8}\base^{i+1})\right| \geq n^{(i+1)/k}$ 
there exists at least one vertex $a$ in $A_{i+1}$ that hits $B_{\tilde{G}}(v, \frac{1}{8}\base^{i+1})$ with high probability. 

We now claim that if there is such a vertex $a$, no vertex $a'$ in $B_{\tilde{G}}(v, \frac{1}{8}\base^{j+1}) \cap A_j$ is active for any $j \leq i$. To see this let us assume for the sake of contradiction that there exists such a vertex $a'$. 
Now, let us first consider the case that the vertex $a$ is active. Then, the distance between $a$ and vertex $a'$ is $\mathbf{dist}_{\tilde{G}}(a, v) + \mathbf{dist}_{\tilde{G}}(v,a') \leq \frac{1}{8}b^{i+1}+\frac{1}{8}b^{i+1} = \frac{1}{4}b^{i+1} \leq \frac{1}{4} c_{j, i+1}$ which implies that $a$ must have deactivated $a'$ and therefore leads to a contradiction. 

Now, let us consider the case, where $a$ is not active. Then, there exists a vertex $a'' \in A_{\ell} \setminus A_{\ell +1}$ that is active and with $\ell > i+1$ such that $\dist_{\tilde{G}}(a, a'') \leq \frac{1}{4}  \sumC{i+1}{i+2}{\ell}$. Such a vertex exists by definition if $a$ is not active so we have that $a''$ is well-defined. 

But then we have by the triangle inequality that 
\begin{align*}
\dist_{\tilde{G}}(a', a'') &\leq \dist_{\tilde{G}}(a', v) + \dist_{\tilde{G}}(v, a) + \dist_{\tilde{G}}(a, a'') \\
&\leq \frac{1}{8}\base^{j+1} + \frac{1}{8}\base^{i+1} + \frac{1}{4} \sumC{i+1}{i+2}{\ell}\\ 
&\leq \frac{1}{4} \sumC{j}{j+1}{\ell}
\end{align*}
where we use in the last inequality that $\sumC{j}{j+1}{\ell} = \sum_{y=j+1}^{\ell} \base^y =  \sum_{y=j+1}^{i+1} \base^y + \sum_{y=i+2}^{\ell} \base^y = \sum_{y=j+1}^{i+1} \base^y + c_{i+1, \ell}$ and since $i \geq j$ we have that $\sum_{y=j+1}^{i+1} \base^y \geq b^{i+1}$. This implies that $a'$ cannot be active since $a''$ is close enough to $a'$ to deactivate it during the algorithm. This again leads to a contradiction and thereby completes the proof of our claim that $a'$ is not active.

Finally, we conclude that we have for any vertex $v$ and level $i$, that either $\left|B_{\tilde{G}}(v, \frac{1}{8}\base^{i+1})\right| < n^{(i+1)/k}$ in which case we have by a straightforward application of a Chernoff bound that with high probability at most $O(n^{1/k} \log n)$ of these vertices are sampled into $A_i$. Thus, there can also be only $O(n^{1/k} \log n)$ active vertices from $A_i$ in the ball. 

Otherwise  $\left|B_{\tilde{G}}(v, \frac{1}{8}\base^{i+1})\right| \geq n^{(i+1)/k}$, so by the above claim we have that some vertex $a$ in $A_{i+1}$ hits the ball $B_{\tilde{G}}(v, \frac{1}{8}\base^{i+1})$ which results in all vertices in the ball that are in $A_j \setminus A_{j+1}$ to be deactivated for any $j \leq i$, and in particular for $i$. This proves the first part of the claim.

For the second part of the claim, observe that the contrapositive of the above claim with $a' = v$ implies that there exists no active vertex $a''$ in $A_{i+1}$ in the ball $B_{\tilde{G}}(v, \frac{1}{4} c_{i, i+1}) \subseteq B_{\tilde{G}}(v,  \frac{1}{8}\base^{i+1})$, since $ c_{i, i+1} = b^{i+1}$. Further, by the contrapositive of our initial observation, we have that then
$|B(v, \frac{1}{8}\base^{i+1})| < n^{(i+1)/k}$. 
\end{proof}

A special case of the claim is that each vertex $a \in A_{i}\setminus A_{i+1}$ has $\tilde{O}(n^{1/k})$ active vertices in its ball
$B(a, \frac{1}{8}\base^{i+1})$ that are in $A_{i}\setminus A_{i+1}$. Thus, if $a$ is active itself this gives an upper bound on the number of paths that are included in $H$ due to $a$ (if $a$ is not active no paths are added). Since the maximum such path length at any level is $\frac{1}{8}\base^{k+1}$, we have that there are only $\Tilde{O}(n^{1+1/k} \base^{k+1})= n^{1+o(1)}$ edges in $H$ due to paths. The spanner $\tilde{G}$ that is additionally added to $H$ contains only $\tilde{O}(n)$ edges which is subsumed in the previous bound. 

\paragraph{Running Time.} 
To bound the running time, observe that in order to determine which vertices are active, we run for each $i \leq k$, a BFS to depth $\frac{1}{8}\base^{i+1}$ from every active vertex in $A_i \setminus A_{i+1}$ on $\tilde{G}$. By \Cref{clm:fewActiveVertices}, each vertex $v \in V$ is only explored by $\tilde{O}(n^{1/k})$ active vertices on each level. There are only $\sqrt{\log n}$ levels, so each vertex $v \in V$ is only explored by $\tilde{O}(n^{1/k})$ active vertices in total. Thus, every edge incident to each vertex $v$ in $\tilde{G}$ is only explored $\tilde{O}(n^{1/k})$ times. Since $\tilde{G}$ has only $\tilde{O}(n)$ edges, and since each BFS runs linearly in the number of edges explored, we can bound the total running time by $\tilde{O}(n^{1+1/k})$, for each $i$, and thus also for all values $i$.

After determining which vertices are active, it remains to bound the time to compute the paths between any two active vertices $a, a' \in A_i \setminus A_{i+1}$ for some $i$. For $i \geq \gamma$, we can simply check the distances between any pair of such vertices (even the ones that are non-active) by looking up their shortest paths in $\mathcal{A}$ and inserting them into $H$ if the criterion is satisfied. We will calculate the running time of $\mathcal{A}$ at the end. 
For any active vertex $a \in A_{i} \setminus A_{i+1}$, where $i < \gamma$, we further run a BFS on $G$ to depth $\frac{1}{8\log n}\base^{i+1}$. We observe that since $\tilde{G}$ is a $\log n$-spanner of $G$, we have for every $v \in V$ that
\[
B_{\tilde{G}}(v,\frac{1}{8\log n}\base^{i+1}) \subseteq B_G(v,\frac{1}{8}\base^{i+1}). 
\]
By \Cref{clm:fewActiveVertices} we have again that each vertex $v$ is only explored $\tilde{O}(n^{1/k})$ times during these executions of BFS. Using the second fact of \Cref{clm:fewActiveVertices}, we further have that each BFS from an active vertex $a\in A_i \setminus A_{i+1}$ only explores at most $n^{(i+1)/k}$ vertices. But this in turn implies that each vertex $v$ that is strictly contained in such a ball $B_G(a,\frac{1}{8\log n}\base^{i+1})$ has degree 
bounded by $n^{(i+1)/k}$. Since we explore only for $i < \gamma$, we therefore have that each explored vertex has degree at most $n^{\gamma/k} \leq n^{\eps}$. Putting everything together, during all BFS explorations, we scan every vertex at most $\tilde{O}(n^{1/k})$ times while the number of edges present at the vertex is at most $n^{\eps}$ and since we have $n$ vertices, the total time required for all explorations can be bounded by $O(n \cdot n^{1/k} \cdot n^{\eps}) = n^{1+\eps+o(1)}$. 

Finally, we have to account for the data structures used. While the running time of $\mathcal{B}$ is subsumed in our previous bound, we have that $\mathcal{A}$ has worst-case update time $\tilde{O}(n^{\delta}(n^{\omega(1, 1, \eps)-\eps} + (n^{1-\gamma/k})^2 n^{\eps}))$ for any constant $\eps \leq \eps^*$ and $\delta > 0$ by \Cref{cor:SankowskiPathReporting} since $|S| = n^{1-\gamma/k} = n^{1-\eps+o(1)}$. Setting $\eps = \eps^*$ and $\delta = 0.01$, the size $|S| = o(\sqrt{n})$ and thus we obtain worst-case update time $O(n^{1+\eps^*+o(1)}) = O(n^{1.529})$.

\section{Applications}\label{sec:app}
Let $\epsilon>0$ be an arbitrarily small constant.
In this section we give two applications of the data structures developed in the previous sections, namely fully dynamic $(1+\epsilon)$-approximate all-pairs shortest paths with path reporting  and $(2+\epsilon)$-approximate fully dynamic Steiner tree. Both algorithms are the first algorithms that take sub-quadratic worst-case time for these problems.

\subsection{Fully dynamic approximate all-pairs shortest paths with path reporting}
In this subsection we give a fully dynamic data structure that reports $(1+\epsilon)$-approximate
all-pairs shortest paths in sub-quadratic worst-case time. Note that all previous such data structure were only
able to report the \emph{distance}, i.e.~the \emph{length} of the paths, but not the actual paths.
\begin{theorem}\label{thm:apsp}
Let $\epsilon>0$ be an arbitrarily small constant.
There exists an algorithm that maintains $(1+\epsilon)$-approximate all-pairs shortest-path in worst-case time 
$n^{1+\eps^*  + o(1)} = O(n^{1.529})$ per edge update, in worst-case time 
 $n^{1  + o(1)}$ 
 per path reporting query and per distance reporting query with high probability against an oblivious adversary.
The preprocessing time is $n^{\omega + o(1)}$ if the initial graph is non-empty and $\tilde O(n^2)$ if the initial graph is empty. 

\end{theorem}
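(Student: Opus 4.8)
The plan is to combine the mixed-error algebraic spanner of \Cref{thm:AlgebraicSpanner}, which is good for \emph{long} distances, with the exact short-distance and short-path data structure $\mathcal{P}^D_{\eps}$ of \Cref{thm:DpathReporting}, which handles \emph{short} distances, choosing the depth parameter $D$ of the latter large enough that the additive error of the spanner is swallowed by a multiplicative factor. Concretely, I would run \Cref{thm:AlgebraicSpanner} with approximation parameter $\epsilon/2$ in place of $\epsilon$; this maintains, with worst-case update time $O(n^{1.529})$ whp against an oblivious adversary, a subgraph $H\subseteq G$ with $n^{1+o(1)}$ edges such that $\dist_G(s,t)\le \dist_H(s,t)\le (1+\epsilon/2)\dist_G(s,t)+\beta$ for every pair $s,t$, where $\beta=n^{o(1)}$ is the spanner's additive term (which is $n^{o(1)}$ since $\epsilon$ is a constant). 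I then set $D:=\lceil 2\beta/\epsilon\rceil=n^{o(1)}$ and, in parallel, maintain $\mathcal{P}^{D}_{\eps^*}$ from \Cref{thm:DpathReporting} with the balancing exponent $\eps=\eps^*$, i.e.\ the solution of $\omega(1,1,\eps)-\eps=1+\eps$.

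\paragraph{Answering queries.}
On a query $(s,t)$ I would first ask $\mathcal{P}^{D}_{\eps^*}$ for $\dist_G(s,t)$. If it answers with a value $d\le D$, return $d$ (distance query) or the exact shortest $s$-$t$ path it reports (path query); by \Cref{thm:DpathReporting} this takes $n^{\eps^*+o(1)}\le n^{1+o(1)}$ time. Otherwise $\dist_G(s,t)>D$, and I run a single BFS from $s$ in the current spanner $H$ and return $\dist_H(s,t)$ together with the $s$-$t$ path found in $H$ (a genuine path of $G$ since $H\subseteq G$); this takes $O(|E(H)|)=n^{1+o(1)}$ time. Correctness in this branch: $\dist_G(s,t)>D\ge 2\beta/\epsilon$ gives $\beta\le(\epsilon/2)\dist_G(s,t)$, so
\[
\dist_G(s,t)\;\le\;\dist_H(s,t)\;\le\;(1+\epsilon/2)\dist_G(s,t)+\beta\;\le\;(1+\epsilon)\dist_G(s,t);
\]
if $s,t$ lie in different components of $G$, both branches correctly report $\infty$. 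A union bound over the polynomially many operations shows all answers are correct whp.

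\paragraph{Update time and preprocessing.}
Each edge update is forwarded to both data structures. Maintaining $H$ costs $O(n^{1.529})$ per update by \Cref{thm:AlgebraicSpanner}; maintaining $\mathcal{P}^{D}_{\eps^*}$ costs $\tilde{O}\bigl(Dn^{\omega(1,1,\eps^*)-\eps^*}+Dn^{1+\eps^*}\bigr)$ per update by \Cref{thm:DpathReporting}, which is $n^{1+\eps^*+o(1)}=O(n^{1.529})$ since $D=n^{o(1)}$ and, by the definition of $\eps^*$, $\omega(1,1,\eps^*)-\eps^*=1+\eps^*$. Hence the worst-case update time is $n^{1+\eps^*+o(1)}$. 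Preprocessing is the larger of the two initializations: $\tilde{O}(n^\omega)+O(Dn^\omega)=n^{\omega+o(1)}$ on a non-empty initial graph, and $\tilde{O}(n^2)+O(Dn^2)=\tilde{O}(n^2)$ (the factor $D$ being absorbed into $n^{o(1)}$) on an empty initial graph.

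\paragraph{The main obstacle.}
There is little technical depth beyond correctly balancing the parameters; the one conceptual point is that the spanner's guarantee is \emph{mixed}, so for $\dist_G(s,t)=O(\beta/\epsilon)$ the additive $n^{o(1)}$ error is \emph{not} subsumed into a $(1+\epsilon)$ factor, which is exactly why $\mathcal{P}^D$ is needed as a second component with $D=\Theta(\beta/\epsilon)$. The only place where the algebraic machinery genuinely matters is verifying that, at the balancing exponent $\eps^*$ and after the $\tilde{O}(D)$ overhead of working modulo $u^{D+1}$, the update time of $\mathcal{P}^D$ remains $O(n^{1.529})$ --- which is precisely where the defining equation of $\eps^*$ is invoked.
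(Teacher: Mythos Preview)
Your proposal is correct and is essentially identical to the paper's own proof: the paper also maintains the $(1+\epsilon/2,n^{o(1)})$-spanner of \Cref{thm:AlgebraicSpanner} together with $\mathcal{P}^D_{\eps^*}$ from \Cref{thm:DpathReporting} with $D=2n^{o(1)}/\epsilon$, answers short queries exactly via $\mathcal{P}^D_{\eps^*}$ and long queries by a shortest-path computation on the spanner, and bounds the update time by $n^{1+\eps^*+o(1)}$ using the same balancing equation. Your correctness argument and parameter calculations match the paper's almost line for line.
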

\begin{proof}
We maintain the following data structure:

(1) We maintain the fully dynamic path reporting data structure given in the statement of Theorem~\ref{thm:DpathReporting} on $G$ with $D = 2 n^{o(1)}/\epsilon$ and $\eps=\eps^*< 0.529$.

(2) We maintain for $G$ the fully dynamic  $(1+\epsilon/2, n^{o(1)})$-spanner from Theorem~\ref{thm:AlgebraicSpanner}.

To answer a distance reporting query  we first ask a distance query with parameter $D$ in (1). This gives us the exact answer if the distance is less than $D$. Otherwise we run a static shortest path algorithm on the spanner that we maintain in (2). As the spanner has at most $n^{1 + o(1)}$ edges this takes time
$n^{1 + o(1)}.$
Note that, by the choice of $D$, the shortest path on the spanner gives a $(1+\epsilon)$-approximation of the
shortest path in $G$ as $(1 + \epsilon/2) \dist(s,t) + n^{o(1)} \leq (1 + \epsilon) \dist(s,t)$ for
$\dist(s,t) \ge D = 2 n^{o(1)}/\epsilon$.  
Thus a distance query returns a $(1+\epsilon)$-approximate answer and takes time $n^{\eps + o(1)}/\epsilon +  n^{1+o(1)} = n^{1 + o(1)}.$

To answer a path reporting query between two nodes $s$ and $t$  we first ask a distance query with parameter $D$ in (1). If the distance is less than $D$, we ask a path reporting query in (1) in time $O(D^2n^{\eps^*}) = o(n)$. Otherwise, we execute a static shortest path algorithm on the spanner that we maintain in (2).
Thus a path reporting query returns a
$(1+\epsilon)$-approximate shortest path and takes time $n^{1 +o(1)}$ time.
\end{proof}

If we increase the preprocessing time to $O(n^{2.621})$, and the update time to $O(n^{1.843})$, we can reduce the cost of a distance reporting query even further.

\begin{corollary}
Let $\epsilon>0$ be an arbitrarily small constant.
There exists an algorithm that maintains $(1+\epsilon)$-approximate all-pairs shortest-path in worst-case time 
$O(n^{1.843 + o(1)})$
 per edge update, in worst-case time 
 $O(n^{1 + o(1)})$ 
 per  path reporting query, and in worst-case time 
 $O(n^{.45})$ 
 per distance reporting query  with high probability against an oblivious adversary.
The preprocessing time is $n^{\omega + o(1)}$ if the initial graph is non-empty and $\tilde O(n^2)$ if the initial graph is empty. 
\end{corollary}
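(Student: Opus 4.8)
The plan is to follow exactly the same proof structure as Theorem~\ref{thm:apsp}, but to replace the data structure from Theorem~\ref{thm:DpathReporting} (which balances update and query times to get $O(n^{1.529})$ updates and $n^{1+o(1)}$ queries with $D=n^{o(1)}$) by the data structure of Brand and Nanongkai~\cite{BrandN19} for fully dynamic $(1+\epsilon)$-approximate distance reporting, run with its parameters rebalanced to give worst-case update time $\tilde O(n^{1.843})$ and distance query time $\tilde O(n^{0.45})$ (and preprocessing $O(n^{2.621})$). Concretely, I would maintain two data structures on $G$: (1) the rebalanced \cite{BrandN19} structure with additive/multiplicative guarantees strong enough that distances up to a threshold $D=2n^{o(1)}/\epsilon$ are reported with a $(1+\epsilon)$-factor, and (2) the fully dynamic $(1+\epsilon/2,n^{o(1)})$-spanner of Theorem~\ref{thm:AlgebraicSpanner} together with the path-reporting structure of Theorem~\ref{thm:DpathReporting} with $D=2n^{o(1)}/\epsilon$, used only to answer \emph{path} reporting queries. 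The update time is then dominated by the $\tilde O(n^{1.843})$ term, which majorizes the $O(n^{1.529})$ update time of the spanner and of Theorem~\ref{thm:DpathReporting}.

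The next step is to specify query answering. For a \emph{distance} query $(s,t)$: ask \cite{BrandN19} for a $(1+\epsilon)$-approximate distance; if that value is below the threshold $D$ it already is a $(1+\epsilon)$-approximation, and if it is at least $D$ we still return the reported estimate, which is a $(1+\epsilon)$-approximation of $\dist_G(s,t)$ by the guarantee of \cite{BrandN19}. Either way this costs $\tilde O(n^{0.45})$. For a \emph{path} reporting query $(s,t)$: first ask a distance query with parameter $D$ in Theorem~\ref{thm:DpathReporting}; if $\dist_G(s,t)\le D$, return the exact path via the path-reporting query of Theorem~\ref{thm:DpathReporting} in $O(D^2 n^{\eps^*}) = n^{o(1)}$ time; otherwise run a static shortest-path computation (BFS/Dijkstra) on the maintained $(1+\epsilon/2,n^{o(1)})$-spanner, which has $n^{1+o(1)}$ edges, in $n^{1+o(1)}$ time. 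The choice $D = 2n^{o(1)}/\epsilon$ guarantees $(1+\epsilon/2)\dist_G(s,t) + n^{o(1)} \le (1+\epsilon)\dist_G(s,t)$ whenever $\dist_G(s,t)\ge D$, so the returned path is a valid $(1+\epsilon)$-approximate shortest path; correctness holds whp against an oblivious adversary since every component has that guarantee. Preprocessing is $O(n^{2.621})$ from \cite{BrandN19} if the initial graph is non-empty (this dominates $n^{\omega+o(1)}$), and $\tilde O(n^2)$ from Theorem~\ref{thm:DpathReporting} and Theorem~\ref{thm:AlgebraicSpanner} if the initial graph is empty; here one should note that \cite{BrandN19} can also be initialized in $\tilde O(n^2)$ on an empty graph by inserting no edges, so the empty-graph bound is preserved.

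The only real subtlety — the ``hard part'' — is making sure the rebalanced trade-off point of \cite{BrandN19} is actually attainable, i.e. that their framework admits parameters yielding simultaneously $\tilde O(n^{1.843})$ worst-case update time and $\tilde O(n^{0.45})$ distance query time (with preprocessing $O(n^{2.621})$); this follows by plugging the appropriate constants into their trade-off curve, exactly as the authors of \cite{BrandN19} did for their stated $\tilde O(n^{1.863})$/$\tilde O(n^{0.45})$ point, re-optimized for slightly faster queries at the cost of slightly higher update/preprocessing. Everything else is a routine bookkeeping argument identical to the proof of Theorem~\ref{thm:apsp}, with the distance-query component swapped out and the path-reporting component left untouched.
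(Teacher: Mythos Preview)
Your proposal is essentially the paper's approach: maintain the machinery of Theorem~\ref{thm:apsp} (the spanner of Theorem~\ref{thm:AlgebraicSpanner} together with the short-path data structure of Theorem~\ref{thm:DpathReporting}) and, \emph{in addition}, the \cite{BrandN19} structure; use \cite{BrandN19} for distance queries and the existing path-reporting pipeline for path queries, with update time dominated by the $\tilde O(n^{1.843})$ term. One small slip: the short-path query from Theorem~\ref{thm:DpathReporting} costs $O(D^2 n^{\eps^*}) = n^{\eps^*+o(1)} = o(n)$, not $n^{o(1)}$, but this is still absorbed in the $n^{1+o(1)}$ path-query bound; and you are right that the \cite{BrandN19} preprocessing is $O(n^{2.621})$, which in fact exceeds the $n^{\omega+o(1)}$ claimed in the corollary statement (the paper itself notes the $O(n^{2.621})$ cost when reusing this idea in the Steiner-tree section).
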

\begin{proof}
Additionally maintain 
 a fully dynamic $(1+\epsilon)$-approximate APSP data structure from~\cite{BrandN19}, which takes worst-case time $O(n^{1.843})$ per edge update and worst-case time $O(n^{0.45})$ per distance query. It speeds
 up the distance reporting queries and, in combination with theorem~\ref{thm:apsp}, leads to the result.
\end{proof}

\subsection{Steiner trees with terminal vertex and edge updates}
In this section we give a further application of the data structure of the previous section.
Assume we are given an (unweighted) graph $G=(V,E)$ with a dynamically changing edge set and a dynamically changing terminal set $S \subseteq V.$ 
A {\em Steiner tree}  $T_S$ for a vertex set $S$ is a tree that (1) is a subgraph of $G$, (2) spans $S$, and (3) has the minimum number of edges.
A $\beta$-approximate {\em Steiner tree} is a tree which fullfils conditions (1) and (2), and the weight of the tree is at most a factor $\beta$ larger than the weight of the edges of a minimum Steiner tree.

A 2-approximate Steiner tree can be found as follows: Construct a weighted graph $\tilde G$ that consists of a clique
on the vertices of $S$ such that each edge $(u,v)$ has length $\dist_G(u,v)$, i.e., the length of the shortest path between $u$ and $v$ in $G$. Find an MST $\tilde T_S$ in $\tilde G$. Its weight $w(\tilde T_S)$ gives a 2-approximation of the value $OPT$ of the optimal
Steiner tree, i.e. $OPT \le w(\tilde T_S) \le 2 OPT.$ The reason is as follows: Consider an Eulerian tour $\mathcal{E}(T_S)$ of the optimal Steiner tree $T_S$
in $G.$ It traverses every edge twice and, thus, has length $2OPT$. Now replace the subpath between two consecutive terminal
vertices $u$ and $v$ of $\mathcal{E}(T_S)$ by an edge. Note that the length of this subpath is at least $\dist_G(u,v)$ and that there is an edge $(u,v)$ in $\tilde G$ with length $\dist_G(u,v)$. Thus $\mathcal{E}(T_S)$ induces a cycle in $\tilde G$
whose length is at most $2OPT$. As the minimum spanning tree $\tilde T_S$ in $\tilde G$ has a weight that is at most the length of this
cycle, its weight is at most $2OPT.$ Note that if the weights in $\tilde G$ are between $\dist_G(u,v)$ and $(1+\epsilon/2) \dist_G(u,v)$ for some arbitrarily small $\epsilon >0$, then the weight of $\tilde T_S$ is at most $(2+\epsilon) OPT.$
To construct the corresponding approximate Steiner tree replace each edge in $\tilde T_S$ by a shortest path in $G$ between its endpoints
and compute a tree $\tilde T'_S$ in the resulting graph. The weight of $\tilde T'_S$ is at most the weight of $\tilde T_S$. 

We consider the following dynamic changes to the input: (i) edge insertions and deletions and (ii) additions and removals from $S$.
In the \emph{fully dynamic unweighted Steiner tree problem} the goal is  to maintain a Steiner tree after each modification to the input.
In the \emph{$\beta$-approximate fully dynamic 
unweighted Steiner tree problem} the goal is to maintain an $\beta$-approximate {\em Steiner tree} after each modification of the input. Note that we want to maintain
an actual tree, \emph{not just the value} of the $\beta$-approximate Steiner tree.
We show the following result.

\begin{theorem}
Let $\epsilon>0$ be an arbitrarily small constant.
There exists an algorithm that solves the $(2+\epsilon)$-approximate fully dynamic unweighted Steiner tree problem with high probability against an oblivious adversary, in worst-case time 
$O(n^{1.529}
  + s^2 \cdot n^{1 + o(1)})$
 per edge update and in worst-case time 
 $s n^{1 + o(1)}$ 
 per vertex addition to or removal from $S,$ where $s$ is the current size of $S.$
The preprocessing time is $n^{\omega + o(1)}$ if the initial graph is non-empty and $\tilde O(n^2)$ if the initial graph is empty. 

\end{theorem}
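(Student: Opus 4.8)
The plan is to combine the two algebraic data structures built above with the standard ``MST on the terminals'' reduction for approximate Steiner tree that is spelled out just before the theorem. Concretely, I would maintain on $G$ two structures: (i) the fully dynamic path-reporting APSP data structure $\mathcal P^D_{\eps^*}$ of Theorem~\ref{thm:DpathReporting} with depth threshold $D=\Theta(n^{o(1)}/\epsilon)$, chosen large enough to dominate the spanner's additive error below, and parameter $\eps=\eps^*$, which supports edge updates in worst-case $O(n^{1.529})$ time, short distance queries in $n^{\eps^*+o(1)}$ time, and short path queries in $o(n)$ time; and (ii) the fully dynamic $(1+\epsilon/4,n^{o(1)})$-spanner $H$ of Theorem~\ref{thm:AlgebraicSpanner}, updated in worst-case $O(n^{1.529})$ time, which is a \emph{subgraph} of $G$ with $n^{1+o(1)}$ edges and satisfies $\dist_G(u,v)\le\dist_H(u,v)\le(1+\epsilon/4)\dist_G(u,v)+n^{o(1)}$.

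For correctness, assign to every terminal pair $u,v\in S$ the weight $d'(u,v)=\min\{d_{\mathrm{short}}(u,v),\ \dist_H(u,v)\}$, where $d_{\mathrm{short}}(u,v)$ is the value returned by the short distance query (and $\infty$ on failure) and $\dist_H(u,v)$ is read off a BFS from $u$ in $H$. Since $\dist_H\ge\dist_G$ always, this weight equals $\dist_G(u,v)$ whenever $\dist_G(u,v)\le D$, and lies in $[\dist_G(u,v),(1+\epsilon/2)\dist_G(u,v)]$ when $\dist_G(u,v)>D$ by the choice of $D$; in either case $\dist_G(u,v)\le d'(u,v)\le(1+\epsilon/2)\dist_G(u,v)$, which is exactly the hypothesis of the reduction, so an MST $\tilde T_S$ of the weighted clique $\tilde G$ on $S$ has weight at most $(2+\epsilon)\mathrm{OPT}$. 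To output an actual tree, I would replace each edge $(u,v)\in\tilde T_S$ by a $u$--$v$ path of length exactly $d'(u,v)$ in $G$ --- a short path query to $\mathcal P^D_{\eps^*}$ when $d'(u,v)\le D$, and otherwise the $u$--$v$ path in the BFS tree of $H$ rooted at $u$, which is a path in $H\subseteq G$ --- and then take a spanning tree $\tilde T'_S$ of the union of these paths inside the component containing $S$; since the union has total weight at most $w(\tilde T_S)$, we get $w(\tilde T'_S)\le w(\tilde T_S)\le(2+\epsilon)\mathrm{OPT}$. (If $S$ meets several components of $G$, I would maintain a spanning forest and report infeasibility as appropriate.)

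For the running time, after an edge update I would first update $\mathcal P^D_{\eps^*}$ and $H$ in $O(n^{1.529})$ time and then rebuild $\tilde G$ and the tree from scratch: running one BFS on $H$ from each of the $s$ terminals costs $s\cdot n^{1+o(1)}$, the $\binom s2$ short distance queries cost $s^2 n^{\eps^*+o(1)}$, computing the MST costs $\tilde{O}(s^2)$, extracting the $s-1$ tree paths costs $o(sn)+s\cdot n^{1+o(1)}$ (one BFS tree of $H$ per long edge), and taking the union and a spanning tree costs $s\cdot n^{1+o(1)}$; all of this is $O(n^{1.529}+s^2 n^{1+o(1)})$. After a terminal insertion or deletion only the $O(s)$ entries of $\tilde G$ incident to the changed vertex need recomputation ($O(s)$ short queries plus a single BFS on $H$ for an insertion, nothing for a deletion), after which recomputing the MST, re-extracting the $\le s-1$ tree paths, and taking a spanning tree all cost $s\cdot n^{1+o(1)}$. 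Preprocessing is dominated by initializing $\mathcal P^D_{\eps^*}$ and $H$ --- $n^{\omega+o(1)}$ for a non-empty initial graph, $\tilde{O}(n^2)$ for an empty one --- plus one initial rebuild that is subsumed. All probabilistic guarantees follow from those of Theorems~\ref{thm:DpathReporting} and~\ref{thm:AlgebraicSpanner} by a union bound over the polynomially many queries, which is legitimate against an oblivious adversary.

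The main obstacle I anticipate is the seam between the two distance regimes: the spanner only delivers distances up to an $n^{o(1)}$ \emph{additive} slack, and summed over the $\Theta(s)$ edges of $\tilde T_S$ this would inflate the approximation by an $n^{o(1)}$ factor (recall $\mathrm{OPT}\ge s-1$). That is precisely why exact short-distance and short-path queries must be used for \emph{all} terminal pairs at distance $\le D$, with the spanner reserved for pairs far enough apart that its additive error is absorbed into the multiplicative $\epsilon/2$; aligning $D$, the spanner's stretch, and the short-query threshold, and ensuring that the extracted long paths really are subgraphs of $G$ (which works only because $H$ is a spanner rather than an emulator), is the delicate point, while everything else is bookkeeping within the stated budgets.
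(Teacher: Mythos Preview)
Your proposal is correct and follows essentially the same route as the paper: maintain the short-path data structure of Theorem~\ref{thm:DpathReporting} together with the algebraic spanner of Theorem~\ref{thm:AlgebraicSpanner}, define terminal-pair weights via a short/long case split so that each weight is a $(1+\epsilon/2)$-approximation of the true distance, compute an MST on the terminal clique, and expand MST edges into actual $G$-paths. The only cosmetic differences are that the paper uses a dynamic MST data structure with $n^{o(1)}$ worst-case update time rather than rebuilding the MST from scratch, and phrases the spanner-side distance computation as a per-pair ``distance query'' rather than your per-terminal BFS on $H$; both variants fit comfortably within the stated bounds.
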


\begin{proof}
By our discussion before the theorem it suffices to maintain the graph $\tilde G$ whose edge weights are a $(1+\epsilon/2)$-approximation of the length of their endpoints in $G$ and a minimum spanning tree in $\tilde G.$ To build the actual Steiner tree, we need to replace then each edge in this spanning tree into the corresponding shortest path in $G$.

For convenience, we denote by $A(\epsilon,n)$, the additive error of a $(1+\epsilon, n^{o(1)})$-spanner as maintained in Theorem~\ref{thm:AlgebraicSpanner}, i.e. Theorem~\ref{thm:AlgebraicSpanner} maintains a $(1+\epsilon, A(\epsilon,n))$-spanner. From this, it is also easy to see that $A(\epsilon, n) = n^{o(1)}$ for any constant $\epsilon > 0$.

Now, we describe the data structures to maintain the necessary information for updating our Steiner tree approximation:

(1) We maintain the fully dynamic path reporting data structure given in the statement 
of Theorem~\ref{thm:DpathReporting} on $G$ with $D = 4 A(\epsilon/4, n)/\epsilon$ and $\eps=\eps^*< 0.529$.

(2) We maintain the fully dynamic  $(1+\epsilon/4, A(\epsilon/4, n))$-spanner from theorem~\ref{thm:AlgebraicSpanner}.

(3) We maintain a $|S| \times |S|$ array that contain the $(1+\epsilon/4)$- approximate all-pairs shortest path lengths between any pair of terminals.

(4) We maintain the graph $\tilde G$ and a fully dynamic  minimum spanning tree data structure from $\tilde G$ that takes 
$n^{o(1)}$ deterministic worst-case update time~\cite{abs-1910-08025}.

To answer a distance query  we first ask a distance query with parameter $D$ in (1). This gives us the exact answer if the distance is less than $D$. Otherwise we run a static shortest path algorithm on the spanner that we maintain in (2). As the spanner has at most $n^{1 + o(1)}$ edges this takes time
$n^{1 + o(1)}.$ Note that, by the choice of $D$, the shortest path on the spanner gives a $(1+\epsilon/2)$-approximation of the
shortest path in $G$ as $(1 + \epsilon/4) \dist(s,t) + A(\epsilon/4, n) \leq (1 + \epsilon/2) \dist(s,t)$ for
$\dist(s,t) \ge D = 4 A(\epsilon/4, n)/\epsilon$.  
Thus a distance query returns a $(1+\epsilon/2)$-approximate answer and takes time $n^{\eps^* + o(1)}/\epsilon +  n^{1+o(1)} = n^{1 + o(1)}.$

To answer a path reporting query between two nodes $s$ and $t$  we first ask a distance query with parameter $D$ in (1). If the distance is less than $D$, we ask a path reporting query in (1). Otherwise, we execute a static shortest path algorithm on the spanner that we maintain in (2).
Thus a path reporting query returns a
$(1+\epsilon/2)$-approximate shortest path and takes time $n^{1 +o(1)} +  n^{\kappa^*+o(1)}/\epsilon = n^{1 + o(1)}$ time.

After each update to either $E$ or $S$ we update data structures (1) - (4) as described below. Then we build the approximate Steiner Tree $\tilde T'_S$ (in $G$) from $\tilde T_S$ as described above, executing $s-1$ path reporting queries.
This takes time $s \cdot n^{1 + o(1)}$, which as we will see is subsumed by the runtime of updating (4).

We are left with describing how we update (1) - (4).
Each edge update leads to the corresponding update in (1) and (2). Then we recompute $\tilde G$ from scratch using $O(s^2)$ distance queries and compute its minimum spanning tree as well as the dynamic MST data structure from scratch. Afterwards we build $\tilde T'_S$ as described above. This takes total time
$n^{1+\eps^*  + o(1)}
  + s^2 \cdot n^{1 + o(1)}$.

If a vertex $v$ is added to $S$, we compute the distance in the spanner from $v$ to all nodes in $S$, add an edge from $v$ to every vertex in $S$ with the corresponding length to $\tilde G$, and update its dynamic MST data structure. This takes time $s n^{1+o(1)}.$ Afterwards we build $\tilde T'_S$ as described above.

If a vertex is removed from $S$, we remove its incident edges from $\tilde G$ and its dynamic MST data structure. This takes time $s n^{1+o(1)}.$ Afterwards we build $\tilde T'_S$ as described above.
\end{proof}

At the cost of increasing the preprocessing time to $O(n^{2.621})$ we can additionally maintain a fully dynamic $(1+\epsilon/2)$-approximate APSP data structure from~\cite{BrandN19}, which takes worst-case time $O(n^{1.843})$ per edge update and worst-case time $O(n^{0.45})$ per distance query. It speeds
 up the distance query and leads to the following result.
 
\begin{theorem}
Let $\epsilon>0$ be an arbitrarily small constant.
There exists an algorithm that solves the $(2+\epsilon)$-approximate fully dynamic unweighted Steiner tree problem with high probability against an oblivious adversary in worst-case time 
$O(n^{1.843 + o(1)}  + s^2 \cdot n^{0.45} + s \cdot n^{1 + o(1)})$
 per edge update, in worst-case time 
 $s n^{1  + o(1)}$ 
 per vertex addition to or removal from $S$, where $s$ is the current size of $S.$
The preprocessing time is $O(n^{2.621})$.
\end{theorem}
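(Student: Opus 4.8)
The plan is to take the construction from the previous theorem and add, as a fifth data structure, the fully dynamic $(1+\epsilon/2)$-approximate APSP distance oracle of~\cite{BrandN19}, which supports edge updates in worst-case time $O(n^{1.843})$ and distance queries in worst-case time $O(n^{0.45})$ after $O(n^{2.621})$ preprocessing; call it~(5). I would keep data structures~(1)--(4) exactly as in the previous proof: the path-reporting structure of Theorem~\ref{thm:DpathReporting} on $G$ with $D = 4 A(\epsilon/4,n)/\epsilon$ and $\eps = \eps^*$, the $(1+\epsilon/4, A(\epsilon/4,n))$-spanner of Theorem~\ref{thm:AlgebraicSpanner}, the $s\times s$ array of approximate terminal-to-terminal distances (now populated by queries to~(5)), and the graph $\tilde G$ together with its deterministic fully dynamic minimum-spanning-tree structure. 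The only conceptual change is that a \emph{distance} query is answered by directly querying~(5), which returns a $(1+\epsilon/2)$-multiplicative estimate for \emph{every} pair in worst-case time $O(n^{0.45})$ with no additive error, so it is still a legal weight for an edge of $\tilde G$; \emph{path} queries are unchanged, first asking a short-distance query with parameter $D$ in~(1) and returning an exact path from~(1) when $\dist_G(s,t)\le D$, and otherwise running a static shortest-path algorithm on the spanner~(2), for a cost of $n^{1+o(1)}$ per path query (since~(5) cannot report paths).

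The $(2+\epsilon)$-approximation is then inherited verbatim from the discussion preceding the previous theorem, up to the usual constant rescaling of $\epsilon$: the weights of $\tilde G$ lie between $\dist_G(u,v)$ and $(1+\epsilon/2)\dist_G(u,v)$, so its minimum spanning tree $\tilde T_S$ has weight at most $(2+\epsilon)\,OPT$ by the Euler-tour argument; and replacing each $\tilde T_S$-edge by the path returned by the path-query procedure increases the weight by at most a $(1+\epsilon/2)$ factor, because the replacement path has length exactly $\dist_G(u,v)$ for short pairs and at most $(1+\epsilon/4)\dist_G(u,v) + A(\epsilon/4,n) \le (1+\epsilon/2)\dist_G(u,v)$ for long pairs (using $\dist_G(u,v) > D = 4A(\epsilon/4,n)/\epsilon$). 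Hence $w(\tilde T'_S) \le (1+\epsilon/2)\,w(\tilde T_S) \le (2+O(\epsilon))\,OPT$.

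For the running time I would just sum the per-update contributions. An edge update costs $n^{1.529+o(1)}$ to update~(1) and~(2), $O(n^{1.843})$ to update~(5), $O(s^2\cdot n^{0.45})$ to rebuild $\tilde G$ from scratch via its $\Theta(s^2)$ distance queries, $\tilde O(s^2)$ to recompute the minimum spanning tree and reinitialize the dynamic MST structure, and $s\cdot n^{1+o(1)}$ to build $\tilde T'_S$ from $\tilde T_S$ using $s-1$ path queries; in total $O(n^{1.843+o(1)} + s^2\cdot n^{0.45} + s\cdot n^{1+o(1)})$. Adding a vertex $v$ to $S$ costs $O(s\cdot n^{0.45})$ to compute its distances to the other terminals via~(5), $s\cdot n^{o(1)}$ to insert the corresponding $\tilde G$-edges and update the dynamic MST structure, and $s\cdot n^{1+o(1)}$ to rebuild $\tilde T'_S$; removal is analogous. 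Each vertex update therefore costs $s\,n^{1+o(1)}$, and the preprocessing is dominated by initializing~(5) at cost $O(n^{2.621})$, subsuming the $n^{\omega+o(1)}$ needed for~(1)--(4).

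The only step that needs genuine care, everything else being bookkeeping inherited from the previous theorem, is verifying that the purely multiplicative oracle~(5) can replace the spanner-based distance estimator without breaking the Steiner-tree guarantee: this relies on $(1+\epsilon/2)$-multiplicative estimates with no additive slack being precisely what the $\tilde G$-weight bound requires, and on the short pairs --- for which a multiplicative-only estimate would be too weak once combined with the spanner's additive error in the path-reconstruction step --- still being handled by the exact short-path procedure of~(1). With that observation in place, the $\epsilon$-rescaling and the runtime sums above finish the proof.
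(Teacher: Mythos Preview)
Your proposal is correct and follows essentially the same approach as the paper: the paper's argument for this theorem is simply to augment the construction of the preceding Steiner-tree theorem with the $(1+\epsilon/2)$-approximate APSP oracle of~\cite{BrandN19} (your data structure~(5)) and observe that distance queries now cost $O(n^{0.45})$ instead of $n^{1+o(1)}$, which changes the $s^2\cdot n^{1+o(1)}$ term in the edge-update time to $s^2\cdot n^{0.45}$ and adds $O(n^{1.843})$ for updating~(5). Your write-up is in fact more detailed than the paper's (which reduces the proof to a single sentence before the theorem statement), and your explicit verification that the purely multiplicative estimate from~(5) suffices for the $\tilde G$-weights is a useful sanity check the paper leaves implicit.
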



\section{Combinatorial dynamic spanner algorithms}\label{sec:span}

In this section, we present our combinatorial algorithms to maintain $(1+\epsilon, \beta)$-spanners in dynamic graphs. We first present a simple algorithm to maintain a spanner in a partially dynamic graph, then show how to extend the construction to obtain near-linear update time in fully dynamic graphs. Finally, we present a deterministic algorithm to maintain a emulator with large additive error in partially dynamic graphs. 

\subsection{A Simple Partially Dynamic Spanner}

In this section, we prove the following theorem.

\begin{theorem}
\label{thm:SimpleDecrSpanner}
For any constant $0<\epsilon \leq 1$,
given an undirected, unweighted partially dynamic graph, we can maintain a $(1+\epsilon, n^{o(1)})$-spanner of size $n^{1+o(1)}$ in total time $m^{1+o(1)}$ with high probability against an oblivious adversary.
\end{theorem}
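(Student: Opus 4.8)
The plan is to maintain a dynamic version of the standard near-additive clustering hierarchy (the one used in Section~\ref{subsec:AlgebraicSpanner}, and in the decremental setting by Chechik~\cite{chechik2018near}), but replacing the algebraic data structure of Section~\ref{subsec:AlgebraicSpanner} by purely combinatorial Even-Shiloach trees (Lemma~\ref{lma:maintainBalls}); these are available precisely because the graph is only partially dynamic. Concretely, set $k=\sqrt{\log n}$, $\epsilon'=\Theta(\epsilon/k)$, and $\base=\Theta(\log n/\epsilon')=\mathrm{polylog}(n)$, and sample a nested family $V=A_0\supseteq A_1\supseteq\dots\supseteq A_k\supseteq A_{k+1}=\emptyset$ with each vertex kept in $A_i$ with probability $n^{-i/k}\log n$. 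As in Section~\ref{subsec:AlgebraicSpanner}, call $a\in A_\ell\setminus A_{\ell+1}$ \emph{active} (at the current time) if no vertex of a higher set $A_j$ lies within distance $\Theta(\base^{j})$ of it, except that here I would measure distances in $G$ itself rather than in an auxiliary multiplicative spanner, which is affordable in the partially dynamic regime. The maintained spanner $H$ is the union, over all levels $\ell$ and all active $a\in A_\ell\setminus A_{\ell+1}$, of the shortest-path tree of $a$ in $G$ truncated at depth $r_\ell=\Theta(\base^{\ell+1})$. Since these are, up to the $G$-versus-auxiliary-graph bookkeeping, exactly the edges used in Section~\ref{subsec:AlgebraicSpanner}, the approximation analysis there --- Lemma~\ref{c:1}, Claim~\ref{clm:CloseA}, and the induction on $i$ --- carries over with only cosmetic changes and shows, deterministically, that $\dist_H(s,t)\le(1+\epsilon)\dist_G(s,t)+n^{o(1)}$ for all $s,t$.

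First I would verify the size bound, which is the same hitting-set computation as Claim~\ref{clm:fewActiveVertices}: against an oblivious adversary, w.h.p.\ an active level-$\ell$ center has fewer than $n^{(\ell+1)/k}$ vertices in its radius-$r_\ell$ ball, while $|A_\ell|=\tilde{O}(n^{1-\ell/k})$, so each level contributes $\tilde{O}(n^{1+1/k})$ tree edges and $|H|=\tilde{O}(n^{1+1/k})=n^{1+o(1)}$ over the $k+1$ levels; if this bound is ever violated the whole structure is reinitialized, which happens with only polynomially small probability per update. For the maintenance, the key observation is that in the decremental setting activeness is monotone: distances only grow, so once a center is active it stays active and its radius-$r_\ell$ ball only shrinks. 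Hence, the moment a center $a\in A_\ell\setminus A_{\ell+1}$ becomes active I start a single Even-Shiloach tree from $a$ to depth $r_\ell$, restricted to $B_G(a,r_\ell)$ (Lemma~\ref{lma:maintainBalls}), and never have to restart it; activeness itself is read off these trees, processing levels from $k$ down to $0$; and $H$ is kept current incrementally, charging $O(1)$ to each elementary change reported by any tree. The incremental case is symmetric, using incremental ES-trees.

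The hard part is the running-time analysis: one must show that the total ES-tree cost is $m^{1+o(1)}$. Lemma~\ref{lma:maintainBalls} charges $O(m_a r_\ell)$ for the tree of center $a$, where $m_a$ is the number of edges ever present in $a$'s ball; since every $r_\ell\le\base^{k+1}=n^{o(1)}$, it suffices to prove $\sum_\ell\sum_{a\in A_\ell\setminus A_{\ell+1}\text{ active}} m_a = m\cdot n^{o(1)}$. Neither the trivial bound $m_a\le m$ (which gives $nm$ from the $n$ bottom-level centers) nor the per-center ball-size bound $m_a\le n^{2(\ell+1)/k}$ (which gives $n^{2+o(1)}$ at the top levels) suffices on its own; the resolution is an overlap bound stating that, for each level $\ell$ and each vertex $x$, the set of centers $a\in A_\ell\setminus A_{\ell+1}$ that are \emph{ever} simultaneously active and have $x$ in their radius-$r_\ell$ ball has size only $\tilde{O}(n^{1/k})=n^{o(1)}$. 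This is where the real work lies, and it combines the hitting-set property with the monotonicity of decremental ball sizes: letting $\tau_x$ be the first time at which $|B_{G_t}(x,r_\ell+1)|<n^{(\ell+1)/k}$, before $\tau_x$ the ball around $x$ is large, so a vertex of $A_{\ell+1}$ --- or a higher active center reachable by a short geometric chain, exactly in the style of the proof of Claim~\ref{clm:CloseA} --- deactivates every level-$\ell$ center near $x$, so none can be active before $\tau_x$; while after $\tau_x$ the (only shrinking) ball contains fewer than $n^{(\ell+1)/k}$ vertices and hence at most $\tilde{O}(n^{1/k})$ vertices of $A_\ell$ w.h.p. Thus each edge of $G$ lies in the balls of only $n^{o(1)}$ active centers per level, giving $\sum_a m_a\le n^{o(1)}m$ per level and $m^{1+o(1)}$ in total over all $k+1$ levels; preprocessing (building the initial trees and $H$) is dominated by the same bound.
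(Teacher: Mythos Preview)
Your proposal is correct and takes essentially the same approach as the paper: the same sampled hierarchy $A_0\supseteq\cdots\supseteq A_{k+1}$ with $k=\sqrt{\log n}$, the same notion of active centers, truncated ES-trees from each active center as the spanner, and the same overlap argument (each vertex lies in only $\tilde O(n^{1/k})$ active balls per level) to get the $m^{1+o(1)}$ total time. The paper in fact uses the slightly simpler parameters $\epsilon'=\epsilon/8$ and radii $\epsilon'^{-(i+1)}$ rather than the $\base$-based ones you import from Section~\ref{subsec:AlgebraicSpanner}, but this is immaterial.

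One inaccuracy worth flagging: the edges you describe are \emph{not} ``exactly the edges used in Section~\ref{subsec:AlgebraicSpanner}.'' That section's spanner consists of a multiplicative spanner $\tilde G$ together with only the pairwise shortest paths between close active centers on the same level; you, like the paper's Section~8.1, instead take full truncated shortest-path trees. Consequently the Section~\ref{subsec:AlgebraicSpanner} stretch proof (which routes $s\to a$ and $a'\to w$ through $\tilde G$, and $a\to a'$ through an explicitly inserted inter-center path) does not transfer with ``only cosmetic changes''; one has to observe that all those subpaths are already contained in the relevant truncated tree. The paper sidesteps this by giving a short direct argument tailored to trees, which is what you would end up writing out anyway.
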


\paragraph{The Algorithm.} 
For $k = \sqrt{\log n}$, we sample sets $V = A_0 \supseteq A_1 , \dots \supseteq A_k \supseteq A_{k+1} = \emptyset$ where $A_i$ for $i \in [1,k]$ is obtained by sampling each vertex in $V$ with probability $n^{-i/k} \log n$ (and to make the sets nesting add it to all $A_j$ where $j \leq i$).  

Let $\epsilon'=\epsilon/8$. Throughout the algorithm, we say for each $a \in A_{i} \setminus A_{i+1}$ that $a$ is \emph{active} if and only if there is no vertex $a' \in A_{j}$ for $j > i$, with $\dist(a,a') \leq \epsilon'^{-(j+1)} - \epsilon'^{-(i+1)}$. Thus, all vertices in $A_k$ are always active. In the decremental algorithm, once a vertex becomes active, it remains active for the rest of the algorithm, and in the incremental algorithm, once a vertex becomes inactive it remains inactive for the rest of the algorithm. Further, for each active vertex $a \in A_{i} \setminus A_{i+1}$, we maintain the ball $\mathcal{B}_a = {B}(a, \epsilon'^{-(i+1)})$. Otherwise, we let $\mathcal{B}_a = \emptyset$. A similar definition of active nodes was introduced by Chechik~\cite{chechik2018near} for an emulator construction.

In order to obtain the spanner, it suffices to take the union of the shortest-path trees maintained from each vertex $a \in V$, truncated at depth $r_a$, i.e. the spanner is defined at all stages by
\[
H = \bigcup_{a \in V, b \in \mathcal{B}_a} \pi_{a,b}
\]
We maintain these shortest-path trees using the ES-tree data structure.


\paragraph{Spanner Approximation.} 
At any stage, for an arbitrary shortest path $\pi_{s,t}$ from $s$ to $t$, we can expose a path in $H$ that approximates $\pi_{s,t}$ as follows. 

 Let $i$ be chosen to be the smallest index such that $B(s, 2\epsilon'^{-(i+1)}) \cap A_{i+1} = \emptyset$. 
First we show that $s$ is within distance at most $2\epsilon'^{-i}$ of an active vertex of $A_i$.
Then, we have $i \leq k$ since $A_{k+1} = \emptyset$. By a straightforward application of the Chernoff bound it follows that $|B(s, 2\epsilon'^{-(i+1)})| \leq n^{(i+1)/k}$ with high probability, since otherwise $A_{i+1}$ would likely hit $B(s, 2\epsilon'^{-(i+1)})$. Further, observe that by minimality of $i$, there exists some vertex $a$ in $B(s, 2\epsilon'^{-i}) \cap A_{i}\setminus A_{i+1}$, i.e. there exists some vertex $a \in A_i$ at distance at most $2\epsilon'^{-i}$ from $s$. Now, observe that $B(a, \epsilon'^{-(i+1)}) \cap A_{i+1} \subseteq B(s, 2\epsilon'^{-(i+1)}) \cap A_{i+1} = \emptyset$ by the triangle inequality. Thus $a$ is active.

Now, we distinguish two scenarios:
\begin{enumerate}
    \item if $t \in \mathcal{B}_a$: then we can use the paths $\pi_{s, a} \circ \pi_{a,t}$ with combined weight less than $2\epsilon'^{-(i+1)}\leq 2\epsilon'^{-(k+1)}$,
    \item otherwise, let $s'$ be the vertex in $\mathcal{B}_a \cap \pi_{s,t}$ that is closest to $t$. As $t \not\in \mathcal{B}_a$ it holds that $\dist_G(a, s') = \epsilon'^{-(i+1)}.$ As $\dist_G(s,a) \le 2 \epsilon'^{-i}$ and $\epsilon' \le 1/4$, it follows by the triangle inequality that $\dist_G(s, s') \ge  \epsilon'^{-(i+1)} - 2 \epsilon'^{-i} \ge \epsilon'^{-(i+1)}/2$. 
    By the triangle inequality it holds that $\dist_G(s',a) \le \dist_G(s,s') + \dist_G(s,a) \le \dist_G(s,s') + 2\epsilon'^{-i}$. Thus, we have that the path $\pi_{s, a} \circ \pi_{a,s'}$ has weight at most $\dist_G(s,s') + 4\epsilon'^{-i} =  \dist_G(s,s') + 8 \epsilon' \cdot (\epsilon'^{-(i+1)}/2)\leq (1+ 8\epsilon')\dist_G(s,s')=(1+ \epsilon)\dist_G(s,s')$.
\end{enumerate}
If we are in the first scenario, we are done. Otherwise, we can recurse on $\pi_{s',t}$. Aggregating all of the path segments from the recursion, we derive
\[
    \dist_H(s,t) \leq (1+\epsilon)\dist_G(s,t) + 2\epsilon'^{-(k+1)}.
\]
Since this worked with high probability, and we only have a polynomial number of shortest paths to consider (namely $n^2$ shortest paths after each of the at most $m$ edge deletions), we can take a union bound over all these shortest paths and obtain overall correctness with high probability. 

\paragraph{Sparseness of the Spanner.} Pick any $i \in [0,k-1]$. Let us observe that for any vertex $v \in V$, as long as
\[
    |B(v, \epsilon'^{-(i+1)})| \geq n^{(i+1)/k} 
\]
we have that at least one vertex $a''\in A_{i+1}$ hits $B(v, \epsilon'^{-(i+1)})$ with high probability. Further, while $a''$ is in $B(v, \epsilon'^{-(i+1)})$, we claim that no vertex in $B(v, \epsilon'^{-(i+1)}) \cap A_i$ is active. For the sake of contradiction assume that there exists such a vertex in $a'$, then by the triangle inequality, we have
\[
    \dist(a', a'') \leq \dist(v, a'') + \dist(v, a') \leq 2\epsilon'^{-(i+1)} < \epsilon'^{-(i+2)} - \epsilon'^{-(i+1)}
\]
But that implies that $a'$ should not be active. This implies that vertices in $B(v, \epsilon'^{-(i+1)}) \cap A_i$ can only become active after $|B(v, \epsilon'^{-(i+1)})| < n^{(i+1)/k}$. But then with high probability $|B(v, \epsilon'^{-(i+1)}) \cap A_i| = O(n^{1/k} \log^2 n)$ (again by using a Chernoff bound). Note that vertices in $B(v, \epsilon'^{-(i+1)}) \cap A_i$ are the only nodes of $A_i$ in whose balls $v$ can lie. Thus, each vertex $v$, only ever participates in $O(n^{1/k} \log^2 n)$ many balls rooted at vertices in $A_i$. Thus, the total size of the spanner can be upper bounded by $\Tilde{O}(n^{1+1/k})= n^{1+o(1)}$.

\paragraph{Correctness of the Algorithm.} While the rest of the algorithm is rather straightforward, we show that by maintaining the balls $\mathcal{B}$ 
for all active nodes we can maintain for every vertex whether it should become active (in the case of a decremental algorithm) or inactive (in the case of an incremental algorithm). 

Consider a vertex $a$ and let $i$ be such that $a\in A_i\setminus A_{i+1}$. Consider the maximal layer $j$ such that
there exists a vertex $a' \in A_j \setminus A_{j+1}$, $i < j$, at distance $\dist(a, a') \leq \epsilon'^{-(j+1)} - \epsilon'^{-(i+1)}$. 
We will show below that $a'$ is active. Then clearly we have $a \in \mathcal{B}_{a'}$ and $a'$ can therefore inform $a$ that it cannot become active yet. This suffices for an incremental algorithm since vertices only go from active to inactive. We argue next that this also suffices for a decremental algorithm (where vertices only go from inactive to active).

When $a'$ informs $a$ that $a$ cannot be active, we say that $a'$ invokes a \emph{block} on $a$. This block stays in effect until $a'$ revokes it, which happens when $\dist(a, a')$ grows above $\epsilon'^{-(j+1)} - \epsilon'^{-(i+1)}$. By maintaining $\mathcal{B}_{a'}$, we can detect when $\dist(a, a')$ grows above $\epsilon'^{-(j+1)} - \epsilon'^{-(i+1)}$ because if $a\in \mathcal{B}_{a'}$, then $\dist(a, a')$ is explicitly maintained, and if $a$ leaves $\mathcal{B}_{a'}$ then we can detect this and conclude that $\dist(a, a')$ has grown above $\epsilon'^{-(j+1)} - \epsilon'^{-(i+1)}$. In summary, we determine the status of $a$ by setting $a$ to be active if and only if no other vertex has invoked a block on $a$ (and not revoked it).

To show that $a'$ is active, assume by contradiction that $a'$ is inactive.
It follows that there has to be a vertex $a'' \in A_{\ell} \setminus A_{\ell +1}$ for $j < \ell$ at distance at most $\epsilon'^{-(\ell+1)} - \epsilon'^{-(j+1)}$ to $a'$. But by the triangle inequality, we would have 
\[
    \dist(a, a'') \leq \dist(a, a') + \dist(a', a'') \leq \epsilon'^{-(j+1)} - \epsilon'^{-(i+1)} + \epsilon'^{-(\ell + 1)} - \epsilon'^{-(j+1)} = \epsilon'^{-(\ell + 1)} - \epsilon'^{-(i+1)} 
\]
contradicting that $a'$ was a vertex satisfying our constraint with maximal index $j$.

\paragraph{Total Time.} 
Finally, let us bound the total time. Constructing the set $A_i$ takes time $O(n^{1 + o(1)})$. 
We have already shown that each vertex is in at most $O(n^{1/k} \log^2 n)$ balls over the course of the algorithm. Since each edge only participates in balls containing both of its endpoints, we have that each edge is in at most $O(n^{1/k} \log^2 n)$ balls.

Since ES-trees run in time $O(mr)$, each edge is scanned at most $O(mr)$ times for each of the at most $O(n^{1/k} \log^2 n)$ balls it is in. Each ball has radius at most $\epsilon'^{-(i+1)}$, so an analysis over all levels and balls shows that the total update time can be bound by
\[
    \sum_{i \leq k} m \cdot O(n^{1/k} \log^2 n) \cdot \epsilon'^{-(i+1)} = \Tilde{O}(m^{1+1/\sqrt{\log n}} \epsilon'^{-(\sqrt{\log n}+1)}) = m^{1+o(1)}.
\]
Note that this also bounds the total number of changes that can occur to the spanner during all deletions as we always explicitly construct the spanner.

\subsection{A Simple Determininstic Fully Dynamic Spanner}

Although the efficiency of the previous algorithm crucially depends on randomness and the non-adaptiveness of the adversary, Roditty, Thorup and Zwick showed in \cite{roditty2005deterministic} that the construction given above can be efficiently derandomized in the static setting. Thus, there exist a static algorithm, that runs in time $m^{1+o(1)}$ that computes a $(1+\epsilon, n^{o(1)})$-spanner deterministically. Rerunning this algorithm after every update yields the following theorem.

\begin{theorem}
For any constant $0<\epsilon \le 1$,
given an undirected, unweighted graph $G=(V,E)$ undergoing vertex updates, we can deterministically maintain a $(1+\epsilon, n^{o(1)})$-spanner of size $n^{1+o(1)}$ with preprocessing time $m^{1+o(1)} + O(n)$ and worst update time $m^{1 + o(1)}$.
\end{theorem}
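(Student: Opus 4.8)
The plan is to reduce entirely to a static construction and rebuild after every update, so there is almost nothing dynamic to do. First I would invoke the deterministic static near-additive spanner algorithm of Roditty, Thorup and Zwick~\cite{roditty2005deterministic}: on an $m$-edge graph it runs in time $m^{1+o(1)}$ and outputs a $(1+\epsilon, n^{o(1)})$-spanner with $n^{1+o(1)}$ edges. This is exactly the derandomized version of the clustering-based construction used in \Cref{thm:SimpleDecrSpanner}, with the same role for the parameters $k=\sqrt{\log n}$ and $\epsilon'=\Theta(\epsilon)$ controlling the additive term and the size. For preprocessing, run this algorithm once on the initial graph, costing $m^{1+o(1)}$ time, and additionally spend $O(n)$ time to set up the per-vertex bookkeeping (adjacency structure, vertex list), which gives the claimed $m^{1+o(1)}+O(n)$ preprocessing bound.

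For the update procedure: a vertex update inserts or deletes a single vertex together with its at most $n-1=O(m)$ incident edges. I would apply these $O(m)$ edge changes to the stored adjacency structure, then discard the current spanner $H$ entirely and recompute it from scratch by re-running the static algorithm of~\cite{roditty2005deterministic} on the updated graph. Since every intermediate graph has at most $m$ edges by definition of $m$, each rebuild costs $m^{1+o(1)}$ time, and because all of this work is performed eagerly inside the single update that triggered the change, the bound is genuinely \emph{worst-case} per update rather than amortized. Correctness is immediate: after each update, $H$ is the output of the static algorithm on the current graph, hence a $(1+\epsilon,n^{o(1)})$-spanner of size $n^{1+o(1)}$ by the guarantee of~\cite{roditty2005deterministic}; since the static algorithm is deterministic, there is no failure probability and nothing depends on whether the adversary is adaptive.

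There is essentially no obstacle here beyond confirming that the derandomization of~\cite{roditty2005deterministic} indeed attains the stated parameters in time $m^{1+o(1)}$, and that the subpolynomial overheads compose so that the per-update time remains $m^{1+o(1)}$ and the spanner size remains $n^{1+o(1)}$ over the whole update sequence — which holds trivially since each rebuild is independent of the others. The only mild subtlety worth spelling out is that a "vertex update" in this model changes at most $O(m)$ edges, so rerunning the static algorithm (rather than incrementally patching $H$) is affordable and dominates the cost of applying the edge changes themselves.
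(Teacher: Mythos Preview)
Your proposal is correct and follows exactly the paper's approach: invoke the deterministic static $(1+\epsilon,n^{o(1)})$-spanner construction of Roditty--Thorup--Zwick running in $m^{1+o(1)}$ time, and rerun it from scratch after every update. The paper's own justification is a single sentence to this effect, so your write-up is simply a more detailed version of the same argument.
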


We point out that a similar result can likely be obtained by using the randomized construction by Elkin and Zhang \cite{elkin2006efficient} which can be derandomized using the deterministic sparse neighborhood covers \cite{awerbuch1998near}.

\subsection{Partially Dynamic Algorithms to Maintain Emulators/Spanners with High Additive Error}

In this section, we prove the following theorems.

\begin{theorem}
Given a partially dynamic unweighted graph $G=(V,E)$, a multiplicative approximation parameter $\epsilon > 0$, and an additive approximation parameter $\alpha \geq 0$, there exists a \emph{deterministic} algorithm that can maintain an $n^{1+o(1)}$-edge $(1+\epsilon, n^{\alpha + o(1)})$-emulator $H$ of $G$ in total time $O(mn^{1-\alpha + o(1)})$.
\end{theorem}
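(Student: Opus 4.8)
The plan is to take the randomized hierarchical construction of Theorem~\ref{thm:SimpleDecrSpanner} and derandomize it, replacing the random samples $V = A_0 \supseteq A_1 \supseteq \cdots$ with a \emph{deterministically maintained} family of cluster centers obtained from the decremental ball‑cover / layered‑core‑decomposition machinery of Probst Gutenberg and Wulff‑Nilsen~\cite{gutenberg2020deterministic}. Recall that the only role of randomness in Theorem~\ref{thm:SimpleDecrSpanner} is a hitting‑set argument: the random set $A_i$ of density $n^{-i/k}$ certifies both that every radius‑$\epsilon'^{-(i+1)}$ ball containing at least $n^{i/k}$ vertices meets $A_i$ and that each vertex lies in only $n^{o(1)}$ such balls. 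Since we only need an \emph{emulator}, we may add weighted edges rather than tree edges — this removes the need for a decremental low‑stretch spanner as a subroutine — and, crucially, the additive budget $n^{\alpha+o(1)}$ is exactly the ``granularity'' at which a deterministic cover is affordable: we resolve distances exactly only up to scale $n^{\alpha}$, and everything longer is handled by a coarse hierarchy sitting above that scale, whose deepest Even--Shiloach structures have depth $n^{\alpha+o(1)}$ and whose vertex‑count thresholds are tuned so that the total maintenance cost is $O(mn^{1-\alpha+o(1)})$.

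Concretely, I would fix $k=\sqrt{\log n}$, $\epsilon'=\epsilon/\Theta(k)$, and a geometric scale sequence $\rho_0<\rho_1<\cdots<\rho_L$ with $\rho_j/\rho_{j-1}=\Theta(1/\epsilon')=n^{o(1)}$ and topmost scale $\rho_L=n^{\alpha+o(1)}$, and at each scale $j$ invoke~\cite{gutenberg2020deterministic} to maintain deterministically, under the deletion (resp.\ insertion) sequence: a center set $S_j$; for each $c\in S_j$ a shortest‑path tree out of $c$ truncated at depth $\Theta(\rho_j)$ via an Even--Shiloach structure (Lemma~\ref{lma:maintainBalls}); the covering guarantee that every $\rho_j$‑ball with enough vertices meets $S_j$; and the sparsity guarantee that each vertex lies in $n^{o(1)}$ of the depth‑$\rho_j$ trees. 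As in Theorem~\ref{thm:SimpleDecrSpanner} these structures are monotone — each center's active/inactive status and each vertex's membership in a given center's tree flips at most once over the whole sequence — which is what lets one charge recomputation. I then declare $c\in S_j$ \emph{active} when no higher‑scale center lies within a suitable distance of $c$ (distances measured in $G$, since short distances are maintained exactly), and let the emulator $H$ consist of: the weighted edge $(c,u)$ of weight $\dist_G(c,u)$ for every active $c\in S_j$ and every $u$ in its truncated tree; the weighted edge $(c,c')$ of weight $\dist_G(c,c')$ for every pair of active same‑scale centers at distance $O(\rho_j)$; and a BFS forest of a deterministic radius‑$n^{\alpha}$ neighbourhood cover, maintained only on the ``light'' (small‑ball) regions not reached by the hierarchy, so that every pair at distance $\le n^{\alpha}$ is joined in $H$ by a path of length $n^{\alpha+o(1)}$.

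The approximation analysis then mirrors Theorem~\ref{thm:SimpleDecrSpanner} (and Section~\ref{subsec:AlgebraicSpanner}) essentially verbatim, with all distances in $G$: given $s,t$, take the largest scale $j$ at which an active center sits within $\sim\rho_j$ of $s$; if $t$ lies in that center's tree we are done, otherwise we travel along $\pi_{s,t}$ for $\sim\rho_j/\epsilon'$ to a vertex $s'$ again within $\sim\rho_j$ of an active same‑scale center, use the weighted center‑to‑center edge of $H$, pay a $(1+O(\epsilon'))$ factor on that segment, and recurse from $s'$; segments shorter than $n^{\alpha}$ bottom out in the neighbourhood cover at additive cost $n^{\alpha+o(1)}$. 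Summing over $\le\lceil \dist_G(s,t)/\rho_j\rceil$ segments and the $L=n^{o(1)}$ scales, and using $\epsilon'=\epsilon/\Theta(k)$, gives $\dist_H(s,t)\le(1+\epsilon)\dist_G(s,t)+n^{\alpha+o(1)}$; because the cover is correct at every step, no union bound over updates is needed, so the algorithm is deterministic. For sparsity, the number of center‑to‑ball edges equals the number of vertex--tree incidences, which is $n\cdot n^{o(1)}$ per scale over $n^{o(1)}$ scales, and the remaining center‑to‑center and neighbourhood‑cover edges contribute at most $n^{1+o(1)}$ more, so $|H|=n^{1+o(1)}$.

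The hard part will be the first step: showing that the deterministic hierarchy can be maintained with the stated covering and sparsity guarantees \emph{and} within total time $O(mn^{1-\alpha+o(1)})$. This is where the $n^{1-\alpha}$ factor is born — it comes from bounding the total recourse of the layered core decomposition of~\cite{gutenberg2020deterministic} (how often a vertex can be ejected from a core, forcing a center's Even--Shiloach tree of depth up to $n^{\alpha}$ to be extended or rebuilt), which the monotonicity of the partially dynamic setting keeps polynomially bounded, and from balancing the cost of the many shallow low‑scale structures against the few deep high‑scale ones: with vertex‑count thresholds set so that the deepest, radius‑$n^{\alpha}$ level uses only $\approx n^{1-2\alpha}$ centers, that level costs $n^{1-2\alpha}\cdot m n^{\alpha}=mn^{1-\alpha}$, and the remaining levels and the neighbourhood cover must be verified to stay within this bound. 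Everything else — the activation rule, the triangle‑inequality bookkeeping in the stretch proof, and the edge count — is a direct transcription of Theorem~\ref{thm:SimpleDecrSpanner}, and the incremental case follows by the dual (monotone) construction exactly as in that proof.
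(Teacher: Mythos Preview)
Your plan and the paper's proof both invoke~\cite{gutenberg2020deterministic}, but they use it in structurally different ways, and the difference is exactly where your proposal has a gap.

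The paper does \emph{not} derandomize the full hierarchy of Theorem~\ref{thm:SimpleDecrSpanner} up to scale $n^{\alpha}$. Instead it uses the heavy/light decomposition of~\cite{gutenberg2020deterministic} (restated as Theorem~\ref{thm:helperThmSODA}) with a single parameter $\mu = m n^{-\alpha}$: every vertex whose depth-$n^{o(1)}$ ball contains at most $\mu$ edges is \emph{light} and participates in the usual $(1+\epsilon,n^{o(1)})$-spanner $H$, while the \emph{heavy} vertices are swept into at most $m/\mu = n^{\alpha}$ connected components $\mathcal{C}$ of small diameter. Because we are building an emulator, each such component (more precisely, each connected component of the set of $\mathcal{C}$-incident vertices) is then bridged by a balanced binary tree with edge weights equal to the component's diameter bound. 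Any shortest path is approximated by using $H$ on the light segments and the weighted trees across the heavy components; since there are at most $n^{\alpha}$ components and each contributes $n^{o(1)}$ to the path length, the total additive error is $n^{\alpha+o(1)}$. The running time is immediate from Theorem~\ref{thm:helperThmSODA}: $O(n\mu\cdot n^{o(1)}) = O(mn^{1-\alpha+o(1)})$.

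Your proposal instead tries to push the ES-tree hierarchy itself up to radius $n^{\alpha}$. The problem is your accounting at the top scale. You assert that the radius-$n^{\alpha}$ level ``uses only $\approx n^{1-2\alpha}$ centers'' and therefore costs $n^{1-2\alpha}\cdot m n^{\alpha}$, but nothing in the construction forces this: the number of centers is governed by ball \emph{sizes}, not by a parameter you set, and there is no level above $L$ to deactivate a top-level center whose ball is dense. In a dense region a single ES-tree of depth $n^{\alpha}$ already costs $\Theta(m n^{\alpha})$, which exceeds the target $mn^{1-\alpha}$ whenever $\alpha>1/2$. More generally, the ``each edge is in $n^{o(1)}$ balls'' sparsity guarantee, combined with depth up to $n^{\alpha}$, gives total ES-tree work $m n^{\alpha+o(1)}$, not $m n^{1-\alpha+o(1)}$. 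The paper sidesteps this precisely by refusing to run ES-trees in heavy regions at all: those regions are collapsed and bridged with weighted emulator edges, and it is this collapsing---not a deeper hierarchy---that produces both the additive $n^{\alpha}$ term and the $n^{1-\alpha}$ speedup. Your ``neighbourhood cover on the light regions'' remark gestures toward this, but you never say what happens to a vertex that is heavy at every scale, and that is the case that determines the running time.
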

\begin{theorem}\label{thm:adaptiveAdversarySpanner}
Given a partially dynamic unweighted graph $G=(V,E)$, a multiplicative approximation parameter $\epsilon > 0$, and an additive approximation parameter $\alpha \geq 0$, there exists a \emph{randomized} algorithm that can maintain an $n^{1+o(1)}$-edge $(1+\epsilon, n^{\alpha + o(1)})$-emulator $H$ of $G$ in total time $O(mn^{1-\alpha + o(1)})$. The algorithm works against an \emph{adaptive adversary}.
\end{theorem}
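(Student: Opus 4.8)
The plan is to keep the architecture of the deterministic theorem above (which follows \cite{gutenberg2020deterministic}) and replace each of its internal approximate-distance subroutines by the partially dynamic \emph{path-reporting} shortest-path data structure of \cite{bernstein2020fully}, which already tolerates an adaptive adversary. This turns the weighted emulator into an honest subgraph, so in fact we obtain a $(1+\epsilon, n^{\alpha+o(1)})$-\emph{spanner} (hence also the claimed emulator). Recall the shape of the deterministic construction: exactly as in Theorem~\ref{thm:SimpleDecrSpanner} and \cite{chechik2018near}, there is an $n^{o(1)}$-level hierarchy of center sets $V = A_0 \supseteq A_1 \supseteq \cdots$, and on level $i$ every \emph{active} center $c \in A_i\setminus A_{i+1}$ owns a ball $B(c, r_i)$ with $r_i \le n^{\alpha+o(1)}$; the emulator edges $c$ contributes are $(c,v)$ for $v\in B(c,r_i)$, of weight $\dist_G(c,v)$ (the only other emulator edges connect two centers of adjacent levels, which again have a center as an endpoint). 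The hitting-set / active-center argument bounds the number of active centers on level $i$ and the size of each owned ball so that there are $n^{1+o(1)}$ emulator edges in total, and the total update time is $O(mn^{1-\alpha+o(1)})$ (the top levels, where $r_i$ is large, are precisely where the \cite{gutenberg2020deterministic} machinery is needed to avoid the $\Theta(mr_i)$ cost of a plain ES-tree, cf.\ Lemma~\ref{lma:maintainBalls}).

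The key observation is that \emph{one endpoint of every emulator edge is a center $c$, and the edge weight equals the $G$-distance between its endpoints}. Hence it suffices, for each active center $c\in A_i\setminus A_{i+1}$, to maintain not the ball's distances but an actual $(1+\epsilon')$-approximate shortest-path \emph{tree} $T_c\subseteq G$ rooted at $c$ and truncated at radius $r_i$; the structure of \cite{bernstein2020fully} maintains exactly such a $T_c$ against an adaptive adversary, with $\tilde O(|B(c,r_i)|)$ edges and total update time within an $n^{o(1)}$ factor of the distance-only structure it replaces. We then set $H = \tilde G \cup \bigcup_c T_c$, where $\tilde G$ is the $O(\log n)$-spanner of \cite{forster2019dynamic}, used only to connect $s,t$ and to bridge between levels. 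Sparsity is inherited verbatim from the emulator bound, since a tree shares its edges among all of $c$'s destinations and there is no blow-up by $r_i$: $|E(H)| = \sum_i \tilde O(n^{1-i/k}\cdot n^{(i+1)/k}) + \tilde O(n) = n^{1+o(1)}$. For the stretch, take any $s,t$ and the near-additive emulator path $P$ of weight $w(P)\le (1+\epsilon')\dist_G(s,t) + n^{\alpha+o(1)}$; replacing each emulator edge $(c,v)$ of weight $w$ by the $c$–$v$ path inside $T_c$, of length at most $(1+\epsilon')w$, turns $P$ into an $s$–$t$ walk in $H$ of length at most $(1+\epsilon')^2\dist_G(s,t) + (1+\epsilon')\,n^{\alpha+o(1)}$; choosing $\epsilon' = \epsilon/4$ (a constant) yields a $(1+\epsilon, n^{\alpha+o(1)})$-spanner, and the total update time is that of the emulator plus an $n^{o(1)}$ slowdown, i.e.\ $O(mn^{1-\alpha+o(1)})$ (with high probability, or in expectation if \cite{bernstein2020fully} is taken in its Las Vegas form, matching the statement). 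As in Theorem~\ref{thm:SimpleDecrSpanner}, if $H$ ever exceeds its size budget we simply reinitialize.

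I expect the main obstacle to be \emph{verifying the interface} with \cite{bernstein2020fully}: (i) that its path reporting exposes a sparse, explicitly maintained $(1+\epsilon')$-tree per source rather than merely answering one path query at a time — realizing each emulator edge by a separately queried path would cost $\tilde O(n^{1+\alpha})$ edges and is unacceptable, so the shared-tree structure is essential; (ii) that its total update time, run with a depth cap $r_i \le n^{\alpha+o(1)}$ and summed over all active centers on every level, still respects the $n^{1-\alpha}$ trade-off — this is exactly the nontrivial content of \cite{gutenberg2020deterministic} that is inherited, not re-proved, and in particular the bulk handling of the many low-level, small-radius centers must not be done via naive per-center ES-trees; and (iii) that adaptive-adversary robustness threads through the hierarchy — the center sets $A_i$ are random, so one argues, as in \cite{bernstein2020fully}, that the adversary's updates are independent of the fresh randomness used inside each $T_c$, and then unions over the $n^{o(1)}$ levels. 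Handling path endpoints $s,t$ that are not owned by any active center, via the bridge spanner $\tilde G$, is routine.
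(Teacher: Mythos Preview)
Your proposal rests on a mistaken picture of the deterministic emulator (the theorem immediately preceding this one). It is \emph{not} the hierarchy-of-centers construction of Theorem~\ref{thm:SimpleDecrSpanner}/\cite{chechik2018near} with emulator edges $(c,v)$ of weight $\dist_G(c,v)$. Rather, it is built from Theorem~\ref{thm:helperThmSODA}: a heavy/light decomposition gives (i) an honest \emph{subgraph} $H\subseteq G$ that already spans all shortest paths avoiding the heavy clusters $\mathcal{C}$, and (ii) a small family of connected components $D_1,\dots$ (the heavy parts) over which one adds balanced binary trees $T_i$ with \emph{weighted} edges. The only non-$G$ edges in the emulator are those tree edges. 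Because you model the construction as per-center emulator edges, your plan to ``replace each center's distance subroutine by the path-reporting SSSP of \cite{bernstein2020fully}'' is attacking a structure that is not there, and your running-time argument (``inherited from \cite{gutenberg2020deterministic}'') does not go through: the $mn^{1-\alpha+o(1)}$ bound in \cite{gutenberg2020deterministic} comes from running truncated ES-trees only at $\mu$-\emph{light} vertices (each costing at most $\mu\cdot n^{o(1)}$), not from any per-center SSSP across the whole graph. There is no black box in \cite{gutenberg2020deterministic} that lets you run many sources of approximate SSSP in the time budget you need.

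The paper's actual argument is far simpler and uses a \emph{different} result of \cite{bernstein2020fully}: their fully dynamic $(\tilde O(1),0)$-spanner of size $\tilde O(n)$ with $\tilde O(1)$ amortized update time, robust against an adaptive adversary. One keeps the deterministic emulator verbatim, and additionally maintains this single multiplicative spanner $\tilde G$ on the whole graph. Now replace each weighted tree $T_i$ by the edges of $\tilde G$ inside $D_i$: since $D_i$ has $G$-diameter at most $w_i$, any two vertices of $D_i$ are at distance $\tilde O(w_i)$ in $\tilde G$, so the approximation analysis of the emulator goes through with only a polylogarithmic blow-up of the additive term, still $n^{\alpha+o(1)}$. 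Sparsity is $|H|+|\tilde G| = n^{1+o(1)}$, the added update time is $\tilde O(1)$ per operation, and adaptive-adversary robustness is inherited from \cite{bernstein2020fully}. No per-center path reporting, no interface verification, no hierarchy threading is needed.
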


In order to implement a data structure as required by the theorem above, we heavily rely on the following result from \cite{gutenberg2020deterministic} which we extend to incremental graphs.

\begin{definition}[Refining and Coarsening]
Given a universe $U$ and a collection $\mathcal{C}$ of disjoint subsets of $U$ undergoing changes, we say that $\mathcal{C}$ is \emph{refining} if every set $C \in \mathcal{C}$ is a subset of a set $C'$ contained in an earlier version of $\mathcal{C}$. We say that $\mathcal{C}$ is \emph{coarsening} if every set $C \in \mathcal{C}$ is a superset of a set $C'$ contained in an earlier version of $\mathcal{C}$.
\end{definition}
\begin{theorem}[see \cite{gutenberg2020deterministic}, Definition 4.3, Lemma 4.1]
\label{thm:helperThmSODA}
Given a decremental/incremental unweighted graph $G=(V,E)$, a parameter $\mu$ and $0<\epsilon\leq 1$, then there exists a deterministic algorithm that maintains 
\begin{itemize}
    \item a refining/coarsening collection of pairwise-disjoint vertex sets $\mathcal{C} = \{C_1, C_2, \dots, C_k\}$ where any set $C_i$ has $\mathbf{diam}(G[C_i]) \leq 4(1/\epsilon)^{\sqrt{\log n}} \left\lceil\frac{|E(C_i)|}{\mu} \right\rceil$ and $k \leq m/ \mu$, and 
    \item a graph $H$ that forms a subgraph of $G$ such that for every shortest path $\pi_{s,t}$ in $G$ with $\pi_{s,t} \cap (\bigcup_i C_i) = \emptyset$, we have
    \[
        \dist_G(s,t) \leq \dist_H(s,t) \leq (1+\epsilon)\dist_G(s,t) + O((1/\epsilon)^{\sqrt{\log n}})
    \]
    and we have that $|H| = n^{1+o(1)}$.
\end{itemize}
The algorithm runs in total time $O(n\mu (1/\epsilon)^{1/\sqrt{\log n}} + m \log^2
n)$.
\end{theorem}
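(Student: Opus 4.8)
The plan is to lean on the decremental case, which is exactly \cite[Definition 4.3, Lemma 4.1]{gutenberg2020deterministic}, and to argue that its construction and analysis run verbatim ``forwards'' in the incremental setting. First I would recall the decremental construction at a high level. Fix $k=\sqrt{\log n}$ and geometrically growing radii $r_0<r_1<\dots<r_k$ with $r_i=\Theta((1/\epsilon)^{i})$, so that $r_k=(1/\epsilon)^{\Theta(\sqrt{\log n})}$. Processing scales $i=0,1,\dots,k$ in increasing order, one greedily carves clusters out of the current ``light'' part of $G$: whenever an unclustered vertex $v$ has a ball $B_G(v,r_i)$ inside the light part containing at least $\mu$ edges, the vertex set of that ball becomes a new cluster $C\in\mathcal{C}$ and is removed from the light part. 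After all scales are processed, every radius-$r_i$ ball in the light part has fewer than $\mu$ edges, and a deterministic near-additive emulator built scale by scale from deterministic sparse neighbourhood covers (cf.~\cite{awerbuch1998near}) yields a subgraph $H$ with $|H|=n^{1+o(1)}$ that $(1+\epsilon)$-approximates, up to an additive $O(r_k)=O((1/\epsilon)^{\sqrt{\log n}})$, every shortest path disjoint from $\bigcup_i C_i$. The two combinatorial bounds drop out of the carving rule: as in \cite{gutenberg2020deterministic} a cluster is a union of radius-$\le r_k$ balls, each of which consumed $\ge\mu$ edges belonging to no other cluster, so a cluster with $e$ edges is a union of $\le\lceil e/\mu\rceil$ such balls and has diameter $\le 4r_k\lceil e/\mu\rceil$, while $\sum_i|E(C_i)|\le m$ forces $|\mathcal{C}|\le m/\mu$. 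Decrementally, edge deletions only shrink balls, so clusters only split (refining), and the radius-$r_i$ balls are maintained by Even--Shiloach trees (Lemma~\ref{lma:maintainBalls}).

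For the incremental extension I would mirror every piece of this. Inserting an edge only decreases distances, so balls only grow and the family $\mathcal{C}$ is coarsening; a light region that first reaches $\mu$ edges inside some radius-$r_i$ ball is promoted to a cluster and frozen forever. The bounded-depth Even--Shiloach trees are replaced by the trivial incremental bounded-depth BFS primitive --- an inserted edge merely launches a Dijkstra-style cascade of distance decreases confined to the ball --- which has exactly the same $O(m r_i)$-per-scale cost as the ES-tree, so the total-update-time accounting of \cite{gutenberg2020deterministic} is unaffected. The edge-counting proofs of the diameter bound and of $|\mathcal{C}|\le m/\mu$ never referred to the direction of the updates and so carry over unchanged; likewise the stretch analysis for paths avoiding $\bigcup_i C_i$ only uses the triangle inequality together with the fact that such a path, restricted to the light part, is covered scale by scale by the near-additive emulator, which I would maintain incrementally, rebuilding only the neighbourhood covers touched by the newly inserted edge or by a promotion.

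The main obstacle is the bookkeeping on the light part: one must show that $H$ changes only $n^{1+o(1)}$ times in total and that the cumulative cost of all cluster promotions and neighbourhood-cover rebuilds still fits inside $O(n\mu(1/\epsilon)^{1/\sqrt{\log n}}+m\log^2 n)$. I expect this to be handled by a monotonicity/charging argument exactly as in \cite{gutenberg2020deterministic}: each edge crosses the ``$\ge\mu$ edges in a radius-$r_i$ ball'' threshold at most once per scale, so over the whole sequence each edge is touched $(1/\epsilon)^{O(1/\sqrt{\log n})}$ times during all BFS maintenance and all rebuilds, and summing over the $k=\sqrt{\log n}$ scales reproduces the stated bound. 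A secondary point I would verify is that ``coarsening'' is genuinely monotone even though one insertion can simultaneously enlarge a ball and shorten the distance between two existing clusters; this is resolved by the convention that a cluster, once created, is only ever enlarged --- possibly by absorbing a ball that now straddles it and an older cluster --- which keeps the sets pairwise disjoint and keeps $|\mathcal{C}|$ from growing after creation. Assembling these pieces gives the incremental half of the statement with the same guarantees as the decremental half.
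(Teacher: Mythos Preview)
Your sketch diverges from the paper's proof in the two central constructions, and the divergence is not merely cosmetic.

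\textbf{The clusters $\mathcal{C}$.} The paper does \emph{not} use a multi-scale greedy ball-carving. It fixes a single radius $r=(1/\epsilon')^{\sqrt{\log n}}$ (with $\epsilon'=\epsilon/4$), calls a vertex $\mu$-\emph{heavy} if the graph induced by $B(v,r)$ has more than $\mu$ edges, lets $G^{\mathrm{heavy}}$ be the subgraph induced by all vertices lying in the $r$-ball of some $\mu$-heavy vertex, and takes $\mathcal{C}$ to be the connected components of $G^{\mathrm{heavy}}$. Because heaviness and hence $G^{\mathrm{heavy}}$ are monotone under edge deletions (resp.\ insertions), $\mathcal{C}$ is automatically refining (resp.\ coarsening) and is maintained by an off-the-shelf dynamic connectivity structure in $O(m\log^2 n)$ time. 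The diameter bound is a packing argument: any two vertices in a component $C_i$ at distance $d$ force $\lceil d/(4r)\rceil$ pairwise-far vertices, each close to a heavy vertex whose $r$-ball has $\ge\mu$ edges and is disjoint from the others, so $d\le 4r\lceil |E(C_i)|/\mu\rceil$. Your greedy scale-by-scale carving instead \emph{defines} each cluster as a union of previously carved balls; the refining property then does not come for free, since removing an edge can change which vertex wins the greedy race at a given scale, and your ``clusters only split'' assertion would require a separate invariant. The paper's connected-components definition sidesteps this entirely.

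\textbf{The subgraph $H$.} The paper does not invoke sparse neighbourhood covers. For each level $i\le\sqrt{\log n}$ it greedily maintains a maximal set $A_i$ of $\mu$-light vertices $a$ satisfying $|B(a,(1/\epsilon')^{i})|\ge n^{i/\sqrt{\log n}}$, $|B(a,(1/\epsilon')^{i+1})|\le n^{(i+1)/\sqrt{\log n}}$, and pairwise distance $>(1/\epsilon')^i$; then $H$ is the union of the truncated shortest-path trees from vertices in $\bigcup_i A_i$. Each light vertex has at most $\mu$ edges in its truncated tree, which is exactly where the $n\mu(1/\epsilon)^{1/\sqrt{\log n}}$ term in the running time arises. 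For a path avoiding $\bigcup_i C_i$ (hence consisting of light vertices), a pigeonhole on the ball-growth levels picks for each vertex $s$ a level $i$ and a nearby $a'\in A_i$, and one hops along $\pi_{s,a'}\circ\pi_{a',s'}$ with $(1+4\epsilon')$-multiplicative error per hop, exactly as in the simple spanner earlier in the paper. Your plan to ``rebuild only the neighbourhood covers touched'' is a different algorithm whose amortised cost you do not establish; the paper's $A_i$ sets have the monotonicity (each vertex joins and leaves $A_i$ at most once) that makes the bookkeeping trivial.

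In short, the approach you recall is not the one the paper gives; yours could perhaps be made to work, but the refining/coarsening and total-time claims are not justified by your charging sketch, whereas in the paper they fall out of the connected-components definition of $\mathcal{C}$ and the once-in/once-out property of the $A_i$.
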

\begin{proof}
Let $\epsilon'=\epsilon/4$. We say a vertex $v \in V$, is $\mu$-\emph{heavy} if the graph $G$ induced by vertices in the ball $B(v, (1/\epsilon')^{\sqrt{\log n}})$ contains more than $\mu$ edges. Otherwise, we say $v$ is $\mu$-\emph{light}. We further maintain a graph $G^{heavy}$ that is the graph $G$ induced by the vertices $w \in V$ that are in the ball of some $\mu$-\emph{heavy} vertex $v \in V$. That is, a vertex $w$ is in $G^{heavy}$ if it is either $\mu$-heavy or has some vertex $v$ in its ball to depth $(1/\epsilon')^{\sqrt{\log n}}$ that is $\mu$-heavy. We let $\mathcal{C}$ be the set of connected components in $G^{heavy}$. Here, we point out that a slightly modified ES-tree can be used to maintain when a vertex transitions from being $\mu$-\emph{heavy} to $\mu$-\emph{light} (or vice versa) and while a vertex is $\mu$-\emph{light}, we can maintain its shortest path tree truncated at depth $(1/\epsilon')^{\sqrt{\log n}}$ explicitly. For partially dynamic graphs this modified ES-tree can be implemented in time $O(\mu (1/\epsilon')^{\sqrt{\log n}})$ per vertex, thus we can maintain the ES-trees in total update time $n^{1+o(1)}\mu$. 

Now, consider the diameter $d$ of some component $C_i \in \mathcal{C}$ at any stage in $G[C_i]$. Let $u$ and $v$ be two vertices in $C_i$ at distance $d$ in $G$. Observe that we then have at least $\lceil d/(4(1/\epsilon')^{\sqrt{\log n}})\rceil$ vertices in $C_i$ that are at distance at least $4(1/\epsilon')^{\sqrt{\log n}}$ from each other. Since each vertex in $C_i$ is at distance at most $(1/\epsilon')^{\sqrt{\log n}}$ to a $\mu$-heavy vertex, we can thus find $\lceil d/(4(1/\epsilon')^{\sqrt{\log n}})\rceil$  vertices that have disjoint balls and are $\mu$-heavy. Thus, we derive the upper bound on the diameter $d \leq 4(1/\epsilon')^{\sqrt{\log n}} \lceil\frac{E(C_i)}{\mu}\rceil$, by the pigeonhole principle. It is straightforward to see that $\mu$-heavy vertices in different components have mutually disjoint balls, so we derive the bound on the number of components by another application of the pigeonhole principle, i.e. we derive that there are at most $m/\mu$ connected components in $\mathcal{C}$ at any stage since each component contains at least one $\mu$-heavy vertex.

It is further not hard to see that when $G$ is a decremental graph, then $G^{heavy}$ is a decremental graph since the number of vertices in a ball to fixed radius can only decrease over time, as does the degree of each vertex. Thus, it is straightforward to maintain in $O(m\log^2 n)$ time the components in $G^{heavy}$ using a connectivity data structure (see \cite{holm2001poly, wulff2013faster}). For incremental graphs, $G^{heavy}$ is incremental and the same argument applies.  This also implies that $\mathcal{C}$ is refining in decremental graphs and coarsening in incremental graphs. 

To maintain $H$, we maintain greedily for each $i \leq \sqrt{\log n}$, a maximal set $A_i$ of \emph{active} vertices at level $i$ such that 
\begin{enumerate}
    \item every vertex in $A_i$ is $\mu$-\emph{light}, and
    \item every vertex $a \in A_i$ has $|B(a,(1/\epsilon')^{i+1})| \leq n^{(i+1)/\sqrt{\log n}}$ and $|B(a,(1/\epsilon')^{i})| \geq n^{i/\sqrt{\log n}}$, and
    \item any two vertices $a, a' \in A_i$ are at distance more than $(1/\epsilon')^{i}$ from each other.
\end{enumerate}
We then maintain $H$ as the union of the shortest path trees of vertices in $A_i$, for every $i$, truncated at depth $(1/\epsilon')^{i+1}$.

Since each vertex joins $A_i$ at most once and leaves it again once, it is straightforward to see that each $A_i$ can be maintained in time $O(n)$ using the information from the ES-trees from $\mu$-$light$ vertices. Also, the spanner can be maintained in total time $n^{1+o(1)}\mu$ since $\mu$-\emph{light} vertices have at most $\mu$ edges in their truncated shortest-path tree.

To see that for every shortest path $\pi_{s,t}$ in $G$ with $\pi_{s,t} \cap (\bigcup_i C_i) = \emptyset$, we have
\[
    \dist_G(s,t) \leq \dist_H(s,t) \leq (1+\epsilon')\dist_G(s,t) + O((1/\epsilon')^{\sqrt{\log n}})
\]
we can construct $\pi_{s,t}$ by observing that $s$ is a $\mu$-\emph{light} vertex and by the pigeonhole principle, there exists an index $i$ such that $|B(s,(1/\epsilon')^{i+1})| \leq n^{1/\sqrt{\log n}}|B(s,(1/\epsilon')^{i})|$. But then either $s$ is in $A_i$ (in which case we set $a' = s$), or there is another vertex $a'\in A_i$ at distance at most $(1/\epsilon')^{i}$ from $s$. And, from this vertex $a'$, there is a shortest path tree up to depth $(1/\epsilon')^{i+1}$ in $H$. If $t$ is contained in the shortest path tree, then we are done since the path from $s$ to $a'$ to $t$ is only of length $n^{o(1)}$. Otherwise, let $s'$ be the vertex on $\pi_{s,t}$ that is farthest from $s$ and still in the truncated shortest path tree of $a'$. Then, $\dist_G(a',s)=(1/\epsilon')^{i+1}$. Thus, by the triangle inequality, $\dist_G(s,s')\geq(1/\epsilon')^{i+1}-(1/\epsilon')^i$. Thus, the path from $s$ to $a'$ to $s'$ in $H$ is a $\frac{(1/\epsilon')^{i+1}+(1/\epsilon')^i}{(1/\epsilon')^{i+1}-(1/\epsilon')^i}\leq(1+4\epsilon')$-multiplicative approximation of the path in $G$. We can then repeat for the path $\pi_{s',t}$. We thus derive multiplicative error of a $(1+4\epsilon')=(1+\epsilon)$ on the entire path and an additive error of $n^{o(1)}$ induced by the last segment. To see that $H$ never contains more than $n^{1+o(1)}$ edges, observe that active vertices $a'$ in $A_i$, for any level $i$, have disjoint balls $B(a',(1/\epsilon')^{i+1})$ of size at least $n^{i/\sqrt{\log n}}$ and adds at most  $n^{(i+1)/\sqrt{\log n}}$ edges to $H$. Thus, we can amortize the number of edges added over the number of vertices in its ball and obtain straight-forwardly that the total number of edges in $H$ is at most $\sum_i n^{1+1/\sqrt{\log n}} = n^{1+o(1)}$.
\end{proof}

Let us now give the description of our algorithm.

\paragraph{Algorithm.} We maintain the data structure from \Cref{thm:helperThmSODA} with $\mu = mn^{-\alpha}$. Our emulator consists of the graph $H$ from \Cref{thm:helperThmSODA} augmented with additional edges as described next.

Let us say a vertex $v \in V$ is $\mathcal{C}$-incident if there is a connected component $C_i \in \mathcal{C}$ with a vertex $w \in C_i$ such that there exists an edge $(v,w) \in E$. In particular, vertices in some $C_i$ are $\mathcal{C}$-incident. Let $\mathcal{C}_{inc}$ be the set of all vertices that are $\mathcal{C}$-incident. We maintain a connectivity data structure as described in \cite{holm2001poly, wulff2013faster} on the graph $G[\mathcal{C}_{inc}]$ and obtain connected components $\mathcal{D} = \{D_1, D_2, \dots \}$. It is not hard to see that $\mathcal{D}$ is refining in decremental graphs and coarsening in incremental graphs. Further,  each connected component $D_i$ in $G[\mathcal{C}_{inc}]$ consists of at least one connected component $C_i \in \mathcal{C}$ and its adjacent vertices from $\mathcal{C}_{inc}$, and each vertex of $\mathcal{C}_{inc}$ belongs to exactly one
connected component of $\mathcal{D}$.

Now consider some connected component $D_i$ which contains components $C_{i_1}, C_{i_2}, \dots, C_{i_k}$ for some $k \geq 1$, and let $w_i = \sum_{j \leq k} 2(1/\epsilon)^{\sqrt{\log n}} \left\lceil\frac{|E(C_{i_j})|}{\mu} \right\rceil$. We maintain for each such component $D_i$ a balanced binary spanning tree $T_i$ over the vertices (this tree is not a subgraph of $G$ but rather an arbitrary tree over the vertex set of a component), add the tree $T_i$ to the spanner $H$, and give each edge in $T_i$ weight $2^{\lceil\lg w_i\rceil}$ (that is the $w_i$ rounded up to the nearest power of $2$). This completes the description of the algorithm.

\paragraph{Emulator Approximation and Sparsity.} For the approximation ratio, consider any shortest path $\pi_{s,t}$ in $G$. Since $H \setminus \bigcup_i T_i$ forms a subgraph of $G$ and since each edge in $T_i$ for some $i$ has weight larger than the distance of its 
endpoints by \Cref{thm:helperThmSODA}, we have that the path is not underestimated in $H$.

Let $G / \mathcal{D}$ be the graph $G$ after contracting each $D_i$ into a supernode.
Observe that since the vertex sets of each $D_i$ are mutually disjoint, there is a one-to-one correspondence between sets $D_i$ and supernodes in $G / \mathcal{D}$. We also use $D_i$ to denote the supernode corresponding to the connected component $D_i \in \mathcal{D}$.

Consider the shortest path $\pi'_{s,t}$ from $s$ to $t$ in $G / \mathcal{D}$.  
The path $\pi'_{s,t}$ clearly has smaller weight than $\pi_{s,t}$ in $G$. Let $d_1, d_2, \dots, d_j$ be the supernodes on $\pi'_{s,t}$ (in the order that they appear on $\pi'_{s,t}$) that correspond to a connected component $D_1, D_2, \dots, D_j$ in $\mathcal{D}$ respectively. 

Letting $d_0 = s$ and $d_{j+1} = t$, we have that every path segment $\pi'_{s,t}[d_i, d_{i+1}]$ for $0 \leq i \leq j$ is a $(1+\epsilon, n^{o(1)})$-approximate path segment even after remapping the first and last edge on the path segments again to a vertex in $G$ (i.e. to the endpoint that the edges have in $G$ instead of $G / \mathcal{D}$). 

For each supernode $d_i$ corresponding to a connected component $D_i$ in $\mathcal{D}$, we have that the two endpoints in $G$ that intersect between $\pi'_{s,t}$ and $D_i$ are connected by a path in $T_i$ consisting of $O(\log n)$ edges since $T_i$ is balanced. If the supernode $D_i$ contains connected components $C_{i_1}, C_{i_2}, \dots, C_{i_k}$, then each edge on this path has weight $w_i \leq 2 \cdot \sum_{j \leq k} 2(1/\epsilon)^{\sqrt{\log n}} \left\lceil\frac{|E(C_{i_j})|}{\mu} \right\rceil$.

We can now obtain an approximate path in $H$ for $\pi_{s,t}$ by taking the union of the path segments and by adding the exposed paths in the in the trees spanning the connected components $\mathcal{D}$. Since there are at most $m/\mu$ supernodes by \Cref{thm:helperThmSODA} and since each component $C_i \in \mathcal{C}$ is fully contained in exactly one supernode, we have that the total weight of the path from $s$ to $t$ in $H$ is at most $(1+\epsilon)\mathbf{dist}_G(s,t) + \frac{m}{\mu} n^{o(1)} \leq (1+\epsilon)\mathbf{dist}_G(s,t) + n^{\alpha + o(1)}$. The sparsity of $n^{1+o(1)}$ follows by the bound on $H$ in \Cref{thm:helperThmSODA} and the sparsity of the trees $T_i$.

\paragraph{Running time.} The running time of the connectivity data structure can be bound by $\tilde{O}(m)$. In decremental graphs, the time to maintain the balanced binary trees over the connected components $\mathcal{D}$ takes time at most $\tilde{O}(m)$ as whenever a connected component splits into two connected components, we can
remove the vertices of the smaller connected component from its tree $T$ and repair $T$ and then build a
separate tree $T'$ for the smaller component. Because each vertex that leaves $T$ has degree at most 3, both of these operations can be performed in time $\tilde O(k)$, where $k$ is the number of vertices of the smaller connected component. For incremental graphs, when we moerge two connected components, we merge the corresponding trees by destroying the smaller tree and adding each of its vertices to the larger tree.

The edge weights $w_i$ can be maintained using the binary tree and since each edge weight in a component can only decrease (increase), and then decreases (increase) by a factor $2$ the running time of these operations can be again bound by $\tilde{O}(m)$. Thus, the total update data is dominated by the data structure in  \Cref{thm:helperThmSODA} which runs in time $mn^{1-\alpha + o(1)}$, as desired.

\paragraph{A Randomized Algorithm for Spanners with High Additive Error.} Recently, Bernstein et al. \cite{bernstein2020fully} presented a fully-dynamic algorithm, with amortized update time $\tilde{O}(1)$ that maintains an $(\tilde{O}(1), 0)$-spanner of $G$ of sparsity $\tilde{O}(n)$ that is randomized but works against an adaptive adversary. It is straight-forward to see that using \Cref{thm:helperThmSODA} in conjunction with the above spanner, we can reuse the arguments from the approximation analysis of the emulator to derive \Cref{thm:adaptiveAdversarySpanner}.



\section{Acknowledgments}
The authors would like to acknowledge Ivan Mikhailin for the ideas he contributed to an initial variant of the conditional lower bounds. The authors would also like to thank Ofir Geri for reminding them of these initial lower bounds.

Monika Henzinger is supported by the European Research Council under the European Community's Seventh Framework Programme (FP7/2007-2013) / ERC grant agreement No.~340506 and by the
Austrian Science Fund (FWF): Project No.~33775-N.

Maximilian Probst Gutenberg is supported by a start-up grant of Rasmus Kyng at ETH Zurich. This work was partially done while Maximilian Probst Gutenberg was at the University of Copenhagen where he was supported by Basic Algorithms Research Copenhagen (BARC), Thorup's Investigator Grant from the Villum Foundation under Grant No. 16582, and partially done while visiting MIT, supported by STIBOFONDEN’s IT Travel Grant for PhD Students.

Virginia Vassilevska Williams is supported by an NSF CAREER Award, NSF Grants CCF-1528078, CCF-1514339 and CCF-1909429, a BSF Grant BSF:2012338, a Google Research Fellowship and a Sloan Research Fellowship.

Nicole Wein is supported by NSF Grant CCF-1514339.
\bibliography{references.bib}
\end{document}